\documentclass[11pt]{article}
\usepackage[margin=1in]{geometry}

\usepackage{bm}
\usepackage{todonotes}
\usepackage{lipsum}
\usepackage{setspace}
\usepackage{scrextend}
\usepackage{ragged2e}
\usepackage{microtype}
\usepackage{relsize}
\usepackage[noadjust]{cite} 
\usepackage{graphicx}
\usepackage{textcomp}
\usepackage{xcolor}
\usepackage{tgpagella}
\usepackage{mhsetup}
\usepackage{enumerate}
\usepackage{cuted}
\usepackage[linesnumbered,algoruled,boxed,lined]{algorithm2e}
\makeatletter
\newcommand{\removelatexerror}{\let\@latex@error\@gobble}
\makeatother

\usepackage{breakcites}
\usepackage{amsfonts,amssymb,amsmath,amsthm}
\usepackage{mathrsfs}
\usepackage{tabularx}
\usepackage{array}
\usepackage{complexity} 
\usepackage{datetime}
\usepackage{hhline}
\usepackage{multirow}
\usepackage{stmaryrd}
\usepackage{booktabs}
\usepackage{enumitem}
\usepackage{wrapfig}
\usepackage{subcaption}
\usepackage[T1]{fontenc}
\usepackage[utf8]{inputenc}
\usepackage{color}
\usepackage{cases}
\usepackage{xparse}
\usepackage{xargs}
\usepackage{algpseudocode}
\usepackage{verbatim, xspace}
\usepackage{tikz}
\usetikzlibrary{arrows.meta}
\usetikzlibrary{decorations.pathreplacing, calc}
\usepackage{mathtools}
\usepackage{thm-restate}
\usepackage[framemethod=TikZ]{mdframed}
\usepackage{pgfplots}
\pgfplotsset{compat=1.18}
\usepackage{xcolor}
\usepackage[hidelinks]{hyperref}
\usepackage[capitalise, noabbrev]{cleveref}

\mathchardef\mhyphen="2D

\newcommand{\PHES}{Pigeonhole Equal Subset Sum\xspace}
\newcommand{\PESS}{\text{PESS}\xspace}
\newcommand{\Phes}{\PESS}
\newcommand{\SIVP}{\text{SIVP}\xspace}

\newcommand{\cclass}[1]{\mathrm{#1}\xspace}
\newcommand{\eps}{\varepsilon}
\renewcommand{\Pr}{\mathbb{P}}

\newcommand{\ignore}[1]{}
\newcommand{\Oh}{\mathcal{O}}
\newcommand{\Os}{\mathcal{O}^\star}
\newcommand{\Ot}{\widetilde{\mathcal{O}}}

\newcommand\restr[2]{{
  \left.\kern-\nulldelimiterspace 
  #1 
  \littletaller 
  \right|_{#2} 
  }}

\addtocontents{toc}{\protect\setcounter{tocdepth}{2}}

\newtheorem{theorem}{Theorem}[section]

\newtheorem{definition}{Definition}
\newtheorem{lemma}{Lemma}[theorem]
\newtheorem{corollary}{Corollary}
\newtheorem{claim}{Claim}[theorem]
\newtheorem{proposition}{Proposition}[theorem]

\newtheorem{hypothesis}{Hypothesis}

\newtheorem{remark}{Remark}
\newtheorem{problem}{Open Question}

\newtheorem*{maingoal*}{Main Question}
\newenvironment{claimproof}[1]{\par\noindent\emph{Proof:}\space#1}{\hfill $\blacksquare$}

\crefname{hypothesis}{Hypothesis}{Hypotheses}
\Crefname{hypothesis}{Hypothesis}{Hypotheses}

\newcommand{\N}{\mathbb{N}}
\newcommand{\Z}{\mathbb{Z}}

\renewcommand{\R}{\mathbb{R}}

 \newcommand{\ceil}[1]{ \lceil #1 \rceil }
\newcommand{\floor}[1]{ \lfloor #1 \rfloor}

\newcommand{\Ball}{\mathrm{Ball}}
\newcommand{\mat}[1]{\mathbf{#1}}

\newcommand{\svp}{\mathsf{SVP}}

\newcommand{\norm}[1]{||#1||}

\newcommand{\xor}{\oplus}

\newcommand{\etal}{et al.\xspace}

\renewcommand{\L}{\mathcal{L}}

\def\floor#1{\left\lfloor #1 \right\rfloor}
\def\ceil#1{\left\lceil #1 \right\rceil}

\def\set#1{\left\{ #1 \right\}}

\def\norm#1{\left\| #1 \right\|}

\newcommand{\Aa}{\mathcal{A}}
\newcommand{\calB}{\mathcal{B}}
\newcommand{\Bb}{\calB}

\newcommand{\Cc}{\mathcal{C}}

\newcommand{\Ll}{\mathcal{L}}

\renewcommand{\vec}[1]{\ensuremath{\boldsymbol{\mathrm{#1}}}}

\newcommand{\w}{\Sigma}
\newcommand{\BigO}{\mathcal{O}}
\renewcommand{\poly}{\mathrm{poly}}

\newcommand{\pdc}[1]{\textcolor{red!50!black}{PD: #1}}

\newcommand{\rank}{\mathsf{rank}}

\SetCommentSty{commentfont}
\DontPrintSemicolon
\usepackage{titlesec}
\titlespacing*{\paragraph}{0pt}{0.5\baselineskip}{0.8em}
\usepackage[normalem]{ulem}

\makeatletter
\renewenvironment{strip}{\par\noindent\begin{minipage}{\linewidth}}{\end{minipage}\par\medskip}
\makeatother

\def\BibTeX{{\rm B\kern-.05em{\sc i\kern-.025em b}\kern-.08em
    T\kern-.1667em\lower.7ex\hbox{E}\kern-.125emX}}
\begin{document}

\title{Pigeonhole Equal Subset Sum: Subexponential Algorithm, 
Tight Lower Bounds, and
Average-Case Analysis}

\author{
Deepak Bhati\\
\small CISPA Helmholtz, Saarbr\"ucken, Germany\\
\small deepak2602161@gmail.com
\and
Pranjal Dutta\thanks{Nanyang Technological University, Singapore. Supported by the Nanyang Assistant Professorship (NAP) 
and an SUG grant (\#025774-00001), both funded by Nanyang Technological University.}\\
\small Nanyang Technological University, Singapore\\
\small pranjal.dutta@ntu.edu.sg
\and
Antoine Joux\\
\small CISPA Helmholtz, Saarbr\"ucken, Germany\\
\small joux@cispa.de
\and
Mahesh Sreekumar Rajasree\thanks{Work done while at CISPA Helmholtz, Germany. Supported by the European Union (ERC, LACONIC, 101041207). 
Views and opinions expressed are however those of the author(s) only and do not necessarily reflect 
those of the European Union or the European Research Council. Neither the European Union nor the 
granting authority can be held responsible for them.}\\
\small University of Warsaw, Warsaw\\
\small srmahesh1994@gmail.com
\and
Karol W\k{e}grzycki\thanks{Max Planck Institute for Informatics, Germany. 
Supported by the Deutsche Forschungsgemeinschaft (DFG, German Research Foundation) grant number 559177164.}\\
\small Max Planck Institute for Informatics, Saarbr\"ucken, Germany\\
\small kwegrzyc@mpi-inf.mpg.de
}

\date{}
\maketitle

\begin{abstract}
% ===== begin abstract.tex =====
The \textsc{Equal Subset Sum} problem asks, given $n$ positive integers bounded by $M$, to find two distinct non-empty subsets with the same sum. We study this problem in the regime $M = 2^{o(n)}$, where the pigeonhole principle guarantees that a solution exists, and the main challenge is to find one efficiently. A straightforward dynamic-programming algorithm runs in time $\poly(n)\cdot M$, and after a sequence of improvements, the current best algorithm, due to Jin, Williams, and Zhang (ESA 2025), achieves running time $\poly(n)\cdot M^{1/3}$.

In this work, we give the first $\poly(n)\cdot M^{o(1)}$-time algorithm for \textsc{Equal Subset Sum} in the regime $M = 2^{o(n)}$. We complement this algorithmic result with a tight conditional lower bound showing that any substantial improvement would require a major breakthrough in approximation algorithms for lattice problems, building on recent techniques of Vafa and Vaikuntanathan (STOC 2025).

Finally, we show that our techniques yield an asymptotically faster algorithm for the regime when
$M = \Theta(2^{n})$: a $\poly(n) \cdot 2^{n/4}$-time algorithm 
in the average case.
% ===== end abstract.tex =====
\end{abstract}

% ===== begin intro.tex =====
\section{Introduction}

Equal Subset Sum is a fundamental problem in combinatorial optimization. The
problem can be stated as follows: given a set $S$ of $n$ positive integers bounded by $M$, find two distinct subsets $A, B
\subseteq S$ such that $\Sigma(A) = \Sigma(B)$, where $\Sigma(\cdot)$ denotes the sum of the integers in the set. A standard
meet‑in‑the‑middle algorithm solves the problem in $\Os(3^{n/2})$, which can be
improved to $\Os(1.7067^n)$~\cite{MNPW19}. 
\footnote{$\Os(\cdot)$ and $\Ot(\cdot)$ hide polynomial and polylogarithmic factors in the input respectively. }

Papadimitriou~\cite{Pap94} introduced the total variant called the {\em Pigeonhole
Equal Subset Sum} (\Phes). In this variant, the sum of the numbers is less
than $2^{n}-1$, and the goal remains to find distinct subsets $A, B \subseteq
S$ with equal sums. Since there are $2^n - 1$ non-empty subsets but only $2^n-2$ possible
nonzero sums, a collision must exist by the pigeonhole principle. \Phes belongs
to the class $\cclass{TFNP}$ and, more specifically, to $\cclass{PPP}$, and it
is conjectured to be $\cclass{PPP}$-complete~\cite{Pap94}.  The setting, when
the number of collisions is even larger, e.g., the sum of integers is bounded by $2^{n-1}$, is
called \emph{weak-pigeonhole regime} and is also an important setting~\cite{SZZ18}.
In that weaker setting, \PESS belongs to the class
$\cclass{PWPP}$ (problems whose totality is proven through the weak pigeonhole
principle)~\cite{SZZ18}. For convenience, we will consider the variants where each of the numbers is bounded by either $(2^{n}-1)/n$ or $2^{n-1}/n$, so the sum remains bounded by $2^n-1$ or $2^{n-1}$. To this effect, we introduce a parameter $M$ that denotes the bound on the integers we allow in our instances.

To the best of our knowledge, the first explicit algorithm for \Phes was given
by Vasilakis~\cite{vasilakis2022improved}, who obtained an $\Ot(nM)$
time dynamic programming algorithm\footnote{More precisely, algorithm of Vasilakis~\cite{vasilakis2022improved} runs in $\Oh(n+\Sigma(S) \log(\Sigma(S)))$ time, where $\Sigma(S)$ is the sum of the input integers.}.
Allcock, Hamoudi, Joux, Klingelhöfer and
Santha~\cite{AHJ+21} gave an $\Os(2^{n/2})$-time algorithm when
the sum of the numbers is less than $2^{n}-1$.
Notice that by discarding the largest
items until we reach a modified instance with $n'$ items whose sum is in $[2^{n'-1}-1,2^{n'}-1)$ and then running the algorithm of Allcock et al.~\cite{AHJ+21}, the overall running-time  
can be presented as $\Os(M^{1/2})$. Later, Jin and Wu~\cite{JW24} refined the meet‑in‑the‑middle approach of~\cite{AHJ+21} and achieved
$\Os(M^{0.4})$ running time. The current best algorithm by Jin,
Williams, and Zhang (ESA 2025) runs in $\Os(M^{1/3})$~\cite{JWZ25}
(see~\cref{tab:prev-our-results} for an overview). At the end of their paper,
Jin, Williams, and Zhang raised the question of whether the $M^{1/3}$ barrier
can be improved~\cite{JWZ25}.

\begin{problem}[\cite{JWZ25}]\label{prob:faster}
    Can we get a $\poly(n)\cdot M^{1/3-\eps}$-time algorithm for \PHES for some $\eps > 0$?
\end{problem}

Note that \Phes is a total problem and, as such, is not expected to be NP-hard. As noted above, it lies in $\cclass{PPP}$. Inspired by these questions, Jin~\cite{jin-slides} asked whether subexponential-time algorithms for \Phes are possible:

\begin{problem}[\cite{jin-slides,JW24}]\label{prob:subexponential}
    Can we get a subexponential-time algorithm for \PHES?
\end{problem}

More specifically, Jin~\cite{jin-slides} asked for $2^{o(n)}$-time algorithm,
which in parameterization by maximum integer can be expressed as $\poly(n) \cdot M^{o(1)}$. 

\subsection{Our Results}
\label{sec:our-results}
In this work, we make partial progress on both of these questions.
First, we present a $\text{poly}(n) \cdot M^{o(1)}$~-time algorithm for \Phes in the regime $M = 2^{o(n)}$.

\begin{restatable}{theorem}{thmsubexponential}\label{thm:subexponential}
    There exists a deterministic algorithm that, given an integer $n$ and a set
    $S\subseteq\{1,\dots,M\}$ of size $n$ such that $nM < 2^{n}-1$, runs in time
    $\poly(n)\cdot M^{\Oh(1/\log{\delta})}$ and returns two distinct subsets of
    $S$ with the same sum, where $\delta = \delta(n,M)\coloneqq n/\log{M}$ is the density of
    the instance.

    As a corollary, when $M=2^{o(n)}$, the algorithm runs in time $\poly(n) \cdot M^{o(1)}$.
\end{restatable}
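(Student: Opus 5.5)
Since $\delta = n/\log M$ is the ratio of available subsets to sum range, I would exploit this density by recursively combining numbers into differences, in the spirit of the Lagarias--Odlyzko / Blum--Kalai--Wasserman ``difference'' reduction and the Vafa--Vaikuntanathan approach for lattice problems. Fix $k = \Theta(\log \delta)$ levels and a block size $b = \lceil \log M / k\rceil$ bits, so that each level kills roughly $b$ of the $\log M$ leading bits. At level $0$ every input integer $a_i$ is a vector $x_i = e_i \in \{-1,0,1\}^n$ with value $\langle x_i, a\rangle = a_i \in [0,M]$. At level $j$, I keep a list of signed combinations $x \in \{-1,0,1\}^n$ whose values lie in $[0, M/2^{jb}]$, with disjoint supports among the elements still in the list. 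To go from level $j$ to level $j+1$, I bucket the current elements by their value's top $b$ bits (i.e.\ by $\lfloor \text{value}/ (M 2^{-(j+1)b})\rfloor$). Inside each bucket I pair elements up arbitrarily and replace each pair $(x,y)$ by $x-y$ or $y-x$ so that the value is nonnegative; this value is below $M/2^{(j+1)b}$. Because supports are disjoint, $x-y$ still has entries in $\{-1,0,1\}$, and the new list again has pairwise disjoint supports. Each level loses at most $2^b$ elements (one unmatched per bucket) and halves the list size. Crucially, the vectors are nonzero, since their support sizes double each level.

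For the counting, start with $N_0 = n$ elements. After $k$ levels, $N_k \ge n/2^k - 2^{b+1}$. I need $N_k \geq 1$ with final value bound $M/2^{kb} < 1$, forcing value $0$; or, more robustly, stop after $k$ levels with values bounded by $M' = M/2^{kb'}$ and finish by brute force / pigeonhole. The cleanest parametrization takes $2^b \approx M^{1/k}$, $k$ with $n/2^k \gg M^{1/k}$. With $k = c\log\delta$, $2^k = \delta^c$ and $M^{1/k} = 2^{\log M/(c\log \delta)}$. I need $n \delta^{-c} \ge 2\cdot 2^{n/(\delta c \log \delta)}$, which fails for large $n$ if I insist on reaching value $0$ this way. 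So instead I allocate the final stage differently: run $k$ levels reducing values to some $M_k$, ending with $N_k$ elements of disjoint supports and values in $[0,M_k]$ where $N_k \ge \log M_k + 1$ (so $N_k M_k < 2^{N_k}$, the pigeonhole condition holds on the derived instance). I then solve the derived instance by the dynamic programming of Vasilakis in time $\Ot(N_k M_k)$, i.e.\ find a nontrivial $\{-1,0,1\}$-combination of the derived elements summing to $0$. Substituting the derived vectors gives a combination $\sum_i z_i x^{(i)}$ of disjoint-support $\{-1,0,1\}$-vectors, hence again in $\{-1,0,1\}^n$ and nonzero, summing to zero; its positive and negative parts give the two distinct subsets $A,B$ with $\Sigma(A)=\Sigma(B)$.

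The main obstacle is the balancing that gives exactly $M^{O(1/\log\delta)}$. Each level costs $\poly(n)$ time but also loses up to $2^b$ elements, and the total running time is dominated by the final DP, i.e.\ by $M_k = M/2^{kb}$. I would choose $b$ so that $2^{b+1} \le n/2^{j+2}$ at every level, say $2^b \approx n/2^{k+2}$, and $k = \frac12\log\delta$. Then $N_k \gtrsim n/2^{k+1} = n/\sqrt{\delta}\cdot\Theta(1) \ge \log M$ (since $n/\sqrt\delta = \sqrt{n\log M}\ge \log M$), securing the pigeonhole condition. Moreover $2^{kb} \approx (n/\sqrt\delta)^{k}$, and since $n/\sqrt{\delta} \ge \sqrt{\delta}\cdot \log M/ \cdot$... one checks $\log(2^{kb}) = k\log(n/2^{k+2}) \ge (1-O(1/\log\delta))\log M$ requires care. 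If a single value of $b$ does not suffice, I would use level-dependent block sizes $b_j = \log(N_j)-O(1)$, which consume roughly $\log n$ bits per level while keeping a constant fraction of elements. Then $\Theta(\log M/\log n)$ levels bring $M_k$ down to $M^{O(1/\log \delta)}$ before the list shrinks below $\log M_k$. Verifying this recurrence is the delicate calculation, and the final DP then costs $\poly(n)\cdot M^{O(1/\log\delta)}$.
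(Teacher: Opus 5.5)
Your lifting argument is fine: disjoint supports keep every combination in $\{-1,0,1\}^n$, and a zero value, or an equal-sum pair found by Vasilakis's DP on the derived instance, does give two distinct subsets of $S$ with equal sums. The gap is the running time. The ``delicate calculation'' you defer is false, and no choice of block sizes repairs it.

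Pairing at least halves the list every round, so there are at most $\log n$ rounds. A round can only be \emph{guaranteed} to remove about $\log N_j \le \log n$ bits: with $2^{b}\ge N_j$ buckets, an adversarial instance (e.g.\ evenly spaced numbers) puts every element in its own bucket. So the total reduction is $\sum_j b_j=\Oh(\log^2 n)$ bits, hence $M_k \ge M\cdot 2^{-\Oh(\log^2 n)}$. The final DP then costs $\Ot(N_kM_k)=M^{1-o(1)}$ whenever $\log M=\omega(\log^2 n)$. For example, with $M=2^{\sqrt n}$ the theorem demands $2^{\Oh(\sqrt n/\log n)}$, but you get $2^{\sqrt n-\Oh(\log^2 n)}$.

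Your own parametrization already shows this. With $k=\tfrac12\log\delta$ and $2^b\approx n/(4\sqrt\delta)$ you remove only $kb\approx\tfrac12\log\delta\cdot\log(n/\sqrt\delta)=\Oh(\log^2 n)$ bits, far below $\log M=n/\delta$ for polynomial $\delta$. The fallback of $\Theta(\log M/\log n)$ rounds ``keeping a constant fraction'' is impossible: the fraction kept per round is at most $1/2$, so after that many rounds fewer than $n\cdot 2^{-\Theta(\log M/\log n)}<1$ elements remain once $\log M\gg\log^2 n$. This is exactly the Karmarkar--Karp barrier the paper points out: differencing single numbers only handles $M=2^{\Oh(\log^2 n)}$.

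The missing idea is to merge \emph{lists of exponentially many subset sums} instead of single numbers, so that list sizes never decay.
\begin{itemize}
\item Split $S$ into $k$ disjoint blocks of size $\lfloor n/k\rfloor$. Enumerate all $W=2^{\lfloor n/k\rfloor}$ subset sums of each block; a repeated value is already a solution.
\item At level $j$, two sibling lists each have size at least $W$, and their elements lie in fixed residue classes modulo $W^{j-1}$. Their sumset has at least $W^2$ distinct values, since any coincidence is a collision because the blocks are disjoint. All these values lie in one class modulo $W^{j-1}$.
\item By pigeonhole, one of the $W$ refining classes modulo $W^{j}$ contains at least $W$ of these values, and $W$ of them are kept.
\end{itemize}
Each of the $\log k$ levels therefore fixes $\lfloor n/k\rfloor$ further low-order bits at no cost in list size. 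At the root, the $W\ge 2$ surviving sums agree modulo $W^{\log k}\ge nM$, so they must coincide. Hence a collision is always found.

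Take $k$ to be the largest power of two with $nM\le 2^{\lfloor n/k\rfloor\log k}$. Then $k/\log k=\Theta(\delta)$, and the running time is $\poly(n)\cdot W^2=M^{\Oh(1/\log\delta)}$. In your scheme every round costs half the elements. In the list version a round is free, because the $W^2$ pairwise sums supply the pigeonhole slack that single elements lack.
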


Although the algorithm in~\cref{thm:subexponential} does not fully 
resolve Problem~\ref{prob:subexponential}, we believe
it provides significant progress on that question. Notice, that~\cref{thm:subexponential} also does not resolve Problem~\ref{prob:faster}, however, more precise running time
analysis of~\cref{thm:subexponential} shows that it improves the $\Os(M^{1/3})$
time algorithm~\cite{JWZ25} when $\delta > 19$
(see~\cref{remark:exact-runtime} for an exact runtime). Later
in~\cref{thm:average-case-integer} we show that techniques
behind~\cref{thm:subexponential} can be used to improve state-of-the-art for
$\delta \ge 1$ in the average-case regime.

One could ask whether the $\Oh(1/\log{\delta})$ term in the exponent can be
significantly improved; for example, can we get $2^{\Oh(n^{0.99})}$ time, or at
least $2^{\Oh(n/\log^{100}{n})}$ time? We show that this is highly unlikely.

\begin{restatable}{theorem}{thmlowerbound} \label{thm:lowerbound}
    Under~\cref{hyp:lambda-n},
    there is no $2^{o(n/\log{n})}$ time algorithm that solves Pigeonhole Equal Subset Sum. Moreover, there is no $\poly(n) \cdot M^{o(1/\log{n})}$ time algorithm for \PESS (even for $\delta = n^{1/2+o(1)}$).
\end{restatable}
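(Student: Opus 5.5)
Hypothesis~\ref{hyp:lambda-n} is referenced but not yet stated; I assume it says that there is no $2^{o(n)}$-time algorithm for approximating the shortest vector (or $\lambda_n$/SIVP) of an $n$-dimensional lattice within some polynomial factor $\gamma=\poly(n)$. The plan is a reduction in the style of Vafa and Vaikuntanathan: from a lattice instance in dimension $k$ we build a \PESS instance with $n$ numbers and $M$ such that $n \approx k\log k$ (or $k^{2}$ in the second regime) and $\log M \approx k\log k$. Then a $2^{o(n/\log n)}$ algorithm for \PESS gives a $2^{o(k)}$ algorithm for the lattice problem, contradicting the hypothesis.

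\textbf{Step 1: the reduction.} Given a basis $\mat{B}$ of a lattice $\L\subseteq\Z^k$, we first sample $n$ vectors $\vec{v}_1,\dots,\vec{v}_n$ from $\L/q\L$, or from a fine discretization of the fundamental parallelepiped, using discrete Gaussian sampling of width about $\lambda_n$ times a small factor; this is possible in $\poly$ time for widths above the smoothing parameter. We then encode each sample as an integer $a_i<M$ by the standard base-$q$ packing of its coordinates modulo $q$, with $M\approx q^{k}$. An equal-subset-sum collision $\sum_{i\in A}a_i=\sum_{i\in B}a_i$ gives a signed $\{-1,0,1\}$-combination $\sum_i \epsilon_i \vec{v}_i\equiv 0$. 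For this combination to be a genuine lattice vector of small norm, the integer encoding has to preserve linear relations. I would use a Kannan/Ajtai-style embedding: write coset representatives $\vec{v}_i = \mat{B}\vec{x}_i + \vec{e}_i$ with short errors $\vec{e}_i$, encode $\vec{x}_i \bmod q$ into an integer via a CRT or base-$q$ map, and add padding digits so that carries cannot cause spurious collisions. A collision then yields $\sum\epsilon_i\vec{v}_i\in q\L$, and hence a lattice vector $\sum \epsilon_i \vec{e}_i$-close to $q\L$, which after scaling down has length at most $\sqrt{n}\cdot\max\|\vec{e}_i\|$. Iterating as in Micciancio--Regev's SIVP-from-SIS reduction produces $k$ linearly independent short vectors. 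The point that makes the solutions valid lattice vectors is that collisions exist by pigeonhole once $n > \log_2 (nM)$.

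\textbf{Step 2: parameters.} With $q=\poly(k)$ we get $\log M = k\log q=\Theta(k\log k)$. Pigeonhole requires $n\gtrsim k\log k$. Taking $n=\Theta(k\log k)$ gives $n/\log n=\Theta(k)$, so a $2^{o(n/\log n)}$ algorithm yields a $2^{o(k)}$ lattice algorithm with approximation factor $\sqrt{n}\cdot\poly(k)=\poly(k)$. For the second claim, we take $n=k^{1.5}$, i.e.\ extra samples. This gives $\delta=n/\log M = n^{1/2+o(1)}$ and $\log n=\Theta(\log k)$, so $M^{o(1/\log n)} = 2^{o(k\log k/\log k)}=2^{o(k)}$, again contradicting the hypothesis. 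The $\poly(n)$ factor is harmless because it is $2^{o(k)}$.

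\textbf{Main obstacle.} The delicate part is Step 1's correctness. We must ensure that (i) the returned lattice vector is nonzero and independent of the previously found ones with good probability, despite an adversarial \PESS solver choosing which collision to output, and (ii) the integer encoding introduces no collisions that do not correspond to lattice relations. For (ii) I would use the modular structure: work with $\Z_{q}^k$ packed via $M=q^k$ and treat sums modulo $M$. I would handle this by a small reduction from modular to integer \PESS, doubling the numbers with padding bits, which costs only constant factors in $\log M$. For (i) I would follow Vafa--Vaikuntanathan: the Gaussian error vectors $\vec{e}_i$ retain enough min-entropy conditioned on the solver's view, so that $\sum\epsilon_i\vec{e}_i$ is nonzero and escapes any fixed hyperplane with constant probability.
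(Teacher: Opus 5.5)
\textbf{Gap in Step 1.} The error sits exactly where the paper has to pay. Below, $k$ is the lattice dimension as in your proposal; the paper calls it $n$.

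You claim that discrete Gaussian samples over $\L$, of width a small factor times the last successive minimum, can be drawn in polynomial time ``for widths above the smoothing parameter''. From an arbitrary basis this is false. GPV-type samplers need $s\ge\|\widetilde{\mathbf{B}}\|_2\,\omega(\sqrt{\log k})$. Moreover, a polynomial-time sampler at width $\poly(k)\lambda_k(\L)$ would by itself output nonzero lattice vectors of length $\poly(k)\lambda_k(\L)$, refuting \cref{hyp:lambda-n} without any \PESS oracle.

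The paper resolves this in two steps. It spends $2^{\Oh(k/\kappa)}$ time on block reduction (\cref{lem:short-basis}) to get a basis $\mathbf{D}$ with $\|\mathbf{D}\|_2\le Ck^{\kappa}\lambda_k(\L)$. This is affordable because only $2^{\eps k}$ time has to be beaten. Only then does it sample at $s=4\pi\sqrt{k}\,N\|\mathbf{D}\|_2$. The same bound keeps every base-$W$ digit of its encoding polynomial, so $\log M=\Oh(\kappa k\log k)$.

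Your decoding step is also not coherent as written. You assert both $\sum_i\epsilon_i\vec{v}_i\in q\L$ and that the output is ``$\sum_i\epsilon_i\vec{e}_i$-close to $q\L$'' with length $\sqrt{n}\max_i\|\vec{e}_i\|$. But $\sum_i\epsilon_i\vec{e}_i$ is not a lattice vector, and the bound omits the Gaussian part of the samples, which dominates.

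Finally, the min-entropy argument for nonzeroness requires the width to exceed the smoothing parameter of the lattice that the solver's view quotients out. If the solver sees coefficient vectors of lattice samples modulo $q$, that lattice is $q\L$. You would then need $s\ge q\,\eta(\L)$, not $s\approx\lambda_k$.

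\textbf{How your route could be repaired.} Your route can be fixed, and it would then differ genuinely from the paper's. Let $m$ denote the number of \PESS integers (your $n$).
\begin{itemize}
\item Run the Micciancio--Regev step with \emph{continuous} Gaussians $\vec{y}_i$ of width $s\approx\eta(\L)$, which are trivial to sample.
\item Reduce them modulo $\mathcal{P}(\mathbf{B})$, for a basis $\mathbf{B}$ derived from the current short independent set $\mathbf{S}$. Let $\vec{a}_i\in\{0,\dots,q-1\}^k$ be the grid cell.
\item Pack $\vec{a}_i$ in base $W>2mq$ with a leading $W^{k}$. An integer collision then forces $\sum_i\epsilon_i\vec{a}_i=\vec{0}$ over $\Z$.
\end{itemize}
No modular \PESS is needed. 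Your proposed packing of $\Z_q^k$ into $\Z_{q^k}$ is wrong anyway, since carries break additivity.

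The output $\sum_i\epsilon_i\bigl(\vec{y}_i-(\vec{y}_i\bmod\mathcal{P}(\mathbf{B}))\bigr)\in\L$ has length at most $m\,(s\sqrt{k}+\poly(k)\|\mathbf{S}\|_2/q)$. It avoids any fixed hyperplane with constant probability, because $\vec{y}_i$ retains entropy given $\vec{a}_i$. Then $\Oh(k)$ halving rounds from an LLL basis reach $\poly(k)\lambda_k$.

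Compared with the paper, this needs neither block reduction nor the tossing argument with random primes $r_j$ and shifts $s_j$. The paper needs that argument because its solver sees rounded coordinates of the samples rather than just their position modulo the lattice. Your instance parameters would also not depend on the hypothesis constant. In exchange, the paper's route uses a single round with no basis-improvement iteration.

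\textbf{Parameter slip.} Your choice of $k^{1.5}$ integers gives $\delta=\Theta(k^{1/2}/\log k)$, which is $m^{1/3-o(1)}$ rather than $m^{1/2+o(1)}$. Take $m=k^2$, as the paper does with $N=n^2$.
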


This lower bound implies that the running time
in~\cref{thm:subexponential} cannot be improved to $M^{o(1/\log{\delta})}$, and
that our algorithm is optimal for polynomial $\delta$ up to constants in the exponent. We also show a lower bound that
excludes a $2^{o(n/\log{n})}$ time algorithm for \PESS, which conditionally
completes our answer to Problem~\ref{prob:subexponential} (up to a single $\log{n}$ factor in
the exponent). See \cref{fig:overview} for an overview of our results.

\cref{thm:lowerbound} is conditioned on the hardness of lattice problems. 
Shortest Vector Problem is a fundamental hardness assumption.  For a tunable
parameter $\kappa \ge 1$ and dimension $n$, one can obtain an
$n^{\kappa}$-approximation in $2^{\Oh(n/\kappa)}$
time~\cite{schnorr1987hierarchy} and after a substantial effort over the past 40
years (see,
e.g.,~\cite{schnorr1987hierarchy,schnorr1994lattice,GN08,ALNS20}),
no $2^{o(n)}$ algorithm is known that would yield polynomial approximation for
SVP. Significant improvements over these trade-offs would break modern
cryptosystems and have been already assumed by~Vafa--Vaikuntanathan~\cite{VV25} to establish hardness of \emph{symmetric binary
perceptron, number partitioning problem} and more. 
Here, we consider a stronger, fine-grained version of their assumption~\footnote{Formally,~\cref{hyp:lambda-n} concerns an important
variant of the problem called $n$-th Successive Minima Problem.}:
\begin{restatable}{hypothesis}{hypmain} \label{hyp:lambda-n}
    For any constant $c \ge 1$, there exists $\eps > 0$ such that there is no
    $\Oh(2^{\eps n})$ time algorithm that given an $n$-dimensional lattice $\Ll$
    with high probability returns a non-zero vector $\mathbf{v} \in \Ll$ such that
    $\norm{\mathbf{v}}_{2} \le \Oh(n^c) \cdot \lambda_n(\L)$.
\end{restatable}
We show that lack of $2^{o(n)}$-time polynomial approximation for this problem is a major obstacle 
to improve our $M^{\Oh(1/\log{\delta})}$-time algorithm for \PESS.
We stress that~\cref{hyp:lambda-n} is not a standard complexity assumption, 
and to the best of
our knowledge, we are the first ones to use it in that form. Regardless of whether one believes in~\cref{hyp:lambda-n} or not, our message is as follows: 

\begin{center}
\emph{To improve the running time of~\cref{thm:subexponential}, one would need first to make\\ a substantial breakthrough for lattice problems.}
\end{center}

Finally, we focus on the case $M \le 2^{n}$. \footnote{The algorithm in the average case works even for density less than $1$, 
when $M = \Theta(n^{\kappa} \cdot 2^n)$ for some $\kappa > 0$; the only caveat is that the success 
probability decreases with increasing $\kappa$.}
The average‑case setting is an important relaxation in which the
algorithm is required to work correctly on most inputs rather than on every
input. This perspective is relevant for both practical and theoretical reasons.
For example, average‑case algorithms for Subset Sum have been widely studied
and play an important role in cryptography, since such instances were long
considered as hardness assumptions~\cite{crypto-merkle78,shamir84}. Understanding
the average‑case complexity of Subset Sum, therefore, matters for both practice
and theory~\cite{LO85, HJ10, BCJ11,low-density-subset-sum,low-density-subset-sum2,dinur}.
Motivated by this, we present an $\Os(M^{1/4})$-time algorithm for 
equal Subset Sum in the average‑case setting when $M \le 2^n$.

\begin{theorem}\label{thm:average-case-integer}
    Let $M = \Theta\left(2^n\right)$. There exists a deterministic algorithm that given a set $A$ of $n$ random integers sampled uniformly from the set $\{1,\ldots,M\}$, finds distinct $X,Y \subseteq A$ such that $\Sigma(X) = \Sigma(Y)$ with probability 
    $\Omega\left(n^{-{2}}\right)$ in $\Tilde{\Oh}(2^{n/4})$ time. 
\end{theorem}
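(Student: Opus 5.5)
We plan to exploit the average-case structure through a modular, representation-based collision search in the style of Howgrave-Graham--Joux. Instead of a collision, we look for a nonzero vector $\epsilon \in \{-1,0,1\}^n$ with $\sum_i \epsilon_i a_i = 0$; setting $X=\{a_i:\epsilon_i=1\}$ and $Y=\{a_i:\epsilon_i=-1\}$ then gives the required pair. Since $M=\Theta(2^n)$, a random instance has about $3^n/(n2^n)=(3/2)^n/n$ such solutions. That is far more than we need, and we will use this slack to keep only structured solutions.

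\textbf{Step 1 (restrict the shape of the solution).} Split the indices into four blocks $I_1,\dots,I_4$ of size $n/4$. We restrict to $\epsilon$ that is $+1$ or $0$ on $I_1\cup I_2$ and $-1$ or $0$ on $I_3\cup I_4$. The equation then becomes $\Sigma(X_1)+\Sigma(X_2)=\Sigma(Y_3)+\Sigma(Y_4)$ with $X_j\subseteq I_j$ and $Y_j\subseteq I_j$. There are $2^n$ such tuples, and sums range over an interval of size $O(nM)=O(n2^n)$. We therefore expect about $\Omega(1/n)$ nontrivial solutions, and a second-moment argument over the random $a_i$ should show that at least one exists with probability $\Omega(1/n)$. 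One could also add a small padding, allowing a few extra $\pm$ coordinates, to raise this count to $\Theta(1)$.

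\textbf{Step 2 (Schroeppel--Shamir style 4-list algorithm).} Form the lists $L_j$ of subset sums of each block, each of size $2^{n/4}$. We guess a modulus $m\approx 2^{n/4}$ (prime or power of two) and a target residue $r \in \Z_m$. We then enumerate all pairs in $L_1\times L_2$ whose sum is $\equiv r \pmod m$, and all pairs in $L_3\times L_4$ whose sum is $\equiv r \pmod m$. Each enumeration runs in $\Ot(2^{n/4})$ expected time using sorted lists and heaps. We merge the two resulting lists on exact integer sums, and each match yields a solution. For a fixed $r$, the expected number of pairs is $2^{n/2}/m=2^{n/4}$, so one iteration costs $\Ot(2^{n/4})$. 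Iterating over all $r$ would cost $2^{n/2}$, which is too much.

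\textbf{Step 3 (the key saving, and the main obstacle).} We must avoid iterating over all $m$ residues. Since the expected number of solutions is only constant, we cannot rely on the representation technique: a solution has a unique middle value $\Sigma(X_1)+\Sigma(X_2) \bmod m$. The plan is instead to choose $m$ and the shape so that the number of solutions is about $m$ times larger. We enlarge the allowed coefficient set, for example by letting $\epsilon\in\{-1,0,1\}$ on every block, which gives lists of size $3^{n/4}$. We then check whether the target residue can be fixed to $r=0$ while still leaving a solution with the required probability. Concretely, we trade list size $3^{n/4}$ against the modulus and require
\[3^{n}/(n2^n m)\gtrsim 1 \quad\text{and}\quad 3^{n/2}/m\le 2^{n/4}.\]
These force $m\ge 3^{n/2}2^{-n/4}$ and $m\le (3/2)^n/n$, which are compatible. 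This yields time about $\max(3^{n/4},\,3^{n/2}/m)$, and $3^{n/4}>2^{n/4}$. So instead we keep the $\{0,1\}$ and $\{0,-1\}$ blocks but use unbalanced block sizes, and we fix $r=0$ with $m\approx 2^{n/4}$ to solve for a modular solution first. Making this parameter balancing actually reach $2^{n/4}$ with success probability $\Omega(n^{-2})$ is the step I expect to be hardest. The probability bound will come from pairwise independence of $\sum\epsilon_ia_i \bmod m$ over distinct $\epsilon$, used for both the count of solutions and the list sizes, losing a factor $n$ each from the range of the sum and from the sum-to-modulus ratio. Determinism follows because the only randomness is in the input.
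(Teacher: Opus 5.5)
\textbf{There is a genuine gap: Step 3 is left open, and the direction you choose cannot succeed.} Step 3 carries the whole difficulty. The obstruction is one you partly identify yourself.

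\emph{Too few solutions to pay for a fixed residue.} With $\epsilon$ restricted to $\{0,1\}$ on $I_1\cup I_2$ and $\{0,-1\}$ on $I_3\cup I_4$, there are only $2^n$ admissible patterns. Each nonzero pattern satisfies $\epsilon\cdot a=0$ with probability at most $1/M$ (condition on all coordinates but one). So there are $O(1)$ solutions in expectation, and each has a single middle value. A solution is output for residue $r$ only if, in addition, $\Sigma(X_1)+\Sigma(X_2)\equiv r \pmod m$. Since the $a_i$ are positive, $X$ and $Y$ are both nonempty. Conditioning on all coordinates except one $t\in X$ and one $t'\in Y$ shows the joint event has probability $O(1/(mM))$. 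Hence the expected number of solutions found is $O(1/m)$.

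\emph{The time budget forces $m$ to be large.} Whatever block sizes you choose, one side of the split has at least $n/2$ coordinates. Its residue-filtered list has expected size about $2^{n/2}/m$, so staying within $\widetilde{\mathcal{O}}(2^{n/4})$ forces $m\gtrsim 2^{n/4}$. Unbalanced blocks and fixing $r=0$ change neither count. The ternary variant already pays $3^{n/4}$ just for the base lists. So every instantiation of your plan succeeds with probability $2^{-\Omega(n)}$ per residue, and you are back at the Schroeppel--Shamir cost $2^{n/2}$.

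\emph{The probability analysis is also incomplete.} Pairwise independence of $\epsilon\cdot a \bmod m$ cannot certify \emph{exact} integer equality. That requires controlling the high-order part and the carries, for which the paper uses a local central limit theorem.

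\textbf{The missing idea is to drop disjointness and look for a collision inside a single filtered list.} Do not require $X$ and $Y$ to be disjoint and split across the two halves. The paper proceeds as follows.
\begin{enumerate}
\item A mod-$p$ dynamic program produces the left-half subsets with sum $\equiv s_0 \pmod p$ and the right-half subsets with sum $\equiv s_1 \pmod p$. Each list has expected size $2^{n/2}/p\approx 2^{n/4}$ and is obtained in $\mathrm{poly}(n)\cdot 2^{n/4}$ time.
\item These are joined, keeping only total sum $\equiv s_2 \pmod q$. This leaves about $2^{n/4}$ full subsets, all in one residue class modulo $pq\approx 2^{n/2}$.
\item The algorithm then searches this list for two equal sums.
\end{enumerate}
Now the candidate pairs $(\vec{u}_i,\vec{u}_j)$ number $\Theta(4^n)$ rather than $2^n$. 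Equivalently, each difference $\epsilon=\vec{u}_i-\vec{u}_j$ has $2^{|\{t:\epsilon_t=0\}|}$ representations, because the common part $\vec{u}^{(11)}_{ij}$ can be any subset of the zero set of $\epsilon$. This is exactly the slack that pays for pair survival probability $1/(p^4q^2)$.

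The analysis is a birthday argument on about $2^{n/2}$ pairs whose sums fall among about $\mathrm{poly}(n)\cdot 2^{n/2}$ values in the residue class.
\begin{itemize}
\item The first moment is $\Omega\bigl(4^n/(np^4q^2r)\bigr)=\Omega(1/n)$. This uses the local CLT, restricted to pairs with balanced, linear-size weights.
\item The second moment is $O(1)$. It is dominated by the modular collision count, which is bounded through a rank case analysis.
\item Paley--Zygmund then gives the $\Omega(n^{-2})$ success probability.
\end{itemize}
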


Note that this result holds in the \emph{average case}, whereas previous work
guaranteed solutions in the worst case. The algorithm is a modification of the algorithm behind~\cref{thm:subexponential}; however, the analysis is much more involved. 
Curiously, we can show that our algorithm also works for the \emph{modular version} of Equal
Subset Sum. In that
setting, the best algorithm is due to Allcock et al.~\cite{AHJ+21}, who gave an
$\Os(M^{1/2})$-time algorithm. In fact, obtaining a faster algorithm in the
modular setting is posed as an open question in~\cite{jin-slides}.

\begin{theorem}\label{thm:average-case-modular}
    Let the modulus $M = \Theta\left(2^n\right)$, where $M = pqr$,
    $p,q = \Theta\left(M^{1/4}\right)$ and $p$, $q$, and $r$
    are pairwise co-prime integers. 
    There exists a deterministic algorithm that given a set $A$ of $n$ random modular integers sampled uniformly from $\mathbb{Z}_M$, finds distinct $X,Y \subseteq A$ such that $\Sigma(X) \equiv_M \Sigma(Y)$ with probability $\Omega\left(1\right)$ in $\Tilde{\Oh}(2^{n/4})$ time.
\end{theorem}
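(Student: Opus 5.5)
We give an average-case modular algorithm built on a two-level CRT filtering scheme, in the spirit of the techniques behind \cref{thm:subexponential}. Write every element $a\in A$ through its residues $(a \bmod p, a \bmod q, a \bmod r)$. Since $A$ is uniform in $\Z_M$ and $p,q,r$ are pairwise coprime, these residues are independent and uniform in $\Z_p\times\Z_q\times\Z_r$. We look for $X\neq Y$ with $\Sigma(X)\equiv\Sigma(Y)$ modulo $p$, $q$ and $r$ simultaneously; by the CRT this is the same as congruence modulo $M$. Equivalently, we seek a nonzero vector $\epsilon\in\{-1,0,1\}^n$ with $\sum_i \epsilon_i a_i\equiv 0 \pmod M$.

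\textbf{Partition and base lists.} Split $A$ into four blocks $A_1,\dots,A_4$ of size $n/4$. For each block, enumerate all $2^{n/4}$ subsets to form a list $L_j$ of subset sums. Keep the lists of subset sums, not of signed combinations, so each list has size $2^{n/4}=\Theta(M^{1/4})$.

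\textbf{First level.} Collide $L_1$ with $L_2$ modulo $p$: find pairs of subsets $X_1,Y_1\subseteq A_1\cup A_2$ with $\Sigma(X_1)\equiv\Sigma(Y_1)\pmod p$. The cleanest way is to use the differences $d=\Sigma(U_1)-\Sigma(U_2)$ over $U_1\subseteq A_1$, $U_2\subseteq A_2$. We want all signed combinations $\Sigma(U_1)-\Sigma(U_2)+\Sigma(V_2)-\Sigma(V_1)$ on $A_1\cup A_2$ that vanish mod $p$. Rather than enumerate those, pick a random target $t$ and form
\[
\mathcal{L}_{12}=\{(U_1,U_2): \Sigma(U_1)+\Sigma(U_2)\equiv t \pmod p\},
\]
and symmetrically $\mathcal{L}_{34}$ with target $-t$. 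Each has expected size $2^{n/2}/p=\Theta(M^{1/4})$ and is computed in $\Ot(2^{n/4})$ time by sorting.

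\textbf{Second level.} Any two elements of $\mathcal{L}_{12}$ differ by a combination that is $\equiv 0 \pmod p$. We therefore look for a collision modulo $q$ among the values $\Sigma(U_1)+\Sigma(U_2)$. Working with the combined list $\mathcal{L}_{12}\times\mathcal{L}_{34}$ filtered mod $q$, we get about $2^{n}/(pq)=\Theta(M^{1/2})\cdot 2^n/M$ candidates. These are generated in output-sensitive time, so we refine the construction: fix a target mod $q$ as well, so the surviving list has size $\Theta(2^n/(pq))\approx r$. Then look for a collision mod $r$ within it, by sorting, between two distinct elements; it exists by the birthday/pigeonhole principle once the list size is about $\sqrt r$. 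Here $r=M/(pq)=\Theta(M^{1/2})$, so we need list size $\Theta(M^{1/4})$, and we tune the targets so that the list has exactly that size. Two distinct elements $(U_1,U_2,U_3,U_4)\neq(V_1,\dots,V_4)$ with equal sums mod $p$, $q$ and $r$ give $X=\bigcup U_j$ and $Y=\bigcup V_j$, after cancelling common elements, with $\Sigma(X)\equiv\Sigma(Y)\pmod M$ and $X\neq Y$. The total running time is $\Ot(2^{n/4})$.

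\textbf{Probability analysis (main obstacle).} The hard part is showing that the filtered lists really have size $\Theta(M^{1/4})$ with constant probability, and that the final mod-$r$ collision occurs with constant probability. The list sizes are sums of indicators of $\Sigma(U_1)+\Sigma(U_2)\equiv t \pmod p$, and these indicators are pairwise independent over distinct subset pairs whenever the symmetric difference is nonempty, since the residues are uniform. A second-moment (Chebyshev) bound then gives concentration at the mean $2^{n/2}/p$. For the last step, count colliding pairs $Z$ in the final list. $\mathbb{E}[Z]=\Theta(1)$ by choosing constants, and $\mathbb{E}[Z^2]=O(1)$ because distinct quadruples of index pairs give nearly independent linear forms mod $r$; dependencies arise only when the coefficient vectors are linearly dependent mod $r$, which is rare. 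Paley--Zygmund then gives $\Pr[Z>0]=\Omega(1)$. Making the pairwise and four-wise independence rigorous under the modulus $r$, which need not be prime, is where I expect the most care. I would handle it by noting that a uniform $a\in\Z_r$ times a nonzero coefficient $c\in\{\pm1,\pm2\}$ is uniform on a subgroup of index $\gcd(c,r)\le 2$, which costs only constant factors.
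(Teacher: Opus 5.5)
\textbf{Verdict: the algorithm is right, but the proof has a genuine gap at the second moment.} The algorithm is the paper's, in different words: mod-$p$ merges to fixed targets on the two halves, a mod-$q$ merge to a fixed target on the total sum, then a search for a mod-$r$ collision in the final list. The running-time part is fine, apart from two slips. First, the final list has expected size $2^n/(p^2q)=\Theta(M^{1/4})=\Theta(\sqrt r)$ automatically, not $2^n/(pq)\approx r$. No tuning of targets is needed or possible, and $\mathbb{E}[Z]=\Theta(1)$ likewise follows from the given parameters. Second, a birthday argument makes the mod-$r$ collision only probable, not guaranteed. The real gap is that the whole difficulty of the statement, $\mathbb{E}[Z^2]=O(1)$, is asserted rather than proved, and your heuristic does not survive quantification.

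$\mathbb{E}[Z^2]$ sums, over pairs of pairs $(\vec{u}_i,\vec{u}_j),(\vec{u}_l,\vec{u}_k)$, the probability that all four vectors pass the mod-$p$ and mod-$q$ filters \emph{and} both pairs collide mod $r$. You only discuss the mod-$r$ factor. You correctly see that it is $1/r$ rather than $O(1/r^2)$ exactly when $\vec{u}_i-\vec{u}_j\equiv\pm(\vec{u}_l-\vec{u}_k)$. But for those tuples the survival probability is inflated at the same time. The four columns have rank $3$ mod $q$ and rank at most $3$ on each half, so survival is $p^{-t_1-t_2}q^{-3}\ge p^{-6}q^{-3}$ instead of $p^{-8}q^{-4}$. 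Bounding the number of such tuples by the obvious $2^{3n}$ gives a contribution $2^{3n}/(p^6q^3r)=\Theta(M^{1/4})$, so ``rare'' cannot be justified by crude counting. The paper supplies two genuine ingredients here:
\begin{itemize}
\item a partition of all tuples into classes $B_{t_1,t_2,t_3}$ by the ranks of the left, right and full $4$-column matrices, with $|B_{t_1,t_2,t_3}|=O(2^{(t_1+t_2)n/2})$ and joint survival probability at most $p^{-t_1-t_2}q^{-t_3}$; this handles all generic tuples and the dependent ones with $t_1+t_2\le 5$;
\item the sharper count $|\mathsf{T}_n|<6^n$ for quadruples of distinct nonzero $0/1$ vectors with $\vec{v}_1-\vec{v}_2=\vec{v}_3-\vec{v}_4$, which makes the remaining $t_1=t_2=3$ part $O(6^n/(p^6q^3r))=O(2^{-0.165n})$.
\end{itemize}
Neither ingredient appears in your proposal.

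Your pairwise-independence/Chebyshev argument is valid for the first-level lists. It is not valid for the final list unconditionally: two candidates with the same left half pass the left mod-$p$ filter together. It becomes valid after conditioning on $\vec{a}^{(p)}$, and then gives $|L|=\Theta(\sqrt r)$ with probability $1-o(1)$.

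That conditional route is a legitimate alternative to part of the paper's argument. Condition on the final list $L$ and use only the independent $\vec{a}^{(r)}$; conditional Paley--Zygmund then reduces everything to two facts. One is $|L|=\Theta(\sqrt r)$. The other is a bound $Q(L)=O(r)$ on the number of pairs of distinct pairs in $L$ with $\vec{u}-\vec{v}=\pm(\vec{u}'-\vec{v}')$. With this, the generic terms need no rank analysis. However, bounding $\mathbb{E}[Q(L)]$ is exactly the paper's estimate for additive quadruples described above, so the missing idea is the same either way.
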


Notice that in this setting, we require $M$ to be a product of three pairwise co-prime integers, 
but the probability of success is higher. If the modulus is not a product of such integers, 
then using \cref{thm:average-case-integer}, trivially we can
solve the modular problem with success probability $\Omega(n^{-2})$. 
A faster average-case algorithm often leads to improved worst-case algorithms. 
This happened for Subset Balancing, a generalization of Equal Subset Sum, 
where recent average-case results~\cite{Chen22} motivated improved worst-case 
algorithms~\cite{DBLP:journals/corr/abs-2511-10823}. Because our analysis is already quite technical in the average-case setting, 
we hope it will eventually yield worst-case algorithms with comparable running time. 
Note that the algorithm behind~\cref{thm:subexponential} is deterministic, albeit less efficient for small densities. In the following
figure, we summarize all our results with the related previous work. 

\begin{table}[h!]
    \centering
    \begin{tabular}{lll}
        \toprule
        Runtime & Reference & Comment\\
        \bottomrule
        $\Os(M)$ & \cite{vasilakis2022improved} & \\ \hline
        $\Os(M^{1/2})$ & \cite{AHJ+21}& \\ \hline
        $\Os(M^{0.4})$ & \cite{JW24} & \\ \hline
        $\Os(M^{1/3})$ & \cite{JWZ25} &\\ \hline
        $\Os(M^{o(1)})$  & \cref{thm:subexponential} & $M=2^{o(n)}$\\ \hline
        No $M^{o(1/\log{n})}$ & \cref{thm:lowerbound} & Under~\cref{hyp:lambda-n}\\ \hline
        $\Os(M^{1/4})$ &  \cref{thm:average-case-integer} & Average-case for $M \leq 2^{n}$ \\ \hline
        $\Os(M^{1/4})$ & \cref{thm:average-case-modular} & Average-case $\mathbb{Z}_M$ for 
        $M=pqr \leq 2^n$\\ \hline
    \end{tabular}
    \caption{Summary of prior and our results for Equal Subset Sum in the pigeonhole regime. Notice that \cite{vasilakis2022improved, AHJ+21, JW24, JWZ25} also work in the regime when the sum is less than $2^{n}-1$.}
    \label{tab:prev-our-results}
\end{table}

\section{Our Techniques}
\subsection{Technique behind~\cref{thm:subexponential}}

\paragraph{Ingredient 1: Dynamic Programming.} First, there is a straightforward $\poly(n)\cdot M$ algorithm for Equal Subset Sum: enumerate subsets of $[n]$ and stop as soon as a collision is found. Since a collision must occur within the first $M$ nonzero sums, this procedure runs in $\poly(n)\cdot M$ time. Vasilakis~\cite{vasilakis2022improved} refined this idea to obtain an $\Ot(nM)$-time algorithm. If the subsets are inspected in a suitable random order, the birthday paradox implies that a collision typically appears after about $\poly(n)\cdot M^{1/2}$ steps, recovering the approach of Allcock et al.~\cite{AHJ+21}. Jin and Wu~\cite{JW24}, and later Jin, Williams, and Zhang~\cite{JWZ25}, further improve this line of attack by combining careful preprocessing with structural observations about subset-sum collisions to reach $\poly(n)\cdot M^{1/3}$ running time.

\paragraph*{Ingredient 2: Differencing Algorithm.} Next, we reformulate the
classical differencing technique of Karmarkar-Karp~\cite{KK82}. Although
originally designed for partitioning in the average-case setting, the differencing strategy can be adapted
to the Equal Subset Sum setting even in the worst-case setting.  It gives a 
polynomial-time
algorithm when integers are bounded by $M = 2^{c\log^2 n}$ for a small constant $c>0$. We reinterpret this procedure as follows:

Let $0 < a_1 < \ldots < a_n$ be the sorted input numbers and, for simplicity, assume
that $n$ is a power of $2$. For every $i \in [n/2]$, define $\Delta_i \coloneqq
a_{2i} - a_{2i-1}$, the difference between the two consecutive numbers in the
$i$-th pair (note that $\Delta_i \ge 0$). We call $i \in [n/2]$ \emph{good} if
$\Delta_i \le 4M/n$, and \emph{bad} otherwise.

Observe that there are at least $n/4$ good pairs. Indeed, if there were more than
$n/4$ bad indices, then $\sum_{i \in \text{bad}} \Delta_i > |\text{bad}| \cdot
\frac{4M}{n} > M$, which contradicts the fact that $\sum_{i \in [n/2]} \Delta_i
< a_n \le M$.

Now observe the following: if we discard bad indices, we obtain at least $n/4$ numbers whose
values decreased by a factor $n/4$. If we repeat this for $\kappa =
\Theta(\log n)$ rounds, we get a number whose value is bounded by $\le
M \cdot (\frac{4}{n})^\kappa < 1 $ for sufficiently small constant $c \in (0,1/4]$.
~\footnote{For simplicity, we ignored the fact that $n$ decreases in each iteration! 
As we will see in the proof of~\cref{thm:subexponential} 
this turns out to be a low-order factor.} Therefore, eventually, we must observe integer zero, because throughout the
procedure, the numbers are nonnegative integers.

The key combinatorial observation is that the sign chosen in the last $\Delta_i$
determines to which subset the corresponding element $a_i$ belongs in
the final solution. Hence, an integer zero naturally corresponds to two different
subsets of the same sum.

\paragraph*{Combining the ingredients.}

Karmarkar--Karp differencing approach already has several applications in
cryptology.  Importantly for us, Wagner~\cite{wagner2002generalized} extended
it to solve \emph{the $k$-list problem} in the average-case setting (see also
an independent work of~\cite{blum2003noise}). Similarly to these works, we also
combine dynamic programming with the Karmarkar--Karp differencing approach.
Somehow surprisingly, in comparison to these approaches, our algorithm is deterministic and
works in the worst case.

Let $k$ be a largest integer such that $M \le k^{n/(4k)}$. For simplicity, let us assume
that $k$ is a power of $2$ and $k$ divides $n$ and let $W \coloneqq 2^{n/(2k)}$
(in \cref{sec:subexponential} we give tighter bounds). We start by partitioning
the input into $k$ equal blocks and precompute the list of all subset sums
inside each block to obtain sets $\mathcal{A}_i^{(0)}$ for each $i \in [k]$.  Rather than matching high‑order bits (as in the differencing
approach), we repeatedly merge blocks and keep large sumsets that share the
same low‑order bits, i.e., we work modulo $W^j$ in level $j$. 

More precisely, at each level $j = 1,2,\ldots,\log{k}$ we maintain sets
$\mathcal{A}_i^{(j)}$ for every $i \in [k/2^j]$. The sets $\mathcal{A}_{i}^{(j)}$ 
arise by \emph{merging} two sets $\mathcal{A}_{2i-1}^{(j-1)}$ and $\mathcal{A}_{2i}^{(j-1)}$ from the previous levels. Merging
is performed by taking the sumset $\mathcal{A}_{2i-1}^{(j-1)} +
\mathcal{A}_{2i}^{(j-1)}$.\footnote{We use $X+Y \coloneqq \{x + y \mid x \in X,\,y \in Y\}$ to denote a sumset.}
Notice that the size of that set increases; therefore, we need to \emph{prune} it.
We achieve it by partitioning the sumset into $W$ sets of integers based on their value modulo $W^j$ and keeping a representative subset of size exactly $W$ from the largest set. This pruning keeps the lists to have size exactly $W$ while maintaining a
large ensemble of sums whose lowest bits agree (see~\cref{alg:1}).

\begin{algorithm}[ht!]
\caption{Algorithm behind~\cref{thm:subexponential}.}
\label{alg:1}
Let $M \le k^{n/(4k)}$ and $W \coloneqq 2^{n/(2k)}$

Partition $S$ arbitrarily into $S_1 \uplus \dots \uplus S_k$ with $|S_i|=n/k$

$\Aa_i^{(0)} \gets \{ \Sigma(X) \mid X \subseteq S_i \}$ for every  $i \in [k]$ and 
return collision if found

$p_i^{(0)} \gets 0$ for every $i \in [k]$.

\For{$j = 1,\ldots, \log k$ and $i = 1,\ldots,k/2^j$}{
    $\mathcal{B} \gets \Aa_{2i-1}^{(j-1)} + \Aa_{2i}^{(j-1)}$ and $b \gets p_{2i-1}^{(j-1)} + p_{2i}^{(j-1)}$
    and return a collision if found
    
    $\Ll_{i,p}^{(j)} \gets \{ a \in \mathcal{B} \mid a \equiv p \cdot W^{j-1} + b \pmod{W^j} \}$ 
    for all $0 \le p < W$\\
    $p^\star \gets \arg\max_{0 \le p < W} |\Ll_{i,p}^{(j)}|$

    $p_i^{(j)} \gets p^\star \cdot W^{j-1} + b$

    $\Aa_i^{(j)}$ is arbitrary subset of $\Ll_{i,p^\star}^{(j)}$ of size $W$
}
\end{algorithm}

In each level $j$, we have $k/2^j$ pairwise disjoint sets of size at most $W$.
Transition in each level can be computed in quadratic time; hence the total
time is $\poly(n) \cdot W^{\Oh(1)} = 2^{\Oh(n/k)} = M^{\Oh(1/\log{k})} = M^{o(1)}$ if $k$ is allowed to be large enough. 

For correctness, we claim that we maintain the following invariants: if no
collision was detected until level $j$, then each $\mathcal{A}_i^{(j)}$ has size
$W$ and consists of subset sums that are all congruent to some value $p_i^{(j)}$
modulo $W^j$ . Hence, if no collision was detected, the last set
$\Aa_i^{(\log{k})}$ contains two different numbers that are congruent modulo 
$W^{\log{k}}$. By the choice of $W$ and $k$ this means that $W^{\log{k}} > nM$,
which implies equality as integers, contradicting the invariant that
$|A^{(j)}_1| = W \ge 2$. Therefore, at some level $j
\le \log{k}$, the algorithm detects a collision! We include the full proof of~\cref{thm:subexponential} in~\cref{sec:subexponential}. 

\subsection{Techniques behind~\cref{thm:lowerbound}}
Very recently, the connection between the Karmarkar-Karp differencing method and lattice problems was highlighted by Vafa and Vaikuntanathan~\cite{VV25}. Their work studies an average-case version of the \emph{Number Partitioning Problem} (NPP), where one is given a random vector $\mathbf{a} = (a_1,\dots,a_n) \sim \mathcal{N}(\mathbf{0},\mathbf{I}_n)$ and the goal is to find a vector $\mathbf{x} \in \{-1,1\}^n$ such that
$\left| \mathbf{a}^{\top}\mathbf{x} \right|
    \leq
    \kappa(n)\sqrt{n}.$
    
They prove a worst-case-to-average-case reduction from the lattice problem \emph{Incremental Guaranteed Distance Decoding} (IncGDD) to NPP. In particular, they show that an average-case polynomial-time algorithm for NPP achieving $\kappa(n) = 2^{-\Omega(\log^{3+\varepsilon} n)}$ would contradict their assumed subexponential hardness of approximate lattice problems~\cite{VV25} including the Shortest Independent Vector Problem (SIVP) (see~\cref{hyp:weak-sivp}). Thus, under these assumptions, Karmarkar-Karp is nearly optimal in polynomial time.

They also note that their reduction extends to related variants, including the \emph{Number Balancing Problem} (NBP) where the Gaussian distribution is replaced with the uniform distribution on $[0,1]$ and solution vectors are $\{-1,0,1\}^n \setminus \{\mathbf{0}\}$~\cite{VV25}. At a high level, one may view NBP as closely related to an equal-sum type problem after an appropriate discretization of the input numbers. In particular, if the target error is $\kappa(n)$, then the relevant integer range after scaling is roughly of magnitude $1/\kappa(n)$. This suggests that the hardness result of~\cite{VV25} is consistent with the absence of polynomial-time algorithms for average-case \PESS when $M \approx \frac{1}{\kappa(n)} = 2^{\Omega(\log^{3+\varepsilon} n)}$.

Consequently, this also provides evidence against polynomial-time algorithms for the corresponding worst-case regime. However, these results still leave open the possibility of quasipolynomial-time, or even subexponential-time 
(see Problem~\ref{prob:subexponential}), algorithms for worst-case \PESS when $M \geq 2^{\log^{2+\varepsilon}(n)}$.

\paragraph*{Technique behind our lower bound.} Inspired by the above, we show a
connection between lattice problems and \PESS. We achieve a tight lower bound,
but in comparison to~\cite{VV25}: (i) we work only in the
worst-case setting, (ii) we only show lower bounds against subexponential time algorithms, and
(iii) \cref{hyp:lambda-n} is much stronger than the assumptions in~\cite{VV25}
(see discussion in~\cref{sec:prelim}).

Our reduction is from \emph{$n^{th}$-Successive Minima Problem} (nSMP)
(see~\cref{def:nSMP}). The task in nSMP\footnote{There is a well-studied lattice
problem called the \emph{Successive Minima Problem} (SMP)~\cite{Mic08}, which
asks for $n$ linearly independent lattice vectors
$\mathbf{v}_1,\dots,\mathbf{v}_n$ such that $\norm{\mathbf{v}_i}=\lambda_i(\L)$
for every $i \in [n]$.} is to find a nonzero lattice vector $\mathbf{v}\in \L$
such that $\norm{\mathbf{v}}_2 \leq \lambda_n(\L)$. By
\cref{thm:SIVP-SVP-variant}, 
this formulation is equivalent (up to an irrelevant to us loss
in approximation), to the approximate \SIVP.

At a high level, the reduction is motivated by the classical sieve of Ajtai,
Kumar, and Sivakumar~\cite{AKS01} for obtaining a short vector. However, our
proof does not literally simulate a sieve. Instead, it samples many lattice
vectors, encodes them as positive integers, invokes the assumed black-box \PESS
algorithm on the resulting instance, and decodes the returned equal-sum
collision into a short lattice vector. Thus, \PESS plays the role of a
collision-finding primitive analogous in spirit to the sieve step.

More concretely, on input we are given an $n$-dimensional lattice $\L$ and we
will use hypothetical $M^{o(1/\log{N})}$-time algorithm that solves \PESS on
instances with $N = \Theta(n^2)$ integers bounded by $M = 2^{\Oh(n\log{n})}$.  We
start by preprocessing an $n$-dimensional lattice $\L$. First, we compute a short
basis $\mathbf{D}$.  If we use the state-of-the-art algorithms (e.g.,
\cite{schnorr1987hierarchy,ALNS20}) in time $\Oh(2^{\eps n})$ we can
compute a basis of length $B \cdot \lambda_n(\Ll)$, where $B =
n^{\Theta(1/\eps)}$ for any tunable $\eps > 0$.
Next, we
sample independent lattice points $\mathbf{x}^{(1)},\dots,\mathbf{x}^{(N)} \sim D_{\L,s}$ using the (good) basis $\mathbf{D}$ generated in the first step, for a suitable Gaussian parameter $s = \Theta(\sqrt{n}\,N \norm{\mathbf{D}}_2)$. Standard tail bounds imply that, except with negligible probability, all sampled vectors satisfy
\[
    \norm{\mathbf{x}^{(i)}}_{\infty} \leq BN^2\lambda_n(\L)
    \qquad\text{for every } i\in[N].
\]

We then encode each sampled vector as a positive integer using a base-$W$ representation, where $W = \Theta(BN^3)$. Schematically, the encoding has the form
\[
    a_i = W^n + \sum_{j=1}^n W^{j-1}
    \left\lfloor
        \frac{(\mathbf{x}^{(i)})_j/\lambda_n(\L)}{r_j}
    \right\rfloor,
\]
where each $r_j$ is a random prime in $[N^{3}]$; in the actual proof, we also add independent random shifts before rounding to ensure all integers are different with high probability. The additive term $W^n$ ensures positivity, while the choice of $W$ is large enough to prevent carries between coordinates. Strictly speaking, the reduction does not know $\lambda_n(\L)$ in advance. Instead, it guesses this quantity up to a factor of $2$ by enumerating powers of two, and for the correct guess obtains $\widetilde{\lambda}_n$ satisfying $\lambda_n(\L) \leq \widetilde{\lambda}_n \leq 2\lambda_n(\L)$ (similarly to Ajtai et al.~\cite{AKS01}). Accordingly, the encoding uses $\widetilde{\lambda}_n$ in place of $\lambda_n(\L)$.

Since each coordinate contributes only $\Oh(BN^2)$ to a digit, every encoded integer is at most $M = (BN)^{\Oh(n)}$. For our choice of parameters, this satisfies $M \leq 2^{o(N)}$, so the resulting instance lies in the subexponential input range covered by the assumed \PESS algorithm.

Now suppose the \PESS oracle returns two distinct subsets $X,Y \subseteq [N]$ such that $\sum_{i\in X} a_i = \sum_{j\in Y} a_j$. From this collision, the reduction outputs the lattice vector $\mathbf{v} \;=\; \sum_{i\in X} \mathbf{x}^{(i)} - \sum_{j\in Y} \mathbf{x}^{(j)}$. Because the base $W$ is much larger than the magnitude of every summed digit (i.e., there can be at most $2N$ sums of $\Oh(BN^2)$ digits), equality of the encoded sums implies that there are no carries between digit positions. Hence the corresponding rounded quantities cancel coordinate-wise, and the only loss comes from the rounding error. If we plug in the parameters, this yields $\norm{\mathbf{v}}_2 \leq N^5 \cdot \lambda_n(\L)$. Thus, any successful \PESS call gives a sufficiently short lattice vector.

The main technical issue is to guarantee that the decoded vector is \emph{nonzero}. To address this, we analyze a modified experiment (similarly to the tossing argument~\cite{AKS01, Reg04}). Let $\mathbf{v}^{\ast}\in\L$ be a shortest nonzero vector. Instead of sampling only from $D_{\L,s}$, we first sample $\mathbf{x}^{\prime(1)},\dots,\mathbf{x}^{\prime(N)} \sim D_{\L,s}$ and then independently choose bits $b_1,\dots,b_N \in \{0,1\}$, replacing each sample by $\mathbf{x}^{(i)} = \mathbf{x}^{\prime(i)} + b_i \mathbf{v}^{\ast}$.

Since $s$ is large, the distributions $D_{\L,s}$ and $D_{\L,s}+\mathbf{v}^{\ast}$ are statistically close, so the original and modified sampling procedures are nearly indistinguishable. On the other hand, the random primes (and the random shifts) in the encoding make it unlikely that adding $\mathbf{v}^{\ast}$ changes any encoded integer. Therefore, with high probability, the \PESS oracle sees the same instance in both experiments. However, if we flip one bit $b_t$, then the decoded lattice output changes by $\pm \mathbf{v}^{\ast}$. Hence, among the two paired executions, at most one can output the zero vector. It follows that the modified reduction outputs a nonzero short vector with constant probability; therefore, so does the original reduction, up to negligible loss.

Finally, the preprocessing, sampling, and encoding steps all run in time $\Oh(2^{\eps n})$ for some very small $\eps > 0$. The dominant term is the call to the assumed \PESS algorithm on $N$ integers of magnitude at most $M$, which presumably runs in time:
\begin{align*}
    \poly(n) \cdot M^{o(1/\log N)} &\ll \poly(n) \cdot M^{\eps^2/\log{N}} \\
    &=
    (BN)^{\Oh(\eps^2 n/\log{N})} = 2^{\Oh(\eps n)},
\end{align*}
as $N = \poly(n)$ and $B = n^{\Oh(1/\eps)}$.  Therefore, for any constant $\eps
> 0$ the assumed fast \PESS algorithm yields a $2^{\Oh(\eps n)}$-time algorithm that
with high probability outputs a nonzero vector of length $N^5 \cdot \lambda_n(\Ll) \le n^{10}\cdot
\lambda_n(\Ll)$, contradicting~\cref {hyp:lambda-n}. See~\cref{sec:lowerbound} for a complete proof.

\subsection{Techniques behind~\cref{thm:average-case-integer} and \cref{thm:average-case-modular}}
\cref{thm:average-case-integer} and \cref{thm:average-case-modular} are based
on the same two-level filtering framework via the {\em filtering algorithm} (see
below). We partition the $n$ input integers into four equal blocks, enumerate all subset sums
inside each block, and then perform two rounds of merging: first modulo $p$,
and then modulo $q$. Thus, in both settings, the output of the filtering stage
is a final list of subset sums that agree modulo $pq$. This agreement modulo
$pq$ is only an intermediate condition. In the modular setting, to obtain a
collision modulo $M=pqr$, one must additionally ensure collision in the
residual $r$-part, after which the Chinese Remainder Theorem yields a
collision modulo $M$. In the integer setting, agreement modulo $pq$ again
controls only the low-order bits, and one must additionally understand carry
propagation and the remaining high-order contribution. Thus, the main
difference lies not in the algorithmic template, but in its analysis $-$
in particular, the analysis of the collision variable that denotes the number of 
collisions in the final list. The modular 
setting is relatively simpler, due to uniform distributions, CRT, and modular arithmetic. 
Whereas in the integer setting, even the first-moment calculation of the
collision variable becomes substantially more intricate. Nevertheless, the modular analysis 
remains crucial. As we will see later, 
controlling the success probability requires a second-moment bound, and in the integer case, 
this is obtained by comparison with the more tractable modular collision variable.
 
\paragraph*{Setup.} For simplicity, here, we assume that the input integer bound/modulus $M = pqr$, where 
$2^{n-1} \leq M \leq 2^n$, $p, q \in [2^{n/4 - 1}, 2^{n/4}]$, and $p$, $q$, and $r$ are pairwise co-prime integers. 
Refer to Problem~\ref{prob: average_case_modular} and \ref{prob: average_case_integer} 
for the formal definition of the problems. We will investigate 
the case of unstructured $M \neq pqr$ in detail in the main section. 

\begin{table}[h!]
    \centering
    \begin{tabular}{  m{5em}  m{7cm} m{6cm}  } 
        \toprule
        & \textbf{Modular} & \textbf{Integer} \\ 
        \bottomrule
        Input & $\Vec{a} \in \mathbb{Z}_M^n$ & $\Vec{b} \in [M]^n$\\
        \hline
        Problem & Find $\Vec{u}_i \neq \Vec{u}_j$ in $\{0, 1\}^n \setminus \{0^n\}$,
        & Find $\Vec{u}_i \neq \Vec{u}_j$ in $\{0, 1\}^n \setminus \{0^n\}$ \\
        & $\Vec{a}\cdot\Vec{u}_i \equiv_M \Vec{a}\cdot\Vec{u}_j$ 
        &$\Vec{b}\cdot\Vec{u}_i = \Vec{b}\cdot\Vec{u}_j$\\ 
        \hline
        Input Split & $a_i^{(pq)} \coloneqq a_i \mod pq; a_i^{(r)} \coloneqq a_i \mod r$ 
        & $b_i = b_i^{(pq)} + pq\cdot b_i^{(r)}$ such that\\ 
        & $a_i^{(pq)} \in \mathbb{Z}_{pq}$, $a_i^{(r)} \in \mathbb{Z}_{r}$ 
        & $b_i^{(pq)} \in [0, pq)$, $b_i^{(r)} \in [0, r]$\\
        \hline
        Distribution & $a_i^{(pq)}, a_i^{(r)}$ independent & $b_i^{(pq)}, b_i^{(r)}$ independent \\
        & and uniform over their support & and uniform over their support\\
        \hline
    \end{tabular}
    \caption{Setup for Modular and Integer problem}
\end{table}
From now on, we will only use the variable $\vec{a}$ for both cases. From the context, the reader can understand the support of the variables. 

A natural question is why we do not work directly with a modulus of the form $M = pq$. The reason lies 
in the complexity of the different levels of our algorithm (see below), which is proportional to 
the sizes of the integers $p$ and $q$ 
used in the filtering steps. To balance this complexity, the modulus must 
contain two distinct prime factors of size approximately $M^{1/4}$. The remaining factor, denoted by $r$, 
accounts for the rest of the modulus. Moreover, requiring the factors to be pairwise coprime is 
essential to apply the Chinese Remainder Theorem.

\paragraph*{Filtering Algorithm.}
The algorithm behind~\cref{thm:average-case-integer} and \cref{thm:average-case-modular} has two levels.
First, we start by partitioning the input elements into four equal blocks of size $n/4$, and then enumerating all subset sums
inside each block. Let $L_1, L_2, L_3$, and $L_4$ denote these four lists consisting of $2^{n/4}$ elements. 
We merge these four lists to store subsets that collide over low-order bits. 

Formally, at the first level, we merge lists $L_1$ and $L_2$ to get a list $L_{12}$ such that this list contains all
the subsets of the first $n/2$ elements that agree on mod $p$ bits. We merge $L_3$ and $L_4$ similarly to get another list $L_{34}$.  
This merging can be achieved by a dynamic programming approach that finds solutions modulo $p$ (see ~\cite{JW24, JWZ25}).
Then, at the second level, we merge these two lists 
such that all the subset sums agree on mod $q$ bits. 
The algorithm searches the final list for the desired collision. If one is found, it outputs
the collision; otherwise, it outputs $\bot$. We represent this algorithm schematically in Figure $\ref{fig:filtering_algo}$. 

\begin{figure}[h!]
\centering
    \resizebox{.40\linewidth}{!}{% (0.90\columnwidth in the two-column version; same physical size)
    \begin{tikzpicture}[>=Stealth, node distance=2.5cm]
        % Bottom language labels
        \node (L1) at (-4,4) {$L_1$};
        \node (L2) at (-2,4) {$L_2$};
        \node (L3) at (2,4) {$L_3$};
        \node (L4) at (4,4) {$L_4$};
        
        % Concatenation operator node
        \node (concat_12) at (-2.75,2) {$\bowtie_{p}$};
        \node (concat_34) at (3.25, 2) {$\bowtie_{p}$};
        
        % Arrows from L1 and L2 to operator
        \draw[->] (-4, 3.5) -- (concat_12);
        \draw[->] (-2, 3.5) -- (concat_12);
        
        % Arrows from L3 and L4 to operator
        \draw[->] (2, 3.5) -- (concat_34);
        \draw[->] (4, 3.5) -- (concat_34);
        
        \node (L12) at (-2.75, 1.5) {$L_{12}$};
        \node (L34) at (3.25, 1.5) {$L_{34}$};
        \node (concat_1234) at (0, 0) {$\bowtie_{q}$};
        
        % Arrow downward to result
        \draw[->] (-2.75, 1.25) -- (concat_1234);
        \draw[->] (3.25, 1.25) -- (concat_1234);
        \node (L1234) at (0, -0.5) {$L_{1234}$};
    \end{tikzpicture}}
\caption{Schematic diagram of the filtering algorithm.}\label{fig:filtering_algo}
\end{figure}

To analyze this common filtering procedure, we introduce collision-counting random variables and study their first and second moments.

\paragraph*{Distribution of collisions.}
Now, we aim to analyze the distribution of integer and modular collisions in the final list $L_{1234}$.
For this, we first define two random variables that denote the number of non-trivial integer
and modular collisions in this final list. Let $\mathcal{C}^{(\mathsf{int})}$ and $\mathcal{C}^{(\mathsf{pqr})}$ denote these two
random variables, respectively. Our filtering algorithm succeeds if the value of these variables is at least $1$. Hence, we look for the probability of this event. 

For any input instance, we can compute the value of $\mathcal{C}^{\mathsf{(int)}}$, and similarly $\mathcal{C}^{\mathsf{(pqr)}}$, 
by iterating through all the possible pairs of binary vectors, 
and considering whether the vectors (or the corresponding subset sum) pass through the 
filters of the algorithm and collide, over integers or modulo $pqr$, based on the problem.  
For each pair of binary vectors $\vec{u}_i, \vec{u}_j$, we define a binary random variable $\mathcal{C}^{(\mathsf{int})}_{ij}$
and $\mathcal{C}^{(\mathsf{pqr})}_{ij}$,
such that their value is $1$ if and only if $\vec{u}_i$ and $\vec{u}_j$ pass through all the filters of the algorithm, 
and they collide over integers or modulo $r$ based on the problem. 

\paragraph*{First Moment.} 
    In the modular case, we have that the probability of vectors $\vec{u}_i$ and $\vec{u}_j$ surviving the algorithm is $1/p^4q^2$
    \footnote{The probability
    can be different if the left part (first $n/2$ elements) or the right part (last $n/2$ elements) 
    of these two binary vectors is the zero vector. We'll carefully analyze those
    cases in the main section.},
    and the probability of them colliding modulo $r$ is $1/r$, given $\vec{u}_i \neq \vec{u}_j$. 
    Hence, we have 
    $\underset{\vec{a}}{\mathbb{E}} \left(\mathcal{C}^{(\mathsf{pqr})}_{ij}\right) = 1/p^4q^2r$.
    Since there are $\Theta(4^n)$ pairs of binary vectors, using the bounds
    for $M, p$, and $q$, we have
    $\underset{\vec{a}}{\mathbb{E}} \left(\mathcal{C}^{(\mathsf{pqr})}\right) = \Omega(1)$ (see~\cref{sec: average-case-modular} for details). 

    Next, we focus on the integer case.  The vectors surviving the algorithm and their collision over the remaining bits are
    dependent on each other. Hence, the analysis of the first moment in this case
    is a bit involved.
    We split the integer subset sum as follows: 
    \vspace{-1em}
    
    \begin{equation*}
        \begin{aligned}
            \vec{a} \cdot \vec{u}_i &= \left(\vec{a}^{(pq)} + \vec{a}^{(r)} \cdot pq\right)
            \cdot \vec{u}_i,\\
            &= \underbrace{\vec{a}^{(pq)} \cdot \vec{u}_i \mod pq}_{\mod pq 
            \text{ bits}} + 
            \underbrace{\frac{\vec{a}^{(pq)}\cdot \vec{u}_i - (\vec{a}^{(pq)}\cdot\vec{u}_i 
            \mod pq)}{pq}}_{\text{carry bits}} \cdot pq + 
            \underbrace{\vec{a}^{(r)} \cdot \vec{u}_i}_{\text{remaining bits}} \cdot \,pq.
        \end{aligned}
    \end{equation*}
    
    To make the analysis simpler, also since we are only interested in the lower bound of the first moment, we redefine the random variable $\mathcal{C}^{(\mathsf{int})}_{ij}$ as follows: its value is $1$ if
    and only if $\vec{u}_i, \vec{u}_j$ survive filtering algorithm,
    they collide over the carry bits, and the remaining bits separately.
    Since the vectors $\vec{a}^{(pq)}$ and $\vec{a}^{(r)}$ are independent, we can analyze the corresponding sums separately. 
    
    To characterize the distribution of the sum of integer-valued random variables, we employ the 
    Local Central Limit Theorem (LCLT)~\cite{ibragimov1965independent, mitzenmacher2017probability}. Informally, the LCLT asserts that the distribution of the sum of I.I.D. random variables converges to a discrete Gaussian distribution, and that in a neighborhood of the mean, the point probabilities are approximately uniform up to constant factors. 
    To apply this result in our setting, we restrict attention to binary vectors whose Hamming weight is 
    $\Theta(n)$. A formal statement of the theorem is provided in~\cref{sec: average-case-integer}.

    Leveraging the smoothness of the sum distributions, we first show that the carry bits converge in distribution to a discrete normal distribution, 
    and henceforth we prove that the total sum modulo $pq$
    and the left sum modulo $p$ condition on the carry bits converges to the uniform distribution asymptotically.
    Finally, we show that the following three events $E_1, E_2$ and $E_3$ are asymptotically independent, where $E_1 \coloneqq \sum_{j = 1}^{n/2} a_j^{(pq)} \cdot u_{ij} \equiv_{p} \alpha, \; E_2\coloneqq \vec{a}^{(pq)} \cdot \vec{u}_i \equiv_{pq} \beta$,
    and $E_3 \coloneqq \floor{\vec{a}^{(pq)} \cdot \vec{u}_i/pq} = \gamma$.

    After analyzing the dependency of the events relying on the vector $\vec{a}^{(pq)}$, and using the probability of two Gaussian distributions colliding, 
    we get that the probability that both $\vec{u}_i$, $\vec{u}_j$ survive the filter algorithm and they collide over the carry bits is $\Theta(1/(p^4q^2\sqrt{n}))$. 
    Combining this result and using the Local Central Limit Theorem again for the
    remaining bits, we have $\underset{\vec{a}}{\mathbb{E}} \left(\mathcal{C}^{(\mathsf{int})}_{ij}\right) = \Omega(1/(np^4q^2r))$. Since there are $\Theta(4^n)$ pairs of binary vectors such that 
    their Hamming weights satisfy \textit{certain} specific conditions (mentioned in~\cref{sec: average-case-integer}),
    we have $\underset{\vec{a}}{\mathbb{E}} \left(\mathcal{C}^{(\mathsf{int})}\right) = \Omega(1/n)$.

\paragraph{Second moment.}
At this point, we know that the expected number of collisions is positive (and in the modular case, even bounded away from $0$).  However, a lower bound on $\mathbb{E}[\mathcal{C}]$ alone does not imply that the algorithm succeeds on a typical input. Indeed, a random variable can have expectation $1$ while
being $0$ with overwhelming probability and very large on a tiny exceptional set. 
Thus, the first moment identifies only the correct \emph{scale} of the number of collisions; by itself, it {\em does not} yield a lower bound on $\mathbb{P}[\mathcal{C} > 0]$.

Therefore, to prove that the filtering algorithm succeeds with substantial probability, it is essential to control the second moment of $\mathcal{C}$.  This is precisely why our analysis focuses on bounding $\mathbb{E}[\mathcal{C}^2]$, and, in the integer
case, why we first pass through the more tractable modular variable
$\mathcal{C}^{(pqr)}$ in~\cref{sec: average-case-modular}.

As discussed above, analyzing $\mathcal{C}^{(\mathsf{int})}_{ij}$ directly is challenging because of the interaction between pseudo-uniform distributions and carry propagation. In particular, obtaining a useful upper bound on its second moment appears substantially {\em more delicate} than in the modular setting.
However, since
$\mathcal{C}^{(\mathsf{int})}_{ij} \leq \mathcal{C}^{(\mathsf{pqr})}_{ij}$ and
our goal is to upper-bound the second moment, it is enough to study the second
moment of $\mathcal{C}^{(\mathsf{pqr})}_{ij}$ instead!\footnote{In principle, the relevant moments in the modular case can also be approached using exponential sums. However, this becomes convoluted 
at the level of the second moment, so we instead adopt an algebraic approach. We expect that 
a more refined use of exponential sums could potentially improve the success
probability by exploiting additional randomness in the filtering algorithm.}
This modular quantity is considerably more tractable, since it can be analyzed using the structural advantages of the Chinese Remainder Theorem and modular arithmetic. Thus, the modular case is not merely a warm-up: it provides the key input for controlling the second moment, and hence the success probability, in the integer setting.

Next, we explain how to compute the second moment $\mathcal{C}^{(\mathsf{pqr})}$.
The second moment of $\mathcal{C}^{(\mathsf{pqr})}$ can be written as follows (we replace $\mathcal{C}^{(\mathsf{pqr})}$ with $\mathcal{C}$ for the ease of presentation): 
\begin{equation*}
    \begin{aligned}
        \underset{\vec{a}}{\mathbb{E}}\left(\mathcal{C}^2\right) &= 
       \underset{\vec{a}}{\mathbb{E}}\left(\sum_{}\mathcal{C}_{ij}.\sum_{}\mathcal{C}_{\ell k}\right)\\
        &= \sum_{i < j, \ell < k} \underset{\vec{a}}{\mathbb{E}}\left(\mathcal{C}_{ij}\cdot\mathcal{C}_{\ell k}\right) \\
        &= \sum_{\substack{(i, j) \neq (\ell, k)}} \Pr_{\vec{a}} \left(\mathcal{C}_{ij} = 1 \land \mathcal{C}_{\ell k} = 1\right) + 
        \underset{\vec{a}}{\mathbb{E}}\left(\mathcal{C}\right)
    \end{aligned}
\end{equation*}
The main difficulty in analyzing the second moment stems from dependencies between events induced by linear relationships among the binary vectors.

To address this issue, we partition the set 
$S$ of all possible $4-$tuples of binary vectors into cells such that, for every tuple within a given cell, the joint probability $\Pr_{\vec{a}} (\mathcal{C}_{ij} = 1 \land \mathcal{C}_{\ell k} = 1)$,
admits a uniform upper bound, while the size of each cell can also be tightly bounded. This structured partitioning allows us to control the contribution of dependent events to the second moment. The construction and analysis of these cells rely on a combination of linear-algebraic and combinatorial techniques.
After analyzing all these dependencies, we get the desired result that $\underset{\vec{a}}{\mathbb{E}}\left(\mathcal{C}^2\right) =  \mathcal{O}(1)$ (see~\cref{sec: average-case-modular} for details).

\paragraph{Complexity Analysis.}
The complexity analysis of the algorithm boils down to computing the expected sizes of the list 
generated by the algorithm. A crucial observation is that the algorithm only depends on the random variable $\Vec{a}^{(pq)}$. Now, if $\Vec{a}^{(pq)}$ is uniformly distributed over $\mathbb{Z}_{pq}^n$ then
$\Vec{a}^{(pq)} \cdot \Vec{u}$ modulo $pq$ is uniformly distributed over $\mathbb{Z}_{pq}$, given $\vec{u} \in \{0, 1\}^n \setminus \{0^n\}$. 
Using similar arguments, we can say that in both the modular and the integer case,
subset sums mod $p$ and $q$ are uniformly distributed over $\mathbb{Z}_p$ and $\mathbb{Z}_q$ respectively. Therefore, the probability of any subset
surviving the first level is $1/p^2 = 1/2^{n/2}$ ($1/p$ for each constraint), 
and it reaching the second level is $1/p^2q = 1/2^{3n/4}$. 
And, since there are $2^n$ subsets in total, the expected size of the final list is $2^{n/4}$. 
Using similar arguments, we conclude that the expected sizes of the lists $L_{12}$ and $L_{34}$ are also $2^{n/4}$.

To the best of our knowledge, this is the first fine-grained average-case
analysis for Equal Subset Sum (with density $1$) that combines two-round
Wagner-style pruning with an explicit second-moment argument. We further
conjecture that the same viewpoint may eventually extend to the worst-case setting, once combined with the structural ideas developed by Jin and Wu~\cite{JW24}. We also conjecture that by using the randomness available in the filtering algorithm, i.e., choosing the bins modulo $p$ and $q$
uniformly randomly rather fixed values, we can improve the success probability to $1 - o(1)$.

\subsection{Related Work}
Ajtai~\cite{ajtai96} showed that an $\Oh(\sqrt{n})$-approximation to Minkowski's vector problem would yield an $\Oh(n)$-approximation to the Shortest Vector problem.
Hoberg et al.~\cite{hoberg2017number} showed connection between Number Balancing
Problem and Minkowski. In the setting of Pigeonhole Equal Subset Sum, their
result implies that a polynomial-time algorithm when the sum of the integers is $2^{n-\sqrt{n}}$ would
yield a polynomial approximation to the Shortest Vector Problem in polynomial time.

There has been a significant interest in analyzing the Subset Sum in restricted regimes or under average-case assumptions. The classical result by Lagarias and Odlyzko~\cite{LO85} shows that Subset Sum can be solved in polynomial time on average when the input lies in the low-density regime. This line of work has had cryptographic implications, particularly in attacking knapsack-based cryptosystems, and has led to further developments~\cite{low-density-subset-sum2,HJ10,BCJ11,DBLP:conf/stacs/JouxW26}. More recently, Agrawal\etal~\cite{ASS+24} presented a new algorithm for average-case $k$-XOR problems at low densities.

With regard to reductions from variants of the Subset Sum problem to lattice problems, van Emde Boas~\cite{van81} gave a reduction from Equal Sum to the Closest Vector Problem (CVP). A standard reduction from ordinary Subset Sum to CVP is explicitly presented by Micciancio~\cite{Mic02}. Krishnamoorthy, Webb, and Moyer~\cite{KWM12} reduced a decisional version of a variant of the Number Partitioning Problem to CVP. Yu and Xiao~\cite{YX18} reduced multi-dimensional modular subset sum to the Bounded Distance Decoding problem. Dutta and Rajasree~\cite{DR23} showed that Unbounded Subset Sum reduces to CVP under the $\ell_1$ and $\ell_\infty$ norms. More recently, Hittmeir~\cite{hit25} gave a reduction from a constrained variant of Subset Sum to the gap variant of Shortest Vector Problem.

\section{Preliminaries}\label{sec:prelim}

Throughout the paper, all logarithms are base $2$ unless stated otherwise. We denote the set of all positive integers up to $n$ as $[n] \coloneqq \set{1, \ldots, n}$. Let $\BigO^*(\cdot),\Omega^*(\cdot)$ hide $\poly(n)$ factors, where $n$ is the number of input integers in the \PHES problem.  We use $\w(A) = \sum_{i \in A} a_i$ for $A \subseteq [n]$. For integers $a,b,c$ we use $a \equiv_c b$ or $a \equiv b \pmod c$ to denote 
the fact that $c$ divides $a-b$. 

We denote vectors in boldface, e.g., $\vec{v}$, and matrices in uppercase, e.g., $\mat{A}$. For a vector $\vec{v}$, we denote its $i^{th}$ coordinate by $v_i$ and its $\ell_2$-norm by $\norm{\vec{v}}_2$. For any matrix $\mat{B} \in \R^{n \times m}$, we denote $\norm{\mat B}_2 = \max_{i \in [m]} \{\norm{\mat b_i}_2\}$.

\begin{theorem}[Prime Number Theorem]\label{thm:prime-counting}
    For any $x \geq 2$ the number of primes less than or equal to $x$ is given by $\pi(x) = \Theta\left(\frac{x}{\log(x)}\right)$.
\end{theorem}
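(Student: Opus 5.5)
The statement to be proved is the Prime Number Theorem in its weak, Chebyshev-type form: $\pi(x) = \Theta(x/\log x)$ for all $x \ge 2$. Since only constant-factor bounds are needed, I would prove it through Chebyshev's elementary argument with central binomial coefficients, and avoid complex analysis entirely. The plan is to prove an upper and a lower bound separately. Both come from the quantity $\binom{2m}{m}$, which satisfies $2^{2m}/(2m+1) \le \binom{2m}{m} \le 2^{2m}$.

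\textbf{Upper bound.} Every prime $p$ with $m < p \le 2m$ divides $\binom{2m}{m}$, because it divides the numerator $(2m)!$ but not the denominator $(m!)^2$. Hence
\[
\prod_{m<p\le 2m} p \le 4^m,
\]
which gives $\theta(2m)-\theta(m) \le 2m\ln 2$, where $\theta(x)=\sum_{p\le x}\ln p$. Telescoping over $x, x/2, x/4,\dots$ then yields $\theta(x) \le (4\ln 2)\,x + O(1)$. To pass from $\theta$ to $\pi$, split the primes at $\sqrt{x}$:
\[
\pi(x) \le \sqrt{x} + \frac{\theta(x)}{\ln\sqrt{x}} = O\!\left(\frac{x}{\log x}\right).
\]

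\textbf{Lower bound.} By Legendre's formula, the exponent of $p$ in $\binom{2m}{m}$ equals $\sum_{k\ge1}\bigl(\lfloor 2m/p^k\rfloor - 2\lfloor m/p^k\rfloor\bigr)$. Each term of this sum is $0$ or $1$, and the terms vanish once $p^k > 2m$, so $p^{e_p} \le 2m$. Therefore
\[
\frac{4^m}{2m+1} \le \binom{2m}{m} \le (2m)^{\pi(2m)},
\]
so $\pi(2m) \ge \bigl(2m\ln 2 - \ln(2m+1)\bigr)/\ln(2m)$. For odd or general $x$, take $m=\lfloor x/2\rfloor$, using monotonicity of $\pi$. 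This gives $\pi(x) = \Omega(x/\log x)$ for all large $x$.

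\textbf{Small values and main obstacle.} There is no real conceptual obstacle. The only fiddly step is making the constants uniform down to $x \ge 2$, and this is handled by observing that $\pi(x) \ge 1$ for $x \ge 2$ while $x/\log x$ is bounded on any finite range. One could also simply cite the classical theorem of Hadamard and de la Vallée Poussin, which gives the stronger statement $\pi(x) \sim x/\ln x$. The paper presumably treats this as a standard citation, and only the $\Theta$ form is used, for counting the random primes $r_j$.
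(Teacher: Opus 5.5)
Your proof is correct. It is Chebyshev's elementary argument, and the $\Theta$ form stated here is exactly Chebyshev's theorem, not the Prime Number Theorem proper.

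The paper gives no proof. It records the statement as the classical Prime Number Theorem and uses it once, in Claim~\ref{cl:bad-prime-bound}, to get $\pi(N^3)=\Theta(N^3/\log N)$. Citing Hadamard--de la Vall\'ee Poussin imports the asymptotic $\pi(x)\sim x/\ln x$, which is far more than the paper needs. Your route is self-contained and proves precisely the constant-factor statement used. In fact only your lower-bound half matters for the application. There $\pi(N^3)$ appears only in a denominator, when bounding the probability that a uniformly random prime in $[N^3]$ divides one of at most two fixed nonzero integers.

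Two cosmetic points if you write it out in full:
\begin{itemize}
\item For non-integer $x$, the telescoping needs a rounding step, for instance recursing on $\lceil x/2\rceil$ via $\theta(y)-\theta(\lceil y/2\rceil)\le 2\lceil y/2\rceil\ln 2$. The accumulated $O(\log x)$ error is absorbed by the slack in your constant $4\ln 2$.
\item Uniformity down to $x\ge 2$ also needs the trivial bound $\pi(x)\le x$ for the upper estimate on a bounded range. Your small-values remark only covers the lower bound.
\end{itemize}
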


\subsection{Lattices}
A lattice $\mathcal{L}$ is the set of all integer combinations of linearly independent vectors $\mathbf{b}_1, \mathbf{b}_2, \ldots, \mathbf{b}_n \in \mathbb{R}^d$:
\[
\mathcal{L} = \left\{ \sum_{i=1}^n z_i \mathbf{b}_i \mid z_i \in \mathbb{Z} \right\}
\]
where $n$ is called the rank of the lattice, and $d$ is the dimension. The vectors $\mathbf{b}_1, \mathbf{b}_2, \ldots, \mathbf{b}_n$ form a basis of the lattice, and the matrix $\mathbf{B}$ with columns $\mathbf{b}_1, \mathbf{b}_2, \ldots, \mathbf{b}_n$ is called a basis matrix. 

\begin{definition}[Shortest Vector Problem (SVP)]
    Given a lattice $\mathcal{L}$, the Shortest Vector Problem (SVP) asks for a shortest non-zero vector in the lattice, i.e.,
    \[
    \svp(\mathcal{L}) = \mathsf{argmin} \{ \| \mathbf{v} \|_2 \mid \mathbf{v} \in \mathcal{L}, \mathbf{v} \neq \mathbf{0} \}
    \]
\end{definition}

A standard notation for the length of the shortest vector in a lattice $\L$ is denoted by $\lambda_1(\L)$. In fact, a generalization of $\lambda_1$ known as successive minima is defined as follows.

\begin{definition}[Successive minima]
    Given a lattice $\mathcal{L}$ and $i \in [n]$, the $i^{th}$ successive minima $\lambda_i(\L)$ is defined as
    \[\lambda_i(\L) = \min \{ r \in \R \;|\; \mathsf{dim}(\mathsf{span}(\L \cap \Ball_{\ell_2}(0,r))) \geq i \}\]
\end{definition}

Throughout the remainder of this section, unless stated otherwise, all norms are with respect to the $\ell_2$-norm.

\begin{definition}[Approximate Shortest Independent Vector Problem (SIVP)]
    Given an $n$-dimensional lattice $\mathcal{L}$ and an approximation function $\gamma: \N \to \R$, the $\gamma$-approximate SIVP asks for $n$ linearly independent vectors $\mathbf{v}_1,\ldots,\mathbf{v}_n \in \mathcal{L}$ such that $\norm{\mathbf{v}_i}_2 \le \gamma(n)\lambda_n(\mathcal{L})$ for all $i \in [n]$.
\end{definition}

We consider the following variant of the above problem, which asks for a single vector.

\begin{definition}[Approximate $n^{th}$-Successive Minima Problem (nSMP)]\footnote{There is a well-studied lattice problem, called the \emph{Successive Minima Problem} (SMP)~\cite{Mic08}, which asks for $n$ linearly independent lattice vectors $\mathbf{v}_1, \dots, \mathbf{v}_n$ such that $\norm{\mathbf{v}_i} = \lambda_i(\L)$ for every $i \in [n]$.}
\label{def:nSMP}
    Given a lattice $\mathcal{L}$ and an approximation function $\gamma: \N \to \R$, the $\gamma$-approximate $n^{th}$ Successive Minima Problem (nSMP) asks for a non-zero vector $\mathbf{v}$ in the lattice $\L$ such that
    $\norm{\mathbf{v}} \leq \gamma(n) \lambda_n(\L)$.
\end{definition}

Using~\cref{thm:SIVP-SVP-variant}, one can show that any algorithm for $\gamma(n)$-approximate nSMP can be used to solve $\sqrt{n} \gamma(n)$-approximate SIVP.

\subsection{Hardness Hypothesis}

For any parameter $\kappa \ge 1$, the currently best algorithms for
$n^\kappa$-approximate SVP run in $2^{\Oh(n/\kappa)}$
time~\cite{schnorr1987hierarchy,schnorr1994lattice,GN08,ALNS20}.  More
precisely, when $\kappa \ge 1$, $n^{\kappa}$-approximation to SVP can be
computed in $\Os(2^{0.802n/(\kappa+1)})$ time~\cite{ALNS20}.  Motivated by
the apparent hardness, Vafa and Vaikuntanathan~\cite{VV25} implicitly stated the
following hypothesis~\footnote{To be precise,~\cite{VV25} needed only to assume
$\ell = \Theta(\sqrt{n})$ and assumed hardness of SVP rather than SIVP. In their setting, these problems are equivalent~\cite{MR07}.}:

\begin{hypothesis}\label{hyp:weak-sivp}
    For any $\eps > 0$ and parameter $\ell \in [1,n]$, there is no $2^{\Oh(n/\ell)^{1-\eps}}$-time algorithm that given a lattice $\mathcal{L}$ of dimension $n$ returns $n$ non-zero independent vectors $\mathbf{v}_i \in \mathcal{L}$ such that $\norm{\mathbf{v}_i}_{2} \le 2^{\Oh(\ell)} \cdot \lambda_n(\L)$. 
\end{hypothesis}

However, notice that since the work of Schnorr~\cite{schnorr1987hierarchy}, there is no $2^{o(n)}$ time algorithm that solves $n^c$-approximate SVP for any constant $c \ge 0$. Motivated by this, we state an even stronger hypothesis than~\cref{hyp:weak-sivp}.

\hypmain*
To the best of our knowledge, this hypothesis was not explicitly stated. However, if the hypothesis turns out to be false, it would mean a significant breakthrough on the approximate Shortest Vector Problem. 
In section~\cref{sec:nsmp-variant}, we show that these problems are tightly connected.

\subsection{Declaration on the Use of AI Tools.}
\textbf{No} AI tools were used for the version submitted to the FOCS 2026 conference.  
A generative AI tool, Claude (Opus), was used solely for proofreading, 
language editing, and identifying potential inconsistencies for this arXiv version. 
All technical contributions are original to the authors, who independently verified relevant AI-generated suggestions and take full responsibility for the manuscript.

% ===== end preliminaries.tex =====
% ===== begin subexponential.tex =====
\section{Subexponential algorithm: Proof of~\cref{thm:subexponential}} \label{sec:subexponential}

In this section, we present our subexponential algorithm for weak \PESS.

\thmsubexponential*

Let $k$ be the largest power of $2$ such that $n \cdot M \le 2^{\floor{n/k}
\log{k}}$ (if such a $k$ does not exist, then $\delta = \Omega(1)$ and
the brute-force algorithm solves the problem in the desired time).
Let $W \coloneqq 2^{\floor{n/k}}$. 

If $k = \Omega(n)$, then $M = \poly(n)$ and we can run 
dynamic programming of ~\cite{vasilakis2022improved} in $\Ot(nM) = \poly(n)$ time. Hence, we may assume $k = o(n)$.  We will present an algorithm running in $\poly(n) \cdot W^2$ time.
This will conclude the proof of~\cref{thm:subexponential} by setting $\delta = \Theta(k/\log{k})$.

First, arbitrarily partition the input set $S = S_1 \uplus \ldots \uplus S_k \uplus S_{k+1}$ such that $|S_i| = \floor{n/k}$ for every $i \in [k]$. We will ignore the set $S_{k+1}$. In the base case, for every $i \in [k]$, compute the sets:
\begin{align*}
    \Aa_i^{(0)} \coloneqq \left\{ \Sigma(X) \mid X \subseteq S_i \right\}.
\end{align*}
Alongside every $a \in \Aa_{i}^{(0)}$ (and later sets $\Aa_{i}^{(j)}$'s), we will store a set $X \subseteq S$ such that $\Sigma(X) = a$. This data structure incurs only an additional $\poly(n)$ overhead in the running time and will allow us to return a collision. Observe that $|\Aa_i^{(0)}| = 2^{\floor{n/k}}$ for every $i\in [k]$: if $|\Aa_i^{(0)}| < 2^{\floor{n/k}}$, then during construction of the set $\Aa_i^{(0)}$ we detected a collision, i.e., two different sets of the same sum, which we return immediately (e.g., by using a data structure described in the previous sentence).
Moreover, all sets $\Aa_i^{(0)}$ can be computed in $\poly(n) \cdot W$ time. Let
$p_{i}^{(0)} \coloneqq 0$ for every $i \in [k]$. 

This concludes the construction of the sets $\Aa_i^{(j)}$ and $p_i^{(j)}$ for $j=0$ and $i \in [k]$. We now iterate over $j = 1,2,\ldots,\log{k}$. In the $j$-th step, we compute the set $\Bb_{i,p}^{(j)}$.
This set $\Bb_{i,p}^{(j)} \subseteq \Aa_{2i-1}^{(j-1)} + \Aa_{2i}^{(j-1)}$ such that 
every $a \in \Bb_{i,p}^{(j)}$ satisfies the following modular equation
\[
    a  \equiv p \cdot W^{j-1} + p_{2i-1}^{(j-1)} + p_{2i}^{(j-1)} \pmod{W^j}.
\]

for every $i \in [k/2^j]$ and $p \in \{0,\ldots,W-1\}$.  
If at any point we detect a collision (i.e., two equal elements), we immediately return it.
Next, for every $i \in [k/2^j]$, let $p^\star \in \{0,\ldots,W-1\}$ be the index maximizing
$|\Bb_{i,p^\star}^{(j)}|$, and define:
\begin{align*}
    p_{i}^{(j)} \coloneqq p^\star \cdot W^{j-1} + p_{2i-1}^{(j-1)} + p_{2i}^{(j-1)},
\end{align*}
and let $\Aa_{i}^{(j)}$ be an arbitrary subset of $\Bb_{i,p^\star}^{(j)}$ of size at most $W$
(i.e., we truncate $\Bb_{i,p^\star}^{(j)}$ if it exceeds size $W$). This completes the algorithm description.

\paragraph{Running Time.} Note that
$|\Aa_i^{(j)}| \le W$ for all $j \in [\log{k}]$ and $i \in
[k/2^j]$. The computation of each of these sets takes $\poly(n) \cdot W^{2}$ time, which implies the desired running time.

Before addressing correctness, observe that throughout the computation, we maintain the following invariant:

\begin{claim}\label{clm:partition}
    For every $j \in [\log{k}]$ and $i \in [k/2^j]$, it holds that:
    \begin{align}\label{eq:partition}
        \Aa^{(j-1)}_{2i-1} + \Aa^{(j-1)}_{2i} = \Bb^{(j)}_{i,0} \uplus \Bb^{(j)}_{i,2} \uplus \ldots \uplus \Bb^{(j)}_{i,W-1}.
    \end{align}
\end{claim}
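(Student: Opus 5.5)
The plan is an induction on $j$ that carries the residue invariant $a \equiv p_i^{(j)} \pmod{W^j}$ for every $a \in \Aa_i^{(j)}$. Once this invariant is available at level $j-1$, the partition is immediate. (The displayed union in \eqref{eq:partition} is meant to run over all $p \in \{0,\dots,W-1\}$; the index list $0,2,\ldots$ is a typo for $0,1,2,\ldots$.)

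\emph{Invariant.} For every $j' \in \{0,\ldots,\log k\}$ and every $a \in \Aa_{i}^{(j')}$ we have $a \equiv p_i^{(j')} \pmod{W^{j'}}$. For $j'=0$ the modulus is $W^0=1$, so the statement is trivial. For $j' \ge 1$ it holds by construction: $\Aa_i^{(j')}$ is a subset of $\Bb_{i,p^\star}^{(j')}$, and every element of $\Bb_{i,p^\star}^{(j')}$ is congruent to $p^\star W^{j'-1} + p_{2i-1}^{(j'-1)} + p_{2i}^{(j'-1)} = p_i^{(j')}$ modulo $W^{j'}$. The proof of the claim only needs the invariant at level $j-1$.

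\emph{Partition.} Fix $j \in [\log k]$, $i \in [k/2^j]$, and write $b \coloneqq p_{2i-1}^{(j-1)} + p_{2i}^{(j-1)}$. Take any $a = x+y$ with $x \in \Aa_{2i-1}^{(j-1)}$ and $y \in \Aa_{2i}^{(j-1)}$. By the invariant, $a \equiv b \pmod{W^{j-1}}$, so $W^{j-1}$ divides $a-b$.

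Reduce modulo $W^j$ and let $r \in [0, W^j)$ be the residue of $a-b$. Then $r$ is a multiple of $W^{j-1}$ lying in $[0,W^j)$, so $r = p\,W^{j-1}$ for a unique $p \in \{0,\ldots,W-1\}$. Hence $a$ lies in $\Bb^{(j)}_{i,p}$, which proves that the sumset is covered by the union.

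\emph{Disjointness and reverse inclusion.} Two distinct $p,p' \in \{0,\ldots,W-1\}$ give $pW^{j-1}+b \not\equiv p'W^{j-1}+b \pmod{W^j}$, because $|p-p'|W^{j-1} < W^j$ and the difference is nonzero. So no element lies in two of the sets $\Bb^{(j)}_{i,p}$. Each $\Bb^{(j)}_{i,p}$ is by definition a subset of the sumset, which gives the reverse inclusion.

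The claim is stated at the level of sets of sums. A repeated sum arising from different pairs is exactly a detected collision, and the algorithm returns it, so it does not affect the disjoint-union statement.

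The only step with any content is establishing the residue invariant with the correct modulus. This requires noting that $W^0=1$ makes the base case vacuous, and that truncating to $\Aa_i^{(j)}$ preserves congruence because it only removes elements. Everything else is elementary modular arithmetic.
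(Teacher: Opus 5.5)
Your proposal is correct and follows the paper's proof. Both use the by-construction invariant that every element of $\Aa^{(j-1)}$ is congruent to its $p^{(j-1)}$ modulo $W^{j-1}$, which places every sum into some $\Bb^{(j)}_{i,p}$; both then get disjointness from distinct residues modulo $W^j$ and the reverse inclusion from the definition. You are also right that the index list $0,2,\ldots$ in the statement is a typo for $0,1,2,\ldots$.
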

\begin{claimproof}
    Fix $j \in [\log{k}]$ and $i \in [k/2^j]$. For every $p \neq p'$ the sets
    $\Bb^{(j)}_{i,p}$ and $\Bb^{(j)}_{i,p'}$ are disjoint because
    they contain numbers in different congruence classes modulo $W^j$. 
    Next, observe that for every $j \in [\log{k}]$ and $i \in [k/2^j]$, it holds that for every $a \in
    \Aa_i^{(j)}$:
    \begin{align*}
        a \equiv p^{(j)}_i \pmod{W^j}.
    \end{align*}
    Let $\Bb \coloneqq \Aa^{(j-1)}_{2i-1} + \Aa^{(j-1)}_{2i}$. By definition $\Bb \supseteq
    \Bb^{(j)}_{i,p}$ for every $0 \le p < W$ , hence it remains to prove the $\subseteq$
    of~\eqref{eq:partition}. Observe that for every $b \in \Bb$ it holds that:
    \begin{align*}
        b \equiv p^{(j-1)}_{2i-1} + p^{(j-1)}_{2i} \pmod{W^{j-1}}.
    \end{align*}
    Therefore, for every $b \in \Bb$ there exists $p \in \{0,\ldots,W-1\}$ such that
    \begin{align*}
        b \equiv p \cdot W^{j-1} + p^{(j-1)}_{2i-1} + p^{(j-1)}_{2i} \pmod{W^j},
    \end{align*}
    which in particular means that $b \in \Bb^{(j)}_{i,p}$.
\end{claimproof}

\paragraph{Correctness.} First, we prove the following:

\begin{claim}\label{clm:correctness}
    If the algorithm does not detect a collision, then 
    \begin{align*}
        |\Aa_{i}^{(j)}| \ge W,
    \end{align*}
    for all $j \in \{0,\ldots,\log{k}\}$ and $i \in [k/2^j]$.
\end{claim}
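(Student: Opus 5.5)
We argue by induction on $j$; throughout we assume that no collision is ever detected.

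\emph{Base case $j=0$.} Every $i \in [k]$ has $|\Aa_i^{(0)}| = 2^{\floor{n/k}} = W$. The reason is that the $2^{\floor{n/k}}$ subsets of $S_i$ have pairwise distinct sums; otherwise a collision would have been returned while building $\Aa_i^{(0)}$.

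\emph{Inductive step.} Fix $j \ge 1$ and $i \in [k/2^j]$, and assume $|\Aa_{2i-1}^{(j-1)}|, |\Aa_{2i}^{(j-1)}| \ge W$. The plan has three steps.

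First, we show that the sumset $\Bb \coloneqq \Aa_{2i-1}^{(j-1)} + \Aa_{2i}^{(j-1)}$ has exactly $|\Aa_{2i-1}^{(j-1)}|\cdot|\Aa_{2i}^{(j-1)}| \ge W^2$ distinct elements. Each element of $\Aa_{2i-1}^{(j-1)}$ is stored with a witness set $X$, and by construction this $X$ is contained in the blocks $S_\ell$ with $\ell \in [(2i-2)2^{j-1}+1,\,(2i-1)2^{j-1}]$. Likewise, each element of $\Aa_{2i}^{(j-1)}$ has a witness $Y$ contained in the next $2^{j-1}$ blocks. These two ranges of blocks are disjoint, so $X \cap Y = \emptyset$ and $\Sigma(X \cup Y) = x + y$.

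Now suppose two different pairs give the same sum, $x+y = x'+y'$ with $(x,y) \neq (x',y')$. The witness sets $X \cup Y$ and $X' \cup Y'$ are then distinct: the elements within a single set $\Aa$ are distinct numbers, so distinct elements carry distinct witnesses, and the block-disjointness lets us recover $X$ and $Y$ from $X \cup Y$. Thus $X\cup Y$ and $X'\cup Y'$ are two distinct subsets with the same sum. This is exactly the collision the algorithm checks for while forming $\Bb$. Since no collision was detected, all $|\Aa_{2i-1}^{(j-1)}|\cdot|\Aa_{2i}^{(j-1)}|$ sums are pairwise distinct.

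Second, we apply \cref{clm:partition}: $\Bb$ is the disjoint union of the $W$ classes $\Bb_{i,p}^{(j)}$. By pigeonhole, the largest class satisfies $|\Bb_{i,p^\star}^{(j)}| \ge W^2/W = W$.

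Third, $\Aa_i^{(j)}$ is $\Bb_{i,p^\star}^{(j)}$ truncated to $W$ elements, so its size is $\min(W, |\Bb_{i,p^\star}^{(j)}|) = W$. This completes the induction.

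The only delicate point is the first step. There we must make sure that equality of sums in the sumset really yields two \emph{distinct} subsets of $S$. This relies on the stored witnesses of $\Aa_{2i-1}^{(j-1)}$ and $\Aa_{2i}^{(j-1)}$ lying in disjoint groups of blocks. We verify this as part of the same induction: the witness of any element of $\Aa_i^{(j)}$ is contained in $S_{(i-1)2^j+1} \cup \dots \cup S_{i2^j}$.
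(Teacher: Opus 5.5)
Your proof is correct and follows essentially the same route as the paper, which argues by a minimal counterexample rather than a forward induction but uses the same two ingredients: $|\Bb|\ge W^2$ from the absence of collisions, and pigeonhole over the $W$ residue classes of \cref{clm:partition}. You also show that the stored witnesses of $\Aa_{2i-1}^{(j-1)}$ and $\Aa_{2i}^{(j-1)}$ lie in disjoint groups of blocks, so a repeated sum in $\Bb$ really gives two distinct equal-sum subsets; the paper leaves this step implicit, and it is a welcome addition.
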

\begin{claimproof}
    Suppose, for contradiction, that the claim is false. Let $j \ge 0$ be the smallest index
    for which it fails. We must have $j > 0$, since in the base case we have $|\Aa^{(0)}_i| = 2^{\floor{n/k}}$ for all $i \in [k]$.

    So assume that $|\Aa^{(j)}_i| < W$ for some $i \in [k/2^j]$, and let $\Bb \coloneqq
    \Aa^{(j-1)}_{2i-1} + \Aa^{(j-1)}_{2i}$. Since the algorithm did not detect
    a collision at levels $< j$, it must be that $|\Aa^{(j-1)}_{2i-1}|$ and
    $|\Aa^{(j-1)}_{2i}|$ are at least $W$. Hence $|\Bb| \ge W^2$,
    as the algorithm did not detect a collision at level $j$.

    On the other hand, by Claim~\ref{clm:partition}:
    \begin{align*}
        \Bb = \Bb^{(j)}_{i,0} \uplus \Bb^{(j)}_{i,1} \uplus \ldots \uplus \Bb^{(j)}_{i,W-1},
    \end{align*}
    and by the assumption that $|\Bb^{(j)}_{i,p}| \le |\Aa_i^{(j)}| < W$ for all
    $p \in \{0,\ldots,W-1\}$, we get:
    \begin{align*}
        |\Bb| = \sum_{p=0}^{W-1} |\Bb^{(j)}_{i,p}| \le \sum_{p=0}^{W-1} |\Aa_i^{(j)}| < W^2,
    \end{align*}
    contradicting $|\Bb| \ge W^2$.
\end{claimproof}

With Claim~\ref{clm:correctness}, we can now conclude the proof of~\cref{thm:subexponential}. Let $\Cc = \Aa_{1}^{(\log{k})}$ and $p =
p^{(\log{k})}_1$, and assume that the algorithm did not detect a collision in any of the previous steps. By Claim~\ref{clm:correctness}, we have $|\Cc| \ge 2$,
so there exist two distinct elements $a, b \in \Cc$ such that:
\begin{align*}
    a \equiv b \equiv p \pmod{W^{\log{k}}}.
\end{align*}
However, both $a$ and $b$ are bounded above by $n \cdot M \le 2^{\floor{n/k} \log{k}} \le W^{\log{k}}$,
so the congruence implies $a = b$, contradicting the assumption that no collision was detected.
\qed

\begin{remark}\label{remark:exact-runtime}
    By plugging in the exact parameters, for $k=2^7$, algorithm behind \cref{thm:subexponential} solves Equal Subset Sum in $\Os(2^{n/64})$ time when $n M \le 2^{7n/128} < 2^{n/18}$.
\end{remark}

% ===== end subexponential.tex =====
% ===== begin lowerbound.tex =====
\section{Conditional Lowerbound: Proof of~\cref{thm:lowerbound}}\label{sec:lowerbound}

In this section, we present our lower bound for solving the Pigeonhole Equal Subset Sum
problem in the strong regime by reducing it to the problem of finding a short
vector in a lattice. We assume reader is familiar with~\cref{sec:prelim}.

\subsection{Construction}

In this section, we present a reduction from \SIVP to \PESS.  At a high level,
the reduction first computes a sufficiently short basis of the input lattice
$\L$. It then uses this basis to sample many lattice points from a suitable
discrete Gaussian distribution and encodes these samples as positive integers.
These integers are given to the \PESS algorithm, whose output yields two
collision sets. From these sets, the reduction extracts a short-ish (non-zero)
lattice vector in $\L$. We begin by describing the basis reduction and sampling
procedure used in the reduction, followed by the encoding step and the
extraction of the lattice vector. We then analyze the correctness and running
time of the reduction.

\paragraph*{Step 1: Preprocessing.}
We begin by computing a sufficiently short basis of the input lattice. The following lemma provides the required basis-reduction procedure. As this is a standard step and follows from the previous works~\cite{schnorr1987hierarchy,schnorr1994lattice, GN08, ALNS20} we deferred the proof to \cref{sec:proof-short-basis}.

\begin{restatable}{theorem}{lemshortbasis} \label{lem:short-basis}
    For any $\kappa > 0$, there exists a constant $C > 0$ and an algorithm such
    that, given an $n$-dimensional lattice $\Ll$, computes a basis $\mathbf{D} =
    [\mathbf{d}_1,\ldots,\mathbf{d}_n]$ of $\L$ in time
    $\Oh(2^{C\, n/\kappa})$ such that $\norm{\mathbf{D}}_2 =
    \max_{i \in [n]} \|\mathbf{d}_i\|_2 \leq B \cdot \lambda_n(\L)$, where $B = C \cdot n^\kappa$.
\end{restatable}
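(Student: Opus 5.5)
We derive \cref{lem:short-basis} by combining Schnorr's block reduction (as analyzed in the slide-/BKZ-reduction literature~\cite{schnorr1987hierarchy,GN08,ALNS20}) with a standard conversion from short independent vectors to a short basis. Set the block size $\beta \coloneqq \lceil n/\kappa \rceil$ (if $\beta \ge n$, solve exact SVP/SIVP in time $2^{\Oh(n)} = 2^{\Oh(n/\kappa)}$ for constant $\kappa$). We run block reduction with an exact $2^{\Oh(\beta)}$-time SVP oracle on $\beta$-dimensional projected sublattices, e.g.\ the AKS sieve~\cite{AKS01} or enumeration. The standard analysis of slide/BKZ reduction gives, after $\poly(n)$ oracle calls, a basis $\mathbf{b}_1,\dots,\mathbf{b}_n$ with Gram--Schmidt vectors $\mathbf{b}_i^*$ satisfying
\[
\|\mathbf{b}_i\|_2 \le \beta^{\Oh(n/\beta)} \cdot \lambda_i(\L) \le n^{\Oh(\kappa)}\lambda_i(\L) \le n^{\Oh(\kappa)}\lambda_n(\L).
\]
This is the Schnorr-type bound for all successive minima, not only $\lambda_1$. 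The total running time is $\poly(n)\cdot 2^{\Oh(\beta)} = 2^{\Oh(n/\kappa)}$.

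The key intermediate statement is therefore a bound on \emph{every} basis vector, not just the first. I plan to obtain it as follows. Block-reduced bases have bounded Gram--Schmidt decay: $\|\mathbf{b}_i^*\| \le \beta^{\Oh((j-i)/\beta)}\|\mathbf{b}_j^*\|$ for $i<j$. Moreover, $\lambda_i(\L) \ge \min_{j\ge i}\|\mathbf{b}_j^*\|$, which holds for any basis: among $i$ linearly independent lattice vectors, one has a nonzero projection onto $\mathrm{span}(\mathbf{b}_j^*)_{j \ge i}$, hence length at least some $\|\mathbf{b}_j^*\|$ with $j\ge i$. Size-reduction then gives $\|\mathbf{b}_i\|^2 \le \sum_{t\le i}\|\mathbf{b}_t^*\|^2$. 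Combining these three facts yields $\|\mathbf{b}_i\| \le \sqrt{n}\,\beta^{\Oh(n/\beta)}\lambda_n(\L)$.

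Alternatively, and this is the route I would actually take if the above citation juggling becomes messy, we can prove the statement in two black-box steps. First, obtain $n$ linearly independent vectors $\mathbf{v}_1,\dots,\mathbf{v}_n$ of length at most $n^{\Oh(\kappa)}\lambda_n(\L)$, i.e.\ approximate SIVP, from the block-reduced basis as above. Then convert them into a basis of $\L$ with $\max\|\mathbf{d}_i\| \le \sqrt{n}\max\|\mathbf{v}_i\|$, using the classical polynomial-time lemma of Micciancio--Goldwasser. The cleanup constant $\sqrt{n}$ is absorbed into $B = C\cdot n^{\kappa}$ after rescaling $\kappa$ by a constant, which in turn only changes the constant $C$ in the running time.

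The main obstacle is matching the exact form $B = C\cdot n^\kappa$ with time $2^{Cn/\kappa}$. The literature bounds have the form $\beta^{\Oh(n/\beta)}$, so I will reparametrize: choose $\beta = c\,n/\kappa$ for a suitable constant $c$, so that $\beta^{\Oh(n/\beta)} \le n^{\kappa}$, and absorb all polynomial factors into $C$. The delicate part is ensuring that the approximation guarantee holds for $\lambda_n$, i.e.\ the last basis vectors and not just $\mathbf{b}_1$. Here I would rely on the Gram--Schmidt decay argument sketched above rather than on an SVP-only guarantee.
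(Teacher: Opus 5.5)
Your argument breaks at exactly the step you flag as delicate. The Gram--Schmidt decay bound $\|\mathbf{b}_i^*\|\le\beta^{\Oh((j-i)/\beta)}\|\mathbf{b}_j^*\|$ for $i<j$ is not a property of block-reduced bases.

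\begin{itemize}
\item In a BKZ-$\beta$-reduced basis the last $\beta$ vectors, and in Gama--Nguyen slide reduction each primal block, form an HKZ-reduced basis of a $\beta$-dimensional projected lattice.
\item For HKZ-reduced bases the worst-case ratio $\|\mathbf{b}_1\|/\|\mathbf{b}_\beta^*\|$ is governed by Schnorr's constant $\alpha_\beta$ and is $\beta^{\Theta(\log\beta)}$. Schnorr's upper bound is essentially matched by the HKZ bases constructed by Hanrot and Stehl\'e.
\item If blocks are only required to be SVP-reduced, intra-block Gram--Schmidt norms are controlled at best by LLL-type bounds $2^{\Oh(\beta)}$.
\end{itemize}

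With $\beta=\Theta(n/\kappa)$ and constant $\kappa$, a single block already costs $n^{\Theta(\log n)}$. Your chain closes with $\lambda_n(\L)\ge\|\mathbf{b}_n^*\|$, so it yields at best a quasi-polynomial $B$, not $C\cdot n^{\kappa}$.

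The chain is lossy in principle. An HKZ-reduced basis satisfies $\|\mathbf{b}_i\|\le\sqrt{i}\,\lambda_i(\L)$ even when $\|\mathbf{b}_1\|/\|\mathbf{b}_n^*\|$ is superpolynomial, because $\|\mathbf{b}_n^*\|$ can be far below $\lambda_n(\L)$. Nor is $\|\mathbf{b}_i\|\le\beta^{\Oh(n/\beta)}\lambda_i(\L)$ for all $i$ an off-the-shelf fact: the polynomial-base guarantees in~\cite{GN08,ALNS20} bound the first basis vector against $\lambda_1(\L)$. Your fallback route has the same gap, since its first step again extracts the independent vectors ``from the block-reduced basis as above''. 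Only its basis-conversion step is new.

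The missing idea is to use block reduction only through its $\lambda_1$-guarantee and to get the $\lambda_n$-guarantee by projection. This is how the paper proceeds, combining \cref{thm:SIVP-SVP} with \cref{thm:ALS-approx-SVP}.
\begin{itemize}
\item Having found independent $\mathbf{b}_1,\dots,\mathbf{b}_{i-1}\in\L$, call the $n^{\kappa}$-approximate SVP algorithm on $\L_i=\pi_i(\L)$, the projection orthogonal to their span.
\item Among $i$ independent lattice vectors of length at most $\lambda_i(\L)$, one projects to a nonzero vector. Hence $\lambda_1(\L_i)\le\lambda_i(\L)$, and the new Gram--Schmidt vector satisfies $\|\mathbf{b}_i^*\|\le n^{\kappa}\lambda_i(\L)$ directly, with no decay estimate.
\item Lifting and size-reducing gives $\|\mathbf{b}_i\|\le\sqrt{n}\,n^{\kappa}\lambda_n(\L)$ after $n$ calls, each costing $2^{\Oh(n/\kappa)}$.
\end{itemize}

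Your Micciancio--Goldwasser step (or taking each oracle output primitive in $\L_i$, which makes the $\mathbf{b}_i$ a basis) then genuinely adds rigor. The paper's \cref{thm:SIVP-SVP} only outputs linearly independent vectors, and the passage to a basis, which the sampler needs, is left implicit there.

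One caveat on absorbing the $\sqrt{n}$ by rescaling $\kappa$: this works only for $\kappa\ge 1/2$. For smaller $\kappa$ the statement itself is false. In $\mathbb{Z}^n+\mathbb{Z}\cdot\frac12\mathbf{1}$ with $n\ge 5$ one has $\lambda_n=1$, yet every basis contains a vector of length at least $\sqrt{n}/2$. This is harmless for the application, where $\kappa$ can be taken large.
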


Let $N$ and $B$ be parameters to be chosen later.  We will set them so
$\Omega(n \log{n}) \le N \le B \ll 2^{N/n}$, but for now, let us not consider
an exact exponent in $n$.  This parameter will correspond to the size of the
instance First, it sets $s \coloneqq 4\pi\sqrt{n}\,N \norm{\mat D}_2$ and
samples $N$ lattice points using \cref{thm:wide_sampling_shift_invariant}. 

\begin{restatable}{theorem}{thmsampling}\label{thm:wide_sampling_shift_invariant}
Let $\L \subseteq \R^n$ be a full-rank lattice. Let $\mathbf{B}$ be a basis of $\L$. Then there is a randomized algorithm that on input $\mathbf{B}$ and $s \geq 4 \pi \sqrt{n}\, \norm{\mathbf{B}}_2$, outputs $N$ (independent) lattice points $\mat x^{(1)},\ldots, \mat x^{(N)} \sim \mathcal{D}_{\L, s}$ where $N \geq 2$ such that
\begin{enumerate}
    \item The running time of the algorithm is $N\cdot \poly(n,\mathrm{bitlength}(\mathbf{B}), \log(s))$. 

    \item For any $R \geq s \sqrt{n\log(N)}$, except with probability at most $2^{-n}$, we have that
    $x_j \in \L \cap \Ball_\infty(0,R)$ for all $j\in[N]$.
\end{enumerate}
\end{restatable}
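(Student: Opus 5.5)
The plan is to use the standard Klein / Gentry--Peikert--Vaikuntanathan discrete Gaussian sampler, run $N$ times with fresh randomness, and then apply a union bound over coordinates and samples.

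\textbf{Sampling.} First apply Gram--Schmidt orthogonalization to $\mathbf{B}$, obtaining $\widetilde{\mathbf{B}}$ with $\norm{\widetilde{\mathbf{B}}}_2 \le \norm{\mathbf{B}}_2$. The GPV sampler (randomized nearest plane) outputs a sample from $\mathcal{D}_{\L,s}$ whenever $s \ge \norm{\widetilde{\mathbf{B}}}_2 \cdot \omega(\sqrt{\log n})$. In its exact form (Brakerski et al.\ with rejection), the condition $s \ge \norm{\widetilde{\mathbf{B}}}_2\sqrt{\ln(2n+4)/\pi}$ suffices, and our hypothesis $s \ge 4\pi\sqrt{n}\,\norm{\mathbf{B}}_2$ comfortably implies this. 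Each call makes $n$ one-dimensional discrete Gaussian samples over shifted copies of $\Z$. Each such sample can be drawn exactly in expected $\poly(\log s, \mathrm{bitlength})$ time by rejection sampling. Running the sampler $N$ times independently gives $N$ independent samples, for a total time of $N\cdot\poly(n,\mathrm{bitlength}(\mathbf{B}),\log s)$. To obtain a worst-case running time rather than an expected one, truncate the rejection loops after $\poly(n)$ trials. This changes the output distribution only by a negligible amount, which can be absorbed into the failure probability.

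\textbf{Tail bound.} For $\mathbf{x}\sim\mathcal{D}_{\L,s}$ and any unit vector $\mathbf{u}$, Banaszczyk's inequality (equivalently, Micciancio--Regev subgaussianity of $\mathcal{D}_{\L,s}$, which holds for any lattice and any $s>0$ in its centered form) gives
\[
\Pr\left[|\langle \mathbf{x},\mathbf{u}\rangle| \ge t s\right] \le 2e^{-\pi t^2}.
\]
Apply this with $\mathbf{u}=\mathbf{e}_i$ for each $i\in[n]$ and take $t = R/s \ge \sqrt{n\log N}$. A union bound over the $nN$ pairs (coordinate, sample) gives a failure probability of at most
\[
2nN e^{-\pi n\log N}.
\]
This is at most $2^{-n}$ for $N\ge 2$ and $n \ge 1$, since $\pi n\log N \ge \pi n$ dominates $\ln(2nN) + n\ln 2$.

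\textbf{Main obstacle.} The main obstacle is bookkeeping rather than a new idea. One must cite the version of Banaszczyk's bound for $\mathcal{D}_{\L,s}$ centered at $0$, which needs no smoothing assumption. One must also make sure the exact-sampling condition on $s$ in terms of Gram--Schmidt lengths is implied by $s\ge 4\pi\sqrt{n}\norm{\mathbf{B}}_2$. If only a statistically-close sampler is available, the sampling guarantee becomes ``within statistical distance $2^{-\Omega(n)}$ of $\mathcal{D}_{\L,s}$'', and the generous factor $4\pi\sqrt{n}$ makes that distance at most $N\cdot 2^{-\Omega(n)}$, which suffices for the downstream use.
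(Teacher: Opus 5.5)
Your proposal is correct. The sampling half matches the paper: run the GPV/Klein sampler of \cref{thm:dgs_basis} $N$ times with fresh randomness. Your optional exact sampler with truncated rejection loops is only a variant of that.

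The two arguments differ in the tail bound. The paper bounds $\|\mathbf{x}\|_\infty \le \|\mathbf{x}\|_2$, invokes $\Pr[\|\mathbf{x}\|_2 \ge ts] \le 2e^{-\pi t^2}$, and union-bounds over the $N$ samples. You instead apply the one-dimensional bound $\Pr[|\langle \mathbf{x},\mathbf{e}_i\rangle| \ge ts] \le 2e^{-\pi t^2}$ to each coordinate and union-bound over all $nN$ (coordinate, sample) pairs.

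Your route is the sound one. The $2e^{-\pi t^2}$ inequality controls one-dimensional projections. As a statement about the Euclidean norm for arbitrary $t \ge 1$, it is false in high dimension: for $\L = \Z^n$ and large $s$, $\|\mathbf{x}\|_2$ concentrates near $s\sqrt{n/(2\pi)}$. The paper's $\ell_2$ route can be repaired with Banaszczyk's genuinely $n$-dimensional bound
\[
\Pr\bigl[\|\mathbf{x}\|_2 > c\, s\sqrt{n}\bigr] \le \bigl(c\sqrt{2\pi e}\, e^{-\pi c^2}\bigr)^n,
\]
applied with $c = \sqrt{\log N}$. This works because $R$ already carries the factor $\sqrt{n}$. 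Your coordinate-wise argument avoids that lemma entirely and costs only an extra factor $n$ in the union bound, which is easily absorbed.

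One small imprecision is in your final numerical check. The $\ln N$ term inside $\ln(2nN)$ is not absorbed by the constant lower bound $\pi n$; it must be absorbed by the growth of $\pi n \log N$ in $N$. For example, $\pi n\log N - \ln N \ge (\pi n - 1)\log N$, and $(\pi n - 1)\log N \ge \ln(2n) + n\ln 2$ for all $n \ge 1$ and $N \ge 2$. So the inequality does hold, but the justification should be stated this way.
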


An exact definition of the distribution $\mathcal{D}_{\L,s}$ is not important
for the rest of the proof, as we will only use the properties of
the~\cref{thm:wide_sampling_shift_invariant}.  Sampling is a standard tool and
as \cref{thm:wide_sampling_shift_invariant} follows from~\cite{MR07} we defer
the proof to~\cref{sec:sampling}

Notice that using the above theorem, the reduction can efficiently generate
samples whose joint distribution is negligibly close to that of $N$ independent
samples from $\mathcal{D}_{\L,s}$. Moreover, this choice of $s$ also yields the
$\ell_\infty$ bound on all sampled vectors required later in the analysis,
except with probability at most $2^{-n}$. The sampler may produce the same
lattice point more than once. One could simply abort in this case, since such an
event occurs with negligible probability. However, this is unnecessary for our
proof, as repeated samples are handled directly in the correctness analysis.

\paragraph{Step 2: From Lattice Samples to a Candidate Lattice Vector.}

Next, the reduction converts the sampled lattice points into positive integers.
To do this, it first samples $r_1,\ldots,r_n$ independently and uniformly from
the set of primes in $\{1,\dots,N^3\}$. This can be implemented by enumerating
all primes in $\{1,\dots, N^3\}$ in time $N^{\Oh(1)}$~\cite{AKS04} and then
sampling uniformly from this set\footnote{Alternatively, one could repeatedly
sample a random integer and test whether it is prime. However, this would yield
only an expected-time procedure for sampling a uniformly random prime from the
interval.}. It also samples $s_1,\ldots,s_n$ independently and uniformly from
$\{1,\dots,N^2\}$. 

Next, the reduction enumerates powers of two in the range $[2^{\lceil \log(\norm{\mathbf{D}}) \rceil}, \dots, 2^{\lfloor \log(\norm{\mathbf{D}} - \log(B) \rfloor}]$ \footnote{From \cref{lem:short-basis}, we have $\norm{\mat D}_2 \leq B \cdot \lambda_n(\L)$ which implies $\lambda_n(\L) \geq \norm{\mat D}_2/B$. On the other hand, $\lambda_n(\L) \leq \norm{\mat D}_2$.} to obtain a value $\tilde{\lambda}_n$ satisfying $\lambda_n(\L) \leq \tilde{\lambda}_n \leq 2\lambda_n(\L)$. For each guess, the reduction performs the following two steps.

\paragraph*{Step 2a: Integer Encoding.}
Using the random choices $\{r_j, s_j\}_{j \in [n]}$, define
\begin{align*}
    f(\mathbf{x})
    \coloneqq
    W^n + \sum_{j=1}^n W^{j-1}
    \left\lfloor
        \frac{(\mathbf{x})_j \cdot \tilde{\lambda}_n^{-1} + s_j}{r_j}
    \right\rfloor,
\end{align*}
where $W \coloneqq 8BN^3$. Finally, for each $i \in [N]$, the reduction sets $a_i \coloneqq f(\mathbf{x}^{(i)})$ and defines the \PESS instance as $\mathcal{A} \coloneqq (a_1,\ldots,a_N)$.

Notice that the parameter $W$ is chosen large enough so that, for the correct guess $\widetilde{\lambda}_n$, there are no carries between different base-$W$ digits, while the additive term $W^n$ ensures that $f(\mathbf{x})$ is positive for every sampled lattice point $\mathbf{x}^{(i)}$ except with negligible probability.

\paragraph*{Step 2b: Reconstruction of a Lattice Vector.} The reduction runs the \PESS algorithm on this instance, and converts the returned unequal sets $(X,Y)$ into a lattice vector by computing $\mathbf{v}^{(\tilde{\lambda}_n)} := \sum_{i \in X} \mathbf{x}^{(i)} - \sum_{j \in Y} \mathbf{x}^{(j)}$. If the \PESS algorithm fails to find a solution, the reduction treats this as returning $(\emptyset,\emptyset)$. In this case, the corresponding candidate vector is $\mathbf{0}$ and is discarded at the final output stage.

\paragraph*{Step 3: Final vector.} After constructing the vectors $\mat v^{(\tilde{\lambda}_n)}$ for $\tilde{\lambda}_n \in [2^{\lceil \log(\norm{\mathbf{D}}) \rceil}, 2^{\lceil \log(\norm{\mathbf{D}}) \rceil - 1},\allowbreak \dots,\allowbreak 2^{\lfloor \log(\norm{\mathbf{D}} - \log(B) \rfloor}]$, the reduction returns the shortest non-zero vector among them. 

\medskip
\noindent We present the full pseudocode of the reduction in \cref{alg:2}.

\begin{algorithm}[ht!]
\caption{Reduction behind~\cref{thm:lowerbound}.}
\label{alg:lowerbound}\label{alg:2}
\KwIn{A lattice basis $\mathbf{B} = [\mathbf{b}_1,\dots,\mathbf{b}_n]$, parameters $N \coloneqq N(n)$ and $B \coloneqq B(n)$}
\KwOut{A lattice vector $\mathbf{v} \in \mathcal{L}(\mathbf{B})$}

Perform the lattice reduction algorithm in \cref{lem:short-basis} to obtain $\mathbf{D}$.
Set $W \coloneqq 8BN^3$.

Sample $\mathbf{x}^{(1)}, \dots, \mathbf{x}^{(N)}$ using the algorithm in \cref{thm:wide_sampling_shift_invariant} on $\mathbf{D}$ with $s = 4 \pi \sqrt{n} N \norm{\mathbf{D}}_2$.

Sample $n$ primes $r_1,\dots,r_n$ independently and uniformly from $[N^3]$.

Sample $n$ integers $s_1,\dots,s_n$ independently and uniformly from $[N^2]$.

\For{$\tilde{\lambda}_n = 2^{\lceil \log(\norm{\mathbf{D}}_2) \rceil}, \dots, 2^{\lfloor \log(\norm{\mathbf{D}}_2) - \log(B) \rfloor} $}{
    
    Set $a_i = f(\mathbf{x}^{(i)})$ for all $i \in [N]$    
    Obtain $X, Y$ by running the \PESS algorithm on the input $(a_1,\dots,a_N)$.
    
    Set $
    \mathbf{v}^{(\tilde{\lambda}_n)}
    \coloneqq
    \sum_{i \in X} \mathbf{x}^{(i)} - \sum_{j \in Y} \mathbf{x}^{(j)}$.
}

Output the shortest non-zero vector among the candidates $\left\{ \mathbf{v}^{(\tilde{\lambda}_n)} \right\}_{\tilde{\lambda}_n}$.
\end{algorithm}

\subsection{Correctness}

We prove correctness by analyzing a modified version of the reduction.  Recall that the reduction sets $s \coloneqq 4\pi\sqrt{n}\, N \norm{\mathbf{D}}_2$. 
By \cref{thm:wide_sampling_shift_invariant}, the actual sampling procedure used by the reduction produces vectors whose joint distribution is negligibly close to that of $N$ independent samples from $\mathcal{D}_{\L,s}$. Thus, it suffices to analyze an idealized version of the reduction in which $(\mathbf{x}^{(1)},\dots,\mathbf{x}^{(N)})$ are sampled independently from $\mathcal{D}_{\L,s}$. This discrepancy between the actual and idealized samplers contributes only a negligible additive error to the overall failure probability.

We now introduce a modified idealized reduction that differs only in its sampling step. First, sample $\mathbf{x}'^{(1)},\dots,\mathbf{x}'^{(N)}$ independently from $\mathcal{D}_{\L,s}$. Then, for each $i \in [N]$, sample an independent uniform bit $\tau_i \in \{0,1\}$ and define
\[
\mathbf{x}^{(i)} \coloneqq
\begin{cases}
\mathbf{x}'^{(i)} + \mathbf{v}^* & \text{if } \tau_i = 1,\\[1mm]
\mathbf{x}'^{(i)} & \text{if } \tau_i = 0,
\end{cases}
\]
where $\mathbf{v}^*$ is a fixed shortest nonzero vector of $\L$. All subsequent steps of the reduction remain unchanged.

This modified sampling procedure is not efficient, since it requires knowledge of $\mathbf{v}^*$. However, efficiency is irrelevant for the present argument, as we use the modified reduction only for the correctness analysis. Since $\|\mathbf{v}^*\|_2 = \lambda_1(\L) \leq \lambda_n(\L) \leq \tilde{\lambda}_n$, \cref{lem:stat-distance-DGS} implies that
\[
\Delta\!\left(\mathcal{D}_{\L,s},\, \mathcal{D}_{\L,s} + \mathbf{v}^*\right)
\leq \Oh((NB)^{-1}),
\]
where $\Delta(\cdot,\cdot)$ denotes statistical distance. Applying \cref{lem:stat-dist-prod-binary}, we obtain

\begin{align*}
    &\Delta\!\left(
    (\mathbf{x}^{(1)}, \dots, \mathbf{x}^{(N)}),
    (\mathbf{x}'^{(1)} + \tau_1\mathbf{v}^*, \dots, \mathbf{x}'^{(N)} + \tau_N\mathbf{v}^*)
    \right) \\
    &\leq \Oh(B^{-1}).
\end{align*}
We will set $B$ to be large polynomial, and hence $\Oh(B^{-1})$ will be $1/\poly(n)$.
In particular, the failure probability of the original reduction is at most the failure probability of the modified reduction plus an additive $\Oh(B^{-1})$ term. Combining this with the negligible discrepancy between the actual sampler and the idealized sampler, it remains to bound the failure probability of the modified reduction.

\noindent We do so in three steps with respect to the correct guess $\tilde{\lambda}_n$, i.e., $\lambda_n(\L) \leq \tilde{\lambda}_n \leq 2\lambda_n(\L)$.
\begin{enumerate}[label=(\arabic*)]
    \item\label{prop:linfty} We bound the $\ell_\infty$-norms of the sampled vectors $\mathbf{x}^{(i)}$ with sufficiently high probability. This will allow us to control the encoding range and determine a suitable choice of $W$.

    \item\label{prop:instance} We show that the resulting multiset of integers $(a_1,\dots,a_N)$ forms a valid \PESS instance, in particular that its total sum is at most $2^{o(N)}$. This ensures that the \PESS algorithm returns a valid solution, from which the reduction constructs a lattice vector of the desired length.

    \item\label{prop:nonzero} Finally, we show that the output lattice vector is nonzero with constant probability.
\end{enumerate}

\paragraph{Proof of~Property~\ref{prop:linfty}:}
We first bound the $\ell_\infty$-norms of the sampled lattice vectors. Recall that the reduction sets $s \coloneqq 4\pi\sqrt{n}\, N \norm{\mat D}_2$. By \cref{thm:wide_sampling_shift_invariant}, for any $R \geq s\sqrt{n\log N}$, the sampled vectors satisfy
\[
\|\mathbf{x}'^{(i)}\|_\infty \leq R
\qquad \text{for all } i \in [N],
\]
except with probability at most $2^{-n}$. Note that for all sufficiently large $n$, $B N^2 \tilde{\lambda}_n \geq s\sqrt{n\log N}$. Indeed, by substituting the value of $s$, this condition becomes
\[
B N^2 \tilde{\lambda}_n
\;\geq\;
4\pi\sqrt{n}\, N \norm{\mat D}_2 \cdot \sqrt{n\log N},
\]
which is equivalent to $N \geq 2\pi n \sqrt{\log N}$. This holds for all sufficiently large $n$ since we set $N = \Omega(n\log{n})$. Therefore, we conclude that, except with probability at most $2^{-n}$,
\[
\|\mathbf{x}'^{(i)}\|_\infty \leq B N^2 \tilde{\lambda}_n
\qquad \text{for all } i \in [N].
\]

In the remainder of the correctness analysis, we condition on this event. In particular, this bound will allow us to choose a suitable value of $W$ and to control the magnitudes of the encoded integers $a_i$.

\paragraph{Proof of Property~\ref{prop:instance}:}
We now show that, conditioned on the event
\[
\|\mathbf{x}'^{(i)}\|_\infty \leq B N^2 \tilde{\lambda}_n
\qquad\text{for all } i \in [N],
\]
the constructed multiset $\mathcal{A} \coloneqq \{a_1,\dots,a_N\}$ is a valid input to the \PESS algorithm, and that any equal-sum solution returned by \PESS yields a short lattice vector. Recall that the reduction sets $W \coloneqq 8BN^3$ and
\[
a_i = f(\mathbf{x}^{(i)})
\coloneqq
W^n + \sum_{j=1}^n W^{j-1}
\left\lfloor
\frac{(\mathbf{x}'^{(i)})_j/\tilde{\lambda}_n + s_j}{r_j}
\right\rfloor,
\]
where each $r_j$ is a prime in $[N^3]$ and each $s_j \in [N^2]$. We first bound the individual ``digits'' appearing in this encoding. Since $\| \mathbf{x}'^{(i)} \|_\infty \leq N^2B\tilde{\lambda}_n$ and $1 \leq s_j \leq N^2$, we have, for every $i \in [N]$ and $j \in [n]$,
\[
-N^2B + 1
\;\leq\;
\frac{(\mathbf{x}'^{(i)})_j}{\tilde{\lambda}_n} + s_j
\;\leq\;
N^2B + N^2.
\]
Since $r_j \geq 1$ and $\norm{\mat v^*}_{\infty} \leq \tilde{\lambda}_n$, it follows that
\begin{align*}
    -(N^2B+N^2+2)
    &\;\leq\;
    \left\lfloor
    \frac{(\mathbf{x}'^{(i)} + \mathbf{v^*})_j/\tilde{\lambda}_n + s_j}{r_j}
    \right\rfloor \\
    &\;\leq\;
    N^2B+N^2+1.
\end{align*}
In particular, every encoded digit has absolute value at most $\ell \coloneqq N^2B+N^2+2$. It follows that each $a_i$ is positive and satisfies
\begin{align*}
a_i
&\leq
W^n + \sum_{j=1}^n W^{j-1} \ell \\
&\leq
W^n + \ell \sum_{j=0}^{n-1} W^j \\
&\leq
W^n + \ell \cdot n W^{n-1}.
\end{align*}
Since $W = 8BN^3$ and $\ell = N^2B+N^2+2 \leq 2BN^2$ for all sufficiently large $N$, we have
\[
\ell \cdot n W^{n-1}
\leq
\frac{n}{4N} \cdot W^n
\leq
W^n
\]
for all sufficiently large $n$, because $N = \Omega(n\log{n})$. Hence $0 < a_i \leq 2W^n \eqqcolon M$ for all $i \in [N]$. 
Since $W = 8BN^3$, we have $M = 2(8BN^3)^n$. Using the fact that $N  \le B \ll 2^{N/n}$, it follows that $M \le 2^{o(N)}$. In other words, we have verified that the input instance lies in the pigeonhole regime required by the \PESS algorithm.

Since each $a_i$ is positive, the multiset $\mathcal{A}$ is a valid strong-regime instance for \PESS. Consequently, the \PESS algorithm returns unequal subsets $X,Y \subseteq [N]$ such that $\sum_{i\in X} a_i = \sum_{j\in Y} a_j$.

Define $\mathbf v \coloneqq \sum_{i\in X} \mathbf{x}^{(i)} - \sum_{j\in Y} \mathbf{x}^{(j)} \in \L$. We next show that the length of $\mathbf v$ is well-bounded. From the equality $\sum_{i\in X} a_i = \sum_{j\in Y} a_j$ and the definition of the encoding, we obtain
\begin{equation*}
    \begin{aligned}
        (|X|-|Y|)W^n + 
        \sum_{k=1}^n W^{k-1}\cdot \Bigg(
        \sum_{i\in X}
        \left\lfloor
        \frac{(\mathbf{x}^{(i)})_k/\tilde{\lambda}_n + s_k}{r_k}
        \right\rfloor - \sum_{j\in Y}
        \left\lfloor
        \frac{(\mathbf{x}^{(j)})_k/\tilde{\lambda}_n + s_k}{r_k}
        \right\rfloor \Bigg)
        =0.
    \end{aligned}
\end{equation*}
For each $k\in[n]$, define
\[
C_k \coloneqq
\sum_{i\in X}
\left\lfloor
\frac{(\mathbf{x}^{(i)})_k/\tilde{\lambda}_n + s_k}{r_k}
\right\rfloor
-
\sum_{j\in Y}
\left\lfloor
\frac{(\mathbf{x}^{(j)})_k/\tilde{\lambda}_n + s_k}{r_k}
\right\rfloor.
\]
Since each digit lies in $[-M,M]$, we have $|C_k| \leq (|X|+|Y|)M \leq 2NM$. Using $M \leq 2BN^2$ for large $N$, this gives$ |C_k| \leq 4BN^3 < \frac{W}{2}$. Thus, every coefficient in the base-$W$ expansion above has magnitude strictly less than $W/2$. It follows that no carries or overflows are possible, and hence each coefficient must vanish individually. In particular,
\[
|X|-|Y| = 0
\qquad\text{and}\qquad
C_k = 0 \quad \text{for every } k \in [n].
\]
Therefore, for every $k\in[n]$,
\[
\sum_{i\in X}
\left\lfloor
\frac{(\mathbf{x}^{(i)})_k/\tilde{\lambda}_n + s_k}{r_k}
\right\rfloor
=
\sum_{j\in Y}
\left\lfloor
\frac{(\mathbf{x}^{(j)})_k/\tilde{\lambda}_n + s_k}{r_k}
\right\rfloor.
\]

Using the elementary bound $t-1 \le \lfloor t \rfloor \le t$, we conclude that
\begin{align*}
    &\left|
    \sum_{i\in X}
    \frac{(\mathbf{x}^{(i)})_k/\tilde{\lambda}_n + s_k}{r_k}
    -
    \sum_{j\in Y}
    \frac{(\mathbf{x}^{(j)})_k/\tilde{\lambda}_n + s_k}{r_k}
    \right| \le |X|+|Y| \le 2N.
\end{align*}
Since $|X|=|Y|$, the terms involving $s_k$ cancel, and we obtain
\[
\left|
\frac{1}{r_k\tilde{\lambda}_n}
\left(
\sum_{i\in X} (\mathbf{x}^{(i)})_k
-
\sum_{j\in Y} (\mathbf{x}^{(j)})_k
\right)
\right|
\le 2N.
\]
By the definition of $\mathbf v$, this is
\[
\frac{|\mat v_k|}{r_k\tilde{\lambda}_n} \le 2N.
\]
Since $r_k \le N^3$, it follows that $|\mat v_k| \le 2N^4 \tilde{\lambda}_n$, for every $k \in [n]$. Hence
\[
\|\mathbf v\|_\infty \le 2N^4\tilde{\lambda}_n
\qquad\text{and}\qquad
\|\mathbf v\|_2 \le 2\sqrt{n}\,N^4\tilde{\lambda}_n.
\]

Thus, conditioned on the event that all sampled vectors satisfy $\|\mathbf{x}^{(i)}\|_\infty \le N^2B\tilde{\lambda}_n$, the constructed multiset $\mathcal A$ is a valid strong-regime \PESS instance, and the solution returned by \PESS yields a lattice vector $\mathbf v \in \L$ of norm at most $2\sqrt{n}\, N^4\tilde{\lambda}_n \leq 4\sqrt{n}\, N^4\lambda_n (\L) \leq N^5 \cdot \lambda_n(\Ll)$ as $N = \Omega(n\log{n})$.

\paragraph{Proof of Property~\ref{prop:nonzero}}
It remains to show that the lattice vector returned by the modified reduction is nonzero with constant probability. Recall that, in the modified reduction, each sampled vector is of the form
\[
\mathbf{x}^{(i)} = \mathbf{x}'^{(i)} + \tau_i \mathbf{v}^*,
\]
where $\mathbf{x}'^{(i)} \sim \mathcal{D}_{\L,s}$, each $\tau_i \in \{0,1\}$ is an independent uniform bit, and $\mathbf{v}^*$ is a fixed shortest nonzero vector of $\L$. We first bound the probability that the encoding changes when $\mathbf{v}^*$ is added.

\begin{claim}
    \label{cl:bad-prime-bound}
    Let $\alpha$ be uniformly random in $[N^2]$ and let $\beta$ be uniformly random among all primes in $[N^3]$. 
    For fixed $b,\eps \in \mathbb{R}$ satisfying $b \in [-BN^2, BN^2]$ and $\eps \in [-1,1]$, we have 
    \begin{align*}
        \Pr_{\alpha,\beta}\!\left[
        \left\lfloor \frac{b+\alpha}{\beta} \right\rfloor
        \neq
        \left\lfloor \frac{b+\eps+\alpha}{\beta} \right\rfloor
        \right] 
        \le
        \Oh\!\left(\frac{1}{N^2}+\frac{\log(BN^2+N^2)\log N}{N^3}\right).
    \end{align*}
    In particular, if $B\ge N$, then this is
    \[
    \Oh\!\left(\frac{1}{N^2}+\frac{\log(BN)\log N}{N^3}\right).
    \]
\end{claim}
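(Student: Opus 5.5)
Without loss of generality take $\eps \ge 0$; the case $\eps<0$ is symmetric after swapping the roles of $b$ and $b+\eps$. Set $t \coloneqq b+\alpha$. The two floors differ exactly when the half-open interval $(t, t+\eps]$ contains a multiple of $\beta$. Since $\eps \le 1$, this means there is an integer $m$ with $m\beta \in (t,t+1]$, i.e.\ $\alpha \in (m\beta - b - 1,\, m\beta - b]$. An interval of length $1$ contains at most one integer, so for a fixed prime $\beta$ and a fixed multiple $m\beta$ at most one value of $\alpha$ is bad. We will therefore bound, for each fixed $\beta$, the number of multiples of $\beta$ that can be reached, and then average over $\beta$.

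\textbf{Fixed $\beta$.} As $\alpha$ ranges over $[N^2]$, $t$ ranges over an interval of length $N^2$ inside $[-BN^2, BN^2+N^2]$. The number of multiples $m\beta$ meeting $(t,t+1]$ for some such $t$ is at most $N^2/\beta + 2$. Each of these multiples forbids at most one $\alpha$. Hence
\[
\Pr_\alpha[\text{bad}\mid\beta] \le \min\Big(1,\ \frac{N^2/\beta+2}{N^2}\Big) \le \frac{1}{\beta}+\frac{2}{N^2}.
\]
The term $2/N^2$ already gives the first summand of the claim. It remains to bound $\mathbb{E}_\beta[1/\beta]$ over a uniformly random prime $\beta \le N^3$.

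\textbf{Averaging over $\beta$.} By \cref{thm:prime-counting} there are $\Theta(N^3/\log N)$ primes in $[N^3]$. Mertens' estimate gives $\sum_{p\le x}1/p = O(\log\log x)$. Combining these,
\[
\mathbb{E}[1/\beta] = O\Big(\frac{\log N \cdot \log\log N}{N^3}\Big),
\]
which is dominated by the second term $\log(BN^2+N^2)\log N/N^3$ of the claim. If we prefer to rely only on \cref{thm:prime-counting}, we can use dyadic blocks: the primes in $[2^i,2^{i+1})$ number $O(2^i/i)$, each contributes at most $2^{-i}$ to $\sum 1/p$, so the block sum is $O(1/i)$ and the total is $O(\log\log N)$.

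\textbf{Main obstacle.} The delicate step is the counting of multiples for fixed $\beta$. We must check that the bound $N^2/\beta+2$ does not use the range of $b$ in a way that produces $BN^2/\beta$. It does not, because only the window of $\alpha$ matters. The $\log(BN^2+N^2)$ factor in the stated claim presumably comes from a cruder route: counting, among the $O(\log(BN^2+N^2))$ primes dividing a nonzero integer near $t$, those that can cause a floor change. Our route gives a bound at least as strong, and the ``in particular'' part follows from $\log(BN^2+N^2)=O(\log(BN))$ when $B\ge N$.
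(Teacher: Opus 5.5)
Your proof is correct. It reverses the paper's order of conditioning.

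\textbf{The paper's route.} The paper fixes $\alpha$ first. The closed interval between $b+\alpha$ and $b+\alpha+\eps$ has length at most $1$, so it contains at most two integers. If $0$ is among them, this is charged to $\alpha$, with probability at most $2/N^2$. Otherwise any bad prime must divide one of at most two nonzero integers of absolute value at most $BN^2+N^2+1$. That leaves $O(\log(BN^2+N^2))$ bad primes out of $\pi(N^3)=\Theta(N^3/\log N)$.

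\textbf{Your route.} You fix $\beta$ and let the random shift do all the work. Bad values of $\alpha$ inject into the multiples of $\beta$ lying in a half-open window of length $N^2$. This gives $\Pr[\text{bad}\mid\beta]\le 1/\beta+2/N^2$. You then average $1/\beta$ over primes, via Mertens or the dyadic use of $\pi(x)=\Theta(x/\log x)$. One small slip: for a fixed multiple $m\beta$, the bad $\alpha$ lie in $[m\beta-b-\eps,\,m\beta-b)$, not the interval you wrote. This is still half-open of length at most $1$, so the count is unaffected.

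\textbf{What each buys.} The paper's divisor-counting argument needs only the lower bound on $\pi(N^3)$ and the trivial bound on the number of prime factors of an integer. But it pays for the magnitude of $b$, which is exactly where the $\log(BN^2+N^2)$ factor comes from. Your argument never uses the range of $b$. It yields the $B$-independent bound $O(1/N^2+\log N\log\log N/N^3)=O(1/N^2)$. This implies both displayed bounds, because $\log\log N\le 2\log N\le\log(BN^2+N^2)$.

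Your argument also shows that the randomness of the prime is nearly irrelevant for this claim. For any fixed $\beta\ge N^2$, your per-$\beta$ bound is already at most $3/N^2$. Averaging over primes is needed only to control the small ones.
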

\begin{claimproof}
For a fixed value of $\alpha$, define $x:= b+\alpha$. Consider the bad event
\[
\mathcal{E}_{\alpha,\beta}
:=
\left\{
\left\lfloor \frac{x}{\beta} \right\rfloor
\neq
\left\lfloor \frac{x+\eps}{\beta} \right\rfloor
\right\}.
\]
Since $\beta>0$, if
\[
\left\lfloor \frac{x}{\beta} \right\rfloor
\neq
\left\lfloor \frac{x+\eps}{\beta} \right\rfloor,
\]
then there exists an integer $k \in \mathbb{Z}$ such that
\[
k \in \left[\min\!\left\{\frac{x}{\beta},\frac{x+\eps}{\beta}\right\},
           \max\!\left\{\frac{x}{\beta},\frac{x+\eps}{\beta}\right\}\right].
\]
Multiplying by $\beta$, we obtain
$ k\beta \in J_\alpha$, where
$J_\alpha :=
\left[\min\{x,x+\eps\},\max\{x,x+\eps\}\right]$.
Thus, if $\mathcal{E}_{\alpha,\beta}$ occurs, then $J_\alpha$ contains an integer divisible by $\beta$. Since $|J_\alpha| = |\eps| \le 1$, the interval $J_\alpha$ contains at most two integers. We split into two cases.

\noindent \emph{Case 1 ($J_\alpha$ contains $0$).}
Then
$\alpha \in [-b-\max\{0,\eps\},\, -b-\min\{0,\eps\}]$,
and this interval has length at most $1$. Hence, it contains at most two integers, so
\[
\Pr_\alpha[\,0\in J_\alpha\,] \le \frac{2}{N^2} = \Oh\!\left(\frac{1}{N^2}\right).
\]

\medskip
\noindent \emph{Case 2 ($J_\alpha$ does not contain $0$).}
Fix such an $\alpha$. Since $J_\alpha$ contains at most two integers, there are at most two nonzero integers $z \in J_\alpha \cap \mathbb{Z}$ that could be equal to $k\beta$. Moreover, every such $z$ satisfies
\[
|z|
\le \max\{|x|,|x+\eps|\}
\le |b|+|\alpha|+|\eps|
\le BN + N^2 + 1.
\]
Each nonzero integer $z$ has at most $\log_2 |z| = \Oh(\log(BN^2+N^2))$ distinct prime divisors. 
Since there are at most two such integers, the total number of primes $\beta$ for which 
$\mathcal{E}_{\alpha,\beta}$ can occur is $\Oh(\log(BN^2+N^2))$. Now $\beta$ is chosen uniformly at random 
from the set of primes in $[N^3]$. By the prime number theorem,
\[
\pi(N^3)=\Theta\!\left(\frac{N^3}{\log(N^3)}\right)
=
\Theta\!\left(\frac{N^3}{\log N}\right).
\]
Therefore, for every fixed $\alpha$ with $0\notin J_\alpha$,
\begin{align*}
    \Pr_\beta[\mathcal{E}_{\alpha,\beta}]
    &\le
    \frac{\Oh(\log(BN^2+N^2))}{\Theta(N^3/\log N)} \\
    &= \Oh\!\left(\frac{\log(BN^2+N^2)\log N}{N^3}\right).
\end{align*}

Combining the two cases gives
\[
\Pr_{\alpha,\beta}[\mathcal{E}_{\alpha,\beta}]
\le
\Oh\!\left(\frac{1}{N^2}\right)
+
\Oh\!\left(\frac{\log(BN^2+N^2)\log N}{N^3}\right),
\]
which proves the claim. If $B\ge N$, then $BN^2+N^2 \le 2BN^2$, so $\log(BN^2+N^2)=\Oh(\log(BN))$, yielding the final bound.
\end{claimproof}

Fix a sample $\mathbf{x}'^{(i)}$ satisfying $\|\mathbf{x}'^{(i)}\|_\infty \le B N^2 \tilde{\lambda}_n$.
Then, for every coordinate $j \in [n]$,
\[
\frac{(\mathbf{x}'^{(i)})_j}{\tilde{\lambda}_n} \in [-N^2B,\, N^2B].
\]
Also, since $\mathbf{v}^*$ is a shortest nonzero vector, $|(\mathbf{v}^*)_j| \le \|\mathbf{v}^*\|_2 = \lambda_1(\L) \le \tilde{\lambda}_n$, and hence $\frac{(\mathbf{v}^*)_j}{\tilde{\lambda}_n} \in [-1,1]$. Applying Claim~\ref{cl:bad-prime-bound} with
\[
b = \frac{(\mathbf{x}'^{(i)})_j}{\tilde{\lambda}_n},
\qquad
\alpha = s_j,
\qquad
\beta = r_j,
\qquad
\eps = \frac{(\mathbf{v}^*)_j}{\tilde{\lambda}_n},
\]
we obtain, for every fixed $i$ and $j$,
\begin{align*}
    &\Pr_{r_j,s_j}\!\left[
    \left\lfloor \frac{(\mathbf{x}'^{(i)})_j/\tilde{\lambda}_n + s_j}{r_j} \right\rfloor
    \neq
    \left\lfloor \frac{((\mathbf{x}'^{(i)})_j + (\mathbf{v}^*)_j)/\tilde{\lambda}_n + s_j}{r_j} \right\rfloor
    \right] \\
    &\le
    \Oh\!\left(\frac{1}{N^2}+\frac{\log(BN)\log N}{N^3}\right).
\end{align*}
Taking a union bound over all coordinates $j \in [n]$, we get
\begin{align*}
    &\Pr_{\{r_j,s_j\}_{j\in[n]}}\!\left[
    f(\mathbf{x}'^{(i)}) \neq f(\mathbf{x}'^{(i)}+\mathbf{v}^*)
    \right] 
    \le
    \Oh\!\left(\frac{n}{N^2}+\frac{n\log(BN)\log N}{N^3}\right).
\end{align*}

Finally, taking a union bound over all $i \in [N]$, we conclude that
\begin{align*}
    &\Pr\!\left[
    \exists\, i \in [N] : f(\mathbf{x}'^{(i)}) \neq f(\mathbf{x}'^{(i)}+\mathbf{v}^*)
    \right] 
    \le
    \Oh\!\left(\frac{n}{N}+\frac{n\log(BN)\log N}{N^2}\right).
\end{align*}

\noindent \textbf{Remark:} Note that for every good set of sampled vectors $\{\mathbf{x}^{(i)}\}$ (\,i.e., $\|\mathbf{x}^{(i)}\|_\infty \le N^2B\tilde{\lambda}_n$ for all $i \in [N]$\,), there exists a choice of $\{r_j,s_j\}_{j\in[n]}$ satisfying the required properties. Since our reduction runs in subexponential time, it could, in principle, find such a choice deterministically. We nevertheless sample these values at random, as this leads to a simpler description of the reduction.

 \medskip
From this point onward, we condition on the following good event:
\begin{enumerate}
    \item $\|\mathbf{x}^{(i)}\|_\infty \le B N^2 \tilde{\lambda}_n$ for all $i \in [N]$, and
    \item $f(\mathbf{x}'^{(i)}) = f(\mathbf{x}'^{(i)}+\mathbf{v}^*)$ for all $i \in [N]$.
\end{enumerate}
By the previous parts of the proof, this conditioning fails with probability at most
\[
2^{-n}
+
\Oh\!\left(\frac{n}{N}+\frac{n\log(BN)\log N}{N^2}\right).
\]

We now show that, under this good event, the modified reduction outputs a nonzero vector with probability at least $1/2$.

Fix all randomness of the modified reduction except the bits $\tau_1,\dots,\tau_N$. Under the good event, changing a bit $\tau_i$ does not change the encoded value $a_i = f(\mathbf{x}'^{(i)} + \tau_i \mathbf{v}^*)$,
because $f(\mathbf{x}'^{(i)}) = f(\mathbf{x}'^{(i)}+\mathbf{v}^*)$. Hence the multiset $\mathcal{A} = \{a_1,\dots,a_N\}$ is independent of the choice of the bits $(\tau_1,\dots,\tau_N)$. Therefore, once the remaining randomness is fixed, the \PESS algorithm returns the same unequal subsets $X,Y \subseteq [N]$ for every choice of the bits.

Since $X \neq Y$, there exists an index $t \in (X\setminus Y)\cup(Y\setminus X)$. Consider two executions that differ only in the value of $b_t$. Because the input multiset $\mathcal{A}$ is unchanged, both executions produce the same subsets $X, Y$. Let $\mathbf v$ and $\mathbf v'$ denote the corresponding output vectors. Then flipping $b_t$ changes exactly one sampled vector by $\pm \mathbf{v}^*$, and therefore changes the output vector by $\pm \mathbf{v}^*$. More precisely,
\[
\mathbf v' - \mathbf v =
\begin{cases}
\mathbf{v}^* & \text{if } t \in X\setminus Y,\\
-\mathbf{v}^* & \text{if } t \in Y\setminus X.
\end{cases}
\]
In particular, $\mathbf v' \neq \mathbf v$. Since $\mathbf{v}^* \neq \mathbf 0$, at most one of $\mathbf v$ and $\mathbf v'$ can be equal to $\mathbf 0$. Thus, among each such paired choice of the bit $b_t$, at least one yields a nonzero output vector. Because $b_t$ is uniform in $\{0,1\}$, it follows that, conditioned on the good event, the modified reduction outputs a nonzero vector with probability at least $1/2$. Combining this with the earlier comparison between the original and modified reductions, we conclude that the original reduction fails with probability at most
\[
\frac{1}{2}
+
2^{-n}
+
\Oh\!\left(\frac{n}{N}+\frac{n\log(BN)\log N}{N^2}\right)
+ \PESS_{\mathsf{err}}.
\]
where $\PESS_{\mathsf{err}}$ is the probability that the \PESS algorithms fails to return a valid solution. Since
$\Omega(n\log{n}) \le N \le B \ll 2^{N/n}$,
and assuming the $\PESS_{\mathsf{err}}$ is negligible (or even smaller than than $1/2 - 1/\poly(n)$), the additive term above is $1 - 1/\poly(n)$. Therefore, by repeating the original reduction independently polynomially many times and returning the shortest nonzero vector obtained, the overall failure probability becomes negligible. This concludes the proof of Property~\cref{prop:nonzero}.

\subsection{Proof of~\cref{thm:lowerbound}}

In this section, we conclude the proof of~\cref{thm:lowerbound}. First, we prove a bit more precise statement for the $M^{\Omega(1/\log{N})}$ lower bound:

\begin{lemma}\label{lem:lowerbound-p1}
    Under~\cref{hyp:lambda-n}, there exists $\eps > 0$ such that there does not exist an algorithm that is given
    a set $S \subseteq \{1,\ldots,M\}$ of size $N$, with probability $>3/4$ finds two different $A, B \subseteq S$, and runs in time $\poly(N) \cdot M^{\eps/\log{N}}$ (even when $M = 2^{\Theta(\sqrt{N}\log{N})}$).
\end{lemma}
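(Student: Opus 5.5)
We argue by contradiction: assume that for every $\eps > 0$ there is an algorithm $\mathcal{P}_\eps$ which, on $N$ integers bounded by $M$, finds an equal-sum pair with probability $>3/4$ in time $\poly(N)\cdot M^{\eps/\log N}$. Plugging it into the reduction of \cref{alg:2} gives, for some constant $c$, a $2^{\eps' n}$-time algorithm for $n^{c}$-approximate nSMP with arbitrarily small $\eps'>0$, contradicting \cref{hyp:lambda-n}. The order of quantifiers matters here. \cref{hyp:lambda-n} fixes $c$ first and then supplies $\eps_0$. So we first fix the target polynomial, namely $N^5 \le n^{c}$ with $N = \Theta(n^2)$, which gives $c = 10$ (plus slack). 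We take the $\eps_0$ that \cref{hyp:lambda-n} provides for this $c$, and then show that a sufficiently small $\eps$ for \PESS produces an algorithm running in time $O(2^{\eps_0 n})$.

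\textbf{Parameters.} We set $N \coloneqq n^2$, which satisfies $N = \Omega(n\log n)$. We choose $\kappa$ large in \cref{lem:short-basis}, so that preprocessing costs $2^{Cn/\kappa} \le 2^{\eps_0 n/2}$; this gives $B = C n^{\kappa}$, so $\log B = \Theta(\kappa\log n)$. Then $N \le B$ and $\log B \ll N/n = n$, so the requirement $N\le B\ll 2^{N/n}$ of the correctness analysis holds. By Property~\ref{prop:instance}, every instance produced for the correct guess has integers bounded by $M = 2(8BN^3)^n = 2^{\Theta(\kappa n\log n)} = 2^{\Theta(\kappa\sqrt{N}\log N)}$, which also verifies the parenthetical regime $M = 2^{\Theta(\sqrt N\log N)}$ for constant $\kappa$. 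For incorrect guesses of $\tilde\lambda_n$ the integers may be larger or nonpositive. We handle this by discarding any guess whose instance violates the bound $M$ or positivity, or by running the oracle with a timeout, so that every call costs time bounded in terms of $M$.

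\textbf{Running time and success.} Each oracle call costs $\poly(N)\cdot M^{\eps/\log N} = \poly(n)\cdot 2^{\Theta(\eps\kappa n \log n/\log N)} = 2^{\Theta(\eps\kappa n)}$, because $\log N = 2\log n$. We therefore pick $\eps \le \eps_0/(K\kappa)$ for a suitable constant $K$. The number of guesses is $O(\log B)=O(\kappa\log n)$, and sampling and encoding take $\poly(n,N)$ time, so the total time is $O(2^{\eps_0 n})$. For correctness we use the analysis already carried out. For the correct guess, with $\PESS_{\mathsf{err}} < 1/4$, the failure probability is at most $1/2+1/4+o(1)$, which stays bounded away from $1$. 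Repeating $\poly(n)$ times and outputting the shortest nonzero candidate yields, with high probability, a nonzero $\mathbf v$ with $\norm{\mathbf v}_2 \le N^5\lambda_n(\L) \le n^{10}\lambda_n(\L)$.

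\textbf{Main obstacle.} The delicate step is the dependence on $\kappa$. The preprocessing time calls for a large $\kappa$, while $\log M$ grows linearly in $\kappa$. The point to check is that $\eps$ may depend on $\kappa$ while the approximation factor $N^5$ does not, since that factor depends only on $N$ and not on $B$. This independence makes the order of quantifiers consistent: first $c$, then $\eps_0$, then $\kappa$, then $\eps$. A second issue is that the \PESS algorithm is promised only on valid instances. We therefore have to ensure that we never invoke it, or at least never wait on it beyond its time bound, on malformed instances coming from wrong guesses of $\tilde\lambda_n$.
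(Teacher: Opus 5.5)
Your proposal is correct and follows the paper's proof essentially step for step. You set $N=n^2$ and $c=10$, take the constant $\delta_H$ (your $\eps_0$) from \cref{hyp:lambda-n}, choose $\kappa$ so preprocessing costs $2^{\delta_H n/2}$ (the paper uses $\kappa=C/(2\delta_H)$), note $\log M=\Theta(\kappa n\log n)$, and pick $\eps=\Theta(\delta_H/\kappa)$ so each oracle call costs about $2^{\delta_H n/2}$ (the paper takes $\eps=\delta_H/(4\kappa)$), with the $N^5=n^{10}$ approximation factor independent of $\kappa$. Your extra step of discarding or time-limiting instances from wrong guesses of $\tilde\lambda_n$ that fall outside $[1,M]$ is a detail the paper leaves implicit, and it is a sensible way to make sure the oracle's time bound applies to every call.
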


\begin{proof}
We set $N = n^2$. Let $c \coloneqq 10$. By~\cref{hyp:lambda-n}, there exists $\delta_H >
0$ such that there is no algorithm running in time $2^{\delta_H n}$ that given
an $n$-dimensional lattice $\Ll$ find a non-zero vector $v$ with $\norm{v} \le
n^{10} \cdot \lambda_n(\Ll)$.  Based on this, we select $\kappa \coloneqq C/(2\delta_H)$,
where $C$ is a constant in~\cref{lem:short-basis}, and
use~\cref{lem:short-basis} to compute a basis $\mathbf{D}$ in time
$\Oh(2^{C\,n/\kappa}) \le \Oh(2^{\delta_H n/2})$. This fixes the choice of $B =
C \cdot n^\kappa$.

Notice that indeed $\Omega(n\log{n}) \le N \le B \ll 2^{N/n}$, hence the
parameters are set up correctly. Recall that $W = 8BN^3$ and $M = 2W^n$, hence
$M \le 2^{2 \kappa n \log{n}}$, because we can assume that $\kappa,C \ll n$ as
otherwise the instance can be solved in constant time (as $\kappa$ and $C$ are
constant). Notice that $M = 2^{\Theta(\sqrt{N} \log{N})}$

We now analyze the running time of \cref{alg:2}
and derive the final contradiction to \cref{hyp:lambda-n}.

\paragraph{Running time of the reduction.}
We bound the running time of each stage of the algorithm, except that the oracle call to \PESS is:
\begin{itemize}[nosep]
    \item By \cref{lem:short-basis}, the initial basis-reduction step runs in time $\Oh(2^{\delta_H n/2})$ time.

    \item The sampling step uses \cref{thm:wide_sampling_shift_invariant} with $s = 4\pi\sqrt{n}\,N \norm{\mat D}_2$. Since $\|\mathbf{D}\|_2 \leq B \lambda_n(\L)$, we have $\log s = \tilde{\Oh}(n)$, and hence this step runs in time $N \cdot \poly(n\cdot \mathsf{bitlength}(\mathbf{D})\cdot \log s)$ which is polynomial in $n$.

    \item Sampling the primes $r_1,\dots,r_n$ and the shifts $s_1,\dots,s_n$ takes time $N^{\Oh(1)}$, since primality testing is polynomial-time in the bitlength of the input~\cite{AKS04}.

    \item Constructing the integers $a_1,\dots,a_N$ and the corresponding \PESS instance takes time polynomial in $N$, namely $N \cdot \poly(n\cdot\log W)$.

    \item The remaining vector-reconstruction and final minimization steps take time at most $\poly(n)$.
\end{itemize}
The loop over guesses of $\tilde{\lambda}_n$ runs for only $\poly(n\cdot\log\|\mathbf{D}\|_2)$ iterations, and this factor is absorbed into the $\Oh(\cdot)$ notation. All these steps, took time $\poly(n) \cdot 2^{\delta_H n/2}$. It remains to analyze the cost of the \PESS oracle. Assume for the sake of 
contradiction that for every $\eps > 0$, there exists an algorithm that solves \PESS in time $\poly(N) \cdot M^{\eps/\log{N}}$. Set $\eps = \delta_H/(4\kappa)$ and observe that 
we have $\poly(n)$ calls to the oracle, so in total this takes time:
\begin{align*}
    \poly(n) \cdot M^{\eps/\log{n}} &\le \poly(n) \cdot 2^{(2\kappa n\log{n}) \cdot \eps/\log{n}} \\
    &\le \poly(n) \cdot 2^{\delta_H \cdot n/2} \\
    &\le \Oh(2^{\delta_H \cdot n}).
\end{align*}
So in total, the reduction works in $\Oh(2^{\delta_H\cdot n})$ time. Finally, recall that
~\cref{alg:lowerbound} returns a non-zero vector $\mathbf{v} \in \Ll$ such that $\norm{\bold{v}} \le
N^5 \cdot \lambda_n(\Ll) = n^{10} \lambda_n(\Ll)$. This contradicts the initial assumption that no algorithm finds such a vector $\bold{v}$ in time $\Oh(2^{\delta_H \cdot n})$.
\end{proof}

We conclude the proof of~\cref{thm:lowerbound} by showing that a different tradeoff yields 
a $2^{\Omega(n/\log{n})}$ lower bound.

\begin{lemma}\label{lem:lowerbound-p2}
    Under~\cref{hyp:lambda-n}, there exists $\eps > 0$ such that there does not exist an algorithm that is given
    a set $S \subseteq \{1,\ldots,2^{N-1}\}$ of size $N$, with probability $>3/4$ finds two different $A, B \subseteq S$ and runs in time $\Oh(2^{\eps N/\log{N}})$.
\end{lemma}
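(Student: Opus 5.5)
The plan is to reuse \cref{alg:2} and the correctness analysis verbatim, changing only the parameter regime. In \cref{lem:lowerbound-p1} we took $N = n^2$, which gave $M = 2^{\Theta(\sqrt{N}\log N)}$. Here the goal is to make the instance as dense as the pigeonhole regime permits, so that $\log M \approx N$ and $N = \Theta(n\log n)$. Then a $2^{\eps N/\log N}$-time \PESS oracle costs only $2^{\Oh(\eps n)}$.

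\textbf{Parameters.} Fix $c = 10$, take $\delta_H$ from \cref{hyp:lambda-n}, set $\kappa \coloneqq C/(2\delta_H)$ and $B = C n^\kappa$ as before, so the preprocessing of \cref{lem:short-basis} runs in time $\Oh(2^{\delta_H n/2})$. With $W = 8BN^3$ the encoding gives $M = 2W^n$, so
\[
\log M = n\log(8BN^3) + 1 = \Theta(\kappa n\log n)
\]
once $N = \poly(n)$. The condition $M \le 2^{N-1}$ is satisfied by choosing $N \coloneqq \lceil K\kappa\, n\log n\rceil$ for a sufficiently large absolute constant $K$, since then $n\log(8BN^3) + 2 \le N$. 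With this choice $N = \Theta(n\log n)$ and $N \le B$ for large $n$. Rather than citing the constraint $B \ll 2^{N/n}$ loosely, I will recheck each step of the correctness proof directly for these parameters:
\begin{itemize}
\item Property~\ref{prop:linfty} needs $N \ge 2\pi n\sqrt{\log N}$, which holds.
\item The no-carry bound needs $\ell n W^{n-1} \le W^n$, i.e.\ roughly $n \le 4N$, which holds.
\item The output length is $\|\mathbf v\|_2 \le 4\sqrt n N^4\lambda_n(\L) \le n^{10}\lambda_n(\L)$, since $N = \Oh(n\log n)$.
\item The nonzero-probability error terms $\Oh(n/N + n\log(BN)\log N/N^2)$ are at most $\Oh(1/\log n) + o(1)$. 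Together with the $1/2$ loss this still leaves success probability $1/2 - o(1)$ per run, and repetition amplifies it.
\end{itemize}

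\textbf{Running time.} Suppose for every $\eps>0$ there is a \PESS algorithm with success probability $> 3/4$ running in time $\Oh(2^{\eps N/\log N})$. The reduction makes $\poly(n)$ oracle calls, one per guess $\tilde\lambda_n$ and per repetition. Each call costs
\[
2^{\eps N/\log N} = 2^{\eps\,\Theta(K\kappa n\log n)/\Theta(\log n)} = 2^{\Oh(\eps K\kappa n)}.
\]
Choosing $\eps = \delta_H/(c'K\kappa)$ for a suitable constant $c'$ makes this at most $2^{\delta_H n/2}$. Everything else is polynomial apart from the preprocessing, so the total time is $\Oh(2^{\delta_H n})$. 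The output is a nonzero lattice vector of length at most $n^{10}\lambda_n(\L)$, contradicting \cref{hyp:lambda-n}.

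\textbf{Main obstacle.} The delicate step is the bookkeeping that forces $M \le 2^{N-1}$ while keeping $N = \Theta(n\log n)$, and not $\poly(n)$ larger. This is needed because the saving comes exactly from the exponent being $N/\log N \approx n$. The error term $n/N = \Theta(1/\log n)$ is small but only inverse-logarithmic, so I must confirm that the success-probability argument tolerates it. It does, since the requirement is only a $1/2 - o(1)$ per-run success probability before amplification. The oracle's failure probability below $1/4$ fits under the assumption in the analysis that $\PESS_{\mathsf{err}} < 1/2 - 1/\poly(n)$.
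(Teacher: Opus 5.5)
Your proposal is correct and is essentially the paper's proof. You rerun \cref{alg:2} with $N=\Theta(\kappa n\log n)$ (the paper takes $N=4\kappa n\log n$ and $c=6$), so that $N/\log N=\Oh(\kappa n)$ and $\eps=\Theta(\delta_H/\kappa)$ makes each oracle call cost at most $2^{\delta_H n/2}$; your slack constant $c'$ also repairs the paper's step $\log(4\kappa n\log n)\ge 2\log n$, which is false for large $n$. One arithmetic slip you inherited from the paper: with $\kappa=C/(2\delta_H)$ the preprocessing of \cref{lem:short-basis} costs $2^{Cn/\kappa}=2^{2\delta_H n}$, not $2^{\delta_H n/2}$, so take $\kappa=2C/\delta_H$ instead (and taking $K$ slightly larger also gives $N\cdot M<2^N-1$, so the produced instances are genuinely in the pigeonhole regime).
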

\begin{proof}
    Set $c = 6$ and similarly, to the proof of~\cref{lem:lowerbound-p1}, this gives us parameters $\delta_H,\kappa$. Based on that, we select $N = 4 \kappa n \log{n}$ and similarly we select $B, W$ and $M$. As before, we know that $M \le 2^{2\kappa n\log{n}} < 2^{N-1}$ so the pigeonhole condition holds.
    
    Analogously to the proof of~\cref{lem:lowerbound-p1} the running time of~\cref{alg:lowerbound} is $\Oh(2^{\delta_H n/2})$, except for all the calls to the \PESS oracle, which if we set $\eps \coloneqq \delta_H/(4\kappa)$ take in total time:
    \begin{align*}
        \poly(n) \cdot 2^{\eps N/\log{N}} &\le 
        \poly(n) \cdot 2^{\eps \cdot 4 \kappa n \log{n}/(\log(4\kappa n\log{n}))} \\
        &\le \poly(n) \cdot 2^{\eps \cdot 4 \kappa n \log{n}/(2\log{n})} \\
        &\le \poly(n) \cdot 2^{2 \eps \kappa n } \\
        &\le \poly(n) \cdot 2^{\delta_H \cdot n/2} \\
        &\le 2^{\delta_H \cdot n}.
    \end{align*}
    So in total, the reduction works in $\Oh(2^{\delta_H\cdot n})$ time.
    Finally, recall that ~\cref{alg:lowerbound} returns a non-zero vector
    $\mathbf{v} \in \Ll$ such that $\norm{\bold{v}} \le N^5 \cdot \lambda_n(\Ll)
    = \Oh(n\log{n})^5 \lambda_n(\Ll) \le n^6 \cdot \lambda_n(\Ll)$ because $n$ is large enough. This contradicts
    the initial assumption that no algorithm can find such a vector $\bold{v}$ in time
    $\Oh(2^{\delta_H \cdot n})$.
\end{proof}

\section{Modular average-case: Proof of \cref{thm:average-case-modular}}\label{sec: average-case-modular}

\paragraph{Notations.} 
For an $n$ dimension vector $\vec{v}$ where $n$ is even, $\vec{v}^{(L)}$ and 
$\vec{v}^{(R)}$ denote the first $n/2$ and the last $n/2$
elements of the vector. $\mathbb{Z}_M$ denotes the ring of integers modulo $M$.

In this section, we conclude the proof of \cref{thm:average-case-modular}. Firstly, 
recall the definition of the \textbf{ACM-RESS} problem.

\paragraph{Average-Case Modular Restricted Equal Subset Sum (ACM-RESS).} 
\label{prob: average_case_modular}
Given a modulus $M$ and an integer $n$ divisible by $4$ as input, such that $M$ has the following structure
\begin{enumerate}
    \item $M = pqr$, where $p$, $q$, and $r$ are pairwise co-prime integers,
    \item $M = \Theta(2^n)$, and $p, q \in \Theta(M^{1/4})$.
    \footnote{As mentioned in the introduction, the average-case algorithms 
    work even for $M \leq \poly(n) \cdot 2^n$, the trade-off is the success probability of the algorithm. 
    We need to optimize the integers $p$ and $q$ for an optimal success probability. 
    But for simplicity of the presentation, we present an algorithm for $M = \Theta(2^n)$.}
\end{enumerate}
Choose $n$ elements $a_1, a_2, \ldots, a_n$ uniformly randomly from $\mathbb{Z}_{M}$
and the task is to find a non-trivial collision modulo $M$, i.e., find two \emph{distinct} sets 
$\vec{u}, \vec{v} \in \{0, 1\}^n$ and $(\vec{u}, \vec{v}) \neq (0^n, 0^n)$ 
such that $\vec{a} . \vec{u} \equiv \vec{a}. \vec{v} \mod M$,
where $.$ denotes the standard dot product.

\paragraph{Discussion on the modulus $M$.} 
Let $M < 2^n$, then we define $n' := \ceil{\log(M)}$.
And, we only consider any $n'$ elements from the list $\{a_1, a_2, \ldots, a_n\}$. Then, 
our problem reduces to a problem with input modulus $M$ and integer $n'$, where
we can observe that $2^{n' - 1} < M \leq 
2^{n'}$. 

Here, we first introduce the Filtering \cref{alg: averange-case-modular}, 
which, on average, solves \textbf{ACM-RESS} problem.

\begin{algorithm}[h!]
\caption{Algorithm for~\cref{thm:average-case-modular}.}
\label{alg: averange-case-modular}
\KwIn{$(a_1, a_2, \ldots, a_n) \in \mathbb{Z}_{M}^n$; 
\textbf{Parameters:} $s_0, s_1 \in \mathbb{Z}_p$ and $s_2 \in \mathbb{Z}_q$}
\KwOut{A non-trivial modular collision or $\bot$}
Build two mod-$p$ DP tables storing the sizes of the following sets $\forall$ $s \in \mathbb{Z}_p$ 
and $1 \leq i \leq n/2$
\begin{align*}
    B_{i, s}^{(L)} &:= \left\{ \vec{u} \in \{0, 1\}^i\; \middle| \; 
    \sum_{j = 1}^{i}a_ju_j \equiv s \bmod p \right\}, \\
    B_{i, s}^{(R)} &:= \left\{ \vec{v} \in \{0, 1\}^i \; \middle| \; 
    \sum_{j = 1}^{i}a_{n/2 + j}v_j \equiv s \bmod p \right\}.
\end{align*}

Compute the bins $B_{n/2, s_0}^{(L)}$ and $B_{n/2, s_1}^{(R)}$ by backtracing the DP table; look for a collision in these sets, else continue.

Construct the following set
\begin{equation*}
    \begin{aligned}
        B_{s_0, s_1, s_2} &:= 
        \left\{ \vec{u} || \vec{v} \in \{0, 1\}^n \; \middle| \; 
        \begin{aligned}
            \vec{u} \in B_{n/2, s_0}^{(L)}, \;  \vec{v} \in B_{n/2, s_1}^{(R)}, \\
            \vec{a}^{(L)} \cdot \vec{u} + \vec{a}^{(R)} \cdot \vec{v} \equiv s_2 \bmod q,
        \end{aligned}\right\}
    \end{aligned}
\end{equation*}

Look for a non-trivial collision in this list $B_{s_0, s_1, s_2}$.
If one is found, it outputs this collision; otherwise, it outputs $\bot$.
\end{algorithm}

\paragraph{Observation.} Using \textbf{CRT} (Chinese Remainder Theorem), we can say that every input $a_i$ can be split into three independent random variables $a_i^{(p)}, a_i^{(q)}$, and $a_i^{(r)}$, which are images of the CRT isomorphism. Then, we can first observe that the filtering \cref{alg: averange-case-modular} only depends on the random vectors $\vec{a}^{(p)} := (a_1^{(p)}, a_2^{(p)}, \ldots, a_n^{(p)})$ and 
$\vec{a}^{(q)} := (a_1^{(q)}, a_2^{(q)}, \ldots, a_n^{(q)})$. 

\subsection{Success Probability}
The filtering algorithm succeeds when it outputs a desired collision for the input instance. 
We are interested in computing a lower bound of the fraction of all the input 
instances for which the algorithm succeeds. The following theorem gives us the success
probability of \cref{alg: averange-case-modular}.
\begin{lemma}\label{lemma: Filtering_algo_succ_modular}
    Let $M, n, p,$ and $q$ be the same as defined in the above problem \textbf{ACM-RESS}. 
    Then the following holds for the \cref{alg: averange-case-modular}
    \begin{align*}
        \Pr_{\vec{a}}\left(\cref{alg: averange-case-modular} \text{ succeeds}\right) 
        = \Omega\left(1\right).
    \end{align*}
\end{lemma}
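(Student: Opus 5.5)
We will prove the lemma by the second-moment method applied to the collision-counting variable $\mathcal{C} \coloneqq \mathcal{C}^{(\mathsf{pqr})} = \sum_{i<j} \mathcal{C}_{ij}$. Here the sum ranges over unordered pairs of distinct vectors $\vec{u}_i,\vec{u}_j \in \{0,1\}^n$, and $\mathcal{C}_{ij}=1$ iff all of the following hold: both vectors lie in $B_{s_0,s_1,s_2}$ (left halves $\equiv s_0$ and right halves $\equiv s_1 \pmod p$, full sum $\equiv s_2 \pmod q$), and $\vec{a}\cdot\vec{u}_i \equiv \vec{a}\cdot\vec{u}_j \pmod r$. By CRT these conditions give a collision modulo $M$. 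The algorithm succeeds whenever $\mathcal{C}\ge 1$, since it either finds a collision among the bins at the first level or scans $B_{s_0,s_1,s_2}$. By Paley--Zygmund (Cauchy--Schwarz), $\Pr[\mathcal{C}>0]\ge \mathbb{E}[\mathcal{C}]^2/\mathbb{E}[\mathcal{C}^2]$. It therefore suffices to show $\mathbb{E}[\mathcal{C}]=\Omega(1)$ and $\mathbb{E}[\mathcal{C}^2]=\Oh(1)$.

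For the first moment, we use the observation that $\vec{a}^{(p)},\vec{a}^{(q)},\vec{a}^{(r)}$ are independent and uniform. For a nonzero vector $\vec{w}\in\{-1,0,1\}^n$, the sum $\vec{a}^{(m)}\cdot \vec{w}$ is uniform in $\mathbb{Z}_m$ as soon as some coordinate of $\vec{w}$ is $\pm1$. Fix a pair $\vec{u}_i\ne\vec{u}_j$ whose left halves are distinct, whose right halves are distinct, and none of whose four halves is zero. Then the four mod-$p$ constraints are expressed by linearly independent (over $\mathbb{Z}_p$) combinations of $\vec{u}_i^{(L)},\vec{u}_j^{(L)},\vec{u}_i^{(R)},\vec{u}_j^{(R)}$. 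To handle general $p$ (not necessarily prime), we argue via a coordinate where $\vec{u}_i$ and $\vec{u}_j$ differ and a coordinate where one of them is $1$: revealing the variables one by one gives a triangular, unimodular system. Hence the constraints hold with probability exactly $p^{-4}$. The two mod-$q$ constraints are handled the same way and contribute $q^{-2}$, and the mod-$r$ collision contributes $r^{-1}$. The number of such pairs is $(1-o(1))4^n/2$, so $\mathbb{E}[\mathcal{C}] \ge \Omega(4^n/(p^4q^2r)) = \Omega(4^n/(M p^3 q)) = \Omega(1)$, using $M=\Theta(2^n)$ and $p,q=\Theta(M^{1/4})$. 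The degenerate pairs (a shared half, or a zero half) only add nonnegative terms, so we can ignore them for the lower bound.

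For the second moment, we expand $\mathbb{E}[\mathcal{C}^2]=\mathbb{E}[\mathcal{C}]+\sum_{(i,j)\neq(\ell,k)}\Pr[\mathcal{C}_{ij}=\mathcal{C}_{\ell k}=1]$. We also need $\mathbb{E}[\mathcal{C}]=\Oh(1)$ from above, which comes from the same count with degenerate pairs bounded separately: when halves coincide, fewer constraints are imposed, but far fewer pairs exist, roughly $2^{3n/2}$ instead of $4^n$, which costs a factor $p$ or $q$ that is compensated. For a quadruple of vectors we write the full system of constraints. It consists of filter constraints modulo $p$ on the left and right halves, modulo $q$ on the full vectors, and the two collision constraints modulo $r$. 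The probability that the system holds equals $\prod_m m^{-\rank_m}$, where $\rank_m$ is the rank of the relevant $\{-1,0,1\}$ difference system. Because the moduli may be composite, we instead use a sufficient elimination argument that again reveals coordinates one by one. We then partition the set of 4-tuples into cells indexed by the linear dependency pattern: the rank $\rho_L$ of $\{\vec{u}^{(L)}\}$ differences, the rank $\rho_R$ of $\{\vec{u}^{(R)}\}$ differences, the full rank $\rho$, and the rank $\rho_r$ of the collision differences $\vec{u}_i-\vec{u}_j,\vec{u}_\ell-\vec{u}_k$.

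In the generic cell (full ranks), the probability is $p^{-8}q^{-4}r^{-2}$ times a constant. With $\Theta(16^n)$ tuples this gives $\mathbb{E}[\mathcal{C}]^2=\Oh(1)$. For each degenerate cell, we count tuples using the fact that a $\{0,1\}$-vector forced to lie in the span of the others under a fixed $\{-1,0,1\}$ relation pattern is determined up to $2^{o(n)}$ (really $\poly$ or constant) choices per half. Each lost rank modulo $p$ then saves a factor $2^{n/2}\ge p^2$ in the count, while costing only a factor $p$ in probability. The cases with shared vectors, such as $j=\ell$ or $\{i,j\}\cap\{\ell,k\}\ne\emptyset$, are handled the same way: sharing one vector removes $2^n$ from the count and removes one mod-$p$ pair and one mod-$q$ constraint, giving $2^{-n}p^2 q = \Oh(M^{-1/4})$ relative to the main term.

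The main obstacle is exactly this case analysis. One must enumerate dependency patterns among the four binary vectors, including relations such as $\vec{u}_i-\vec{u}_j=\vec{u}_\ell-\vec{u}_k$ on a half only, and for each pattern bound the number of $\{0,1\}$ solutions against the lost probability factor. I would first organize the patterns by the dimension of the span of the six pairwise differences on each half and on $r$, and prove a generic counting lemma: the number of 4-tuples whose half-differences span dimension $d<3$ is at most $\Oh(3^{n/2})\cdot 2^{(d+1)n/2}/2^{n/2}$-type bounds. This would show that every non-generic cell contributes $\Oh(1)$, completing $\mathbb{E}[\mathcal{C}^2]=\Oh(1)$ and hence the lemma.
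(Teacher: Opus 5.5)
\paragraph{Gap in the second moment: the collinear cell.}
Your overall plan matches the paper's: first moment via CRT independence, second moment via a rank-based partition of 4-tuples, then Paley--Zygmund. The first-moment part is fine, and your unimodular elimination handles composite $p,q$ more carefully than the paper does.

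The gap is in the second moment, in exactly the cell the paper isolates as $S_1$: 4-tuples of distinct vectors with $\vec{u}_i-\vec{u}_j=\pm(\vec{u}_\ell-\vec{u}_k)$ over the integers. There a \emph{single} integer relation does four things at once:
\begin{itemize}
\item it drops the left rank mod $p$;
\item it drops the right rank mod $p$;
\item it drops the full rank mod $q$;
\item it merges the two mod-$r$ collision constraints into one, with probability $1/r$ rather than $\approx 1/r^2$ (compare Claim~\ref{claim: Events_a_r_second_moment}).
\end{itemize}
So the joint probability is about $p^{-6}q^{-3}r^{-1}$, a factor $p^2qr=\Theta(2^{5n/4})$ above the generic one.

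Your counting rule (``the fourth vector is determined up to $O(1)$ choices per half'') saves only a factor $2^n$. That gives $O(2^{3n})$ tuples and a contribution $2^{3n}/(p^6q^3r)=\Theta(2^{n/4})$, which is unbounded. Your bookkeeping ``each lost rank mod $p$ saves $2^{n/2}$ and costs $p$'' hides this. The two lost $p$-ranks already use up the whole $2^n$ saving, and nothing pays for the extra factor $qr=\Theta(2^{3n/4})$. The counting lemma you propose at the end does not rescue it either: with $d=2$ it gives $3^{n/2}2^{n}$ per half, which is worse than the trivial $2^{3n/2}$.

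\paragraph{The missing ingredient.}
You need a count that exploits binarity coordinate-wise. For each coordinate, only $6$ of the $16$ patterns in $\{0,1\}^4$ satisfy $v_1-v_2=v_3-v_4$. So this cell has fewer than $6^n$ tuples (\cref{prop:counting_collinear_tuples}), and its contribution is $6^n/(p^6q^3r)=\Theta(2^{(\log_2 6-11/4)n})=O(2^{-0.165n})$. This closes only because $\log_2 6<11/4$, so it is a genuinely quantitative step, not a formality.

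The rest of your case analysis is fine as described:
\begin{itemize}
\item Sub-cells where the collinear relation comes with further degeneracy on a half ($t_1+t_2\le 5$ in the paper's notation) are handled by the crude rank count $2^{(t_1+t_2)n/2}$.
\item All non-collinear cells, including your shared-vector cases, are bounded as you say.
\end{itemize}
But without the $6^n$ bound, or an equivalently sharp count, for the full-vector collinear cell, $\mathbb{E}[\mathcal{C}^2]=O(1)$ does not follow, and neither does the lemma.
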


In the remainder of this section, we give a proof of \cref{lemma: Filtering_algo_succ_modular}.
We first define a collision-finding random variable, and then, by using
second-moment-based concentration bounds, we obtain the desired result. 

\paragraph{Distribution of collisions in $B_{s_0, s_1, s_2}$.} 
To analyze the distribution of collisions in the list $B_{s_0, s_1, s_2}$, 
we first define a random variable $\mathcal{C}$ that denotes the number of non-trivial 
modular collisions in this list. Our filtering algorithm succeeds if the value of these variables is at least $1$. Hence, to find the success probability of the algorithm, i.e., the fraction of instances for which the algorithm outputs a collision, we need to find the probability of the event $\mathcal{C} > 0$. Formally,
we define the random 
$\mathcal{C}$ over a random choice of $\vec{a}$
as follows
\begin{equation*}
    \begin{aligned}
        \mathcal{C} := \left| \left\{ (\vec{u}_i, \vec{u}_j) \; \middle| \;  
        \begin{aligned}
            i < j, \; \vec{u}_i, \vec{u}_j \in  B_{s_0, s_1, s_2}, \\
            \vec{a}^{(r)} \cdot \vec{u}_i \equiv \vec{a}^{(r)}\cdot 
            \vec{u}_j \mod r
        \end{aligned}\right\}\right|,
    \end{aligned}
\end{equation*}

where $\vec{a}^{(r)} := \{a_i \mod r\}_{i = 1}^{n}$.
And, we have
\begin{align*}
    \Pr_{\vec{a}}\left(\cref{alg: averange-case-modular} \text{ succeeds}\right) = 
    \Pr_{\vec{a}}\left( \mathcal{C} > 0 \right).
\end{align*}

Hence, now our problem reduces to analyzing this random variable $\mathcal{C}$. 
For any input instance, we can compute the value of $\mathcal{C}$ 
by iterating through all the possible pairs of binary vectors, 
and considering whether the vectors (or the corresponding subset sum) pass through the 
filters of the algorithm and collide modulo $r$. We define the following set that contains all
the possible binary vectors that can reach the final list $B_{s_0, s_1, s_2}$.
\begin{align*}
    \mathsf{A} := \{\vec{u} \in \{0, 1\}^n \; | \; 
    \vec{u}^{(L)} \neq 0^{n/2}, \vec{u}^{(R)} \neq 0^{n/2}\}.
\end{align*}
Let's assume that the elements of the set $\mathsf{A}$ 
are ordered in a lexicographical sense. Then let 
$\mathsf{A} = \{\vec{u}_1, \vec{u}_2, \ldots \vec{u}_N\}$, where $\vec{u}_i < \vec{u}_j$ for any $1 \leq i < j \leq N$ 
and $N = (2^{n/2} - 1)^2 = 2^n - 2^{n/2 + 1} + 1$.

A binary vector $\vec{u}_i$ survives the filtering algorithm if it satisfies
the following events
\begin{equation*}
    \begin{aligned}
        \mathsf{E}_{\mathsf{LS}, p}^{(i)} &: \vec{u}_i^{(L)}\cdot 
        \vec{a}^{(p)}{}^{(L)} \equiv s_0 \mod p && (\text{Left sum Mod $p$})\\
        \mathsf{E}_{\mathsf{RS}, p}^{(i)} &: \vec{u}_i^{(R)}\cdot \vec{a}^{(p)}{}^{(R)} \equiv s_1 \mod p && (\text{Right sum Mod $p$})\\
        \mathsf{E}_{\mathsf{TS}, q}^{(i)} &: \vec{u}_i\cdot\vec{a}^{(q)} \equiv s_2 \mod q
        && (\text{Total sum Mod $q$})
    \end{aligned} 
\end{equation*}

Now, for each pair of binary vectors $\vec{u}_i, \vec{u}_j$, we define the binary random variable
and $\mathcal{C}_{ij}$,
such that its value is $1$ if and only if $\vec{u}_i$ and $\vec{u}_j$ survive the
filtering algorithm, i.e., the above three events occur for both the vectors $\vec{u}_i$ and 
$\vec{u}_j$, and they collide modulo $r$, i.e., the following event occurs
\begin{equation*}
    \begin{aligned}
        \mathsf{E}_{\mathsf{col}, r}^{(ij)} &: \vec{u}_i\cdot \vec{a}^{(r)} \equiv \vec{u}_j\cdot\vec{a}^{(r)} \mod r.
    \end{aligned}
\end{equation*}

Now, we write the random variable $\mathcal{C}$ as a sum of the binary random variables $\mathcal{C}_{ij}$
as follows
\begin{align*}
    \mathcal{C} = \sum_{1 \leq i < j \leq N} \mathcal{C}_{ij}.
\end{align*}
Then, using the linearity of expectation, we have the following
\begin{align*}
    \underset{\vec{a}}{\mathbb{E}}\left(\mathcal{C}\right) = \sum_{i < j} \underset{\vec{a}}{\mathbb{E}}\left(\mathcal{C}_{ij}\right).
\end{align*}

Using this equation, we compute the first and second moments of the random variable $\mathcal{C}$ to prove~\cref{lemma: Filtering_algo_succ_modular}.

\subsubsection{First Moment} 
\begin{lemma}\label{lemma: first_moment_modular}
    Let $M, n, p,$ and $q$ be the same as defined in the above problem \textbf{ACM-RESS}. 
    Then the following holds for the \cref{alg: averange-case-modular}
    \begin{align*}
        \underset{\vec{a}}{\mathbb{E}}\left(\mathcal{C}\right) = \Theta\left(1\right).
    \end{align*}
\end{lemma}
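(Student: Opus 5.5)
By linearity of expectation, $\mathbb{E}_{\vec a}(\mathcal C)=\sum_{i<j}\Pr_{\vec a}(\mathcal C_{ij}=1)$. The plan is to compute each term exactly using the CRT decomposition. Because $\vec a^{(p)},\vec a^{(q)},\vec a^{(r)}$ are independent and uniform over $\mathbb{Z}_p^n,\mathbb{Z}_q^n,\mathbb{Z}_r^n$, each joint event splits into three independent factors:
\[
\Pr(\mathcal C_{ij}=1)=\Pr\bigl(\mathsf E^{(i)}_{\mathsf{LS},p}\wedge\mathsf E^{(j)}_{\mathsf{LS},p}\wedge\mathsf E^{(i)}_{\mathsf{RS},p}\wedge\mathsf E^{(j)}_{\mathsf{RS},p}\bigr)\cdot\Pr\bigl(\mathsf E^{(i)}_{\mathsf{TS},q}\wedge\mathsf E^{(j)}_{\mathsf{TS},q}\bigr)\cdot\Pr\bigl(\mathsf E^{(ij)}_{\mathsf{col},r}\bigr).
\]
We use one basic fact repeatedly. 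If $\vec a$ is uniform over $\mathbb{Z}_m^k$ and $\vec w\in\{0,1\}^k\setminus\{0\}$ (more generally $\vec w\in\{-1,0,1\}^k\setminus\{0\}$), then $\vec a\cdot\vec w$ is uniform over $\mathbb{Z}_m$. The reason is that some coordinate of $\vec w$ is $\pm1$, a unit. Likewise, for two linearly independent such vectors, the pair of dot products is uniform over $\mathbb{Z}_m^2$. For this we need a $2\times2$ minor with unit determinant; with $0/\pm1$ entries the determinant lies in $\{\pm1,\pm2\}$, so a little care is needed when the modulus is even.

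The collision factor is the easy one. Since $\vec u_i\neq\vec u_j$, the difference $\vec u_i-\vec u_j$ is a nonzero $\{-1,0,1\}$ vector, so $\Pr(\mathsf E^{(ij)}_{\mathsf{col},r})=1/r$.

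For the mod-$q$ factor, the two events amount to $\vec u_i\cdot\vec a^{(q)}\equiv s_2$ together with $(\vec u_i-\vec u_j)\cdot\vec a^{(q)}\equiv0$. We pick a coordinate $k$ where the two vectors differ. If $\vec u_i$ has a $1$ at some coordinate $k'\neq k$, conditioning on all other coordinates shows the pair is uniform, giving probability exactly $1/q^2$. The only degenerate case is when $\vec u_i$ is supported inside $\{k\}$, which forces $\vec u_i=e_k$. That cannot happen in $\mathsf A$, because both halves of every vector in $\mathsf A$ are nonzero, so $|\mathrm{supp}|\ge2$.

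The mod-$p$ factor is handled the same way, separately on the left and right halves. The cases are:
\begin{itemize}
\item If $\vec u_i^{(L)}\neq\vec u_j^{(L)}$, the two left events have probability $1/p^2$.
\item If $\vec u_i^{(L)}=\vec u_j^{(L)}$, they collapse to a single event of probability $1/p$.
\item The right halves behave identically.
\end{itemize}
Here the degenerate case is $\vec u_i^{(L)}=e_k$ with $\vec u_j^{(L)}=e_k+e_{k'}$ or similar. I will handle it by choosing the coordinate so that the resulting $2\times2$ system is unimodular; the independence argument above with the appropriate pivot already works. Since $\vec u_i\neq\vec u_j$, at most one half can coincide.

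Putting this together, pairs that differ in both halves contribute $1/(p^4q^2r)$ each, and there are $\binom{N}{2}-O(2^{3n/2})=\Theta(4^n)$ of them. Pairs that share a half contribute $1/(p^3q^2r)$ each, and there are $O(2^{3n/2})$ of them. With $p,q=\Theta(M^{1/4})$ and $M=pqr=\Theta(2^n)$ we have $p^4q^2r=\Theta(p^3q\cdot M)=\Theta(M^2)=\Theta(4^n)$. So the main sum is $\Theta(1)$. The correction is $O(2^{3n/2}\cdot p/4^n)=O(2^{3n/2+n/4-2n})=o(1)$. Hence $\mathbb{E}(\mathcal C)=\Theta(1)$.

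The main obstacle is making the two-event uniformity rigorous for $0/1$ vectors that differ in only one or two coordinates. We have to verify that the pair $(\vec a\cdot\vec u_i,\vec a\cdot\vec u_j)$ is exactly uniform mod $p$ (resp.\ $q$) in every non-identical case, including even moduli. If exact uniformity fails in a few sparse cases, I would instead bound those pairs by $1/p$ (resp.\ $1/q$) per factor and show their count is negligible, exactly as with the shared-half pairs.
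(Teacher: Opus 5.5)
Your proposal is correct and is essentially the paper's proof. Both use linearity of expectation and CRT independence of the mod-$p$, mod-$q$ and mod-$r$ parts, then per-pair probabilities $1/(p^4q^2r)$ or $1/(p^3q^2r)$ according to whether one half of $\vec u_i,\vec u_j$ coincides, and finally the counts $\Theta(4^n)$ versus $O(2^{3n/2})$ of such pairs.

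The pair-uniformity that the paper simply asserts holds exactly for any two distinct nonzero $0/1$ vectors modulo any integer. Take $k$ with, say, $u_{ik}=1\neq u_{jk}$, and $k'$ in the support of $\vec u_j$; the resulting $2\times 2$ submatrix is triangular with unit diagonal. So there are no degenerate cases, and your $e_k$ examples are harmless. Your stated rule of taking any $k'\neq k$ with $u_{ik'}=1$ does break down precisely when $u_{jk}=u_{jk'}=0$.
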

\begin{proof}
    Let $v \in \{0, 1\}^{n} \setminus \{0^n\}$, then we observe the following
    \begin{equation*}
        \begin{aligned}
            \Pr_{\vec{a}^{(p)}}\left(\vec{v} \cdot \vec{a}^{(p)} \equiv \alpha \mod p\right) = \frac{1}{p},\\
            \Pr_{\vec{a}^{(q)}}\left(\vec{v} \cdot \vec{a}^{(q)} \equiv \beta \mod q\right) = \frac{1}{q},\\
            \Pr_{\vec{a}^{(r)}}\left(\vec{v} \cdot \vec{a}^{(r)} \equiv \gamma \mod r\right) = \frac{1}{r},
        \end{aligned}
    \end{equation*}
    $\forall$ $\alpha \in \mathbb{Z}_p$, $\beta \in \mathbb{Z}_q$, and $\gamma \in \mathbb{Z}_r$.

    Since the vectors $\vec{a}^{(p)}$, $\vec{a}^{(q)}$, and $\vec{a}^{(r)}$ are 
    mutually independent, we can observe that the following four events are mutually independent 
    \begin{equation}\label{eq: mutually_ind_events}
        \begin{aligned}
            \mathsf{E}_{\mathsf{LS}, p}^{(i)} \wedge \mathsf{E}_{\mathsf{LS}, p}^{(j)}, \;
            \mathsf{E}_{\mathsf{RS}, p}^{(i)} \wedge \mathsf{E}_{\mathsf{RS}, p}^{(j)}, \;
            \mathsf{E}_{\mathsf{TS}, q}^{(i)} \wedge \mathsf{E}_{\mathsf{TS}, q}^{(j)}, \;
            \mathsf{E}_{\mathsf{col}, r}^{(ij)}.
        \end{aligned}
    \end{equation}

    And, since $\vec{u}_i \neq \vec{u}_j$, the events dependent on the vector $\vec{u}_i$
    are independent of the events dependent on the vector $\vec{u}_j$. 
    Combining this observation with the above equation, we have the following
    \begin{align*}
        \underset{\vec{a}}{\mathbb{E}}\left(\mathcal{C}_{ij}\right) =
        \begin{cases}
             \frac{1}{p^3q^2r} & \quad \vec{u}_{i}^{(L)} = \vec{u}_{j}^{(L)}, 
             \vec{u}_{i}^{(R)} \neq \vec{u}_{j}^{(R)}\\
             \frac{1}{p^3q^2r} & \quad \vec{u}_{i}^{(L)} \neq \vec{u}_{j}^{(L)}, 
             \vec{u}_{i}^{(R)} = \vec{u}_{j}^{(R)}\\
             \frac{1}{p^4q^2r} & \quad \vec{u}_{i}^{(L)} \neq \vec{u}_{j}^{(L)}, 
             \vec{u}_{i}^{(R)} \neq \vec{u}_{j}^{(R)}
        \end{cases}.
    \end{align*}
    Then, we have the following 
    \begin{equation}\label{eq:Exp_C}
        \begin{aligned}
            \underset{\vec{a}}{\mathbb{E}}\left(\mathcal{C}\right) &= \sum_{i < j} \underset{\vec{a}}{\mathbb{E}}\left(\mathcal{C}_{ij}\right)\\
            &= \frac{1}{2} \sum_{i \neq j} 
            \underset{\vec{a}}{\mathbb{E}}\left(\mathcal{C}_{ij}\right)\\
            &= \frac{1}{2}\cdot \left(\frac{2N_1}{p^3q^2r} + \frac{N_2}{p^4q^2r}
            \right),
        \end{aligned}
    \end{equation}
    where 
    \begin{equation*}\label{eq:N_1}
        \begin{aligned}
            N_1 &= \left(2^{n/2} - 1\right)\cdot\left(2^{n/2} - 1\right)
            \cdot\left(2^{n/2} - 2\right) \\
            &= \Theta\left(2^{3n/2}\right),
        \end{aligned}
    \end{equation*}
    and 
    \begin{equation*}\label{eq:N_2}
        \begin{aligned}
            N_2 &= \left(\left(2^{n/2} - 1\right)\cdot\left(2^{n/2} - 2\right)\right)^2
            = \Theta\left(2^{2n}\right).
        \end{aligned}
    \end{equation*}
    In the computation of $N_1$ and $N_2$, we have used the fact that $\vec{u}_{i}^{(L)}$, $\vec{u}_{i}^{(R)}$,
    $\vec{u}_{j}^{(L)}$, and $\vec{u}_{j}^{(R)}$ cannot be equal to $0^{n/2}$.
    
    \paragraph{Lower Bound of 
    $\underset{\vec{a}}{\mathbb{E}}\left(\mathcal{C}\right)$.}
    
    Using equation $(\ref{eq:Exp_C})$, $M = pqr = \Theta\left(2^n\right)$, and
    $p,q = \Theta\left(M^{1/4}\right)$, we get
    \begin{equation*}\label{eq:first_moment_lower_bound}
        \begin{aligned}
            \underset{\vec{a}}{\mathbb{E}}\left(\mathcal{C}\right) 
            &= \frac{N_1}{p^3q^2r} + \frac{N_2}{2p^4q^2r} \\
            &= \Theta\left(\frac{2^{3n/2}}{p^2qM}\right) + \Theta\left(\frac{2^{2n}}{p^3qM}\right)\\
            &= \Theta\left(\frac{2^{3n/2}}{2^{7n/4}}\right) + 
            \Theta\left(\frac{2^{2n}}{2^{2n}}\right)\\
            &= \Theta(1). 
        \end{aligned}
    \end{equation*}
    This completes the proof of \cref{lemma: first_moment_modular}.
\end{proof}

\subsubsection{Second Moment} 
\begin{lemma}\label{lemma: second_moment_modular}
    Let $M, n, p,$ and $q$ be the same as defined in the above problem \textbf{ACM-RESS}. 
    Then the following holds for the \cref{alg: averange-case-modular}
    \begin{align*}
        \underset{\vec{a}}{\mathbb{E}}\left(\mathcal{C}^2\right) = 
        \mathcal{O}\left(1\right). 
    \end{align*}
\end{lemma}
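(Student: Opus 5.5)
We expand $\mathbb{E}(\mathcal{C}^2)=\mathbb{E}(\mathcal{C})+\sum_{(i,j)\neq(\ell,k)}\Pr(\mathcal{C}_{ij}=1\wedge\mathcal{C}_{\ell k}=1)$. The first term is $\Theta(1)$ by \cref{lemma: first_moment_modular}, so it remains to show that the cross sum is $\mathcal{O}(1)$. By CRT, the joint event factors into three independent pieces: an event on $\vec{a}^{(p)}$ (left and right sums mod $p$ for $\vec{u}_i,\vec{u}_j,\vec{u}_\ell,\vec{u}_k$), an event on $\vec{a}^{(q)}$ (total sums mod $q$), and an event on $\vec{a}^{(r)}$ (the two collisions mod $r$).

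The key tool is the following linear-algebra fact. For a prime modulus, or more generally a modulus for which the relevant $0/1$ matrix keeps its rank, and a uniform $\vec{a}\in\mathbb{Z}_m^d$, the system $U\vec{a}\equiv \vec{t}$ has probability at most $m^{-\rank(U)}$, and exactly that when it is consistent. For composite $p,q,r$ one has to be careful: the probability is $|\text{image}|^{-1}$ when the target lies in the image, and $|\text{image}|\ge$ a constant-factor multiple of $m^{\rank}$ as long as the small integer minors are coprime to $m$. I would handle this by bounding the probability by $m^{-\rank_{\mathbb{Q}}}$ times a factor $\gcd$-controlled by determinants of $0/1$ matrices. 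If that gets messy, I would assume (as the footnotes allow) that $p,q$ are primes and treat $r$ via $\vec{a}^{(r)}$, whose relevant matrix has entries in $\{-1,0,1\}$.

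With this, the $p$-part has probability $p^{-\rho_L-\rho_R}$, where $\rho_L=\rank\{\vec{u}_i^{(L)},\vec{u}_j^{(L)},\vec{u}_\ell^{(L)},\vec{u}_k^{(L)}\}$ and $\rho_R$ is the analogous right-side rank. The $q$-part has probability $q^{-\rho}$, with $\rho$ the rank of the four full vectors. The $r$-part has probability $r^{-\sigma}$, with $\sigma=\rank\{\vec{u}_i-\vec{u}_j,\vec{u}_\ell-\vec{u}_k\}\in\{1,2\}$.

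I would partition the 4-tuples into cells according to the \emph{configuration}: the set of equalities among the four left halves, among the four right halves, and whether the full vectors and differences satisfy extra linear relations (such as $\vec{u}_i-\vec{u}_j=\vec{u}_\ell-\vec{u}_k$). For the generic cell, all left halves are distinct with rank $4$, and likewise on the right, $\rho=4$ and $\sigma=2$. There are $\le 2^{4n}$ tuples and probability $p^{-8}q^{-4}r^{-2}$, so the contribution is $(\mathbb{E}\,\mathcal{C})^2=\mathcal{O}(1)$. Each degeneracy removes free bits from the count and lowers a rank. I would check cell by cell that the count loss compensates the probability gain, using $p,q\approx 2^{n/4}$ and $r\approx 2^{n/2}$.

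For instance, a shared vector $\vec{u}_j=\vec{u}_\ell$ costs $2^n$ in count and gains $pq\approx 2^{n/2}$ in probability. The $r$-part still has $\sigma=2$ unless $\vec{u}_i=\vec{u}_k$, so the contribution is $2^{3n}/(p^6q^3r^2)=\mathcal{O}(1)$. Equal left halves across pairs cost $2^{n/2}$ per equality and gain only $p$. A linear relation between differences costs a factor $2^{n}$ in count (the fourth vector is determined) and gains $r\cdot$(at most $pq$).

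The main obstacle is the non-generic \emph{rank-deficient but distinct} configurations. An example is four distinct $0/1$ vectors with $\vec{u}_i+\vec{u}_k=\vec{u}_j+\vec{u}_\ell$, on halves or on full vectors, which lower $\rho_L$, $\rho_R$, $\rho$ or $\sigma$ without forcing equalities. For these I need a counting bound: the number of $0/1$ 4-tuples of length $m$ satisfying a given nontrivial linear relation with small integer coefficients is at most $\mathcal{O}(c^m)$ with $c^m\ll 2^{4m}/2^{m}$. For example, $x+y=z+w$ coordinatewise has $6^m$ solutions rather than $16^m$, saving $(8/3)^m>2^m$. I would enumerate the finitely many relation types with coefficients in $\{-1,0,1,\pm2\}$ and verify for each that the per-coordinate solution count beats the probability gain, $p$ per half-coordinate block or $r$ for $\sigma$. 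Since there are finitely many cells, each contributing $\mathcal{O}(1)$, summing gives $\mathbb{E}(\mathcal{C}^2)=\mathcal{O}(1)$.
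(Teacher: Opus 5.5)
Your plan is correct in outline and is essentially the paper's proof. It uses the same CRT split into $p$-, $q$- and $r$-parts, the same rank bounds $p^{-t_1-t_2}q^{-t_3}$ and $r^{-\sigma}$ (up to a factor $2$ for even $r$), a decomposition into cells, and the same decisive count $6^n$ for tuples with $\vec{u}_i-\vec{u}_j=\pm(\vec{u}_\ell-\vec{u}_k)$, which is the paper's $|\mathsf{T}_m|<6^m$. The only organisational difference is that the paper disposes of all other cells at once with the low-rank count $|B_{t_1,t_2,t_3}|=\mathcal{O}(2^{(t_1+t_2)n/2})$, rather than enumerating relation types coordinatewise.

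Do fix your bookkeeping in that critical cell, because your earlier sentence about it is wrong. The relation forces the left, right and full ranks to drop together with $\sigma$. So the gain over the generic cell is $r\,p^2q\approx 2^{5n/4}$, not $r\cdot pq$, and not just ``$p$ per half or $r$''. Hence ``the fourth vector is determined'' (count $2^{3n}$) would leave a divergent factor $2^{n/4}$. The threshold a per-coordinate count $c$ must beat there is $c<2^{11/4}\approx 6.73$, not $c<8$. Since $c=6$ satisfies this, you get the same $2^{-0.165n}$ margin as the paper.
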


\begin{proof}
    \begin{equation*}\label{eq:second_moment_1}
        \begin{aligned}
            \underset{\vec{a}}{\mathbb{E}}\left(\mathcal{C}^2\right) &= 
            \underset{\vec{a}}{\mathbb{E}}\left(\sum_{i < j}\mathcal{C}_{ij}.\sum_{l < k}\mathcal{C}_{lk}\right)\\
            &= \sum_{i < j, l < k} \underset{\vec{a}}{\mathbb{E}}\left(\mathcal{C}_{ij}\mathcal{C}_{lk}\right) \\
            &= \sum_{S} \Pr_{\vec{a}} \left(\mathcal{C}_{ij} = 1 \land \mathcal{C}_{lk} = 1\right) + 
            \underset{\vec{a}}{\mathbb{E}}\left(\mathcal{C}\right),
        \end{aligned}
    \end{equation*}
    where $S := \{ \left(\vec{u}_i, \vec{u}_j, \vec{u}_l, \vec{u}_k \right) \; | \; i < j, l < k, (i, j) \neq (l, k)\}$.
    Hence, we need to analyze the probability that the following events occur simultaneously
    \begin{enumerate}
        \item \textit{Left sum events:} 
        \begin{equation*}
            \begin{aligned}
                \mathsf{E}_{\mathsf{LS}, p}^{(i)} &: \vec{u}_i^{(L)}\cdot\vec{a}^{(p)}{}^{(L)} \equiv s_0 \mod p,\\
                \mathsf{E}_{\mathsf{LS}, p}^{(j)} &: \vec{u}_j^{(L)}\cdot\vec{a}^{(p)}{}^{(L)} \equiv s_0 \mod p,\\
                \mathsf{E}_{\mathsf{LS}, p}^{(l)} &: \vec{u}_l^{(L)}\cdot\vec{a}^{(p)}{}^{(L)} \equiv s_0 \mod p,\\
                \mathsf{E}_{\mathsf{LS}, p}^{(k)} &: \vec{u}_k^{(L)}\cdot\vec{a}^{(p)}{}^{(L)} \equiv s_0 \mod p,
            \end{aligned}
        \end{equation*}
        \item \textit{Right sum events:}
        \begin{equation*}
            \begin{aligned}
                \mathsf{E}_{\mathsf{RS}, p}^{(i)} &: \vec{u}_i^{(R)}\cdot\vec{a}^{(p)}{}^{(R)} \equiv s_1 \mod p,\\
                \mathsf{E}_{\mathsf{RS}, p}^{(j)} &: \vec{u}_j^{(R)}\cdot\vec{a}^{(p)}{}^{(R)} \equiv s_1 \mod p,\\
                \mathsf{E}_{\mathsf{RS}, p}^{(l)} &: \vec{u}_l^{(R)}\cdot\vec{a}^{(p)}{}^{(R)} \equiv s_1 \mod p,\\
                \mathsf{E}_{\mathsf{RS}, p}^{(k)} &: \vec{u}_k^{(R)}\cdot\vec{a}^{(p)}{}^{(R)} \equiv s_1 \mod p,
            \end{aligned}
        \end{equation*}
        \item \textit{Total sum events:}
        \begin{equation*}
        \begin{aligned}
            \mathsf{E}_{\mathsf{TS}, q}^{(i)} &: \vec{u}_i\cdot\vec{a}^{(q)} \equiv s_2 \mod q, \\
            \mathsf{E}_{\mathsf{TS}, q}^{(j)} &: \vec{u}_j\cdot\vec{a}^{(q)} \equiv s_2 \mod q, \\
            \mathsf{E}_{\mathsf{TS}, q}^{(l)} &: \vec{u}_l\cdot\vec{a}^{(q)} \equiv s_2 \mod q, \\
            \mathsf{E}_{\mathsf{TS}, q}^{(k)} &: \vec{u}_k\cdot\vec{a}^{(q)} \equiv s_2 \mod q, \\
            \mathsf{E}_{\mathsf{col}, r}^{(ij)} &: \vec{u}_i\cdot\vec{a}^{(r)} \equiv \vec{u}_j.\vec{a}^{(r)} \mod r\\
            \mathsf{E}_{\mathsf{col}, r}^{(lk)} &: \vec{u}_l\cdot\vec{a}^{(r)} \equiv \vec{u}_k\cdot\vec{a}^{(r)} \mod r,
        \end{aligned}
    \end{equation*}
    \end{enumerate}

    As we mentioned in the technical overview, the main difficulty in analyzing the second moment stems from dependencies between 
    the above-mentioned events, that is induced by linear relationships among the binary vectors. 
    To address this issue, we partition the set 
    $S$ of all possible $4-$tuples into cells such that, for every tuple within a given cell, the joint probability
    $$\Pr_{\vec{a}} \left(\mathcal{C}_{ij} = 1 \land \mathcal{C}_{lk} = 1\right)$$
    admits a uniform upper bound, while the size of each cell can also be tightly bounded. This structured partitioning allows us to control the contribution of dependent events to the second moment. 
    As we have discussed previously, the events mod $r$ are independent of the events mod $p$ and $q$. 
    Hence, first, we analyze them separately. Consider the following claim
    
    \begin{claim}\label{claim: Events_a_r_second_moment}
    Let $r \ge 2$ be an integer, let $\vec{w} := \vec{u}_i - \vec{u}_j$ and
    $\vec{w}' := \vec{u}_l - \vec{u}_k$, and let $\vec{a}^{(r)}$ be uniform on $\Z_r^n$.
    Write $P := \Pr_{\vec{a}^{(r)}}\big[\vec{w}\cdot\vec{a}^{(r)} \equiv_r 0 \wedge
    \vec{w}'\cdot\vec{a}^{(r)} \equiv_r 0\big]$.
    \begin{enumerate}
        \item If $\vec{w} \equiv \pm \vec{w}' \pmod r$, then $P = 1/r$.
        \item If $\vec{w} \not\equiv \pm \vec{w}' \pmod r$, then $P \in \{1/r^2,\, 2/r^2\}$;
        moreover $P = 1/r^2$ whenever $r$ is odd.
    \end{enumerate}
    \end{claim}
    
    \begin{proof}
    refer to \cref{sec: Events_a_r_second_moment}. 
    \end{proof}
    
    Using this claim, we can rewrite the second moment as follows
    \begin{equation}\label{eq:second_moment_2}
        \begin{aligned}
            \underset{\vec{a}}{\mathbb{E}}\left(\mathcal{C}^2\right) &=
            \sum_{\substack{i < j, l < k \\ (i, j) \neq (l, k)}} 
            \Pr_{\vec{a}} \left(\mathcal{C}_{ij} = 1 \land \mathcal{C}_{lk} = 1\right) + \underset{\vec{a}}{\mathbb{E}}\left(\mathcal{C}\right)\\
            &= S_1 + S_2 + \underset{\vec{a}}{\mathbb{E}}\left(\mathcal{C}\right),
        \end{aligned}
    \end{equation}
    where
    \begin{equation*}
        \begin{aligned}
            S_1 &= \frac{1}{r} \cdot 
            \sum_{\substack{i < j, l < k \\  (i, j) \neq (l, k)
            \\ \vec{u}_i - \vec{u}_j = \pm(\vec{u}_l - \vec{u}_k)
            }} \Pr_{\vec{a}^{(pq)}} \left(\mathcal{D}_{ij} = 1 \land \mathcal{D}_{lk} = 1\right), \\
            S_2 &\leq \frac{2}{r^2} \cdot \sum_{\substack{i < j, l < k
            \\  (i, j) \neq (l, k)
            \\ \vec{u}_i - \vec{u}_j \neq \pm(\vec{u}_l - \vec{u}_k)
            }} \Pr_{\vec{a}^{(pq)}} \left(\mathcal{D}_{ij} = 1 \land \mathcal{D}_{lk} = 1\right),
        \end{aligned}
    \end{equation*}
    and $\mathcal{D}_{ij}$ is a binary random variable equal to $1$ if and only if events $\mathsf{E}_{\mathsf{LS}, p}^{(i)}$,
    $\mathsf{E}_{\mathsf{RS}, p}^{(i)}$, $\mathsf{E}_{\mathsf{LS}, p}^{(j)}$, $\mathsf{E}_{\mathsf{RS}, p}^{(j)}$, 
    $\mathsf{E}_{\mathsf{TS}, q}^{(i)}$, and $\mathsf{E}_{\mathsf{TS}, q}^{(j)}$ occur simultaneously. 
    And, the random variable is $\mathcal{D}_{lk}$ is defined similarly. These two random variables are defined over the random choice
    of $\vec{a}^{(p)}$ and $\vec{a}^{(q)}$.
    
    Now, consider the following matrices 
    \begin{equation*}
        \begin{aligned}
            \mat{U}_{ijlk}^{(L)} &:= \left[\vec{u}_i^{(L)} \quad \vec{u}_j^{(L)} \quad
            \vec{u}_l^{(L)} \quad \vec{u}_k^{(L)}\right] , \\
            \mat{U}_{ijlk}^{(R)} &:= \left[\vec{u}_i^{(R)} \quad \vec{u}_j^{(R)} \quad
            \vec{u}_l^{(R)} \quad \vec{u}_k^{(R)}\right] , \\
            \mat{U}_{ijlk} &:= \left[\vec{u}_i \quad \vec{u}_j \quad
            \vec{u}_l \quad \vec{u}_k\right] = \left[\mat{U}_{ijlk}^{(L)} 
            \quad \mat{U}_{ijlk}^{(R)}\right],
        \end{aligned}
    \end{equation*}
    where the vectors $\vec{u}_{t}$ $\forall$ $t \in \{i, j, l, k\}$ are considered as column vectors. 
    For the following analysis, matrices $\mat{U}_{ijlk}^{(L)}$ and $\mat{U}_{ijlk}^{(R)}$ are consider over the ring $\mathbb{Z}_p$, and the matrix $\mat{U}_{ijlk}$ over the ring $\mathbb{Z}_q$. Then, we split the set $S$ based on the rank of the above matrices. Let $B_{t_1, t_2, t_3}$ be the set of $4$-tuple 
    $\left(\vec{u}_i, \vec{u}_j, \vec{u}_l, \vec{u}_k \right)$ such that 
    $i < j$, $l < k$, $(i, j) \neq (l, k)$, 
    $\mathsf{rank}\left(\mat{U}_{ijlk}\right) = t_3$,
    $\mathsf{rank}\left(\mat{U}_{ijlk}^{(L)}\right) = t_1$, and 
    $\mathsf{rank}\left(\mat{U}_{ijlk}^{(R)}\right) = t_2$.

    Now, using linear algebra techniques, we prove the following lemma that gives us 
    an upper bound on the size of the set $B_{t_1, t_2, t_3}$.
    \begin{proposition}\label{prop: B_t_1_t_2_t_3_size}
        For any $1 \leq t_1, t_2, t_3 \leq 4$, the following holds
        \[
        \left|B_{t_1, t_2, t_3}\right|
        \;=\;
        \mathcal{O}\!\left(2^{\frac{(t_1 + t_2)n}{2}}\right).
        \]
        And, since $t_3 \geq \max\{t_1, t_2\}$, we have 
        $\left|B_{t_1, t_2, t_3}\right| = 
        \mathcal{O}\!\left(2^{t_3n}\right)$.
    \end{proposition}
    \begin{proof}
        To upper bound $\left|B_{t_1, t_2, t_3}\right|$, we count the number
        of matrices $\mat{U}_{ijlk}^{(L)}$ and $\mat{U}_{ijlk}^{(R)}$ such that
        \[
        \rank\!\left(\mat{U}_{ijlk}^{(L)}\right) = t_1,
        \qquad
        \rank\!\left(\mat{U}_{ijlk}^{(R)}\right) = t_2.
        \]
    
        \begin{proposition}\label{prop: counting-rank-t-matrices}
            Let $\mathsf{M}_{t} := \{\mat{M} \in \mathbb{Z}_s^{n \times m} \; 
            | \; \mathsf{rank}(\mat{M}) = t,\ \mat{M}_{ij} \in \{0,1\}\}$ where $n \ge m$
            and $s$ is a prime. 
            Then
            \[
            \left|\mathsf{M}_{t}\right|
            = \BigO\!\left( \left(t^{t^2}\right)^{\,m - t} \cdot 2^{tn} \cdot m! \right).
            \]
            In particular, if $m$ is a constant, then $|\mathsf{M}_t| = \BigO(2^{tn})$.
        \end{proposition}
        \begin{proof}
            refer to \cref{sec: counting-rank-t-matrices}.
        \end{proof}
        Considering all concatenations of such matrices yields a set that contains all valid
        $\mat{U}_{ijlk}$, by \cref{prop: counting-rank-t-matrices}, the number of
        $\mat{U}_{ijlk}^{(L)}$ with rank $t_1$ is
        $\mathcal{O}(2^{t_1 n/2})$, and similarly the number of
        $\mat{U}_{ijlk}^{(R)}$ with rank $t_2$ is
        $\mathcal{O}(2^{t_2 n/2})$. Therefore, $
        \left|B_{t_1, t_2, t_3}\right|
        \;=\;
        \mathcal{O}\!\left(2^{\frac{(t_1 + t_2)n}{2}}\right).$
    \end{proof}

    Now that we have a good upper bound on the size of the set $B_{t_1, t_2, t_3}$, 
    in the following lemma, we give a relation between the desired probability
    % we give an estimate of the following 
    $$\Pr_{\vec{a}^{(pq)}}\!\left(\mathcal{D}_{ij} = 1 \wedge \mathcal{D}_{lk} = 1\right)$$
    and the rank of the matrices $\mat{U}_{ijlk}$, $\mat{U}_{ijlk}^{(L)}$ and $\mat{U}_{ijlk}^{(R)}$.
    This is a crucial lemma since it allows us to compute bounds on the probabilities using 
    simple linear algebra. 
    \begin{lemma}\label{lemma: prob-counting-bad-vector-set} 
        For any 
        $\big(\vec{u}_i, \vec{u}_j, \vec{u}_l, \vec{u}_k\big) \in B_{t_1, t_2, t_3}$, the following holds
        \[
        \Pr_{\vec{a}^{(pq)}}\!\left(\mathcal{D}_{ij} = 1 \wedge \mathcal{D}_{lk} = 1\right)
        \;\leq\;
        \frac{1}{p^{t_1}\cdot p^{t_2}\cdot q^{t_3}}.
        \]
    \end{lemma}
    \begin{proof}
        refer \cref{sec: prob-counting-bad-vector-set}.
    \end{proof}
    
    In ~\cref{prop: B_t_1_t_2_t_3_size}, we give an upper bound for the size of the set $B_{t_1, t_2, t_3}$. 
    In the next proposition, using a sophisticated combinatorial technique,
    we give another upper bound for the same when the vectors 
    satisfy a specific relation. 
    
    \begin{proposition}\label{prop:counting_collinear_tuples}
        For some $m \in \mathbb{Z}^{\geq 1}$, let 
        \begin{equation*}
            \begin{aligned}
                \mathsf{T_m} := \left\{(\vec{v}_1, \vec{v}_2, \vec{v}_3, \vec{v}_4) \; \middle | \; 
                \begin{aligned}
                    \vec{v}_1 \neq \vec{v}_2, \vec{v}_3 \neq \vec{v}_4, \\  
                    \vec{v}_1 - \vec{v}_2 = \vec{v}_3 - \vec{v}_4 
                \end{aligned}\right\},
            \end{aligned}
        \end{equation*}
        where $\vec{v}_i \in \{0, 1\}^m \setminus \{0^m\}$, $\forall$ $1 \leq i \leq 4$.
        Then, $|\mathsf{T_m}| < 6^m$.
    \end{proposition}
    \begin{proof}
        Let $\vec{w}$ be any vector of the set $\{0, \pm1\}^m \setminus \{0^m\}$. 
        We say that the vector $\vec{w}$ splits iff $\exists$ $\vec{u} \neq \vec{v} \in \{0, 1\}^m
        \setminus \{0^m\}$ such that $\vec{w} = \vec{u} - \vec{v}$.
        Let $\mathsf{S}_{m_0}$ be the set of all the $\vec{w} \in \{0, \pm1\}^m \setminus \{0^m\}$ such that they contain $m_0$ zeroes. Then, we can say that 
        \begin{align*}
            \left|\mathsf{S}_{m_0}\right| = \binom{m}{m_0}2^{m - m_0}. 
        \end{align*}
        Since the split vectors belong to $\{0, 1\}^m$, any non-zero bit in $\vec{w}$ fixes the corresponding bit in the split vectors. Hence, 
        for any $\vec{w}$ with $m_0$ zeroes, the number of possible splits is $2^{m_0}$. 
        Then, we have
        \begin{equation*}
            \begin{aligned}
                \left|\mathsf{T_m}\right| &< \sum_{m_0 = 0}^{m-1} 
                \left|\mathsf{S}_{m_0}\right| \binom{2^{m_0}}{2}, \\
                &< \sum_{m_0 = 0}^{m-1} \binom{m}{m_0}2^{m - m_0} 2^{2m_0},\\
                &= 2^m \sum_{m_0 = 0}^{m-1} \binom{m}{m_0}2^{m_0}, \\
                &< 6^m.
            \end{aligned}
        \end{equation*}
    \end{proof}

    Now, we have all the desired tools to upper bound the terms $S_1$ and $S_2$. 
    \paragraph{Upper bound of $S_1$.} Let us recall the expression of the sum $S_1$
    
    \begin{equation*}\label{eq:S_1}
        \begin{aligned}
            S_1 &= \frac{1}{r} \cdot \sum_{\substack{i < j, l < k \\ (i, j) \neq (l, k)
            \\ \vec{u}_i - \vec{u}_j = \pm(\vec{u}_l - \vec{u}_k)
            }} \Pr_{\vec{a}^{(pq)}}
            \left(\mathcal{D}_{ij} = 1 \land \mathcal{D}_{lk} = 1\right),
        \end{aligned}
    \end{equation*}
    where $\mathcal{D}_{ij}$ is a binary random variable equal to $1$ if and only if events $\mathsf{E}_{\mathsf{LS}, p}^{(i)}$,
    $\mathsf{E}_{\mathsf{RS}, p}^{(i)}$, $\mathsf{E}_{\mathsf{LS}, p}^{(j)}$, $\mathsf{E}_{\mathsf{RS}, p}^{(j)}$, 
    $\mathsf{E}_{\mathsf{TS}, q}^{(i)}$, and $\mathsf{E}_{\mathsf{TS}, q}^{(j)}$ occur simultaneously. 

    \begin{claim}\label{claim: rank_argument}
        Let $\vec{u}$, $\vec{v}$, $\vec{w}$, $\vec{x}$ $\in$ $\{0, 1\}^{n} \setminus \{0^n\}$
        such that all of them are distinct vectors. Consider the 
        following matrix over the field $\mathbb{Z}_s$, where $s$ is a prime
        \begin{align*}
            \mat{M} = \left[\vec{u} \quad \vec{v} \quad \vec{w} \quad \vec{x}\right].
        \end{align*}
        Here, all the vectors are considered as column vectors. Then
        $3 \leq \mathsf{rank}(\mat{M}) \leq 4$. 
    \end{claim}
    \begin{proof}
        refer to \cref{sec: rank_argument}.
    \end{proof}
    
    Let the $\mathsf{rank}(\mat{U}_{ijlk}^{(L)})$, $\mathsf{rank}(\mat{U}_{ijlk}^{(R)})$,
    and $\mathsf{rank}(\mat{U}_{ijlk})$ are denoted by $t_1$, $t_2$, and $t_3$ respectively. 
    Using Claim~\ref{claim: rank_argument}, we know that $3 \leq t_3 \leq 4$.
    Since, the vectors $\vec{u}_i, \vec{u}_j, \vec{u}_l$, and $\vec{u}_k$ satisfy a relation in the sum $S_1$, we have 
    $t_3 = 3$. And, since $t_3 \geq \max\{t_1, t_2\}$, we have 
    $1 \leq t_1, t_2 \leq 3$.
    Now, we split the sum $S_1$ based on the rank of the matrices $\mat{U}_{ijlk}^{(L)}$ and $\mat{U}_{ijlk}^{(R)}$ and using~\cref{lemma: prob-counting-bad-vector-set}

    \begin{equation*}
        \begin{aligned}
            S_1 \leq &\sum_{\substack{i < j, l < k \\ (i, j) \neq (l, k)
            \\ \vec{u}_i - \vec{u}_j = \pm(\vec{u}_l - \vec{u}_k) \\
            t_1 + t_2 \leq 5}} \frac{1}{p^{t_1} \cdot p^{t_2} \cdot q^3 \cdot r}
            + 
            &\sum_{\substack{i < j, l < k \\ (i, j) \neq (l, k)
            \\ \vec{u}_i - \vec{u}_j = \pm(\vec{u}_l - \vec{u}_k) \\
            t_1 = t_2 = 3}} \frac{1}{p^{3} \cdot p^3 \cdot q^3 \cdot r}.
        \end{aligned}
    \end{equation*}

    Then, using \cref{prop: B_t_1_t_2_t_3_size} and ~\cref{prop:counting_collinear_tuples}, we have
    \begin{equation}\label{eq:S_1_final}
        \begin{aligned}
            S_1 &< \frac{2^{(t_1 + t_2)n/2}}{p^{t_1} \cdot p^{t_2} \cdot q^3 \cdot r} + 
            \frac{6^n}{p^3 \cdot p^3 \cdot q^3 \cdot r} \\
            &= \mathcal{O}\left(1\right) + 
            \mathcal{O}\left(\frac{1}{2^{0.165n}}\right)\\
            &= \mathcal{O}\left(1\right).
        \end{aligned}
    \end{equation}
    
    \paragraph{Upper bound of $S_2$.} 
    Using similar techniques and arguments, we have
    \begin{equation}\label{eq:S_2_final}
        \begin{aligned}
            S_2 &\leq \frac{2}{r^2} \cdot \sum_{\substack{i < j, l < k \\ (i, j) \neq (l, k)
            \\ \vec{u}_i - \vec{u}_j \neq \pm(\vec{u}_l - \vec{u}_k)
            }} \Pr_{\vec{a}^{(pq)}} \left(\mathcal{D}_{ij} = 1 \land \mathcal{D}_{lk} = 1\right)\\
            &\leq \frac{2 \cdot 2^{(t_1 + t_2)n/2}}{p^{t_1} \cdot p^{t_2} \cdot q^{t_3} \cdot r^2} \\
            &= \mathcal{O}\left(1\right).
        \end{aligned}
    \end{equation}
    Now, combining equations $(\ref{eq:second_moment_2})$, $(\ref{eq:S_1_final})$ and $(\ref{eq:S_2_final})$, we get that
    \begin{equation*}
        \begin{aligned}
            \underset{\vec{a}}{\mathbb{E}}(\mathcal{C}^2) &= S_1 + S_2 + \underset{\vec{a}}{\mathbb{E}}\left(\mathcal{C}\right) \\
            &= \mathcal{O}\left(1\right) + \mathcal{O}(1) + 
            \mathcal{O}\left(1\right)\\
            &= \mathcal{O}\left(1\right).
        \end{aligned}
    \end{equation*}
    This complete the proof of \cref{lemma: second_moment_modular}.

    \medskip
    
    \subsubsection{Concentration}
    Now, we use the following second-moment concentration bound to prove that the random variable
    $\mathcal{C}$ is concentrated around its mean value.
    \begin{lemma}
        (Paley-Zygmund Inequality, see e.g., \cite{roch2024modern} Section 2.3). For any non-negative
        random variable $Z$ and any $\theta \in [0, 1]$,
        \begin{align*}
            \Pr\left(Z > \theta \mathbb{E}(Z)\right) \geq (1 - \theta)^2 \cdot 
            \frac{\mathbb{E}(Z)^2}{\mathbb{E}(Z^2)}.
        \end{align*}
    \end{lemma}
     
    Applying the above lemma on the random variable $\mathcal{C}$ with $\theta = 0$, we get
    \begin{equation*}
        \begin{aligned}
            \Pr_{\vec{a}}\left(\text{\cref{alg: averange-case-modular} succeeds}\right) =
            \Pr_{\vec{a}}(\mathcal{C} > 0) &\geq 
            \frac{\underset{\vec{a}}{\mathbb{E}}\left(\mathcal{C}\right)^2}
            {\underset{\vec{a}}{\mathbb{E}}\left(\mathcal{C}^2\right)}\\
            &= \Omega\left(1\right).
        \end{aligned}
    \end{equation*}
\end{proof}
This completes the proof of ~\cref{lemma: Filtering_algo_succ_modular}. 

\subsection{Complexity Analysis}
\begin{lemma}
    Let $M, n, p,$ and $q$ be the same as defined in the above problem \textbf{ACM-RESS}.
    Then \cref{alg: averange-case-modular} runs in expected time 
    $\Oh^{\star}(M^{1/4})$ over the random choice of the input vector
    $\vec{a}$. Let $\mathsf{T}$ be the running time of the algorithm. Then, we also have
    the following tail bound
    \begin{align*}
        \Pr_{\vec{a}} \left(\mathsf{T} \geq \poly(n) M^{1/4}\right) =  
        \mathcal{O}\left(\frac{1}{\poly(n)}\right).
    \end{align*}
\end{lemma}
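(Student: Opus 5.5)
The running time is bounded by the sizes of the lists the algorithm builds, and those sizes are controlled by first and second moments, exactly as for the collision count $\mathcal{C}$. The work splits into four parts.

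\textbf{Step 1: cost of each phase.}
\begin{itemize}
\item The two mod-$p$ DP tables have $(n/2)\cdot p$ entries each. Each entry is updated in $\Oh(1)$ arithmetic on counts of at most $2^{n/2}$, so this takes $\poly(n)\cdot p = \poly(n)\cdot M^{1/4}$ deterministically.
\item Backtracing enumerates $B^{(L)}_{n/2,s_0}$ and $B^{(R)}_{n/2,s_1}$. Because the table stores counts, we descend only into nonempty branches, so the cost is $\poly(n)\cdot(|B^{(L)}_{n/2,s_0}|+|B^{(R)}_{n/2,s_1}|)$.
\item The set $B_{s_0,s_1,s_2}$ is built by sorting one side mod $q$ and matching the other against it. This costs $\Ot(|B^{(L)}|+|B^{(R)}|+|B_{s_0,s_1,s_2}|)$.
\item The final collision search is a sort of $B_{s_0,s_1,s_2}$ by value mod $M$.
\end{itemize}
Hence $\mathsf{T} \le \poly(n)\cdot\big(M^{1/4} + |B^{(L)}| + |B^{(R)}| + |B_{s_0,s_1,s_2}|\big)$.

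\textbf{Step 2: expected list sizes.} As in the proof of \cref{lemma: first_moment_modular}, for every nonzero $\vec{u}\in\{0,1\}^{n/2}$ the sum $\vec{u}\cdot\vec{a}^{(p)(L)}$ is uniform mod $p$. This gives:
\begin{itemize}
\item $\mathbb{E}|B^{(L)}| \le 1 + 2^{n/2}/p = \Oh(M^{1/4})$, using $2^{n}=\Theta(M)$ and $p=\Theta(M^{1/4})$; the same bound holds for $B^{(R)}$. The zero vector contributes at most $1$.
\item For $B_{s_0,s_1,s_2}$, CRT independence of $\vec{a}^{(p)}$ and $\vec{a}^{(q)}$ gives survival probability $1/(p^2q)$ when both halves are nonzero. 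Summing over $2^n$ vectors gives $\Oh(2^n/(p^2q)) = \Oh(M^{1/4})$. The $\Oh(2^{n/2})$ vectors with a zero half survive with probability $1/(pq)$, contributing $\Oh(1)$.
\end{itemize}
Linearity of expectation then yields $\mathbb{E}[\mathsf{T}] = \Os(M^{1/4})$.

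\textbf{Step 3: tail bound.} Markov's inequality applied to $\mathsf{T}$ already gives $\Pr(\mathsf{T}\ge \poly(n)M^{1/4}) \le 1/\poly(n)$ if the $\poly(n)$ threshold is chosen larger than the polynomial factor in $\mathbb{E}[\mathsf{T}]$ by a factor $n^c$. This may be all that is required.

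If a cleaner statement is wanted, I would also bound the second moments of the list sizes. Write $X=\sum_{\vec{u}} \mathbf{1}[\vec{u}\text{ survives}]$. For distinct nonzero $\vec{u},\vec{v}\in\{0,1\}^{n/2}$, the vectors are linearly independent over $\mathbb{Z}_p$ (two distinct nonzero $0/1$ vectors cannot be proportional mod a prime $p>2$). So the pair events have probability $1/p^2$, pairwise independence holds, and $\mathrm{Var}(|B^{(L)}|)\le \mathbb{E}|B^{(L)}|$. Chebyshev then gives concentration.

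For $B_{s_0,s_1,s_2}$ the pairwise events involve the rank triple of $[\vec{u}_i\ \vec{u}_j]$ in the left half, the right half, and the whole vector. I would reuse \cref{lemma: prob-counting-bad-vector-set} and \cref{prop: B_t_1_t_2_t_3_size} in their two-vector analogue: the pair probability is at most $p^{-t_1-t_2}q^{-t_3}$, and the number of pairs is $\Oh(2^{(t_1+t_2)n/2})$. Checking the cases:
\begin{itemize}
\item $t_1=t_2=t_3=2$ gives $\Oh(M^{1/2}) = (\mathbb{E}|B|)^2$-scale.
\item $t_1=1$, $t_2=2$ gives $\Oh(2^{3n/2}/(p^3q^2)) = \Oh(M^{1/4})$.
\end{itemize}
So $\mathrm{Var} = \Oh(M^{1/4})$, and Chebyshev at deviation $\poly(n)M^{1/4}$ gives failure probability $\Oh(M^{-1/4})$.

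\textbf{Main obstacle.} I expect the only delicate point to be the bookkeeping when a half is zero, or when two vectors share a half ($t_1=1$). In those cases the mod-$p$ events coincide and the independence used above fails. They are handled by the same rank-based case split already used for $\mathcal{C}$, and the cheap fallback is plain Markov on $\mathbb{E}[\mathsf{T}]$.
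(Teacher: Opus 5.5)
Your argument is correct and follows the paper's proof. Like the paper, you bound each phase by $\poly(n)$ times either $p$ or the list sizes, show $\mathbb{E}|B^{(L)}_{n/2,s_0}|,\ \mathbb{E}|B^{(R)}_{n/2,s_1}|,\ \mathbb{E}|B_{s_0,s_1,s_2}| = \Oh(M^{1/4})$ from uniformity and CRT independence, and finish with Markov; you are also slightly more careful than the paper about vectors with a zero half.

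Your optional Chebyshev refinement is also sound and gives a $2^{-\Omega(n)}$ tail. One small point: $p$ need not be prime in ACM-RESS, but any two distinct nonzero $0/1$ vectors have a $2\times 2$ minor equal to $\pm 1$, so the joint uniformity you use holds for every modulus.
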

\begin{proof}
    Constructing the lists $B_{n/2, s_0}^{(L)}$ and $B_{n/2, s_1}^{(R)}$
    can be achieved by constructing two mod$-p$ DP tables with $n/2$ elements 
    and then by backtracking the DP table ~\cite{JW24,JWZ25}, this step takes time $\mathcal{O}(np)$,
    i.e., $\mathcal{O}(nM^{1/4})$. 
    Constructing the final list requires sorting 
    the two lists generated and then performing a binary search. This step takes
    time $\Tilde{\mathcal{O}}(L)$, where 
    \begin{align*}
        L := \max\left\{\left|B_{n/2, s_0}^{(L)}\right|, \left|B_{n/2, s_1}^{(R)}\right|,
            \left|B_{s_0, s_1, s_2}\right|\right\}. 
    \end{align*}
    Hence, we need to compute the expected size of the lists generated in the algorithm.
    
    \paragraph{Expected Size of the lists.}
    Let $\vec{v} \in \{0, 1\}^{n/2} \setminus \{0^{n/2}\}$, then the probability of this vectors surviving
    the first level of the algorithm is $1/p$, hence we have
    \begin{equation*}
        \begin{aligned}
            \underset{\vec{a}^{(L)}}{\mathbb{E}}\left(
            \left|B_{n/2, s_0}^{(L)}\right|\right) &= \frac{2^{n/2} - 1}{p},\\
            &= \mathcal{O}\left(M^{1/4}\right).
        \end{aligned}
    \end{equation*}
    
    Since $\vec{a}^{(L)}$ and $\vec{a}^{(R)}$ are sets of random variables with the same distribution, 
    we have
    \begin{align*}
        \underset{\vec{a}^{(R)}}{\mathbb{E}}\left(
        \left|B_{n/2, s_1}^{(R)}\right|\right) = 
        \underset{\vec{a}^{(L)}}{\mathbb{E}}\left(
        \left|B_{n/2, s_0}^{(L)}\right|\right)
        &= \mathcal{O}\left(M^{1/4}\right).
    \end{align*}

    Similarly, for any vector $\vec{v} \in \mathsf{A}$, the probability of these vectors surviving
    both levels of the algorithm is $1/p^2q$, hence we have
    
    \begin{align*}
        \underset{\vec{a}}{\mathbb{E}}\left(
        \left|B_{s_0, s_1, s_2}\right|\right) = 
        \frac{\left(2^{n/2} - 1\right)^2}{p^2q} = 
        \mathcal{O}\left(M^{1/4}\right).
    \end{align*}

    \paragraph{Tail Upper Bound.}
    Hence, the expected size of all three lists is 
    $\mathcal{O}\left(M^{1/4}\right)$. 
    Using Markov's inequality, we can say the following
    \begin{equation*}
        \begin{aligned}
            \Pr_{\vec{a}^{(L)}}
            \left(\left|B_{n/2, s_0}^{(L)}\right| \geq 
            \poly(n) M^{1/4} \right)
            &\leq 
            \frac{\underset{\vec{a}^{(L)}}{\mathbb{E}}
            \left(\left|B_{n/2, s_0}^{(L)}\right|\right)}
            {\poly(n) M^{1/4}}\\
            &= \mathcal{O}\left(\frac{1}{\poly(n)}\right).
        \end{aligned}
    \end{equation*}
    Similarly, we have
    \begin{equation*}
        \begin{aligned}
            \Pr_{\vec{a}^{(R)}}\left(\left|B_{n/2, s_1}^{(R)}\right| \geq
            \poly(n) M^{1/4}\right)
            &= \mathcal{O}\left(\frac{1}{\poly(n)}\right), \\
            \Pr_{\vec{a}}\left(\left|B_{s_0, s_1, s_2}\right| \geq
            \poly(n) M^{1/4}\right)
            &= \mathcal{O}\left(\frac{1}{\poly(n)}\right).
        \end{aligned}
    \end{equation*}
    This completes the proof.
\end{proof}

\begin{remark}
    Combining the above tail bound for the running time and the success probability of the algorithm, 
    we can claim that the filtering algorithm for
    $\Omega(1)$ fraction of instances will have running time $\Oh^{\star}(2^{n/4})$ and the final list will have a collision. 
\end{remark}

\section{Integer average-case: Proof of \cref{thm:average-case-integer}}
\label{sec: average-case-integer}
In this section, we present the proof of \cref{thm:average-case-integer}. Firstly, 
recall the definition of the \textbf{ACI-RESS} problem.
\paragraph{Average-Case Integer Restricted Equal Subset Sum (ACI-RESS).} \label{prob: average_case_integer}
Given two integers $M$ and $n$ divisible by $4$ as input, such that $M \leq 2^n$.

Choose $n$ integers $a_1, a_2, \ldots, a_n$ uniformly randomly from $[M]$
and the task is to find a non-trivial integer collision, i.e., two \emph{distinct} sets 
$\vec{u}, \vec{v} \in \{0, 1\}^n$ such that $\vec{a} . \vec{u} = \vec{a}. \vec{v}$,
where $.$ denotes the standard dot product.

Similar to the Modular case, w.l.o.g. we can assume that 
$M = \Theta(2^n)$.
The algorithm for solving $\textbf{ACI-RESS}$ will remain the same as in the modular case. The inputs will have a different structure as
mentioned in \cref{alg: averange-case-integer}. 
We give \cref{alg: averange-case-integer} for \cref{thm:average-case-integer}, 
for completeness.

In the modular case, we observed that the sum of random variables modulo $p$, $q$, and $r$ is uniformly distributed over their corresponding supports, and to obtain a
collision modulo $M=pqr$, one must additionally ensure collision in the
residual $r$-part, after which the Chinese Remainder Theorem yields a
collision modulo $M$. In the integer setting, agreement modulo $pq$ remains the same as in the modular setting. Still, after that, one must additionally understand carry
propagation from the sum of mod-$pq$ variables and the collision over the remaining high-order bits. 
This is the bottleneck in analyzing the integer case.

\begin{algorithm}[h!]
\caption{Algorithm for~\cref{thm:average-case-integer}.}
\label{alg: averange-case-integer}
\KwIn{$(a_1, a_2, \ldots, a_n) \in [M]^n$; \textbf{Parameters:} $p, q \in 
\Theta(M^{1/4})$ co-prime,
$s_0, s_1 \in \mathbb{Z}_p$ and $s_2 \in \mathbb{Z}_q$\\}
\KwOut{A non-trivial integer collision or $\bot$}
Build two mod-$p$ DP tables storing the sizes of the following sets $\forall$ $s \in \mathbb{Z}_p$ and $\forall$ $1 \leq i \leq n/2$
\begin{align*}
    B_{i, s}^{(L)} &:= \left\{ \vec{u} \in \{0, 1\}^i\; \middle| \; 
    \sum_{j = 1}^{i}a_ju_j \equiv s \bmod p \right\}, \\
    B_{i, s}^{(R)} &:= \left\{ \vec{v} \in \{0, 1\}^i \; \middle| \; 
    \sum_{j = 1}^{i}a_{n/2 + j}v_j \equiv s \bmod p \right\}.
\end{align*}

Compute the bins $B_{n/2, s_0}^{(L)}$ and $B_{n/2, s_1}^{(R)}$ by backtracing the DP table; look for a collision in these sets; else continue.

Construct the following set
\begin{equation*}
    \begin{aligned}
        B_{s_0, s_1, s_2} &:= 
        \left\{ \vec{u} || \vec{v} \; \middle| \; 
        \begin{aligned}
            \vec{u} \in B_{n/2, s_0}^{(L)}, \; 
            \vec{v} \in B_{n/2, s_1}^{(R)},\; \\
            \vec{a}^{(L)} \cdot \vec{u} + \vec{a}^{(R)} \cdot \vec{v} \equiv s_2 \bmod q 
        \end{aligned}\right\}
    \end{aligned}
\end{equation*}

Look for a non-trivial integer collision in this list $B_{s_0, s_1, s_2}$.
If one is found, it outputs this collision; otherwise, it outputs $\bot$.
\end{algorithm}

\paragraph{Unstructured M.} 
Firstly, we divide $M$ by the product of the integers $p$ and $q$ (the parameters of the algorithm
\footnote{Parameters $p$ and $q$ are chosen such that $\floor{M/pq}$ is co-prime to $p$ and $q$ both.}). 
Let $r$ and $s$ be the quotient and remainder of the division as mentioned below
\begin{align*}
    M = r\cdot pq + s,
\end{align*}
where $0 < s < pq$. Recall that the input vector $\vec{a}$ is uniformly distributed over the set $[M]^n = \{1, 2, \ldots, M\}^n$.
Let us split all the random variables, $a_i$, into the lower $pq$ bits and the remaining as follows
\begin{align*}
    a_i = a_i^{(pq)} + a_i^{(r)}\cdot pq,
\end{align*}
where $a_i^{(pq)} \in [0, pq)$ and $a_i^{(r)} = [0, r]$ $\forall$ $1 \leq i \leq n$.

Now, we split the probability mass of the distribution of the 
random variable $a_i$ into two cases. First, the inner rectangle, i.e. 
$1 \leq a_i \leq pqr$, and second the truncated slice, i.e. $pqr < a_i \leq M$.
Let $\mathsf{E_i^{(U)}}$ and $\mathsf{E_i^{(T)}}$ be two disjoint exhaustive events that denote these
two cases, respectively. Then, we have
\begin{equation*}
    \begin{aligned}
        \Pr_{a_i}\left(\mathsf{E_i^{(T)}}\right) &= \frac{s}{M},\\
        \Pr_{a_i}\left(\mathsf{E_i^{(U)}}\right) &= 1 - \frac{s}{M}.
    \end{aligned}
\end{equation*}
Since $s < pq$, we have $s/M = \mathcal{O}\left(1/2^{n/2}\right)$.
Let $\mathsf{E^{(T)}} := \lor_{i = 1}^{n} \mathsf{E_i^{(T)}}$, then using union bound, we have
\begin{equation*}
    \begin{aligned}
        \Pr_{\vec{a}}\left(\mathsf{E^{(T)}}\right) \leq \frac{ns}{M} = \mathcal{O}\left(\frac{n}{2^{n/2}}\right).
    \end{aligned}
\end{equation*}
Hence, we have
\begin{align*}
    \Pr_{\vec{a}}\left(\mathsf{E^{(U)}}\right) \geq 1 - \mathcal{O}\left(\frac{n}{2^n}\right).
\end{align*}

Since we are only interested in computing the lower bound of the expectation and
the event $\mathsf{E^{(U)}}$ has almost the entire probability mass, 
it makes sense to consider events conditioned on the event $\mathsf{E^{(U)}}$.
In particular, for the remainder of the proof, we assume that $M = pqr$, and the random variables 
$\{a_{i}^{(pq)}\}_{i = 1}^{n}$ and $\{a_{i}^{(r)}\}_{i = 1}^{n}$ are independent and distributed 
uniformly on the integer sets $[1, pq]$ and $[0, r-1]$ respectively. 

\subsection{Success Probability}
\begin{lemma}\label{lemma: Filtering_algo_succ_integer}
   Let $M$ and $n$ be the same as defined in the above problem \textbf{ACI-RESS}. 
    Then the following holds for the \cref{alg: averange-case-integer}
    \begin{align*}
        \Pr_{\vec{a}}\left[\text{\cref{alg: averange-case-integer} succeeds}\right] =
        \Omega\left(\frac{1}{n^{2}}\right).
    \end{align*}
\end{lemma}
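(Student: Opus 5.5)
Following the technical overview, the proof will be a second-moment argument on an integer collision counter. Let $\mathcal{C}^{(\mathsf{int})}=\sum_{i<j}\mathcal{C}^{(\mathsf{int})}_{ij}$, where, after conditioning on $\mathsf{E^{(U)}}$ (so that $\vec a^{(pq)}$ and $\vec a^{(r)}$ are independent and uniform), $\mathcal{C}^{(\mathsf{int})}_{ij}=1$ iff three things hold simultaneously: both $\vec u_i,\vec u_j$ pass the three filters $\mathsf{E}_{\mathsf{LS},p},\mathsf{E}_{\mathsf{RS},p},\mathsf{E}_{\mathsf{TS},q}$; their carries $\floor{\vec a^{(pq)}\cdot\vec u/pq}$ agree; and $\vec a^{(r)}\cdot\vec u_i=\vec a^{(r)}\cdot\vec u_j$ over the integers. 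Any such pair is a genuine integer collision lying in $B_{s_0,s_1,s_2}$, so $\Pr[\text{success}]\ge\Pr[\mathcal{C}^{(\mathsf{int})}>0]$. The plan is then to apply Paley--Zygmund, so the task reduces to proving $\mathbb{E}(\mathcal{C}^{(\mathsf{int})})=\Omega(1/n)$ and $\mathbb{E}((\mathcal{C}^{(\mathsf{int})})^2)=\Oh(1)$. These give $\Omega(1/n^2)$.

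For the first moment, I restrict the sum to ``typical'' pairs. In such a pair, $\vec u_i^{(L)},\vec u_i^{(R)},\vec u_j^{(L)},\vec u_j^{(R)}$ and the differences all have Hamming weight $\Theta(n)$, and the left and right halves of the two vectors differ. A Chernoff bound shows these are a constant fraction of all $\Theta(4^n)$ pairs. For a fixed typical pair I handle the $\vec a^{(pq)}$ part by a local CLT. Write $\vec a^{(pq)}\cdot\vec u$ as the common part plus the parts on $\vec u_i\setminus\vec u_j$ and $\vec u_j\setminus\vec u_i$. Each difference part is a sum of $\Theta(n)$ i.i.d.\ uniforms on $[1,pq]$, and its variance $\Theta(n(pq)^2)$ is much larger than $(pq)^2$. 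The LCLT then implies that the joint law of (left sum mod $p$, right sum mod $p$, total mod $pq$, carry) is, up to constant factors, a product of uniform residues and a Gaussian carry of width $\Theta(\sqrt n)$. This is the asymptotic independence of $E_1,E_2,E_3$ from the overview. Requiring both vectors to share residues and carry costs $p^{-4}q^{-2}\cdot\Theta(1/\sqrt n)$, the last factor being the collision probability of two Gaussians of width $\sqrt n$ whose means agree up to $\Oh(\sqrt n)$. The $\vec a^{(r)}$ part is $\vec a^{(r)}\cdot(\vec u_i-\vec u_j)=0$, a $\pm$-weighted sum of $\Theta(n)$ uniforms on $[0,r-1]$. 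The LCLT gives probability $\Theta(1/(r\sqrt n))$. Multiplying and summing over $\Theta(4^n)$ pairs with $p^4q^2r=\Theta(4^n)$ gives $\Omega(1/n)$.

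For the second moment I use the pointwise domination $\mathcal{C}^{(\mathsf{int})}_{ij}\le\mathcal{C}^{(\mathsf{pqr})}_{ij}$. Integer equality of $\vec a^{(r)}$-sums implies equality mod $r$, and dropping the carry condition only enlarges the event. Hence $\mathbb{E}((\mathcal{C}^{(\mathsf{int})})^2)\le\mathbb{E}((\mathcal{C}^{(\mathsf{pqr})})^2)$. Here $\mathcal{C}^{(\mathsf{pqr})}$ is built from the residues of $\vec a$ mod $p$, $q$, $r$, and these are uniform and independent under $\mathsf{E^{(U)}}$ with $M=pqr$ and the co-primality assumption. So \cref{lemma: second_moment_modular} applies verbatim and gives $\Oh(1)$. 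The remaining bookkeeping is that the overall loss from conditioning on $\mathsf{E^{(U)}}$ is only $\Oh(n/2^{n/2})$.

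The main obstacle is the first-moment step: making the LCLT statement rigorous for the joint distribution of residues mod $p$, residues mod $pq$, and the carry. This requires uniform control of the LCLT error when sums are reduced mod a modulus comparable to the summand range. My plan is to condition on the common part, which is independent of the two difference parts. I would then apply a LCLT with explicit error $o(1/\sqrt{n}\,pq)$ to each difference sum, and sum the resulting near-uniform point masses over residue classes, using the fact that a Gaussian of width $\sqrt n\,pq$ is essentially flat across any window of length $pq$.
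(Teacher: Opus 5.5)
Your proposal is correct and follows the paper's proof essentially step for step. The first moment $\Omega(1/n)$ comes from an LCLT analysis of typical pairs after splitting off the common support; your conditioning on the common part plays the role of the paper's $\mathcal{D}^{(\mathsf{int})}_{ij}$ and Claim~\ref{claim: D_ij_int_C_ij_int}. The second moment $\Oh(1)$ comes from the pointwise domination $\mathcal{C}^{(\mathsf{int})}_{ij}\le\mathcal{C}^{(\mathsf{pqr})}_{ij}$ together with \cref{lemma: second_moment_modular}, and Paley--Zygmund finishes.

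One condition should be added explicitly to your definition of typical pairs: the balance condition $\bigl|\,\|\vec u_{ij}^{(10)}\|_1-\|\vec u_{ij}^{(01)}\|_1\bigr| = \Oh(\sqrt n)$, as in \cref{prop: big_weight_vec_count}. You use it implicitly in two places: to get carry means agreeing up to $\Oh(\sqrt n)$, and to put $0$ in the central window of the $\pm$-weighted $\vec a^{(r)}$-sum. The condition still holds for a constant fraction of pairs.
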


In the remainder of the section, we give a proof of \cref{lemma: Filtering_algo_succ_integer}.
Recall the following set from the modular case
\begin{align*}
    \mathsf{A} = \{\vec{u} \in \{0, 1\}^n \; | \; 
    \vec{u}^{(L)} \neq 0^{n/2}, \vec{u}^{(R)} \neq 0^{n/2}\}.
\end{align*}
The elements of the set $\mathsf{A}$ are ordered in a lexicographical sense, and 
$\mathsf{A} = \{\vec{u}_1, \vec{u}_2, \ldots \vec{u}_{N}\}$, where $\vec{u}_i < \vec{u}_j$ 
for any $1 \leq i < j \leq N$ 
and $N = (2^{n/2} - 1)^2 = 2^n - 2^{n/2 + 1} + 1$.

The following random variable over a random choice of $\vec{a}$ denotes the number of 
integer collisions that survive the filtering algorithm. To be more precise, we have
\begin{equation*}
    \begin{aligned}
        \mathcal{C}^{(\mathsf{int})} := \left|\left\{(\vec{u}_i, \vec{u}_j) \; \middle | \;  
        \begin{aligned}
            i < j, \; \vec{u}_i, \vec{u}_j \in  B_{s_0, s_1, s_2}, \\
            \floor{\frac{\vec{a} \cdot \vec{u}_i}{pq}} = \floor{\frac{\vec{a} \cdot \vec{u}_j}{pq}}
        \end{aligned}\right\}\right|.
    \end{aligned}
\end{equation*}
Our filtering algorithm succeeds for an instance of the problem \textbf{ACI-RESS} 
if the value of this non-negative random variable is at least $1$.
Hence, similar to the modular case, analyzing
the random variable $\mathcal{C}^{(\mathsf{int})}$ will give us the success probability of the \cref{alg: averange-case-integer}.

For any input instance $\vec{a}$, we can compute the value of $\mathcal{C}^{\mathsf{(int)}}$ 
by iterating through all
the possible pairs of binary vectors from the set $\mathsf{A}$, 
and considering whether the vectors pass through the 
filters of the algorithm and collide over integers. Hence, we define a binary random variable 
$\mathcal{C}^{(\mathsf{int})}_{ij}$ $\forall$ $1 \leq i < j \leq N$
such that its value is $1$ if and only if $\vec{u}_i$ and $\vec{u}_j$ pass through all the filters of 
the algorithm, and they collide over integers. To analyze this random variable 
$\mathcal{C}^{(\mathsf{int})}_{ij}$, we first define the following random variables that will
denote different sums of the input random variables. 
Let us fix a binary vector $\vec{u}_i \in \mathsf{A}$ with Hamming weight $w_i$, and define

\begin{equation*}
    \begin{aligned}
        S_{\mathsf{TS}, i} &:= \vec{a} \cdot \vec{u}_i 
        &(\text{Total sum over integers})\\
        S_{\mathsf{TS}, pq, i} &:= \vec{a}^{(pq)} \cdot \vec{u}_i 
        &(\text{Total sum lower $pq$ bits over integers})\\
        S_{\mathsf{TS}, i}^{(pq)} &:= \vec{a}^{(pq)} \cdot \vec{u}_i \mod pq 
        &(\text{Mod $pq$ bits of the Total sum $S_{\mathsf{TS}, pq, i}$})\\
        S_{\mathsf{C}, pq, i} &:= \floor{\frac{\vec{a}^{(pq)}. \vec{u}_i}{pq}} 
        = \frac{\vec{a}^{(pq)}. \vec{u}_i - (\vec{a}^{(pq)}.\vec{u}_i \mod pq)}{pq} 
        &(\text{Carry bits from $S_{\mathsf{TS}, pq, i}$})\\
        S_{\mathsf{TS}, r, i} &:= \vec{a}^{(r)} \cdot \vec{u}_i 
        &(\text{Remaining bits of the Total sum $S_{\mathsf{TS}, i}$})\\
        S_{\mathsf{LS}, i}^{(p)} &:= \vec{a}^{(p)}{}^{(L)} \cdot \vec{u}_i^{(L)} \mod p 
        &(\text{Left sum Mod $p$})\\
        S_{\mathsf{RS}, i}^{(p)} &:= \vec{a}^{(p)}{}^{(R)} \cdot \vec{u}_i^{(R)} \mod p
        &(\text{Right sum Mod $p$})
    \end{aligned}
\end{equation*}

A few of the variables are related as follows
\begin{equation*}
    \begin{aligned}
        S_{\mathsf{TS}, i} = 
        \vec{a} \cdot \vec{u}_i &= \left(\vec{a}^{(pq)} + \vec{a}^{(r)} \cdot pq\right)
        \cdot \vec{u}_i \\
        &= \left(\underbrace{\vec{a}^{(pq)} \cdot \vec{u}_i \mod pq}_{\mod pq 
        \text{ bits}} + 
        \underbrace{\frac{\vec{a}^{(pq)}\cdot \vec{u}_i - (\vec{a}^{(pq)}.\vec{u}_i 
        \mod pq)}{pq}}_{\text{carry bits}} \cdot pq\right) + 
        \left(\underbrace{\vec{a}^{(r)} \cdot \vec{u}_i}_{\text{remaining bits}}\right) \cdot pq \\
        &= \left(S_{\mathsf{TS}, i}^{(pq)} + 
        S_{\mathsf{C}, pq, i} \cdot pq \right) + S_{\mathsf{TS}, r, i} \cdot pq\\
        &= S_{\mathsf{TS}, pq, i} + S_{\mathsf{TS}, r, i} \cdot pq.
    \end{aligned}
\end{equation*}

The above sum in the bracket can be split as follows
\begin{equation*}
    \begin{aligned}
        S_{\mathsf{TS}, i} &= S_{\mathsf{TS}, pq, i} + S_{\mathsf{TS}, r, i} \cdot pq, \\
        &= \left(S_{\mathsf{TS}, i}^{(pq)} + 
        S_{\mathsf{C}, pq, i} \cdot pq \right) + S_{\mathsf{TS}, r, i} \cdot pq.
    \end{aligned}
\end{equation*}

Then, we redefine $\mathcal{C}^{(\mathsf{int})}_{ij} = 1$ if and only if 
all the following events occur simultaneously and is $0$ otherwise. 
\begin{enumerate}
    \item \textit{$\vec{u}_i, \vec{u}_j$ survives the Filtering algorithm}, i.e. 
    \begin{align*}
        \mathsf{E}_{\mathsf{LS}, \vec{a}, i}^{(p)} &: S_{\mathsf{LS}, i}^{(p)} \equiv s_0 \mod p,\\
        \mathsf{E}_{\mathsf{LS}, \vec{a}, j}^{(p)} &: S_{\mathsf{LS}, j}^{(p)} \equiv s_0 \mod p,\\
        \mathsf{E}_{\mathsf{TS}, \vec{a}, i}^{(pq)} &: S_{\mathsf{TS}, i}^{(pq)} \equiv \beta \mod pq,\\
        \mathsf{E}_{\mathsf{TS}, \vec{a}, j}^{(pq)} &: S_{\mathsf{TS}, j}^{(pq)} \equiv \beta \mod pq.
    \end{align*}
    \item \textit{$\vec{u}_i, \vec{u}_j$ collide over the Carries}, i.e. 
    \begin{align*}
        \mathsf{E}_{\mathsf{C}, pq, \vec{a}, i, j} &: S_{\mathsf{C}, pq, i} = 
        S_{\mathsf{C}, pq, j}.
    \end{align*}
    \item \textit{$\vec{u}_i, \vec{u}_j$ collide over the remaining bits}, i.e.
    \begin{align*}
        \mathsf{E}_{\mathsf{TS}, r, \vec{a}, i, j} &: S_{\mathsf{TS}, r, i} = S_{\mathsf{TS}, r, j}.
    \end{align*}
\end{enumerate}

where $\beta \equiv s_0 + s_1 \mod p$ and $\beta \equiv s_2 \mod q$. The first four events make sure that the vectors
$\vec{u}_i$ and $\vec{u}_j$ survive the filtering algorithm, and the remaining events constraint $\vec{u}_i$
and $\vec{u}_j$ to collide over the carry bits and the remaining bits respectively. 
Here, the reader should observe that the above constraints are sufficient, but not necessary 
for the collision over integers.

\medskip

Now, similar to the modular case, we will compute bounds on the first and second moments 
of the random variable $\mathcal{C}^{(\mathsf{int})}$ to prove~\cref{lemma: Filtering_algo_succ_integer}.

\subsubsection{First Moment}
\begin{lemma}\label{lemma: first_moment_integer}
    Let $M$ and $n$ be the same as defined in the above problem \textbf{ACI-RESS}. 
    Then the following holds for the \cref{alg: averange-case-integer}
    \begin{align*}
        \underset{\vec{a}}{\mathbb{E}}\left(\mathcal{C}^{\mathsf{(int)}}\right) = 
        \Omega\left(\frac{1}{n}\right).
    \end{align*}
\end{lemma}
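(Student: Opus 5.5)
By linearity, $\mathbb{E}(\mathcal{C}^{(\mathsf{int})})=\sum_{i<j}\Pr(\mathcal{C}^{(\mathsf{int})}_{ij}=1)$. Every term is nonnegative, so for a lower bound we may restrict the sum to a well-behaved subfamily $\mathcal{G}$ of pairs. We take pairs $(\vec{u}_i,\vec{u}_j)$ with $\vec{u}_i^{(L)}\neq\vec{u}_j^{(L)}$ and $\vec{u}_i^{(R)}\neq\vec{u}_j^{(R)}$. We further require that all Hamming weights $|\vec{u}_i^{(L)}|,|\vec{u}_i^{(R)}|,|\vec{u}_j^{(L)}|,|\vec{u}_j^{(R)}|$ lie in $[n/8,3n/8]$, that $|w_i-w_j|=O(\sqrt{n})$, and that the symmetric-difference supports on each half have size $\Theta(n)$. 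Standard binomial concentration gives $|\mathcal{G}|=\Theta(4^n)$. It then suffices to show that for every pair in $\mathcal{G}$,
\[
\Pr_{\vec{a}}\left(\mathcal{C}^{(\mathsf{int})}_{ij}=1\right)=\Omega\!\left(\frac{1}{n\,p^4q^2r}\right).
\]
Since $p^4q^2r=\Theta(2^{2n})$ (because $pqr=M=\Theta(2^n)$ and $p=\Theta(2^{n/4})$), this gives $\mathbb{E}(\mathcal{C}^{(\mathsf{int})})=\Omega(1/n)$.

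The event factors into a $pq$-part, depending only on $\vec{a}^{(pq)}$ (the filter events and equality of carries), and an $r$-part $S_{\mathsf{TS},r,i}=S_{\mathsf{TS},r,j}$, depending only on $\vec{a}^{(r)}$. These two vectors are independent, so the probabilities multiply.

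\textbf{The $r$-part.} We have $S_{\mathsf{TS},r,i}-S_{\mathsf{TS},r,j}=\vec{a}^{(r)}\cdot(\vec{u}_i-\vec{u}_j)$. This is a difference of two independent sums of $\Theta(n)$ i.i.d. uniform $[0,r-1]$ variables. The means match up to $O(\sqrt{n}\,r)$, and the variance is $\Theta(nr^2)$. The Local CLT then gives $\Pr(\text{difference}=0)=\Theta(1/(\sqrt{n}\,r))$.

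\textbf{The $pq$-part.} We condition on the coordinates outside the supports of $\vec{u}_i-\vec{u}_j$ and use the three events $E_1,E_2,E_3$ from the overview.
\begin{enumerate}
\item Because each half contains a differing coordinate, and $p,q$ are coprime to the relevant structure, the four congruence constraints (two left-sum mod $p$, two total-sum mod $pq$) are equidistributed. This costs exactly the factor $1/(p^2\cdot p^2q^2)=1/(p^4q^2)$, as in \cref{lemma: first_moment_modular}.
\item Conditioned on these residues, the carries $S_{\mathsf{C},pq,i}=\floor{S_{\mathsf{TS},pq,i}/pq}$ are sums of $\Theta(n)$ variables scaled by $1/pq$. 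Each therefore fluctuates on a scale of $\Theta(\sqrt{n})$.
\item Given the mod-$pq$ agreement, the carries coincide exactly when $S_{\mathsf{TS},pq,i}=S_{\mathsf{TS},pq,j}$ as integers. By the Local CLT this has probability $\Theta(pq/(\sqrt{n}\,pq))=\Theta(1/\sqrt{n})$ relative to the residue constraint.
\end{enumerate}

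\textbf{Main obstacle.} The hardest step is justifying that conditioning on the residues does not distort the carry distribution, i.e.\ that $E_1,E_2,E_3$ are asymptotically independent. My plan is to apply a lattice Local CLT to the vector
\[
\bigl(\vec{a}^{(pq)}\cdot\vec{u}_i,\ \vec{a}^{(pq)}\cdot\vec{u}_j,\ \vec{a}^{(pq)L}\cdot\vec{u}_i^{(L)},\ \vec{a}^{(pq)L}\cdot\vec{u}_j^{(L)}\bigr),
\]
whose point probabilities near the mean are $\Theta(1/(n^{2}(pq)^4))$ with a nondegenerate covariance on $\mathcal{G}$. We would then sum over admissible lattice points: residues fixed modulo $p$ or $pq$, and the two totals equal, within $O(\sqrt{n}\,pq)$ of the mean. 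Counting these points gives $\Theta(1/(p^4q^2\sqrt{n}))$.

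Multiplying the $pq$-part and the $r$-part yields the claimed $\Omega(1/(np^4q^2r))$ per pair. Summing over $|\mathcal{G}|=\Theta(4^n)$ pairs proves the lemma.
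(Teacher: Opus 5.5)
Your proof is correct and has the same skeleton as the paper's, but it proves the key $pq$-estimate differently. The shared steps are these:
\begin{itemize}
\item You restrict to $\Theta(4^n)$ pairs with balanced weights.
\item You remove the overlap. Conditioning on the coordinates where $\vec{u}_i$ and $\vec{u}_j$ agree is exactly the paper's passage to $\vec{u}_{ij}^{(10)},\vec{u}_{ij}^{(01)}$ in Claim~\ref{claim: D_ij_int_C_ij_int}.
\item You factor into independent $pq$- and $r$-parts.
\item You get $\Theta(1/(r\sqrt{n}))$ for the $r$-part from the LCLT. You apply it to the difference, while the paper convolves two LCLTs over the common window; these are equivalent.
\end{itemize}

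For the $pq$-part, the paper first proves a single-vector statement, \cref{lemma: pq_events_analysis}. It says that the left residue mod $p$, the total residue mod $pq$ and the carry are asymptotically independent. This is shown via the one-dimensional LCLT plus the smoothness and averaging arguments over residue classes (Claims~\ref{claim: f_y_smoothness} and~\ref{claim: f_y_average}). The paper then multiplies the $i$- and $j$-factors, which are independent because the supports are disjoint, and sums over a common carry $c$ in a window of $\Theta(\sqrt{n})$ values.

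You never isolate $E_1,E_2,E_3$. The residue conditions together with $T_i=T_j$ simply cut out $\Theta(q^2n^{3/2})$ admissible lattice points in the central box. Each point has mass $\Theta(1/(n^2(pq)^4))$, giving $\Theta(1/(p^4q^2\sqrt{n}))$ in one step. This replaces the paper's smoothness and averaging claims with a direct count.

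Your route is cleanest if you carry it out after the conditioning. Then $(T_i,L_i)$ and $(T_j,L_j)$ are affine functions of four independent one-dimensional i.i.d.\ sums, over the left and right parts of the supports of $\vec{u}_{ij}^{(10)}$ and $\vec{u}_{ij}^{(01)}$. Each point mass is therefore a product of four applications of \cref{cor: Theta_LCLT}, and no multidimensional LCLT is needed.

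For this you must add to $\mathcal{G}$ the requirement that each of these four pieces has size $\Theta(n)$, as the paper's set $\mathsf{T}$ does; the count stays $\Theta(4^n)$. Your $\mathcal{G}$ as stated allows, e.g., $\mathrm{supp}(\vec{u}_j^{(L)})\subseteq\mathrm{supp}(\vec{u}_i^{(L)})$. Then $L_j$ becomes deterministic after conditioning, the conditioned vector is degenerate, and your conditional equidistribution step fails for that pair.

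Alternatively, skip the conditioning. On your $\mathcal{G}$ the unconditioned covariance of $(T_i,T_j,L_i,L_j)$ is indeed well conditioned. The cost is that you need a multidimensional LCLT for independent, non-identically distributed lattice summands, which the paper does not supply.
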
 
\begin{proof}
    To compute a lower bound of the first moment of the random variable $\mathcal{C}^{\mathsf{(int)}}$, we first look at its relation with the 
    binary random variables $\mathcal{C}_{ij}^{\mathsf{(int)}}$ as follows
    \begin{equation*}
        \begin{aligned}
            \mathcal{C}^{\mathsf{(int)}} &\geq \sum_{1 \leq i < j \leq N} \mathcal{C}_{ij}^{\mathsf{(int)}}.
        \end{aligned}
    \end{equation*}
    The inequality comes because we work with the sufficient condition for collision, then the necessary condition in the definition of the variable $\mathcal{C}_{ij}^{\mathsf{(int)}}$.
    Using the linearity of expectation, we have the following
    \begin{equation*}
        \begin{aligned}
            \underset{\vec{a}}{\mathbb{E}}\left(\mathcal{C}^{\mathsf{(int)}}\right) &\geq
            \sum_{i < j} \underset{\vec{a}}{\mathbb{E}}\left(\mathcal{C}_{ij}^{\mathsf{(int)}}\right).
        \end{aligned}
    \end{equation*}

    \paragraph{Expectation for overlapping vectors.}
    Let us make the following observation. Fix $\vec{u}_i, \vec{u}_j \in \mathsf{A}$.
    Then, we split the set $[n]$ into three parts as follows. The first part contains the indices where
    the vector $\vec{u}_i$ is $1$ and the vector $\vec{u}_j$ is $0$. Similarly, the second part
    corresponds to $0$ and $1$, and the last part corresponds to $1$ and $1$. This will give us
    three disjoint subsets of the set $[n]$. These are the only indices that are active for these
    two binary vectors. 

    Observe that indices at which both vectors have a $1$ in the same position necessarily contribute to a collision, regardless of the specific input instance. Consequently, it suffices to analyze the remaining components. Since these components correspond to disjoint index sets, the problem effectively reduces to analyzing vectors with non-overlapping supports.

    Now, we formalize this argument.
    Define three auxiliary binary vectors $\vec{u}_{ij}^{(10)}, \vec{u}_{ij}^{(01)}$, and $\vec{u}_{ij}^{(11)}$ as follows
    \begin{equation}\label{eq: aux_vectors}
        \begin{aligned}
            u_{ijl}^{(10)} &:= 
            \begin{cases}
                1 & u_{il} = 1, u_{jl} = 0\\
                0 & \text{else}
            \end{cases} \qquad \forall \;  1 \leq l \leq n\\
            u_{ijl}^{(01)} &:= 
            \begin{cases}
                1 & u_{il} = 0, u_{jl} = 1\\
                0 & \text{else}
            \end{cases} \qquad \forall \;  1 \leq l \leq n \\
            u_{ijl}^{(11)} &:= 
            \begin{cases}
                1 & u_{il} = 1, u_{jl} = 1\\
                0 & \text{else}
            \end{cases} \qquad \forall \;  1 \leq l \leq n
        \end{aligned}
    \end{equation}
    and let $w_{ij}^{(10)}$, $w_{ij}^{(01)}$, and 
    $w_{ij}^{(11)}$ denote the Hamming weights of the vectors $\vec{u}_{ij}^{(10)}$, 
    $\vec{u}_{ij}^{(01)}$, and $\vec{u}_{ij}^{(11)}$ respectively. 
    Let $w_{ij}^{(10)}, w_{ij}^{(01)} \in (cn, n]$
    for a constant $1/8 \leq c < 1/4$. That implies that the non-overlapping parts have big enough weights. 
    Now, we define another binary random variable $\mathcal{D}_{ij}^{(int)}$ over 
    the random choice of $\vec{a}$. The variable $\mathcal{D}_{ij}^{(int)} = 1$ if  
    the vectors $\vec{u}_i$ and $\vec{u}_j$ survive the algorithm, and the auxiliary vectors
    $\vec{u}_{ij}^{(10)}$ and $\vec{u}_{ij}^{(01)}$ collide over the carry bits and the remaining bits.
    
    Formally, we define $\mathcal{D}_{ij}^{(int)} = 1$ if
    the auxiliary vectors $\vec{u}_{ij}^{(10)}, \vec{u}_{ij}^{(01)}$, and $\vec{u}_{ij}^{(11)}$ satisfy the following equations 
    \begin{equation*}
        \begin{aligned}
            \mathsf{F}_{\mathsf{LS}, \vec{a}, 10}^{(p)} &: \vec{u}_{ij}^{(10)}{}^{(L)}\cdot 
            \vec{a}^{(p)}{}^{(L)} 
            \equiv s_0 - \vec{u}_{ij}^{(11)}{}^{(L)}\cdot \vec{a}^{(p)}{}^{(L)} \mod p,\\
            \mathsf{F}_{\mathsf{LS}, \vec{a}, 01}^{(p)} &: \vec{u}_{ij}^{(01)}{}^{(L)}\cdot 
            \vec{a}^{(p)}{}^{(L)} 
            \equiv s_0 - \vec{u}_{ij}^{(11)}{}^{(L)}\cdot \vec{a}^{(p)}{}^{(L)} \mod p,\\
            \mathsf{F}_{\mathsf{TS}, \vec{a}, 10}^{(pq)} &: \vec{u}_{ij}^{(10)} \cdot \vec{a}^{(pq)}
            \equiv \beta - \vec{u}_{ij}^{(11)} \cdot \vec{a}^{(pq)}  \mod pq,\\
            \mathsf{F}_{\mathsf{TS}, \vec{a}, 01}^{(pq)} &: \vec{u}_{ij}^{(01)} \cdot \vec{a}^{(pq)}
            \equiv \beta - \vec{u}_{ij}^{(11)} \cdot \vec{a}^{(pq)}  \mod pq,
        \end{aligned}
    \end{equation*}
    \begin{equation*}
        \begin{aligned}
            \mathsf{F}_{\mathsf{OB}, pq, \vec{a}, i, j} &: \floor{\frac{\vec{u}_{ij}^{(10)} \cdot \vec{a}^{(pq)}}{pq}} = 
            \floor{\frac{\vec{u}_{ij}^{(01)} \cdot \vec{a}^{(pq)}}{pq}}\\
            \mathsf{F}_{\mathsf{TS}, r, \vec{a}, i, j} &: 
            \vec{u}_{ij}^{(10)} \cdot \vec{a}^{(r)} = \vec{u}_{ij}^{(01)} \cdot \vec{a}^{(r)}.
        \end{aligned}
    \end{equation*}
    
    \begin{claim}\label{claim: D_ij_int_C_ij_int}
        For any $1 \leq i< j  \leq N$, the following holds 
        \begin{align*}
            \mathcal{C}_{ij}^{(int)} \geq \mathcal{D}_{ij}^{(int)}.
        \end{align*}
    \end{claim}
    \begin{claimproof}
        Let $\mathcal{D}_{ij}^{(int)} = 1$.
        Then the events $\mathsf{F}_{\mathsf{LS}, \vec{a}, 10}^{(p)}, \mathsf{F}_{\mathsf{LS}, \vec{a}, 01}^{(p)}
        , \mathsf{F}_{\mathsf{TS}, \vec{a}, 10}^{(p)}$, and $\mathsf{F}_{\mathsf{TS}, \vec{a}, 01}^{(p)}$ trivially
        implies the survival of the vectors $\vec{u}_i$ and $\vec{u}_j$. And, the 
        event $\mathsf{F}_{\mathsf{TS}, r, \vec{a}, i, j}$ implies collision
        over the remaining $r$ bits. Since $\mathsf{F}_{\mathsf{OB}, pq, \vec{a}, i, j} = 1$,
        and $\vec{u}_{ij}^{(01)} \cdot \vec{a}^{(pq)} \equiv_{pq} \vec{u}_{ij}^{(10)} \cdot \vec{a}^{(pq)}$
        we have the following
        \[
            \vec{u}_{ij}^{(01)} \cdot \vec{a}^{(pq)} = \vec{u}_{ij}^{(10)} \cdot \vec{a}^{(pq)}.
        \]
        This gives us the following equality over the carries too
        \begin{align*}
            \floor{\frac{\vec{u}_i \cdot \vec{a}^{(pq)}}{pq}} &= 
            \floor{\frac{ \vec{u}_{ij}^{(10)} \cdot \vec{a}^{(pq)} + 
            \vec{u}_{ij}^{(11)}\cdot \vec{a}^{(pq)}}{pq}}\\
            &= \floor{\frac{ \vec{u}_{ij}^{(01)} \cdot \vec{a}^{(pq)} + 
            \vec{u}_{ij}^{(11)}\cdot \vec{a}^{(pq)}}{pq}} \\
            &= \floor{\frac{\vec{u}_j \cdot \vec{a}^{(pq)}}{pq}}.
        \end{align*}
        This implies that $\vec{u}_i$ and $\vec{u}_j$ collides over the carries too,
        and we have $\mathcal{C}_{ij}^{(int)} = 1$.
    \end{claimproof}

    \medskip
    Due to Claim~\ref{claim: D_ij_int_C_ij_int}, the task of analyzing the random variable $\mathcal{C}_{ij}^{(int)}$
    reduces to the case of vectors with non-overlapping supports. 
    This reduction simplifies the analysis and allows us to avoid dealing
    with dependencies arising from overlapping coordinates of the two vectors.
    
    \begin{lemma}\label{lemma: first_moment_integer_ij_no_overlap}
        Fix $\vec{u}_i, \vec{u}_j \in \mathsf{A}$ such that they do not overlap, 
        i.e. $\|\vec{u}_i\|_1 + \|\vec{u}_j\|_1 = \|\vec{u}_i \xor \vec{u}_j\|_1$. Let 
        $w_i, w_j, w_{l, i}, w_{r, i}$, $w_{l, j}$, and $w_{r, j}$
        denote the Hamming weights of the vectors $\vec{u}_i, \vec{u}_j, \vec{u}_i^{(L)}, \vec{u}_i^{(R)}, \vec{u}_j^{(L)}$, and 
        $\vec{u}_j^{(R)}$ respectively. Assume that $\alpha n < w_i, w_j < (1-\alpha)n$,
        $|w_i - w_j| = \mathcal{O}(\sqrt{n})$ and 
        $\alpha n/2 < w_{l, i}, w_{r, i}, w_{l, j}, w_{r, j} < (1- \alpha)n/2$
        for some constant $1/8 \leq \alpha < 1/4$. Then, we have the following
        \begin{align*}
            \underset{\vec{a}}{\mathbb{E}}\left(\mathcal{C}_{ij}^{\mathsf{(int)}}\right) = 
            \Omega\left(\frac{1}{np^4q^2r}\right).
        \end{align*}
    \end{lemma}
    \begin{proof}
        Since $\vec{a}^{(pq)}$ and $\vec{a}^{(r)}$ are independent of each other, 
        $\underset{\vec{a}}{\mathbb{E}}\left(\mathcal{C}_{ij}^{\mathsf{(int)}}\right)$ is the 
        product of the following two probabilities
        \begin{equation}\label{eq: E_C_ij_int_split}
            \begin{aligned}
                \Pr_{\vec{a}^{(pq)}} \left(\mathsf{E}_{\mathsf{LS}, \vec{a}, i}^{(p)} \wedge 
                \mathsf{E}_{\mathsf{LS}, \vec{a}, j}^{(p)} \wedge \mathsf{E}_{\mathsf{TS}, \vec{a}, i}^{(pq)} \wedge \mathsf{E}_{\mathsf{TS}, \vec{a}, j}^{(pq)} \wedge 
                \mathsf{E}_{\mathsf{C}, pq, \vec{a}, i, j}\right),
            \end{aligned}
        \end{equation}
        and 
        \begin{align*}
            \Pr_{\vec{a}^{(r)}} \left(\mathsf{E}_{\mathsf{TS}, r, \vec{a}, i, j}\right).
        \end{align*}
        Now, we analyze the $\vec{a}^{(pq)}$ and $\vec{a}^{(r)}$ events separately in the following 
        proposition.
        \begin{proposition} \label{prop: events_vec_a_probabilities}
            Fix $\vec{u}_i, \vec{u}_j \in \mathsf{A}$ such that they satisfy the same constraints as
            \cref{lemma: first_moment_integer_ij_no_overlap}. Then, we have the following
            \begin{enumerate}
                \item 
                    \begin{align*}
                        &\Pr_{\vec{a}^{(pq)}} \left(\mathsf{E}_{\mathsf{LS}, \vec{a}, i}^{(p)} \wedge 
                        \mathsf{E}_{\mathsf{LS}, \vec{a}, j}^{(p)} \wedge \mathsf{E}_{\mathsf{TS}, \vec{a}, i}^{(pq)} \wedge \mathsf{E}_{\mathsf{TS}, \vec{a}, j}^{(pq)} \wedge 
                        \mathsf{E}_{\mathsf{C}, pq, \vec{a}, i, j}\right) \\
                        &= \Omega\left(\frac{1}{p^4q^2\sqrt{n}}\right).
                    \end{align*}
                \item 
                    \begin{align*}
                        \Pr_{\vec{a}^{(r)}}\left(\mathsf{E}_{\mathsf{TS}, r, \vec{a}, i, j}\right)
                        = \Omega\left(\frac{1}{r\sqrt{n}}\right).
                    \end{align*}
            \end{enumerate}
        \end{proposition}
        \begin{proof}
            Refer to \cref{sec: events_vec_a_probabilities}.
        \end{proof}
        
        Using \cref{prop: events_vec_a_probabilities} in equation $(\ref{eq: E_C_ij_int_split})$, we get the desired
        result. This completes the proof.
    \end{proof} 
    Now that we have computed a lower bound on the expectation of the random variable $\mathcal{C}_{ij}^{(\mathsf{int})}$ for vectors with constrained Hamming weights, we are left with the problem of counting such constrained vectors.
    \medskip 
    \paragraph{Counting "Big Weight" Vectors.} 
    To apply the Local Central Limit Theorem effectively, the vectors under consideration must have sufficiently large Hamming weight so that the distribution of their associated sums converges to a normal distribution with a small approximation error. In our setting, this requires counting pairs of vectors whose non-overlapping components have Hamming weight $\Theta(n)$ on both the left and right halves of the vector. 
    To enumerate such pairs, we rely on the following proposition.
    
    \begin{proposition}\label{prop: big_weight_vec_count}
        For any large integer $m$ and constants $1/8 \leq \alpha < 1/4$ and $c > 1/\sqrt{2}$, 
        we have the following
        \begin{align*}
            \sum_{t_1, t_2}\binom{m}{t_1} \binom{m - t_1}{t_2} 2^{m - (t_1 + t_2)} = \Theta\left(4^m\right),
        \end{align*}
        where $\alpha m \leq t_1, t_2 < m$, $t_1 + t_2 \leq m$, and $|t_1 - t_2| \leq c\sqrt{m}$.
    \end{proposition}
    \begin{proof}
        Refer \cref{sec: big_weight_vec_count}.
    \end{proof}

    Using the above proposition, we conclude that there exists a set $\mathsf{T} \subseteq \mathsf{A} \times \mathsf{A}$ 
    such that for any element $(\vec{u}_l, \vec{u}_k) \in \mathsf{T}$, 
    their corresponding auxiliary binary vectors $\vec{u}_{lk}^{(10)}, \vec{u}_{lk}^{(01)}$, and $\vec{u}_{lk}^{(11)}$, as defined above in the equation $(\ref{eq: aux_vectors})$, 
    will satisfy the following constraints
    \begin{equation}\label{eq: aux_vectors_weight_bounds}
        \begin{aligned}
            \frac{n}{16} &\leq \left\lVert \vec{u}_{lk}^{(10)}{}^{(L)} \right\rVert_1, 
            \left\lVert \vec{u}_{lk}^{(10)}{}^{(R)} \right\rVert_1, \\
            \frac{n}{16} &\leq \left\lVert \vec{u}_{lk}^{(01)}{}^{(L)} \right\rVert_1, 
            \left\lVert \vec{u}_{lk}^{(01)}{}^{(R)} \right\rVert_1, \\
            &\left|\left\lVert \vec{u}_{lk}^{(10)} \right\rVert_1 - 
            \left\lVert \vec{u}_{lk}^{(01)} \right\rVert_1\right| \leq c\sqrt{n}, \text{ for some constant
            $c > 1/\sqrt{2}$}
        \end{aligned}
    \end{equation} 
    and $|\mathsf{T}| = \Theta(4^n)$.
    Firstly, using Claim~\ref{claim: D_ij_int_C_ij_int}, for every pair $(\vec{u}_l, \vec{u}_k) \in \mathsf{T}$, we have 
    the following
    \begin{align}\label{eq: E_C_lk_D_lk}
        \underset{\vec{a}}{\mathbb{E}}\left(\mathcal{C}_{lk}^{\mathsf{(int)}}\right) \geq 
        \underset{\vec{a}}{\mathbb{E}}\left(\mathcal{D}_{lk}^{\mathsf{(int)}}\right).
    \end{align}
    Since the vectors $\vec{u}_{lk}^{(10)}$ and $\vec{u}_{lk}^{(01)}$ do not overlap, and 
    using equations $(\ref{eq: aux_vectors_weight_bounds})$, we can say these vectors satisfy the weight constraints 
    of the \cref{lemma: first_moment_integer_ij_no_overlap}, therefore we get the following
    \begin{align}\label{eq: E_D_lk_lower_bound}
        \underset{\vec{a}}{\mathbb{E}}\left(\mathcal{D}_{lk}^{\mathsf{(int)}}\right) = 
        \Omega\left(\frac{1}{np^4q^2r}\right).
    \end{align}
    Using equations $(\ref{eq: E_C_lk_D_lk})$, $(\ref{eq: E_D_lk_lower_bound})$, and
    the fact that $|T| = \Theta(4^n)$, $p, q = \Theta\left(M^{1/4}\right)$, and
    $M = pqr = \Omega\left(2^n\right)$, we get the following
    \begin{equation*}
        \begin{aligned}
            \underset{\vec{a}}{\mathbb{E}}\left(\mathcal{C}^{\mathsf{(int)}}\right) &\geq
            \sum_{i < j} \underset{\vec{a}}{\mathbb{E}}\left(\mathcal{C}_{ij}^{\mathsf{(int)}}\right) \\
            &> \sum_{(\vec{u}_l, \vec{u}_k) \in \mathsf{T}} 
            \underset{\vec{a}}{\mathbb{E}}\left(\mathcal{C}_{lk}^{\mathsf{(int)}}\right)\\
            &\geq \sum_{(\vec{u}_l, \vec{u}_k) \in \mathsf{T}} 
            \underset{\vec{a}}{\mathbb{E}}\left(\mathcal{D}_{lk}^{\mathsf{(int)}}\right)\\
            &= \Omega\left(\frac{4^n}{np^4q^2r}\right) \\
            &= \Omega\left(\frac{1}{n}\right).
        \end{aligned}
    \end{equation*}
    \begin{remark}
        Recall that we assume $M = pqr$, and that the random vectors $a_i^{(pq)}$
        and $a_i^{(r)}$ are independent and uniformly distributed over the sets $[1, pq]$ and 
        $[0, r-1]$, respectively. This assumption holds conditioned 
        on the event that $1 \leq a_i \leq pqr$ $\forall$ $1 \leq i \leq n$, given $M = pqr + s$. 
        Since the probability of this event is at least $1 - n/2^{n/2}$,
        incorporating this factor into the above calculation of the first moment does not 
        change the asymptotic bound, and we obtain the same result.
        This argument is explained in detail in \cref{sec: discussion_M}.
    \end{remark}
    This completes the proof of \cref{lemma: first_moment_integer}.
\end{proof}

\subsubsection{Second Moment}\label{sec: second_moment_integer}
In the previous subsection, while analyzing the first moment of the random variable $\mathcal{C}^{\mathsf{(int)}}$,
we assumed that $a_i$ are random integers chosen uniformly from the set $[1, pqr]$. We ignored the 
truncated slice when $pqr < a_i \leq pqr + s$, since the probability of this event is negligible, and
we were interested in the lower bound of the first moment. Now, we need to compute 
an upper bound of the second moment of the random variable $\mathcal{C}^{\mathsf{(int)}}$.

\begin{lemma}\label{lemma: second_moment_integer}
    Let $M$ and $n$ be the same as defined in the problem \textbf{ACI-RESS}. 
    Then the following holds for the \cref{alg: averange-case-integer}
    \begin{align*}
        \underset{\vec{a}}{\mathbb{E}}\left(\mathcal{C}^{\mathsf{(int)}}{}^2\right) = 
        \mathcal{O}\left(1\right).
    \end{align*}
\end{lemma}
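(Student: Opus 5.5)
The plan is to dominate $\mathcal{C}^{(\mathsf{int})}$ pointwise by a modular collision count and then reuse \cref{lemma: second_moment_modular}. As announced in the technical overview, an integer collision between $\vec{u}_i$ and $\vec{u}_j$ in $B_{s_0,s_1,s_2}$ implies, in particular, that $\vec{a}\cdot\vec{u}_i \equiv \vec{a}\cdot\vec{u}_j \pmod{r}$. Since both vectors passed the filters, they also satisfy $\vec{a}\cdot\vec{u}_i\equiv \vec{a}\cdot\vec{u}_j \pmod{pq}$. Hence every pair counted by $\mathcal{C}^{(\mathsf{int})}$ is also counted by the variable $\mathcal{C}$ of \cref{sec: average-case-modular}, with $\vec{a}^{(r)}$ replaced by $\vec{a} \bmod r$. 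So $0\le \mathcal{C}^{(\mathsf{int})}\le \mathcal{C}^{(\mathsf{pqr})}$ pointwise, and therefore $\mathbb{E}[(\mathcal{C}^{(\mathsf{int})})^2]\le \mathbb{E}[(\mathcal{C}^{(\mathsf{pqr})})^2]$.

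The remaining work is to justify that the modular second-moment bound applies to the integer input distribution. I first split on the event $\mathsf{E}^{(U)}$ (all $a_i\le pqr$). On $\mathsf{E}^{(U)}$, the $a_i$ are uniform on $[1,pqr]$, hence uniform on $\mathbb{Z}_{pqr}$. By CRT, their residues mod $p$, $q$, $r$ are independent and uniform, which is exactly the model of \cref{lemma: second_moment_modular}. The filter events $\mathsf{E}_{\mathsf{LS},p},\mathsf{E}_{\mathsf{RS},p},\mathsf{E}_{\mathsf{TS},q}$ and the mod-$r$ collision depend only on these residues. So, conditioned on $\mathsf{E}^{(U)}$, the computation with \cref{claim: Events_a_r_second_moment}, \cref{prop: B_t_1_t_2_t_3_size}, \cref{lemma: prob-counting-bad-vector-set} and \cref{prop:counting_collinear_tuples} goes through verbatim and gives $\mathcal{O}(1)$.

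Here $r=\lfloor M/pq\rfloor$ need not be prime. I will therefore check that \cref{claim: Events_a_r_second_moment} was stated for general $r\ge 2$ (it was), and that the requirement that $p,q$ be coprime to $r$ is supplied by the footnote on parameter choice.

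On the complement $\neg\mathsf{E}^{(U)}$, whose probability is $\mathcal{O}(n s/M)=\mathcal{O}(n2^{-n/2})$, I will use a crude bound that does not rely on independence. Write $\mathcal{C}^{(\mathsf{int})}\le \binom{|B_{s_0,s_1,s_2}|}{2}$, then condition on which indices $i$ fall in the truncated slice. For each such $i$, $a_i$ is uniform on $(pqr,M]$, an interval of length $s<pq$. This interval is not uniform mod $p$ or $q$ in general, so this step is the main obstacle. I would try first to avoid it by noting $M=pqr+s$ with $s<pq$, so the density of $a_i$ on $[1,M]$ is at most $M/(pqr)=1+\mathcal{O}(2^{-n/2})$ times the uniform density on $[1,pqr]$ for points in the rectangle. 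I would then redo the reduction after replacing $M$ by the multiple $pq(r+1)\ge M$: $\vec{a}$ uniform on $[M]^n$ has density at most $(pq(r+1)/M)^n$ relative to uniform on $[pq(r+1)]^n$. However, $(1+1/r)^n$ is only bounded when $r\gtrsim n$, and $r=\Theta(2^{n/2})$, so this factor is $1+o(1)$.

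With this change of measure, $\mathbb{E}_{[M]^n}[(\mathcal{C}^{(\mathsf{int})})^2]\le (1+o(1))\,\mathbb{E}_{[pq(r+1)]^n}[(\mathcal{C}^{(\mathsf{int})})^2]$. The right-hand side is bounded by the modular second moment with modulus $pq(r+1)$. That is $\mathcal{O}(1)$ again, since only $r$ changes by a factor $1+o(1)$ in all the estimates and $(r+1)$'s coprimality with $pq$ is not needed: mod $pq$ and mod $(r+1)$ residues of a uniform element of $[pq(r+1)]$ are independent only if coprime. If this coprimality fails, I fall back to the split by $\mathsf{E}^{(U)}$ and bound the complement term using $\mathcal{C}^{(\mathsf{int})}\le |B|^2$ together with a fourth-moment-free estimate. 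Concretely, I would bound $\mathbb{E}[|B|^2\mathbf 1_{\neg \mathsf{E}^{(U)}}]$ by $2^{2n}\cdot\Pr[\neg\mathsf{E}^{(U)}]$ restricted appropriately, which I expect to need a sharper list-size argument and is where I anticipate the real difficulty.
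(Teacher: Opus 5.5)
Your core argument is the paper's: the pointwise bound $\mathcal{C}^{(\mathsf{int})}\le\mathcal{C}^{(\mathsf{pqr})}$ (with $\vec{a}\bmod r$ as the $r$-part), followed by \cref{lemma: second_moment_modular}. The paper then only observes that each of $a_i\bmod p$, $a_i\bmod q$, $a_i\bmod r$ is nearly uniform and ``safely assumes'' uniformity. It never addresses joint independence of these residues. Nor does it address that a small statistical distance does not control the second moment of a variable that can be as large as $2^{4n}$. Your change of measure is the right way to make the transfer rigorous: uniform on $[M]^n$ has density at most $(pq(r+1)/M)^n\le(1+1/r)^n=1+o(1)$ relative to uniform on $[pq(r+1)]^n$. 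It also makes your split on $\mathsf{E}^{(U)}$ unnecessary.

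As written, however, the proposal does not close. CRT independence of the residues modulo $p$, $q$ and $r+1$ under the uniform measure on $[pq(r+1)]$ requires $\gcd(r+1,pq)=1$, and nothing guarantees this. Your sentence on this point asserts both that coprimality is not needed and that it is. The fallback is not an argument either. Bounding the $\neg\mathsf{E}^{(U)}$ contribution means controlling $\mathbb{E}[(\mathcal{C}^{(\mathsf{int})})^2\mathbf{1}_{\neg\mathsf{E}^{(U)}}]$, which involves fourth powers of list sizes, not $|B|^2$. Any crude bound loses, since $\Pr[\neg\mathsf{E}^{(U)}]$ is only $\mathcal{O}(n2^{-n/2})$.

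The missing observation is that you are free to choose the comparison modulus. Take any $r'>r$ with $\gcd(r',pq)=1$ and $r'-r=\mathcal{O}(r/n)$. The rank arguments in the modular lemma already treat $p,q$ as primes of size $\Theta(2^{n/4})$. Each of them divides at most one of $r+1,r+2,r+3$, so some $r'\le r+3$ works. Then:
\begin{itemize}
\item $[M]\subseteq[pqr']$ and $(pqr'/M)^n\le(1+3/r)^n=1+o(1)$;
\item under the uniform measure on $[pqr']^n$, the residues modulo $p$, $q$, $r'$ are independent and uniform by CRT;
\item every integer collision is a collision modulo $r'$;
\item \cref{lemma: second_moment_modular} with modulus $pqr'=\Theta(2^n)$ gives $\mathcal{O}(1)$.
\end{itemize}
This route also no longer needs the footnote's coprimality of $r$ with $p$ and $q$.
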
 
\begin{proof}
    Let us first define another random variable that denotes the number of non-trivial modular collisions in the final 
    list $B_{s_0, s_1, s_2}$ as follows
    \begin{equation*}
        \begin{aligned}
            \mathcal{C}^{(\mathsf{pqr})} := \left|\left\{(\vec{u}_i, \vec{u}_j) \; \middle | \;  
            \begin{aligned}
                i < j, \; \vec{u}_i, \vec{u}_j \in  B_{s_0, s_1, s_2}, \\
                \vec{a}^{(r)}\cdot\vec{u}_i \equiv 
                \vec{a}^{(r)}\cdot\vec{u}_j \mod r.
            \end{aligned}\right\}\right|.
        \end{aligned}
    \end{equation*}
    Now, we make the following crucial observation relating the two random variables $\mathcal{C}^{(\mathsf{int})}$ and $\mathcal{C}^{(\mathsf{pqr})}$
    as follows
    \begin{align*}
        \mathcal{C}^{(\mathsf{int})} \leq \mathcal{C}^{(\mathsf{pqr})}.
    \end{align*}
    This implies
    \begin{align*}
        \underset{\vec{a}}{\mathbb{E}}\left(\mathcal{C}^{\mathsf{(int)}}{}^2\right) \leq
        \underset{\vec{a}}{\mathbb{E}}\left(\mathcal{C}^{\mathsf{(pqr)}}{}^2\right).
    \end{align*}

    In the modular case, we analyzed the second moment of the random
    variable $\mathcal{C}^{\mathsf{(pqr)}}$ given the random vector $\vec{a}$ is uniformly
    distributed over the ring $\mathbb{Z}_{pqr}$. 

    In the integer case, the input random vector $\vec{a}$ is uniformly distributed over the 
    set $[M]^n$, here we can observe that
    the variables $a_i^{(p)} := a_i \mod p$, $a_i^{(q)} := a_i \mod q$,
    and $a_i^{(r)} := a_i \mod r$ has the following distribution
    \begin{equation*}
        \begin{aligned}
            \Pr_{a_i}\left(a_i^{(p)} = \alpha\right) &= 
            \frac{1}{p}\cdot\left(1 + \mathcal{O}\left(\frac{p}{M}\right)\right)\\
            \Pr_{a_i}\left(a_i^{(q)} = \beta\right) &= 
            \frac{1}{q}\cdot\left(1 + \mathcal{O}\left(\frac{q}{M}\right)\right)\\
            \Pr_{a_i}\left(a_i^{(r)} = \gamma\right) &= 
            \frac{1}{r}\cdot\left(1 + \mathcal{O}\left(\frac{r}{M}\right)\right),
        \end{aligned}
    \end{equation*}
    $\forall$ $1 \leq i \leq n$, $\alpha \in \mathbb{Z}_p$, 
    $\beta \in \mathbb{Z}_q$, and $\gamma \in \mathbb{Z}_r$.
    Since, $p, q$, and $r$ are significantly smaller than the input integer $M$, 
    we can safely assume that the random variables $\vec{a}^{(p)}$, $\vec{a}^{(q)}$, and 
    $\vec{a}^{(r)}$ are uniformly distributed over the rings $\mathbb{Z}_{p}$, $\mathbb{Z}_{q}$,
    and $\mathbb{Z}_{r}$ respectively.
    Now, using the second-moment computation from the modular case, we obtain the desired result.
\end{proof}

\subsubsection{Concentration}
Similar to the modular case, we use the Paley-Zygmund inequality to prove that the random
variable $\mathcal{C}^{\mathsf{(int)}}$ is concentrated around its mean. Using \cref{lemma: first_moment_integer} and \cref{lemma: second_moment_integer} and this inequality, 
we get the desired success probability of the
\cref{alg: averange-case-integer}. This completes the proof of \cref{lemma: Filtering_algo_succ_integer}.

\subsection{Complexity Analysis}
\begin{lemma}\label{lemma: complexity_integer}
    Let $M$ and $n$ be the same as defined in the problem \textbf{ACI-RESS}.
    Then \cref{alg: averange-case-integer} runs in expected time 
    $\poly(n) M^{1/4}$ over the random choice of the input vector
    $\vec{a}$. Let $\mathsf{T}$ be the running time of the algorithm. Then, we also have the following tail bound
    \begin{align*}
        \Pr_{\vec{a}} \left(\mathsf{T} \geq \poly(n) \cdot M^{1/4}\right) =  
        \mathcal{O}\left(\frac{1}{\poly(n)}\right).
    \end{align*}
\end{lemma}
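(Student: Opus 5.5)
The plan mirrors the complexity lemma for the modular case, since \cref{alg: averange-case-integer} has the same structure as \cref{alg: averange-case-modular}. First, I split the running time into its three stages. Building the two mod-$p$ DP tables over $n/2$ elements and backtracing to enumerate $B_{n/2,s_0}^{(L)}$ and $B_{n/2,s_1}^{(R)}$ takes $\Oh(np)$ time for the tables, plus $\poly(n)$ per output element for the backtrace, as in~\cite{JW24,JWZ25}. Next, building $B_{s_0,s_1,s_2}$ amounts to sorting both lists by their residues modulo $q$ and joining them, which costs $\Ot(L)$ with $L:=\max\{|B^{(L)}_{n/2,s_0}|,|B^{(R)}_{n/2,s_1}|,|B_{s_0,s_1,s_2}|\}$. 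Finally, detecting an integer collision in the final list is a sort, so it is $\Ot(|B_{s_0,s_1,s_2}|)$. Since $p=\Theta(M^{1/4})$, we get $\mathsf{T}\le \poly(n)\cdot(M^{1/4}+L)$ deterministically, and it remains to bound $\mathbb{E}[L]$.

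For the expected list sizes I use linearity of expectation over the candidate vectors. Fix a nonzero $\vec v\in\{0,1\}^{n/2}$. Because $\gcd(p,q)=1$ and $p,q\ll M$, the residue $a_i \bmod p$ has the distribution $\frac1p(1+\Oh(p/M))$, exactly as observed in the proof of \cref{lemma: second_moment_integer}. Picking any coordinate $t$ with $v_t=1$ and conditioning on the other coordinates then gives $\Pr[\vec a^{(L)}\cdot\vec v\equiv s_0 \pmod p]\le \frac1p(1+\Oh(p/M))$. Summing over the $2^{n/2}$ vectors gives $\mathbb{E}|B^{(L)}_{n/2,s_0}|=\Oh(2^{n/2}/p)=\Oh(M^{1/4})$, and the same holds for the right list. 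For $\vec u\in\mathsf A$, survival requires three congruences: left sum modulo $p$, right sum modulo $p$, and total sum modulo $q$. The left and right events are independent because they involve disjoint coordinates. To get the joint bound $\Oh(1/(p^2q))$, I pick a coordinate in the left support and one in the right support. The residues $(a_t \bmod p, a_t\bmod q)$ of a single coordinate are near-uniform on $\Z_p\times\Z_q$ by CRT, again up to a factor $1+\Oh(pq/M)$. Hence, conditioned on everything else, the three congruences are hit with probability at most $\frac{1}{p^2q}(1+o(1))$. This gives $\mathbb{E}|B_{s_0,s_1,s_2}|=\Oh(2^n/(p^2q))=\Oh(M^{1/4})$, using $M=\Theta(2^n)$ and $p,q=\Theta(M^{1/4})$.

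Combining the two steps, $\mathbb{E}[\mathsf{T}]\le\poly(n)\cdot(M^{1/4}+\mathbb{E}[L])=\poly(n)\cdot M^{1/4}$, using $\mathbb{E}[L]\le\sum\mathbb{E}|\cdot|$. For the tail bound I apply Markov's inequality to each of the three list sizes with threshold $\poly(n)\cdot M^{1/4}$, choosing that polynomial larger by a factor $n^{c}$ than the one in the expectation bound, and then take a union bound over the three lists. The main point needing care is the near-uniformity of residues for integer inputs rather than ring elements, that is, handling the truncated slice $pqr<a_i\le M$. I would address this by the coordinate-conditioning argument above, which needs only an upper bound on point probabilities, so no independence between $\vec a^{(pq)}$ and $\vec a^{(r)}$ is required.
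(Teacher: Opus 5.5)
Your proposal is correct and follows essentially the same route as the paper. The paper reuses the modular complexity analysis: DP tables in $\mathcal{O}(np)$ time, sorting and joining in $\tilde{\mathcal{O}}(L)$, expected list sizes from survival probabilities $1/p$ and $1/(p^2q)$, then Markov. It justifies this only by asserting that residues modulo $p$ and $q$ are asymptotically uniform for integer inputs.

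Your single-coordinate conditioning argument is a more rigorous justification of exactly that step, since it needs only the point-probability bound $\frac{1}{pq}(1+pq/M)$ on $a_t \bmod pq$. The one thing to add, which the paper also omits, is that vectors with a zero half (outside $\mathsf{A}$) contribute only $\mathcal{O}(2^{n/2}/(pq)+1)=\mathcal{O}(1)$ to the expected size of the final list.
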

\begin{proof}
    Since the filtering algorithm only depends on the lower $pq$ bits of the
    input variables, 
    and as we observed in \cref{sec: second_moment_integer}, the input vector mod $p$
    and $q$ is asymptotically uniform over the corresponding support; therefore, the expected sizes of the list generated
    in the algorithm asymptotically remain the same as in the modular case. 
    Hence, the time complexity in the integer case asymptotically remains the same as the modular case.
\end{proof}

\subsection{Distribution of the sum of Integer random Variables}
In the integer setting, we repeatedly encounter the distribution of a sum of a large number of i.i.d. integer-valued random variables. Local Central Limit Theorem (LCLT) gives a good
approximation of this distribution. This theorem has been 
studied extensively in probability theory and statistics (see, e.g. \cite{ibragimov1965independent}, \cite{mitzenmacher2017probability}, \cite{lawler2010random}, \cite{feller1991introduction}). 
In our analysis, we make use of the following corollary of the LCLT.

\begin{theorem} \label{cor: Theta_LCLT} \cite{mitzenmacher2017probability} 
    \textbf{[Sharper version of LCLT.]} \label{thm:LCLT}
    Let $X_1, X_2, \ldots, X_n$ be a sequence of I.I.D, distributed over the set $[M]$ for some integer $M \geq 1$ 
    with $\mathbb{E}(X_i) = \mu$
    and $\mathsf{Var}(X_i) = \sigma^2 < \infty$. Then
    there exists $c_1, c_2, n_0, C > 0$ such that for all $n \geq n_0$ and for all integers $k$ satisfying 
    \begin{align*}
        |k - n\mu| \leq C\sigma \sqrt{n} \qquad (\text{Central Window of Gaussian}),
    \end{align*}
    we have
    \begin{align*}
        \frac{c_1}{\sigma\sqrt{n}} \leq \Pr\left(\Bar{X} = k\right) \leq 
        \frac{c_2}{\sigma\sqrt{n}},
    \end{align*}
    and hence
    \begin{align*}
        \Pr\left(\Bar{X} = k\right) = \Theta\left(\frac{1}{\sigma \sqrt{n}}\right).
    \end{align*}
\end{theorem}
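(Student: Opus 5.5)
The plan is the classical Fourier-analytic proof of the local central limit theorem. Here $\Bar{X}$ is read as the sum $S_n=X_1+\dots+X_n$, and the common law of the $X_i$ is assumed aperiodic: its support is not contained in a coset $a+d\mathbb{Z}$ with $d\ge 2$. This is automatic for the uniform distributions on intervals used in this paper, since consecutive integers in the support force $d=1$. The constants $c_1,c_2,n_0$ may depend on the fixed distribution, and in particular on $M$.

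\emph{Step 1 (inversion and splitting).} Let $\varphi(t)=\mathbb{E}[e^{itX_1}]$. Since $S_n$ is integer valued,
\begin{align*}
\Pr(S_n=k)=\frac{1}{2\pi}\int_{-\pi}^{\pi}\varphi(t)^n e^{-itk}\,dt .
\end{align*}
Fix a small $\eps>0$ and split the integral into $|t|\le\eps$ and $\eps<|t|\le\pi$.

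\emph{Step 2 (outer region).} By aperiodicity, $|\varphi(t)|<1$ for all $0<|t|\le\pi$. By continuity on the compact set $[\eps,\pi]$ there is $\eta>0$ with $|\varphi(t)|\le 1-\eta$ there. The outer contribution is therefore at most $(1-\eta)^n$, which is $o(1/(\sigma\sqrt n))$.

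\emph{Step 3 (inner region).} Since $X_1$ is bounded, all its moments are finite. A Taylor expansion gives
\begin{align*}
\varphi(t)=\exp\!\left(i\mu t-\tfrac{\sigma^2t^2}{2}+O(|t|^3)\right)\quad\text{for }|t|\le\eps .
\end{align*}
Substitute $t=s/(\sigma\sqrt n)$ and write $x=(k-n\mu)/(\sigma\sqrt n)$. The inner integral becomes
\begin{align*}
\frac{1}{2\pi\sigma\sqrt n}\int_{|s|\le\eps\sigma\sqrt n} e^{-s^2/2+O(|s|^3/\sqrt n)}\,e^{-isx}\,ds .
\end{align*}
Choose $\eps$ small enough that $|O(|t|^3)|\le \sigma^2t^2/4$ on $|t|\le\eps$. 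Then the integrand is dominated by $e^{-s^2/4}$, and dominated convergence, applied uniformly in $x$, gives
\begin{align*}
\Pr(S_n=k)=\frac{1}{\sigma\sqrt{2\pi n}}\,e^{-x^2/2}+o\!\left(\frac{1}{\sigma\sqrt n}\right)
\end{align*}
uniformly in $k$. The main obstacle is exactly this step: making the error term uniform in $k$. It is handled by the domination just described, and quantitatively by bounding $|e^{O(|s|^3/\sqrt n)}-1|\le O(|s|^3/\sqrt n)\,e^{s^2/4}$.

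\emph{Step 4 (conclusion).} The upper bound holds for every $k$ with $c_2=2/\sqrt{2\pi}$, say, once $n\ge n_0$. In the central window $|x|\le C$, the main term is at least $e^{-C^2/2}/(\sigma\sqrt{2\pi n})$. Choosing $n_0$ so that the $o(\cdot)$ term is below half of this gives the lower bound with $c_1=e^{-C^2/2}/(2\sqrt{2\pi})$. Together these yield $\Pr(S_n=k)=\Theta(1/(\sigma\sqrt n))$ in the window.
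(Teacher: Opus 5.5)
Your Fourier-inversion proof is correct for a \emph{fixed} law. The paper gives no argument of its own here; it only cites Mitzenmacher--Upfal, so yours is a self-contained substitute rather than a competing route. Reading $\bar X$ as the sum $S_n$ is right, and Steps 1--4 are the standard argument. Step 4 even gives the lower bound for \emph{every} window constant $C$ (with $c_1$ depending on $C$). That is what the later uses of the form $w_i\mu_{pq}+\Oh(\sigma_{pq}\sqrt{w_i})$ need, whereas the statement only asserts some $C$.

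Your aperiodicity hypothesis is a necessary correction, not a convenience. For $X_i$ uniform on $\{1,3\}$ one has $S_n\equiv n \pmod 2$. So every window containing two consecutive integers contains a $k$ with $\Pr(S_n=k)=0$, and the stated lower bound fails.

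What your proof does not deliver is uniformity in $M$: you explicitly let $c_1,c_2,n_0$ depend on it. The paper, however, uses the theorem with $M=M(n)$ (see the remark right after it). There $X_i$ is uniform on an interval of length $L\in\{pq,r\}$ with $L=\Theta(2^{n/2})$, and the number of summands is $m=w_i=\Theta(n)$, so $\sigma$ is exponential in $m$. Your argument breaks in that regime at two concrete points.
\begin{itemize}
\item In Step 3 the cubic remainder carries the constant $\mathbb{E}|X_1-\mu|^3\sim L^3$, so $\varepsilon$ must be of order $1/L$.
\item In Step 2 the outer bound $(1-\eta)^m$ is compared with $1/(\sigma\sqrt m)\approx 2^{-n/2}/\sqrt m$. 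At $|t|\approx c/L$ one has $|\varphi(t)|\approx \sin(c/2)/(c/2)$, so $\eta$ is a small constant and $(1-\eta)^m$ is exponentially larger than the main term.
\end{itemize}
The paper's remark, which appeals to the variables being ``uniformly bounded'', does not supply this uniformity either.

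For uniform laws on $\{0,\dots,L-1\}$ you can repair it as follows.
\begin{itemize}
\item Take $\varepsilon=c/L$. Then the rescaled window is $|s|\le c'\sqrt m$, and since $\mathbb{E}|X_1-\mu|^3/\sigma^3\le\sqrt3$, Step 3 becomes uniform in $L$.
\item In the outer region use $|\varphi_L(t)|=|\sin(Lt/2)|/(L|\sin(t/2)|)\le\pi/(L|t|)$ for $|t|\ge 2\pi/L$. On $c/L\le|t|\le 2\pi/L$ use $|\varphi_L(t)|\le 1-\eta$ with $\eta$ independent of $L$.
\item Integrating gives an outer contribution $O(\theta^m/L)$ for some $\theta<1$. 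This is $o(1/(\sigma\sqrt m))$ uniformly in $L$.
\end{itemize}
With these changes $n_0,c_1,c_2$ no longer depend on $M$, which is what the paper's applications require.
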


\begin{remark}
The assumptions of the local central limit theorem remain valid when
$M = M(n)$, since the variables remain uniformly bounded and their variance is strictly positive for all $n$. 
\end{remark}

Recall that, in the integer case, $a_{i}^{(pq)} \in [0, pq)$
and $a_{i}^{(r)} \in [0, r)$. These are integers and not field elements.
Let $\mu_{pq}$ and $\sigma_{pq}^2$ denote the mean and the variance of the random variables $a_{i}^{(pq)}$ $\forall$ $1 \leq i \leq n$, then we have
\begin{equation*}
    \begin{aligned}
        \mu_{pq} &:= \mathbb{E}\left(a_i^{(pq)}\right) = \frac{pq - 1}{2} = \Theta(pq), \\
        \sigma_{pq}^2 &:= \mathsf{Var}\left(a_i^{(pq)}\right) = \Theta((pq)^2) < \infty.
    \end{aligned}
\end{equation*}
Since the random variable $\{a_i^{(pq)}\}_{i = 1}^{n}$ are I.I.D. and their variance is bounded, we can use~\cref{cor: Theta_LCLT} to compute 
the distribution of their sum as follows

\begin{corollary}\label{cor: LCLT_TS_pq}
    For a fixed vector $\vec{u}_i \in \mathsf{A}$ with Hamming weight $w_i = \Theta(n)$, for any
    $s_{pq} = w_i \mu_{pq} + \mathcal{O}(\sigma_{pq}\sqrt{w_i})$ we have the following 
    \begin{equation*}
        \begin{aligned}
            \Pr_{\vec{a}^{(pq)}} 
            \left(S_{\mathsf{TS}, pq, i} = \vec{a}^{(pq)} \cdot \vec{u}_i = s_{pq}\right)
            &= \Theta\left(\frac{1}{\sigma_{pq}\sqrt{w_i}}\right) \\
            &= \Theta\left(\frac{1}{pq\sqrt{n}}\right).
        \end{aligned}
    \end{equation*}
\end{corollary}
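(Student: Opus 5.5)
This is a direct specialization of \cref{cor: Theta_LCLT} to the sum $S_{\mathsf{TS}, pq, i} = \vec{a}^{(pq)} \cdot \vec{u}_i$.

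\emph{Step 1: rewrite the sum.} Fix $\vec{u}_i \in \mathsf{A}$ with support $I \subseteq [n]$, $|I| = w_i$. Then $S_{\mathsf{TS}, pq, i} = \sum_{t \in I} a_t^{(pq)}$. By the reduction to $M = pqr$ (conditioning on $\mathsf{E^{(U)}}$), the variables $\{a_t^{(pq)}\}_{t\in I}$ are i.i.d., uniform on an interval of $pq$ consecutive integers. Their mean is $\mu_{pq}$ and their variance is $\sigma_{pq}^2 = ((pq)^2-1)/12 = \Theta((pq)^2)$, which is positive and finite.

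\emph{Step 2: apply \cref{cor: Theta_LCLT}.} Apply the theorem with $w_i$ summands in place of $n$ and the bound $pq$ in place of $M$. A uniform distribution on consecutive integers is aperiodic (its support has gcd of differences $1$), so the lattice condition implicit in the LCLT holds. Together with the remark that $M=M(n)$ is allowed, this lets the constants $c_1,c_2,C,n_0$ be taken uniform in $pq$. Since $w_i = \Theta(n)$, we get $w_i \ge n_0$ for $n$ large. Hence for every integer $s_{pq}$ with $|s_{pq} - w_i\mu_{pq}| \le C\sigma_{pq}\sqrt{w_i}$,
\[
\Pr\left(S_{\mathsf{TS}, pq, i} = s_{pq}\right) = \Theta\left(\frac{1}{\sigma_{pq}\sqrt{w_i}}\right).
\]
Substituting $\sigma_{pq} = \Theta(pq)$ and $w_i = \Theta(n)$ gives $\Theta(1/(pq\sqrt{n}))$.

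\emph{Main obstacle: the window.} The statement allows any $s_{pq} = w_i\mu_{pq} + \mathcal{O}(\sigma_{pq}\sqrt{w_i})$, i.e.\ $|s_{pq}-w_i\mu_{pq}|\le C'\sigma_{pq}\sqrt{w_i}$ for an arbitrary fixed $C'$. The theorem, however, guarantees the bounds only for its own constant $C$, so the window must be enlarged.

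The upper bound needs no enlargement: the standard LCLT gives $\Pr(S = k)\le c_2/(\sigma\sqrt{w})$ for all $k$.

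For the lower bound, I would use the uniform Gaussian approximation
\[
\Pr(S=k) = \frac{1}{\sigma\sqrt{2\pi w}}\, e^{-(k-w\mu)^2/(2w\sigma^2)} + o\left(\frac{1}{\sigma\sqrt{w}}\right).
\]
Inside $|k-w\mu|\le C'\sigma\sqrt{w}$ the exponential factor is at least $e^{-C'^2/2}$, a positive constant, so the main term dominates the error for large $n$. The delicate point is that this $o(\cdot)$ term must be uniform in $pq$. This holds because the rescaled variables $a^{(pq)}/pq$ converge to the uniform distribution on $[0,1]$, and the characteristic-function proof of the LCLT gives error bounds that depend only on the normalized distribution.

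If one insists on using \cref{cor: Theta_LCLT} exactly as stated, the fallback is simply to read the $\mathcal{O}$ in the hypothesis as restricted to constants at most $C$.
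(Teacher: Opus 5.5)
Your proposal is correct and follows the paper. The paper proves this corollary simply by applying the sharper LCLT (\cref{cor: Theta_LCLT}) to the $w_i$ i.i.d.\ summands $a_t^{(pq)}$, with $\mu_{pq}=(pq-1)/2$ and $\sigma_{pq}=\Theta(pq)$, and relies only on the remark that $M=M(n)$ is allowed. Your extra treatment of the window constant and of uniformity in $pq$ is more careful than the paper; for uniformity, the cleanest justification is the closed form $|\phi(t)| = |\sin(pq\,t/2)|/(pq\,|\sin(t/2)|) \le \pi/(pq\,|t|)$ on $[-\pi,\pi]$, which controls the inversion integral uniformly in $pq$, rather than convergence in distribution of $a^{(pq)}/pq$ alone.
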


Similarly, we can compute the distribution of the sum of the random
variables $\{a_i^{(r)}\}_{i = 1}^{n}$.
In analyzing the expectation of the 
random variable $\mathcal{C}^{(\mathsf{int})}$, the bottleneck problem is to understand the 
interdependency of the following events depending on the vector $\vec{a}^{(pq)}$, 
\begin{itemize}
    \item $S_{\mathsf{LS}, i}^{(p)} = \vec{a}^{(p)}{}^{(L)} \cdot \vec{u}_i^{(L)} 
    \mod p = k$,
    \item $S_{\mathsf{TS}, i}^{(pq)} = \vec{a}^{(pq)} \cdot \vec{u}_i = l$,
    \item $S_{\mathsf{C}, pq, i} = \floor{\frac{\vec{a}^{(pq)}\cdot \vec{u}_i}{pq}} = c$.
\end{itemize}
for some $k \in [0, p)$, $l \in [0, pq)$, and $c = [0, w_i)$.
The following lemma provides a detailed account of the interdependencies among these three events.
\begin{lemma}\label{lemma: pq_events_analysis}
    Let $n$, $p$, $q$, $\vec{a}^{(pq)}$ be the same as defined earlier. Fix $\vec{u}_i \in \mathsf{A}$, let $w_i, w_{l, i}$, and $w_{r, i}$
    denote the Hamming weights of the vectors $\vec{u}_i, \vec{u}_i^{(L)}$, and $\vec{u}_i^{(R)}$
    respectively. Assume that $\alpha n < w_i < (1-\alpha)n$ and $\alpha n/2 < w_{l, i}, w_{r, i} < (1- \alpha )n/2$ where $1/8 \leq \alpha < 1/4$ is some constant. Then
    \begin{enumerate}
        \item For all $c = w_i/2 + \mathcal{O}(\sqrt{w_i})$, we have
            \begin{align*}
                \Pr_{\vec{a}^{(pq)}}\left(S_{\mathsf{C}, pq, i} = c\right) = 
                \Theta\left(\frac{1}{\sqrt{w_i}}\right).
            \end{align*}
        \item For all $l \in [0, pq)$ and $c = w_i/2 + \Theta(\sqrt{w_i})$, we have
            \begin{align*}
                \Pr_{\vec{a}^{(pq)}}\left(S_{\mathsf{TS}, i}^{(pq)} = l \; \middle |\; 
                S_{\mathsf{C}, pq, i} = c\right) = \Theta\left(\frac{1}{pq}\right).
            \end{align*}
        \item For all $k \in [0, p)$, $l \in [0, pq)$, and 
        $c = w_i/2 + \Theta(\sqrt{w_i})$, we have
            \begin{align*}
                \Pr_{\vec{a}^{(pq)}}\left(
                S_{\mathsf{LS}, i}^{(p)} = k, 
                S_{\mathsf{TS}, i}^{(pq)} = l, 
                S_{\mathsf{C}, pq, i} = c\right) \\ = \Theta\left(\frac{1}{p^2q\sqrt{w_i}}\right).
            \end{align*}
    \end{enumerate}
\end{lemma}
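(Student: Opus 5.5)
We will treat the three parts in the order stated, since each builds on the previous. Throughout, write $S := S_{\mathsf{TS}, pq, i} = \vec{a}^{(pq)}\cdot\vec{u}_i$, a sum of $w_i$ i.i.d.\ variables uniform on $[0,pq)$ with mean $\mu_{pq}\approx pq/2$ and standard deviation $\sigma_{pq}=\Theta(pq)$. The event $\{S_{\mathsf{C},pq,i}=c\}$ is exactly $\{S\in[c\,pq,(c+1)pq)\}$.

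For part 1 we sum Corollary~\ref{cor: LCLT_TS_pq} over this interval. When $c=w_i/2+\Oh(\sqrt{w_i})$, every point of the interval is within $\Oh(pq\sqrt{w_i})=\Oh(\sigma_{pq}\sqrt{w_i})$ of $w_i\mu_{pq}$. Hence each of the $pq$ point masses is $\Theta(1/(pq\sqrt{w_i}))$, and the total is $\Theta(1/\sqrt{w_i})$.

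Part 2 then follows by dividing. For every $l$, $\Pr(S=c\,pq+l)=\Theta(1/(pq\sqrt{w_i}))$ by the same corollary, and the conditioning event has probability $\Theta(1/\sqrt{w_i})$ by part 1. The ratio is $\Theta(1/pq)$, and the constants are uniform in $l$ because the LCLT constants are uniform over the central window.

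For part 3 we decouple the left sum from the rest. Put $X:=\vec{a}^{(pq)}{}^{(L)}\cdot\vec{u}_i^{(L)}$ and $Y:=\vec{a}^{(pq)}{}^{(R)}\cdot\vec{u}_i^{(R)}$; these are independent, with $S=X+Y$. Since $p\mid pq$, we have $a\bmod p=(a\bmod pq)\bmod p$, so $S_{\mathsf{LS},i}^{(p)}=X\bmod p$. The target event is therefore
\[
\{X\equiv k \pmod p\}\wedge\{X+Y=c\,pq+l\}.
\]
We condition on $X=x$ and sum over the admissible $x$:
\[
\sum_{x\equiv_p k}\Pr(X=x)\Pr(Y=c\,pq+l-x).
\]
We restrict to $x$ within $\Oh(pq\sqrt{w_{l,i}})$ of $w_{l,i}\mu_{pq}$. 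Since $c\,pq+l$ is within $\Oh(pq\sqrt{w_i})$ of $w_i\mu_{pq}$, the value $c\,pq+l-x$ then lies in the central window for $Y$. Both halves have weight $\Theta(n)$ by the hypotheses on $w_{l,i},w_{r,i}$, so the LCLT (Theorem~\ref{thm:LCLT}) applies to each. It gives $\Pr(X=x)=\Theta(1/(pq\sqrt{n}))$ and $\Pr(Y=\cdot)=\Theta(1/(pq\sqrt{n}))$ throughout the window. The number of $x\equiv_p k$ in a window of length $\Theta(pq\sqrt{n})$ is $\Theta(q\sqrt{n})$. The restricted sum is therefore
\[
\Theta\bigl(q\sqrt{n}/(pq\sqrt n)^2\bigr)=\Theta\bigl(1/(p^2q\sqrt n)\bigr),
\]
which is the lower bound, since $w_i=\Theta(n)$.

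For the upper bound we use the same decomposition over all $x$. The LCLT upper bound gives $\Pr(Y=\cdot)\le c_2/(pq\sqrt{n})$ uniformly. This reduces the task to bounding $\Pr(X\equiv_p k)=\Oh(1/p)$, which holds since each $a_j^{(pq)}\bmod p$ is exactly uniform on $\mathbb{Z}_p$. The product is $\Oh(1/(p^2 q\sqrt n))$.

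The main obstacle is uniformity. The LCLT as stated gives point bounds only inside a central window, and we need the upper bound globally. That is needed above for $Y$, and it is what makes part 3 independent of whether $c$ sits exactly at $\Theta(\sqrt{w_i})$ offset. We would invoke the standard global LCLT bound $\sup_k\Pr(\bar X=k)=\Oh(1/(\sigma\sqrt n))$. It follows from the same sources, or directly from the fact that a sum of uniforms on $[0,pq)$ has a log-concave unimodal mass function with maximum comparable to its central value. A further point to check is that the LCLT constants do not depend on $pq$, which grows with $n$. Here we rely on the Remark after Theorem~\ref{thm:LCLT}, or rescale by $pq$ and compare with the Irwin--Hall density.
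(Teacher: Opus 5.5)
Your Parts 1 and 2 match the paper's argument: sum the LCLT point masses of $S_{\mathsf{TS},pq,i}$ over the $pq$ integers of the block $[c\,pq,(c+1)pq)$, then take the ratio. Your Part 3 is correct and takes a different route from the paper.

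The paper writes the target probability as $\sum_{y\equiv_p k} f_{pq,i}(y)$, where $f_{pq,i}(y)=\Pr(X=y)\Pr(Y=s-y)$. It asserts that $f_{pq,i}$ changes only by constant factors under shifts of size at most $pq$, and restricts to the central window. An averaging argument over residue classes then shows that class $k$ carries a $\Theta(1/p)$ share of $\sum_y f_{pq,i}(y)=\Pr(S_{\mathsf{TS},pq,i}=s)$, which the LCLT corollary evaluates.

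You prove the two directions separately:
\begin{itemize}
\item The lower bound counts the $\Theta(q\sqrt n)$ points of the class in the window and multiplies two central point masses.
\item The upper bound is $\sup_y\Pr(Y=y)\cdot\Pr(X\equiv_p k)$, where $X\bmod p$ is exactly uniform because $p\mid pq$.
\end{itemize}

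What your route buys:
\begin{itemize}
\item You do not need the smoothness claim for $f_{pq,i}$, which the paper justifies only heuristically.
\item You do not need any control of residue-class sums outside the central window. The paper's step asserting $\sum_{y\equiv_p k}f_{pq,i}(y)=\Theta\bigl(\sum_{y\in A_k}f_{pq,i}(y)\bigr)$ tacitly requires the out-of-window mass of class $k$ to be an $\mathcal{O}(1/p)$ fraction of the total. The concentration of mass in the central window does not give that by itself. Your upper bound handles all $x$ at once.
\end{itemize}

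The price is a global bound $\sup_y\Pr(Y=y)=\mathcal{O}(1/(pq\sqrt n))$, which you correctly flag. Your log-concavity remark supplies it: the mass function of a sum of discrete uniforms is log-concave, hence unimodal, and symmetric about its mean, so its maximum lies in the central window.

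The paper's formulation, in turn, presents the result as near-equidistribution of the left sum mod $p$ conditional on the total sum. That is the asymptotic-independence picture the authors want to convey.

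Both arguments rely equally on the LCLT holding for arbitrary constant-width windows and uniformly in $pq$, as you note.
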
 

\begin{proof}
    Leveraging the smoothness of the sum distributions, we first show that the carry bits converge in distribution to a discrete normal distribution with mean $w_i/2$ and variance $\Theta(w_i)$,
    and henceforth we prove that the total sum modulo $pq$
    and the left sum modulo $p$ condition on the carry bits also 
    converge to the uniform distribution, as $n$ tends to infinity. 
    Finally, we show that the following three events 
    are asymptotically independent
    \begin{itemize}
        \item $\sum_{j = 1}^{n/2} a_j^{(pq)} \cdot u_{ij} \equiv_{p} a$,
        \item $\vec{a}^{(pq)} \cdot \vec{u}_i \equiv_{pq} b$,
        \item $\floor{\frac{\vec{a}^{(pq)} \cdot \vec{u}_i}{pq}} = c$.
    \end{itemize}
    Refer to \cref{sec: pq_events_analysis} for a detailed proof. 
\end{proof}

\subsubsection{Proof of \cref{prop: events_vec_a_probabilities}}\label{sec: events_vec_a_probabilities}
\begin{enumerate}
    \item Since $\vec{u}_i$ and $\vec{u}_j$ do not overlap, we can split the 
        independent events as follows
        \begin{equation*}
            \begin{aligned}
                &\Pr_{\vec{a}^{(pq)}} \left(\mathsf{E}_{\mathsf{LS}, \vec{a}, i}^{(p)} \wedge 
                \mathsf{E}_{\mathsf{LS}, \vec{a}, j}^{(p)} \wedge \mathsf{E}_{\mathsf{TS}, \vec{a}, i}^{(pq)} \wedge \mathsf{E}_{\mathsf{TS}, \vec{a}, j}^{(pq)} \wedge 
                \mathsf{E}_{\mathsf{C}, pq, \vec{a}, i, j} \right) \\ 
                &= 
                \Pr_{\vec{a}^{(pq)}} \left(\mathsf{E}_{\mathsf{LS}, \vec{a}, i}^{(p)} \wedge 
                \mathsf{E}_{\mathsf{LS}, \vec{a}, j}^{(p)} \wedge \mathsf{E}_{\mathsf{TS}, \vec{a}, i}^{(pq)} \wedge \mathsf{E}_{\mathsf{TS}, \vec{a}, j}^{(pq)} \wedge 
                \left(S_{\mathsf{C}, pq, i} = S_{\mathsf{C}, pq, j}\right)
                \right)\\
                &= \sum_{c}
                \Pr_{\vec{a}^{(pq)}} \left(\mathsf{E}_{\mathsf{LS}, \vec{a}, i}^{(p)} \wedge 
                \mathsf{E}_{\mathsf{LS}, \vec{a}, j}^{(p)} \wedge \mathsf{E}_{\mathsf{TS}, \vec{a}, i}^{(pq)} \wedge \mathsf{E}_{\mathsf{TS}, \vec{a}, j}^{(pq)} \wedge 
                \left(S_{\mathsf{C}, pq, i} = S_{\mathsf{C}, pq, j} = c\right)
                \right)\\
                &= \sum_{c}
                \Pr_{\vec{a}^{(pq)}}\left(
                    S_{\mathsf{LS}, i}^{(p)} = s_0, 
                    S_{\mathsf{TS}, i}^{(pq)} = \beta, 
                    S_{\mathsf{C}, pq, i} = c\right) \cdot 
                    \Pr_{\vec{a}^{(pq)}}\left(
                    S_{\mathsf{LS}, j}^{(p)} = s_0, 
                    S_{\mathsf{TS}, j}^{(pq)} = \beta, 
                    S_{\mathsf{C}, pq, j} = c\right)\\
                &> \sum_{c = W^{(C)}}
                \Pr_{\vec{a}^{(pq)}}\left(
                    S_{\mathsf{LS}, i}^{(p)} = s_0, 
                    S_{\mathsf{TS}, i}^{(pq)} = \beta, 
                    S_{\mathsf{C}, pq, i} = c\right) \cdot 
                    \Pr_{\vec{a}^{(pq)}}\left(
                    S_{\mathsf{LS}, j}^{(p)} = s_0, 
                    S_{\mathsf{TS}, j}^{(pq)} = \beta, 
                    S_{\mathsf{C}, pq, j} = c\right),
            \end{aligned}
        \end{equation*}
        where $W^{(C)}$ is the intersection of the Gaussian central windows of 
        the variables $S_{\mathsf{C}, pq, i}$ and $S_{\mathsf{C}, pq, j}$.
        Then, using \cref{lemma: pq_events_analysis}, we have the following
        \begin{equation*}
            \begin{aligned}
                \Pr_{\vec{a}^{(pq)}} \left(\mathsf{E}_{\mathsf{LS}, \vec{a}, i}^{(p)} \wedge 
                \mathsf{E}_{\mathsf{LS}, \vec{a}, j}^{(p)} \wedge \mathsf{E}_{\mathsf{TS}, \vec{a}, i}^{(pq)} 
                \wedge \mathsf{E}_{\mathsf{TS}, \vec{a}, j}^{(pq)} \wedge 
                \mathsf{E}_{\mathsf{C}, pq, \vec{a}, i, j} \right)
                &> \sum_{t = W^{(C)}} \left(
                 \frac{1}{p^2q} \cdot \Theta\left(\frac{1}{\sqrt{n}}\right)
                \right)^2 \\
                &= \sum_{t = W^{(C)}} \Theta\left(\frac{1}{p^4q^2 n}\right) \\
                &= \Theta\left(\frac{1}{p^4q^2\sqrt{n}}\right),
            \end{aligned}
        \end{equation*}
    since the window $W^{(C)}$ has $\Theta(\sqrt{n})$
    many points. This completes the proof of part $[1]$.
    \item Firstly, we observe the following
        \begin{equation*}
            \begin{aligned}
                \mu_{r} &:= \mathbb{E}\left(a_{j}^{(r)}\right) = 
                \frac{r-1}{2} = \Theta(r), \\
                \sigma_{r}^2 &:= \mathsf{Var}\left(a_{j}^{(r)}\right) = \Theta(r^2).
            \end{aligned}
        \end{equation*}
        Similar to the previous part,
        since the vectors $\vec{u}_i$ and $\vec{u}_j$ do not overlap, we have 
        the following
        \begin{equation*}\label{eq: lemma_5_2_1_1}
            \begin{aligned}
                &\Pr_{\vec{a}^{(r)}}\left(S_{\mathsf{TS}, r, i} = S_{\mathsf{TS}, r, j}\right) \\
                &= \sum_{t} 
                \Pr_{\vec{a}^{(r)}}\left(S_{\mathsf{TS}, r, i} = t\right)
                \cdot \Pr_{\vec{a}^{(r)}}\left(S_{\mathsf{TS}, r, j} = t\right).\\
            \end{aligned}
        \end{equation*}
        Let $W_i^{(r)} := w_i \mu_r + \mathcal{O}(\sigma_r \sqrt{w_i})$ and 
        $W_j^{(r)} := w_j \mu_r + \mathcal{O}(\sigma_r \sqrt{w_j})$ be the Gaussian central windows of the random variables $S_{\mathsf{TS}, r, i}$
        and $S_{\mathsf{TS}, r, j}$ respectively. 
        To capture most of the mass, we take the intersection of the two windows. Let $W^{(r)} := 
        W_i^{(r)} \cap W_j^{(r)}$, then we have the following
        \begin{align*}
            &\Pr_{\vec{a}^{(r)}}\left(S_{\mathsf{TS}, r, i} = S_{\mathsf{TS}, r, j}\right) \\
            &= \sum_{t} 
                \Pr_{\vec{a}^{(r)}}\left(S_{\mathsf{TS}, r, i} = t\right)
                \cdot \Pr_{\vec{a}^{(r)}}\left(S_{\mathsf{TS}, r, j} = t\right) \\
            &> \sum_{t \in W^{(r)}} \Pr_{\vec{a}^{(r)}}\left(S_{\mathsf{TS}, r, i} = t\right)
                \cdot \Pr_{\vec{a}^{(r)}}\left(S_{\mathsf{TS}, r, j} = t\right).
        \end{align*}

        Now, using \cref{thm:LCLT} to compute the distributions of the 
        random variables $S_{\mathsf{TS}, r, i}$ and $S_{\mathsf{TS}, r, j}$,
         we have the following 
        \begin{align*}
            \Pr_{\vec{a}^{(r)}}\left(S_{\mathsf{TS}, r, i} = S_{\mathsf{TS}, r, j}\right) &= 
            \sum_{t \in W^{(r)}} \Theta\left(\frac{1}{\sigma_r^2 \sqrt{w_iw_j}}\right).
        \end{align*}
        And, since there are $\Theta(\sigma_r \sqrt{n})$ many points in the window $W^{(r)}$ 
        and $w_i, w_j = \Theta(n)$, we have the desired result. 
        This completes the proof of part $[2]$.
\end{enumerate}

\section{Conclusion and Future Directions}
\begin{strip}
    \centering
    \includegraphics[scale=0.6944]{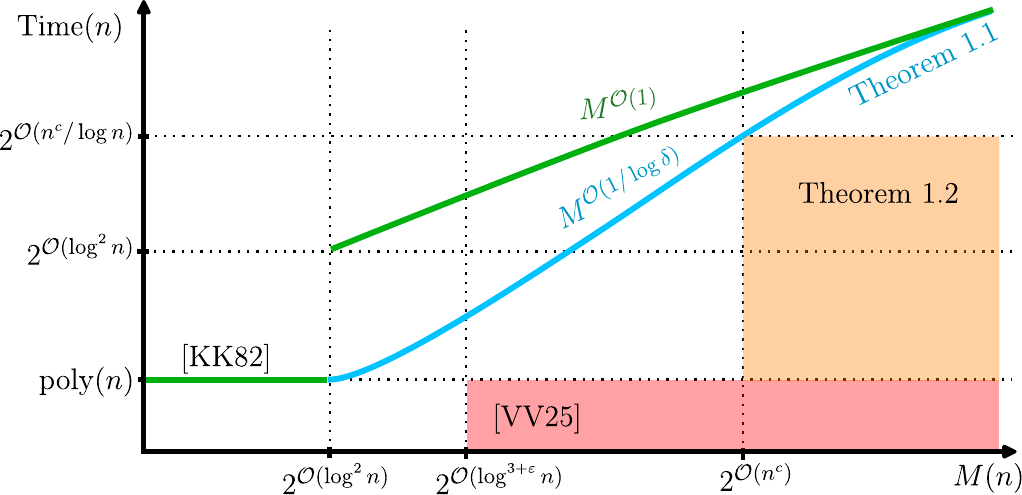}
    \captionof{figure}{Overview of our results. Green lines depict the previous upper bounds for the problem.
    Blue lines depict~\cref{thm:subexponential}.
    The result of Vafa and Vaikuntanathan~\cite{VV25} forbids a polynomial-time (even average case) algorithm for $M = 2^{\Omega(\log^{3+\eps}n)}$. \cref{thm:lowerbound} shows a tight lowerbound in the setting when $M(n) = 2^{\Theta(n^c)}$
    for constant $c \in (0,1)$.
    \label{fig:overview}
    }
\end{strip}
We study the complexity of the \PHES and make progress on two central open
questions: whether one can achieve subexponential-time algorithms and whether
the $\Os(M^{1/3})$ barrier can be improved. Our contributions are
\begin{enumerate}[label=(\roman*)]
    \item Deterministic $\poly(n) \cdot M^{\Oh(1/\log{\delta})}$-time algorithm when $\delta = n/\log{M}$ is the density of the instance (\cref{thm:subexponential}).
    \item Conditional lower bound showing that no $M^{o(1/\log{\delta})}$-time algorithm exists assuming currently based algorithms for lattice problems are tight (\cref{thm:lowerbound}).
    \item  Average-case, $\Oh^{\star}(2^{n/4})$-time algorithms when $M = \Theta(2^n)$ 
    (\cref{thm:average-case-integer} and \cref{thm:average-case-modular})
\end{enumerate}

Figure~\ref{fig:overview} gives a pictorial view of our results.
We also show that no $2^{o(n/\log n)}$-time algorithm exists for \PHES
(conditionally answering open question of Jin~\cite{jin-slides} up to $\log{n}$
factor in the exponent).

Several intriguing directions remain open. Most notably, closing the remaining
gap to fully resolve whether \PHES admits a $2^{o(n)}$-time algorithm remains a
major challenge. Our lower bound relies on a technique of Vafa and
Vaikuntanathan~\cite{VV25}, who showed that no polynomial-time algorithm solves
Number Balancing Problem with $\kappa(n) = 2^{-\Omega(\log^{3+\eps}(n))}$.
Although our lower bound is tight, it is tight only in the setting of
exponential running times and does not improve the lower bound of Vafa and
Vaikuntanathan~\cite{VV25}. In fact, our reduction, when plugged in with their
hypothesis and parameters, would also exclude $\kappa(n) =
2^{-\Oh(\log^{3+\eps}(n))}$-approximation in polynomial time, but it would be
worse because the lower bound of~\cite{VV25} prohibits even average-case
algorithms. Can one improve these lower bounds in the setting of polynomial-time
algorithms for \PHES?

Finally, we ask whether average-case algorithms for \PHES in the regime $\delta
= 1$ can be extended to the worst-case setting. We conjecture that the
structural insights of Jin, Williams, and Zhang~\cite{JWZ25} and our analysis in
the average-case setting can be combined to yield an $\Os(2^{n/4})$-time
worst-case algorithm. Also, can we improve the average-case running time exponent of $\Os(2^{n/4})$ below $1/4$?

\appendix

\section{Sampling Lattice Points}\label{sec:sampling}

\begin{definition}[Discrete Gaussian Distribution over a Lattice]
For any vector $\mat c \in \R^n$, real $s > 0$ and a lattice $\mathcal{L} \subset \R^n$, the probability distribution $\mathcal{D}_{\mathcal{L}, s,\mat c}$ over $\mathcal{L}$ is defined as
\[\mathcal{D}_{\mathcal{L}, s, \mat c}(x) = \frac{\rho_{s, \mat c}(x)}{\rho_{s,\mat c}(\mathcal{L})}, \;\;\; \forall \mat x \in \mathcal{L}\]
where $\rho_{s, \mat c}(\mat x) = e^{-\pi \norm{(\mat x-\mat c)/s}^2}$ and $\rho_{s,\mat c}(\mathcal{L}) = \sum_{\mat x \in \mathcal{L}} \rho_{s,\mat c}(\mat x)$.
\end{definition}

For simplicity, we use $\mathcal{D}_{\L, s}$ to denote $\mathcal{D}_{\L, s,\mat 0}$ and $\rho_{s}$ to denote $\rho_{s,\mat 0}$.
\begin{definition}[\cite{MR07}]
    For any $n$-dimensional lattice $\L$ and positive real $\epsilon>0$, the smoothing parameter $\eta_{\epsilon}(\L)$ is the smallest real $s>0$ such that $\rho_{1/s} (\L^* \setminus \{0\}) \leq \epsilon$. Here, $\L^* = \{\mat y \in \R^n \;|\: \langle \mat x, \mat y \rangle \in \Z, \forall \mat x \in \L\}$ is the dual lattice of $\L$.
\end{definition}

\begin{theorem}[{\cite[Lemma 3.1]{GPV08}}]
    \label{thm:smoothing-parameter-bound}
    For any $n$-dimensional lattice $\L$ spanned by a basis $\mathbf{B}$ and positive real $\epsilon>0$, we have
    \[\eta_{\epsilon}(\L) \leq B \sqrt{\frac{\log(2n(1 + 1/\epsilon))}{\pi}}\]
    where $B = \norm{\mathbf{B}}_2$. In particular, for $\epsilon = 2^{-n}$, we have $\eta_{\epsilon}(\L) \leq 2B\sqrt{n}$.
\end{theorem}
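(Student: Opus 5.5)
This statement is the standard GPV bound on the smoothing parameter, so I would go through a Gram--Schmidt argument rather than attack $\rho_{1/s}(\L^*\setminus\{0\})$ directly. Let $\widetilde{\mathbf{B}}=[\tilde{\mathbf b}_1,\dots,\tilde{\mathbf b}_n]$ be the Gram--Schmidt orthogonalization of $\mathbf B$. Since $\norm{\tilde{\mathbf b}_i}_2\le\norm{\mathbf b_i}_2$, we have $\max_i\norm{\tilde{\mathbf b}_i}_2\le B$. It therefore suffices to prove the sharper claim
\[
\eta_\epsilon(\L)\le \max_i\norm{\tilde{\mathbf b}_i}_2\cdot\sqrt{\log(2n(1+1/\epsilon))/\pi}.
\]

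\textbf{Step 1: dual basis.} Let $\mathbf D=\mathbf B^{-\top}$ be the dual basis, and let $\mathbf d_n,\dots,\mathbf d_1$ be its columns taken in \emph{reverse} order. The Gram--Schmidt vectors of this reversed dual basis satisfy $\tilde{\mathbf d}_i=\tilde{\mathbf b}_i/\norm{\tilde{\mathbf b}_i}^2$, so $\norm{\tilde{\mathbf d}_i}=1/\norm{\tilde{\mathbf b}_i}\ge 1/\max_j\norm{\tilde{\mathbf b}_j}$.

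\textbf{Step 2: Gaussian mass of the dual lattice.} I would decompose $\L^*$ layer by layer along the dual Gram--Schmidt directions. For a rank-one lattice $\Z\mathbf{v}$, possibly shifted by any coset, the mass is controlled by the elementary bound $\rho_{1/s}(\Z\cdot t+c)\le\rho_{1/s}(\Z\cdot t)$, valid for any shift $c$. Inducting over the layers then gives
\[
\rho_{1/s}(\L^*)\le\prod_{i=1}^n\rho_{1/s}(\norm{\tilde{\mathbf d}_i}\Z)=\prod_{i=1}^n\rho_{\norm{\tilde{\mathbf b}_i}/s}(\Z).
\]
Next I would set $s=\max_i\norm{\tilde{\mathbf b}_i}\sqrt{\log(2n(1+1/\epsilon))/\pi}$. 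With this choice every factor is at most $1+2\sum_{k\ge1}e^{-\pi s^2k^2/\norm{\tilde{\mathbf b}_i}^2}\le 1+\epsilon/n$. This uses $e^{-\pi s^2/\norm{\tilde{\mathbf b}_i}^2}\le 1/(2n(1+1/\epsilon))$ together with a geometric tail bound.

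\textbf{Step 3: conclusion.} Combining, $\rho_{1/s}(\L^*)\le(1+\epsilon/n)^n\le e^{\epsilon}$. This is slightly weak, so I would tune it: choose the per-factor bound $1+\epsilon/(2n)$, using the slack in the constant $2$ inside the logarithm, so that $(1+\epsilon/(2n))^n-1\le\epsilon$ for $\epsilon\le1$. This yields $\rho_{1/s}(\L^*\setminus\{0\})\le\epsilon$ and hence $\eta_\epsilon(\L)\le s$.

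\textbf{The specialization.} For $\epsilon=2^{-n}$ we get $\log(2n(1+2^n))\le n+\log(4n)+1$, so $\sqrt{(\cdot)/\pi}\le 2\sqrt n$ for $n\ge1$, as claimed.

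\textbf{Main obstacle.} The delicate step is the coset-shift inequality in Step 2, i.e.\ that shifting a one-dimensional Gaussian sum never increases its mass. I would prove it via Poisson summation: $\rho_{s}(\Z+c)=s\sum_k\rho_{1/s}(k)\cos(2\pi kc)\le\rho_s(\Z)$. The layered decomposition of $\L^*$ then reduces the full sum to these one-dimensional sums.
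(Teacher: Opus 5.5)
The paper gives no proof of this statement; it imports it from GPV (Lemma 3.1). Your Steps 1--2 are the standard argument and are fine. The reversed dual basis has $\norm{\tilde{\mathbf d}_i}=1/\norm{\tilde{\mathbf b}_i}$, and the one-dimensional coset bound follows from Poisson summation as you say. Running the layer-by-layer induction over arbitrary cosets of the sublattice then gives $\rho_{1/s}(\L^*)\le\prod_i\rho_{\norm{\tilde{\mathbf b}_i}/s}(\Z)$. The specialization to $\epsilon=2^{-n}$ is also correct.

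\textbf{The gap is in Step 3.} Bounding each factor by $1+\epsilon/n$ yields only $\rho_{1/s}(\L^*\setminus\{0\})\le(1+\epsilon/n)^n-1$, which is strictly larger than $\epsilon$ for every $n\ge 2$. Your repair is not available, because the $2$ inside the logarithm is not slack: it is exactly consumed by the two tails $\pm k$ of $\Z$. Read $\log$ as $\ln$ (the sharp form of the bound) and take $\L=\Z^n$. Then $\delta:=e^{-\pi s^2/\norm{\tilde{\mathbf b}_i}^2}=\epsilon/(2n(1+\epsilon))$, so every factor is at least $1+2\delta=1+\epsilon/(n(1+\epsilon))$. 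This exceeds $1+\epsilon/(2n)$ whenever $\epsilon<1$, which is exactly the range where you wanted to use it. You also leave $\epsilon>1$ untreated, although the statement covers all $\epsilon>0$.

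\textbf{The fix.} The real slack is the factor $1+\epsilon$ in $1+1/\epsilon=(1+\epsilon)/\epsilon$, which you discarded when simplifying to $1+\epsilon/n$. If you keep it, the geometric tail bounds each factor by $1+2\delta/(1-\delta)\le 1+\frac{2\epsilon}{2n+(2n-1)\epsilon}$. Hence
\[
\rho_{1/s}(\L^*)\le\exp\Bigl(\frac{2n\epsilon}{2n+(2n-1)\epsilon}\Bigr)\le\exp\Bigl(\frac{2\epsilon}{2+\epsilon}\Bigr)\le 1+\epsilon .
\]
The middle inequality uses $n\ge 1$. The last uses $\ln(1+\epsilon)\ge 2\epsilon/(2+\epsilon)$: both sides agree at $\epsilon=0$, and the derivative of the difference is $\epsilon^2/((1+\epsilon)(2+\epsilon)^2)\ge 0$.

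This gives $\rho_{1/s}(\L^*\setminus\{0\})\le\epsilon$ for every $\epsilon>0$ with your original choice of $s$, with no tuning needed. Since $\log_2\ge\ln$, reading the logarithm in base $2$ (the paper's convention) only enlarges $s$, so the stated bound follows a fortiori.
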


\begin{theorem}[Discrete Gaussian sampling from a basis~{\cite[Theorem~4.1]{GPV08}}]\label{thm:dgs_basis}
Let $\mathcal{L} \subset \R^n$ be a full-rank lattice and let $\mathbf{B}=(\mat b_1,\dots, \mat b_n)$ be any basis of $\mathcal{L}$.
Let $\widetilde{\mathbf{B}}=(\widetilde{\mat b}_1,\dots,\widetilde{ \mat b}_n)$ be the Gram--Schmidt orthogonalization of $\mathbf{B}$. Fix any center $c\in\R^n$ and any parameter
\[
s \;\ge\; \|\widetilde{\mathbf{B}}\|_2\cdot \omega(\sqrt{\log n}).
\]
Then there exists a randomized algorithm $\mathsf{SampleD}(B,s,c)$ running in time
$\poly(n,\mathrm{bitlength}(\mathbf{B}),\log s)$ 
that outputs a vector $\mat x\in\mathcal{L}$ whose distribution is within
negligible statistical distance of the discrete Gaussian distribution $\mathcal{D}_{\L,s,\mat c}$ over $\L$.
\end{theorem}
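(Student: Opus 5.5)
We need to prove the GPV08 sampling theorem (discrete Gaussian sampling from a basis). The plan follows the Klein/GPV randomized nearest-plane algorithm.

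\medskip

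We prove the statement by analyzing the randomized nearest-plane procedure of Klein, as in~\cite{GPV08}. Given $\mathbf{B}$, its Gram--Schmidt vectors $\widetilde{\mat b}_1,\dots,\widetilde{\mat b}_n$ and a center $\mat c$, the procedure $\mathsf{SampleD}$ sets $\mat c_n \coloneqq \mat c$ and $\mat v_n \coloneqq \mat 0$. It then iterates $i = n, n-1, \dots, 1$. In step $i$ it computes $c_i' \coloneqq \langle \mat c_i, \widetilde{\mat b}_i\rangle / \norm{\widetilde{\mat b}_i}_2^2$ and $s_i' \coloneqq s/\norm{\widetilde{\mat b}_i}_2$. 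It samples an integer $z_i \sim \mathcal{D}_{\Z, s_i', c_i'}$ and updates $\mat c_{i-1} \coloneqq \mat c_i - z_i \mat b_i$ and $\mat v_{i-1} \coloneqq \mat v_i + z_i \mat b_i$. The output is $\mat v_0 \in \L$. The running time claim is routine. Gram--Schmidt costs $\poly(n,\mathrm{bitlength}(\mathbf{B}))$. Each one-dimensional sample is obtained by rejection sampling from $\Z \cap [c_i' - s_i'\omega(\sqrt{\log n}),\, c_i' + s_i'\omega(\sqrt{\log n})]$, which has expected $\Oh(\omega(\sqrt{\log n}))$ trials and negligible truncation loss by the Gaussian tail on $\Z$. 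The dependence on $\log s$ enters only through arithmetic on numbers of that size.

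For correctness, fix an output $\mat v = \sum_i z_i \mat b_i \in \L$. Its coefficient vector $(z_n,\dots,z_1)$ is determined by $\mat v$, so we can compute the probability that the procedure outputs $\mat v$. By orthogonality of the Gram--Schmidt vectors, the $i$-th factor equals $\rho_{s}$ of the component of $\mat v - \mat c$ along $\widetilde{\mat b}_i$, divided by the one-dimensional normalizer $\rho_{s_i', c_i'}(\Z)$. Multiplying over $i$, the numerators combine into $\rho_{s,\mat c}(\mat v)$, using Pythagoras on the decomposition $\mat v - \mat c = \sum_i \langle \mat v - \mat c, \widetilde{\mat b}_i\rangle \widetilde{\mat b}_i/\norm{\widetilde{\mat b}_i}^2$. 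Hence
\[
\Pr[\text{output } \mat v] = \frac{\rho_{s,\mat c}(\mat v)}{\prod_{i} \rho_{s_i', c_i'}(\Z)},
\]
where each $c_i'$ depends on $\mat v$ only through $z_n,\dots,z_{i+1}$.

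The main obstacle is showing that the denominator is essentially independent of $\mat v$. For this we use the smoothing parameter of the one-dimensional lattice $\Z$. Since $s_i' \ge s/\norm{\widetilde{\mathbf{B}}}_2 \ge \omega(\sqrt{\log n}) \ge \eta_\epsilon(\Z)$ for some negligible $\epsilon$, the standard Poisson-summation fact from~\cite{MR07} gives $\rho_{s_i',c}(\Z) \in [\tfrac{1-\epsilon}{1+\epsilon},1]\cdot \rho_{s_i'}(\Z)$ for every shift $c$. Here we use the one-dimensional instance of the bound $\eta_\epsilon(\Z)\le \sqrt{\log(2(1+1/\epsilon))/\pi}$, as in \cref{thm:smoothing-parameter-bound}. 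The product of the denominators is therefore within a factor $(\tfrac{1-\epsilon}{1+\epsilon})^{n} = 1 - \mathrm{negl}(n)$ of a constant $Z$ that does not depend on $\mat v$. So for every $\mat v$ the output probability lies in $[1-\mathrm{negl},\,1+\mathrm{negl}]\cdot \rho_{s,\mat c}(\mat v)/Z$. Normalizing over $\L$ shows that the output distribution is pointwise within a $1\pm\mathrm{negl}(n)$ factor of $\mathcal{D}_{\L,s,\mat c}$. This implies negligible statistical distance, and we add the negligible error from truncating the one-dimensional samplers.
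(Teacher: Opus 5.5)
The paper gives no proof of this statement: it imports it verbatim from \cite{GPV08}. Your argument is a correct reconstruction of exactly the GPV proof, namely Klein's randomized nearest-plane sampler, the identity $\Pr[\text{output }\mathbf{v}] = \rho_{s,\mathbf{c}}(\mathbf{v})/\prod_i \rho_{s_i',c_i'}(\mathbb{Z})$, and the smoothing bound $\rho_{s_i',c}(\mathbb{Z}) \in [\tfrac{1-\epsilon}{1+\epsilon},1]\cdot\rho_{s_i'}(\mathbb{Z})$ for $s_i' \ge \eta_\epsilon(\mathbb{Z})$.

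One detail to tidy: your rejection sampler runs in expected polynomial time, whereas the statement claims a worst-case polynomial bound. To match it, abort each one-dimensional sampler after, say, $n\cdot\omega(\sqrt{\log n})$ trials; this adds only an $e^{-\Omega(n)}$ term to the statistical distance.
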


The goal of this section is to prove the following theorem:

\thmsampling*

\begin{proof}[Proof of~\cref{thm:wide_sampling_shift_invariant}]
We describe the sampler and then prove the claimed properties.

\paragraph{The sampler.}
Sample $\mathbf{x}^{(1)},\ldots,\mathbf{x}^{(N)}$ independently from the discrete Gaussian distribution
$\mathcal{D}_{\L,s}$ over $\L$, and output these samples.
This can be performed using the algorithm in \cref{thm:dgs_basis} in time
$\poly(n,\mathrm{bitlength}(\mathbf{B}), \log(s))$ given a basis $\mathbf{B}$ of $\L$ and $s$ which is clearly satisfy the theorem condition.
Hence $N$ independent samples can be generated in time $N\cdot \poly(n,\mathrm{bitlength}(\mathbf{B}), \log(s))$.

\paragraph{Support in an $\ell_\infty$-ball.}
Fix any single sample $x \leftarrow \mathcal{D}_{\L,s}$.
We bound $\Pr[\|x\|_\infty > R]$. A standard tail bound for (discrete) Gaussian measures\footnote{Refer~{\cite[Theorem~12]{DD18}}} implies that for any $t\ge 1$,
\begin{equation*}\label{eq:gauss_tail_1d}
\Pr\big[ \norm{\mat x}_2 \ge t s \big] \;\le\; 2 e^{-\pi t^2}
\end{equation*}

Therefore, for $N$ independent samples $\mathbf{x}^{(1}),\ldots,\mathbf{x}^{(N)}$,
another union bound gives
\begin{align*}
&\Pr\big[ \exists j\in[N]: \|\mathbf{x}^{(j)}\|_\infty > R \big] \\
&\le \Pr\big[ \exists j\in[N]: \|\mathbf{x}^{(j)}\|_\infty > \sqrt{n\log(N)}s \big]\\
&\le \Pr\big[ \exists j\in[N]: \|\mathbf{x}^{(j)}\|_2 > \sqrt{n\log(N)}s \big]\\
&\le 2N\,e^{-\pi n\log(N)}
\;\le \;
2^{-n},
\end{align*}
and hence (except with negligible probability) all outputs lie in $\L\cap \Ball_\infty(0,R)$.

\end{proof}

\begin{lemma}
\label{lem:stat-distance-DGS}
Let $\D$ denote the distribution $\mathcal{D}_{\L, s}$ and $\mat v^*$ be a shortest vector of $\L$. Then, the statistical distance between $\D$ and $\D + \mat v^\ast$ is upper bounded by
\[
\Delta\!\left(\D,\ \D+ \mat v^\ast\right)
\;\le\; \sqrt{\frac{\pi}{2}}\,\frac{\|\mat v^*\|}{s}.
\]
\end{lemma}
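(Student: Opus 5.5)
The plan is to bound the statistical distance through the Kullback--Leibler divergence and then apply Pinsker's inequality. Write $\D = \mathcal{D}_{\L,s}$ and $\mat v = \mat v^*$. Because $\mat v \in \L$, the map $\mat x \mapsto \mat x + \mat v$ is a bijection of $\L$, so $\D + \mat v = \mathcal{D}_{\L,s,\mat v}$. Its normalizer satisfies $\rho_{s,\mat v}(\L) = \rho_s(\L - \mat v) = \rho_s(\L)$. Hence the two distributions share the same partition function, and the likelihood ratio is explicit:
\[
\log\frac{\D(\mat x)}{\D(\mat x - \mat v)} \;=\; -\frac{\pi}{s^2}\left(\norm{\mat x}^2 - \norm{\mat x - \mat v}^2\right) \;=\; -\frac{\pi}{s^2}\left(2\langle \mat x, \mat v\rangle - \norm{\mat v}^2\right).
\]
This removes any need for smoothing-parameter arguments, since the normalizing constants cancel exactly.

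The next step is to compute the divergence. Taking the expectation over $\mat x \sim \D$ gives
\[
\mathrm{KL}(\D \,\|\, \D + \mat v) = \frac{\pi}{s^2}\left(\norm{\mat v}^2 - 2\,\mathbb{E}_{\mat x\sim\D}\langle \mat x, \mat v\rangle\right).
\]
Since $\mathcal{D}_{\L,s}$ is centered at the origin and $\L = -\L$, it is symmetric under $\mat x \mapsto -\mat x$. Therefore $\mathbb{E}\langle \mat x, \mat v\rangle = 0$, and the divergence equals $\pi\norm{\mat v}^2/s^2$. The expectation is finite because of the Gaussian tails, so the linear term is well defined.

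Pinsker's inequality, $\Delta \le \sqrt{\mathrm{KL}/2}$, then yields $\Delta(\D, \D + \mat v) \le \sqrt{\pi/2}\,\norm{\mat v}/s$, which is exactly the claimed bound.

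The main point to verify is the cancellation of the normalizers, which relies on $\mat v \in \L$. This is what allows the argument to avoid assuming $s \ge \eta_\eps(\L)$. I would also check that the symmetry argument is applied to the exact discrete Gaussian. The statement is about $\mathcal{D}_{\L,s}$ itself rather than the output of the sampler, so this holds.
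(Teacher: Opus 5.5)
Your proposal is correct and is essentially the paper's own argument: Pinsker's inequality, exact cancellation of the normalizers because $\mat v^* \in \L$, and symmetry of $\mathcal{D}_{\L,s}$ to kill the linear term, giving a Kullback--Leibler divergence of exactly $\pi\norm{\mat v^*}^2/s^2$. Your remark that only $\mat v^* \in \L$ is used, so that no smoothing-parameter condition on $s$ is needed, is also accurate.
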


\begin{proof}
We bound $\Delta(\D,\D+ \mat v^*)$ for a shortest nonzero vector $\mat v^* \in\L$ given by Pinsker's inequality,
\[
\Delta(\D,\D+ \mat v^*)\le \sqrt{\frac12 \mathcal{D}_{\mathrm{KL}}(\D\|\D+ \mat v^*)}.
\]
where,
\begin{align*}
    \mathcal{D}_{\mathrm{KL}}(\D\|\D+ \mat v^*) &=
    \sum_{\mat x\in\L} \D( \mat x)\log\frac{\D(\mat x)}{\D(\mat x - \mat v^*)} \\
    &= \frac{\pi}{s^2}\sum_{\mat x\in\L} \D(\mat x)\bigl(\|\mat x- \mat v^*\|^2-\|\mat x\|^2\bigr),
\end{align*}
where we used that fact that $\rho_s(\L) = \rho_s(\L +\mat v^*)$. Since $\|\mat x-\mat v^*\|^2-\|\mat x\|^2=-2\langle \mat x, \mat v^*\rangle+\|\mat v^*\|^2$ and $\mathcal{D}_{\L,s}$ is symmetric, we have $\underset{\mat x\sim \D}{\mathbb{E}}[\mat x]=0$. Hence $ \mathcal{D}_{\mathrm{KL}}(\D\|\D+ \mat v^*)=\frac{\pi \|\mat v^*\|^2}{s^2}$. Therefore
\[
\Delta(\mathcal{D}_{\L,s},\mathcal{D}_{\L,s}+ \mat v^*)\le \sqrt{\frac{\pi}{2}}\,\frac{\|\mat v^*\|}{s}.
\]
\end{proof}

\section{Proof of \cref{lem:short-basis}}
\label{sec:proof-short-basis}

In this section, we prove the following statement. 

\lemshortbasis*

This statement is folklore and follows from the previous works. As we did not
find the statement with a bound on $\lambda_n$, we include it for completeness. We
start by showing that a $\gamma(n)$ approximate SVP algorithm can be used to
build a $\sqrt{n} \gamma(n)$ approximate SIVP algorithm.

\begin{theorem}
\label{thm:SIVP-SVP}
Let $\gamma : \N \to \R_{\ge 1}$ be non-decreasing. Suppose there is an oracle that, on input any rank-$m$ lattice $\mathcal{M}$, returns a nonzero vector $\mat x \in \mathcal{M}$ such that
\[
\|\mat x\|_2 \le \gamma(m)\,\lambda_1(\mathcal{M}).
\]
Then there is a polynomial-time algorithm that, given an $n$-dimensional lattice $\mathcal{L}$, makes $n$ oracle calls and outputs $n$ linearly independent vectors $\mat b_1,\dots,\mat b_n \in \mathcal{L}$ such that
\[
\|\mat b_i\|_2 \le \sqrt{n}\,\gamma(n)\,\lambda_n(\mathcal{L})
\qquad\text{for all } i\in[n].
\]
In particular, this gives a polynomial-time reduction from $\gamma(n)$-SVP to $\sqrt{n}\,\gamma(n)$-SIVP.
\end{theorem}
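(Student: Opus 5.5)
The plan is the standard iterative construction of a full-rank sublattice by repeated approximate-SVP calls on projected lattices, in the spirit of the SVP-to-SIVP reduction of Micciancio. We maintain linearly independent lattice vectors $\mat b_1,\dots,\mat b_{i}$. At step $i+1$, let $\pi_i$ denote the orthogonal projection onto $\mathrm{span}(\mat b_1,\dots,\mat b_i)^{\perp}$, and let $\mathcal{M}_i \coloneqq \pi_i(\L)$. This is a lattice of rank $n-i$, and a basis for it is computable in polynomial time from a basis of $\L$ and the $\mat b_j$'s, for instance by completing the primitive closure of $\mat b_1,\dots,\mat b_i$ to a basis and projecting. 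We call the oracle on $\mathcal{M}_i$ to obtain a nonzero $\mat y\in\mathcal{M}_i$ with $\|\mat y\|\le \gamma(n-i)\lambda_1(\mathcal{M}_i)\le\gamma(n)\lambda_1(\mathcal{M}_i)$. We then lift $\mat y$ to a lattice vector $\mat b_{i+1}\in\L$ with $\pi_i(\mat b_{i+1})=\mat y$, reducing the component in $\mathrm{span}(\mat b_1,\dots,\mat b_i)$ by Babai nearest-plane with respect to the Gram--Schmidt vectors $\widetilde{\mat b}_1,\dots,\widetilde{\mat b}_i$ of the lattice $\L\cap\mathrm{span}(\mat b_1,\dots,\mat b_i)$. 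Since $\mat y\neq 0$, the new vector is independent of the previous ones, so after $n$ calls we hold $n$ independent vectors.

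The two estimates are as follows. First, $\lambda_1(\mathcal{M}_i)\le\lambda_n(\L)$. Among $n$ independent lattice vectors of length at most $\lambda_n(\L)$, at least one lies outside the $i$-dimensional span (for $i<n$), so its projection is a nonzero vector of $\mathcal{M}_i$ of length at most $\lambda_n(\L)$. Second, the lift satisfies
\[
\|\mat b_{i+1}\|^2\le \|\mat y\|^2+\tfrac14\sum_{j\le i}\|\widetilde{\mat b}_j\|^2 .
\]
Each Gram--Schmidt length $\|\widetilde{\mat b}_j\|$ of the sublattice is at most $\|\pi_{j-1}(\mat b_j)\|\le\gamma(n)\lambda_n(\L)$. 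To make this hold, we use for the sublattice the basis $\mat b_1,\dots,\mat b_i$ itself, working with $\L_i=\mathcal{L}(\mat b_1,\dots,\mat b_i)$ throughout, so that its Gram--Schmidt vectors are exactly the projections $\pi_{j-1}(\mat b_j)$. Combining these gives
\[
\|\mat b_{i+1}\|^2\le\gamma(n)^2\lambda_n(\L)^2\bigl(1+i/4\bigr)\le n\,\gamma(n)^2\lambda_n(\L)^2,
\]
which is the claimed $\sqrt n\,\gamma(n)$ bound.

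The main obstacle is the subtlety just flagged. If we project using the span of the $\mat b_j$'s but lift using only the sublattice they generate, then $\pi_i(\L)$ need not equal the projection relevant to that sublattice. We would try first to argue with $\mathcal{M}_i=\pi_i(\L)$ and lift $\mat y$ to some $\mat x\in\L$ with $\pi_i(\mat x)=\mat y$. We then reduce $\mat x$ modulo the sublattice $\L_i$ via nearest plane with Gram--Schmidt vectors $\pi_{j-1}(\mat b_j)$; this is legitimate because $\L_i\subseteq\L$, the reduction keeps $\mat x$ in $\L$, and it does not change $\pi_i(\mat x)$. The lift bound then holds with $\L_i$'s Gram--Schmidt vectors, each bounded by $\gamma(n)\lambda_n(\L)$ as shown. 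The remaining routine points are the polynomial-time computation of a basis of $\pi_i(\L)$ (via HNF/LLL on the projected generating set) and the monotonicity of $\gamma$.
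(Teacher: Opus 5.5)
Your proposal is correct and follows essentially the same route as the paper: iterated oracle calls on the projected lattices $\pi_i(\mathcal{L})$, a lift of each oracle output reduced modulo the sublattice generated by $\mat b_1,\dots,\mat b_i$, and the bound $\|\mat b_{i+1}\|^2 \le \|\mat y\|^2 + \tfrac14\sum_{j}\|\widetilde{\mat b}_j\|^2$ with each $\|\widetilde{\mat b}_j\| \le \gamma(n)\lambda_n(\mathcal{L})$. Your Babai nearest-plane step, i.e.\ sequential size reduction, is in fact the right way to get $|\mu_{i,j}|\le 1/2$; the paper's literal description, which rounds all coefficients $\alpha_j$ simultaneously, does not by itself guarantee that bound.
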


\begin{proof}
We construct the vectors $\mat b_1,\dots,\mat b_n$ iteratively. Assume that $\mat b_1,\dots, \mat b_{i-1}\in\mathcal{L}$ have already been constructed and are
linearly independent. Let $W_{i-1}:=\mathsf{span}(\mat b_1,\dots,\mat b_{i-1})$, and let $\pi_i : \R^n \to W_{i-1}^{\perp}$ be the orthogonal projection onto the orthogonal complement of $W_{i-1}$. Consider the projected lattice
\[
\mathcal{L}_i := \pi_i(\mathcal{L}) \subseteq W_{i-1}^{\perp}.
\]
Since $\dim(W_{i-1}^{\perp})=n-i+1$, the lattice $\mathcal{L}_i$ has rank $n-i+1$. We call the $\gamma$-SVP oracle on $\mathcal{L}_i$, obtaining a nonzero vector
$\mat y_i \in \mathcal{L}_i$ such that
\[
\|\mat y_i\|_2 \le \gamma(n-i+1)\,\lambda_1(\mathcal{L}_i).
\]
Choose any lift $\mat x_i \in \mathcal{L}$ such that $\pi_i(\mat x_i)= \mat y_i$. Now size-reduce $\mat x_i$ with respect to $\mat b_1,\dots, \mat b_{i-1}$, i.e., write $\mat x_i = \sum_{j=1}^{i-1} \alpha_j \mat b_j + \mat r_i,$ where $\mat r_i \in \R^n$, and let $\mat b_i := \mat x_i - \sum_{j=1}^{i-1} \lfloor \alpha_j \rceil\, \mat b_j$. Then $\mat b_i \in \mathcal{L}$, and since subtracting vectors in $W_{i-1}$ does not change the projection onto $W_{i-1}^{\perp}$, we still have $\pi_i(\mat b_i)=\mat y_i$. In particular, $\mat y_i\neq 0$ implies $\mat b_i\notin W_{i-1}$, so $\mat b_1,\dots, \mat b_i$ are linearly independent.

Let $\mat b_1^*,\dots,\mat b_i^*$ denote the Gram--Schmidt orthogonalization of
$\mat b_1,\dots,\mat b_i$. By construction, $\mat b_i^*=\pi_i(\mat b_i)= \mat y_i$, and the size-reduction step ensures that the Gram--Schmidt coefficients satisfy
\[
|\mu_{i,j}| \le \frac12
\qquad\text{for all } j<i.
\]

We now bound $\|\mat b_i^*\|_2$. By definition of $\lambda_i(\mathcal{L})$, there exist
$i$ linearly independent vectors $\mat v_1,\dots,\mat v_i\in\mathcal{L}$ such that $\|\mat v_t\|_2 \le \lambda_i(\mathcal{L})$ for all $t\in[i]$. Since $W_{i-1}$ has dimension $i-1$, not all of these $i$ vectors can lie in $W_{i-1}$. Hence there exists $t\in[i]$ such that $\pi_i(\mat v_t)\neq 0$. Therefore, $\pi_i(\mat v_t)$ is a nonzero vector of $\mathcal{L}_i$, and so
\[
\lambda_1(\mathcal{L}_i) \le \|\pi_i(\mat v_t)\|_2 \le \|\mat v_t\|_2 \le \lambda_i(\mathcal{L}).
\]
Using the oracle guarantee and the monotonicity of $\gamma$, we obtain
\begin{align*}
    \|\mat b_i^*\|_2
    = \|\mat y_i\|_2
    &\le \gamma(n-i+1)\,\lambda_1(\mathcal{L}_i),\\
    &\le \gamma(n)\,\lambda_i(\mathcal{L}),\\
    &\le \gamma(n)\,\lambda_n(\mathcal{L}).
\end{align*}
Finally, since $\mat b_i = \mat b_i^* + \sum_{j=1}^{i-1} \mu_{i,j} \mat b_j^*$, the orthogonality of the Gram-Schmidt vectors gives
\[
\|\mat b_i\|_2^2
= \|\mat b_i^*\|_2^2 + \sum_{j=1}^{i-1} \mu_{i,j}^2 \|\mat b_j^*\|_2^2.
\]
Using $|\mu_{i,j}|\le 1/2$ and $\|\mat b_j^*\|_2 \le \gamma(n)\lambda_n(\mathcal{L})$,
we get
\begin{align*}
    \|\mat b_i\|_2^2 &\le \gamma(n)^2\lambda_n(\mathcal{L})^2 + 
    \frac14(i-1)\gamma(n)^2\lambda_n(\mathcal{L})^2,\\
    & \le n\,\gamma(n)^2\lambda_n(\mathcal{L})^2.
\end{align*}
Therefore,
$
\|\mat b_i\|_2 \le \sqrt{n}\,\gamma(n)\,\lambda_n(\mathcal{L})
$
for all $i \in [n]$
\end{proof}

Next, we use the best algorithms for the approximate Shortest Vector Problem.
This follows from the block reduction introduced by Schnorr~\cite{schnorr1987hierarchy}.

\begin{theorem}[\cite{schnorr1987hierarchy,schnorr1994lattice,GN08,ALNS20}] \label{thm:ALS-approx-SVP}
Let $\kappa \ge 1$. There exists a randomized algorithm that, given an $n$-dimensional lattice $\mathcal{L}$, computes a nonzero vector $\mat v \in \mathcal{L}$ in time $2^{\Oh(n/\kappa)}$ such that
\[
\|\mat v\|_2 \leq n^{\kappa} \cdot \lambda_1(\mathcal{L}) .
\]
\end{theorem}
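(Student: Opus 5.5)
We prove the theorem with Schnorr's block-reduction paradigm, in its slide-reduction form due to Gama and Nguyen~\cite{GN08}. An exact SVP solver in dimension $k$ is used as a subroutine, and the block size is chosen as $k = \Theta(n/\kappa)$.

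\textbf{Step 1 (exact SVP in small dimension).} We use the randomized sieving algorithm of Ajtai, Kumar and Sivakumar~\cite{AKS01}, or any of its $2^{\Oh(k)}$-time successors such as~\cite{ALNS20}. On a $k$-dimensional lattice it returns a shortest nonzero vector with overwhelming probability in time $2^{\Oh(k)}\cdot\poly(\text{input size})$. The randomness of the theorem comes only from this step.

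\textbf{Step 2 (slide reduction).} We run the slide-reduction algorithm of~\cite{GN08} with block size $k$, which may be assumed to divide $n$ after padding or adjusting $k$ by a constant factor. Each tour calls the $k$-dimensional SVP oracle on projected blocks and on their duals. The algorithm terminates after polynomially many tours, using the standard potential argument: the potential $\prod_i \|\mathbf{b}_i^*\|^{n-i}$ decreases by a constant factor whenever a block is not reduced, after the usual relaxation by a factor $1+\epsilon$. It outputs a basis whose first vector satisfies
\[
\|\mathbf{b}_1\|_2 \le \left((1+\epsilon)\gamma_k\right)^{\frac{n-k}{k-1}}\lambda_1(\mathcal{L}),
\]
where $\gamma_k$ is Hermite's constant. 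Since $\gamma_k \le k$, this gives $\|\mathbf{b}_1\|_2 \le (2k)^{(n-k)/(k-1)}\lambda_1(\mathcal{L})$. The total running time is $\poly(n)\cdot 2^{\Oh(k)}$.

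\textbf{Step 3 (parameter choice).} Set $k \coloneqq \min\{n, \lceil 2n/\kappa\rceil + 1\}$.

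If $k=n$, the single block is the whole lattice. We then solve SVP exactly in time $2^{\Oh(n)}$, which is $2^{\Oh(n/\kappa)}$ when $\kappa = \Oh(1)$; otherwise we are not in this case.

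If $k<n$, then $(n-k)/(k-1) \le n/(k-1) \le \kappa/2$. The approximation factor is therefore at most $(2k)^{\kappa/2} \le (2n)^{\kappa/2} \le n^{\kappa}$ for $n\ge 2$.

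When $\kappa \ge n/\log n$ the block size is $\Oh(\log n)$, so LLL together with polynomial-size SVP calls already suffices in polynomial time. Note also that LLL alone gives $2^{(n-1)/2} \le n^\kappa$ once $\kappa \ge n/(2\log n)$.

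The running time is $\poly(n)\cdot 2^{\Oh(2n/\kappa)} = 2^{\Oh(n/\kappa)}$.

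\textbf{Main obstacle.} The delicate point is establishing the slide-reduction guarantee itself: both the polynomial bound on the number of tours and the inequality $\|\mathbf{b}_1\| \le ((1+\epsilon)\gamma_k)^{(n-k)/(k-1)}\lambda_1$. The latter is proved by chaining Mordell's inequality across overlapping primal and dual blocks. I would not reprove this. I would import it directly from~\cite{GN08}, and for the sharper constant in the exponent from~\cite{ALNS20}, checking only two things. First, rational precision is maintained polynomially, so each tour is polynomial apart from the oracle calls. Second, the oracle's small failure probability is handled: a union bound over polynomially many calls works, or one can simply verify each returned vector against a re-run of the sieve.
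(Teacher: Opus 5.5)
Your proposal is correct and follows essentially the route the paper intends: the paper gives no argument beyond attributing the statement to Schnorr-style block reduction and its slide-reduction refinements \cite{schnorr1987hierarchy,GN08,ALNS20}, and your block size $k=\Theta(n/\kappa)$ together with the check $(2k)^{\kappa/2}\le n^{\kappa}$ simply makes that citation explicit. There are two harmless slips: \cite{ALNS20} is itself an approximate-SVP algorithm that calls an exact SVP oracle, not a $2^{\Oh(k)}$ exact solver, so rely on \cite{AKS01} as you also propose; and $k\mid n$ should be obtained by padding $\mathcal{L}$ with a large orthogonal summand up to dimension $n'=\lceil n/k\rceil k$ (then $n'-k\le n-1$, so your exponent bound survives), since rescaling $k$ by a constant factor cannot force divisibility when, e.g., $n$ is prime.
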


To the best of our knowledge, the currently best algorithm actually runs in time
$\Os(2^{\frac{0.802n}{\kappa+1}})$ and returns $n^\kappa$-approximation
(see~\cite[Table 1]{ALNS20} for historical overview).
Nevertheless,~\cref{thm:ALS-approx-SVP} is sufficient as the exact dependence on
behind $\Oh(\cdot)$ does not matter for us.

Combining \cref{thm:SIVP-SVP} and \cref{thm:ALS-approx-SVP} gives \cref{lem:short-basis}.

\subsection{Variant of \cref{thm:SIVP-SVP} for $n$-SMP}\label{sec:nsmp-variant}

Also, we need the following slightly different variant of~\cref{thm:SIVP-SVP}, when the oracle solves $n$-SMP.
\begin{theorem}\label{thm:SIVP-SVP-variant}
Let $\gamma : \N \to \R_{\ge 1}$ be non-decreasing. Suppose there is an oracle that, on input any rank-$m$ lattice $\mathcal{M}$, returns a nonzero vector $\mat x \in \mathcal{M}$ such that
\[
\|\mat x\|_2 \le \gamma(m)\,\lambda_m(\mathcal{M}).
\]
Then there is a polynomial-time algorithm that, given an $n$-dimensional lattice $\mathcal{L}$, makes $n$ oracle calls and outputs $n$ linearly independent vectors $\mat b_1,\dots, \mat b_n \in \mathcal{L}$ such that
\[
\|\mat b_i\|_2 \le \sqrt{n}\,\gamma(n)\,\lambda_n(\mathcal{L})
\qquad\text{for all } i\in[n].
\]
\end{theorem}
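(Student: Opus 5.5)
We follow the proof of \cref{thm:SIVP-SVP} almost verbatim and change only the step where we bound the Gram--Schmidt length of the new vector. The construction is iterative. Suppose linearly independent $\mat b_1,\dots,\mat b_{i-1}\in\Ll$ have been built. Let $W_{i-1}$ be their span, let $\pi_i$ be the orthogonal projection onto $W_{i-1}^{\perp}$, and let $\Ll_i:=\pi_i(\Ll)$, a lattice of rank $m:=n-i+1$. We call the nSMP oracle on $\Ll_i$ and get a nonzero $\mat y_i\in\Ll_i$ with $\|\mat y_i\|_2\le\gamma(m)\,\lambda_m(\Ll_i)$. We then lift $\mat y_i$ to some $\mat x_i\in\Ll$ and size-reduce it against $\mat b_1,\dots,\mat b_{i-1}$ to obtain $\mat b_i$. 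As before, $\pi_i(\mat b_i)=\mat y_i\neq 0$, so $\mat b_i$ is independent of the earlier vectors. Also $\mat b_i^*=\mat y_i$, and the Gram--Schmidt coefficients satisfy $|\mu_{i,j}|\le 1/2$. The algorithm runs in polynomial time and makes exactly $n$ oracle calls.

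The only new ingredient is the inequality $\lambda_m(\Ll_i)\le\lambda_n(\Ll)$, which replaces the bound $\lambda_1(\Ll_i)\le\lambda_i(\Ll)$ used before. By definition of $\lambda_n(\Ll)$, there are $n$ linearly independent vectors $\mat v_1,\dots,\mat v_n\in\Ll$ with $\|\mat v_t\|_2\le\lambda_n(\Ll)$. Their projections $\pi_i(\mat v_1),\dots,\pi_i(\mat v_n)$ span $\pi_i(\R^n)=W_{i-1}^{\perp}$, which has dimension $m$. Hence some $m$ of these projections are linearly independent lattice vectors of $\Ll_i$. Each has norm at most $\|\mat v_t\|_2\le\lambda_n(\Ll)$, because orthogonal projection does not increase length. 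This gives $\lambda_m(\Ll_i)\le\lambda_n(\Ll)$. Since $\gamma$ is non-decreasing and $m\le n$, we get
\[
\|\mat b_i^*\|_2=\|\mat y_i\|_2\le\gamma(m)\,\lambda_m(\Ll_i)\le\gamma(n)\,\lambda_n(\Ll).
\]

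The final bound is the same computation as in \cref{thm:SIVP-SVP}. Write $\|\mat b_i\|_2^2=\|\mat b_i^*\|_2^2+\sum_{j<i}\mu_{i,j}^2\|\mat b_j^*\|_2^2$. Using $|\mu_{i,j}|\le 1/2$ and the per-index bound above, this is at most $\bigl(1+\tfrac{i-1}{4}\bigr)\gamma(n)^2\lambda_n(\Ll)^2\le n\,\gamma(n)^2\lambda_n(\Ll)^2$. Therefore $\|\mat b_i\|_2\le\sqrt{n}\,\gamma(n)\,\lambda_n(\Ll)$ for every $i$.

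The step that needs care is the projection argument, namely that $\lambda_m$ of the projected lattice is controlled by $\lambda_n(\Ll)$ and not by some smaller successive minimum. The counting of independent projections handles this. Two technicalities must be checked along the way. First, $\Ll_i$ really is a lattice of rank $m$. This holds because $W_{i-1}$ is spanned by lattice vectors, so $\Ll\cap W_{i-1}$ has full rank $i-1$ inside $W_{i-1}$. Second, the lifting and size reduction are polynomial-time operations; both go through exactly as in \cref{thm:SIVP-SVP}.
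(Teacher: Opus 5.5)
Your proposal is correct and follows the paper's proof. Like the paper, you reuse the projection, lift and size-reduction construction of \cref{thm:SIVP-SVP} and change only the bound on $\|\mat b_i^*\|_2$, via $\lambda_m(\Ll_i)\le\lambda_n(\Ll)$ obtained by projecting $n$ short independent lattice vectors. Your justification of that step is cleaner than the paper's sketch: the projections span $W_{i-1}^{\perp}$, so some $m$ of them are independent. The paper instead asserts that all nonzero projections are linearly independent, which fails when more than $m$ of them are nonzero.
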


\begin{proof}[Proof sketch]
The proof is modified only in the step where we bound the oracle output on the projected lattice. At iteration $i$, let $\mathcal{L}_i := \pi_i(\mathcal{L})$, where $\pi_i$ is the orthogonal projection onto $W_{i-1}^{\perp}$ and $W_{i-1}=\mathsf{span}(\mat b_1,\dots,\mat b_{i-1})$. Then $\mathcal{L}_i$ has rank $m=n-i+1$. So the oracle now returns a nonzero vector $\mat y_i \in \mathcal{L}_i$ satisfying
\[
\|\mat y_i\|_2 \le \gamma(m)\,\lambda_m(\mathcal{L}_i).
\]
The rest of the construction of $\mat b_i$ from a lift of $\mat y_i$ is unchanged. The only point to reprove is the bound on $\|\mat b_i^*\|_2=\|\mat y_i\|_2$. For this, take $n$ linearly independent vectors $\mat v_1,\dots,\mat v_n \in \mathcal{L}$ such that
\[
\|\mat v_t\|_2 \le \lambda_n(\mathcal{L})
\qquad\text{for all } t\in[n].
\]
Since $W_{i-1}$ has dimension $i-1$, at most $i-1$ of these vectors can lie in $W_{i-1}$. Hence at least $n-(i-1)=m$ of the projected vectors $\pi_i(v_t)$ are nonzero. Moreover, these projected vectors are linearly independent in $\mathcal{L}_i$. Indeed, if a nontrivial linear combination of them were zero, then the same combination of the corresponding $\mat v_t$'s would lie in $W_{i-1}$, contradicting the linear independence of the $\mat v_t$'s together with $\dim(W_{i-1})=i-1$.

Therefore, $\mathcal{L}_i$ contains $m$ linearly independent vectors, each of norm at most $\lambda_n(\mathcal{L})$, because orthogonal projection cannot increase Euclidean norm. It follows that $\lambda_m(\mathcal{L}_i) \le \lambda_n(\mathcal{L})$. Hence
\[
\|\mat b_i^*\|_2=\|\mat y_i\|_2
\le \gamma(m)\lambda_m(\mathcal{L}_i)
\le \gamma(n)\lambda_n(\mathcal{L}),
\]
where we used the monotonicity of $\gamma$.

From this point onward, the remainder of the proof is the same: the size-reduction step gives $|\mu_{i,j}|\le 1/2$, and therefore
\[
\|\mat b_i\|_2^2
= \|\mat b_i^*\|_2^2 + \sum_{j=1}^{i-1} \mu_{i,j}^2 \|\mat b_j^*\|_2^2
\le n\,\gamma(n)^2 \lambda_n(\mathcal{L})^2,
\]
which implies
\[
\|\mat b_i\|_2 \le \sqrt{n}\,\gamma(n)\,\lambda_n(\mathcal{L}).
\]
\end{proof}

\section{Statistical Distance}

\begin{lemma}
\label{lem:stat-dist-prod-binary}
Let $D_0$ and $D_1$ be two distributions on the same sample space such that $\Delta(D_0,D_1) \leq \epsilon$. Consider the following two distributions on $n$-tuples.

\begin{itemize}
    \item $P := D_0 \times D_0 \times \cdots \times D_0 = D_0^n$.
    \item $Q$: sample independent uniform bits $b_1,\dots,b_n \in \{0,1\}$, and then output an $n$-tuple whose $i$-th coordinate is sampled independently from $D_{b_i}$.
\end{itemize}

Then $\Delta(P,Q) \leq n\epsilon/2$.
\end{lemma}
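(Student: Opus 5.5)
The proof is a hybrid (coupling) argument that swaps the coordinates of $Q$ for those of $P$ one at a time. The only fact it needs about statistical distance is that it is subadditive for product distributions. Concretely, if $P=\prod_i P_i$ and $Q=\prod_i Q_i$ are products of independent marginals, then $\Delta(P,Q)\le\sum_i\Delta(P_i,Q_i)$. This follows from a telescoping sequence of hybrids $H_t := Q_1\times\cdots\times Q_t\times P_{t+1}\times\cdots\times P_n$, together with the triangle inequality. Each step satisfies $\Delta(H_{t-1},H_t)\le\Delta(P_t,Q_t)$, because replacing a single independent coordinate cannot increase distance beyond that coordinate's marginal distance (data processing, or an explicit sum over the other coordinates).

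The first step is to note that $Q$ is itself a product distribution. The bits $b_i$ are independent, and each coordinate is drawn independently given its bit. Integrating out $b_i$ therefore shows that the $i$-th coordinate of $Q$ is distributed as the mixture $\bar D := \tfrac12 D_0+\tfrac12 D_1$, independently across $i$. Hence $Q=\bar D^{\,n}$ while $P=D_0^{\,n}$.

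The second step computes the single-coordinate distance. Using $\Delta(X,Y)=\tfrac12\sum_x|X(x)-Y(x)|$ (or the supremum over events, for general sample spaces), we get $D_0-\bar D=\tfrac12(D_0-D_1)$. This gives $\Delta(D_0,\bar D)=\tfrac12\Delta(D_0,D_1)\le \epsilon/2$.

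Combining the two steps with subadditivity gives $\Delta(P,Q)\le n\cdot \epsilon/2$, as claimed. The only point requiring care, and the one I expect to be the main (mild) obstacle, is justifying the product structure of $Q$: we must check that the mixture over independent bits factorizes coordinatewise, since the joint law is $\prod_i(\tfrac12 D_0+\tfrac12 D_1)$ by independence of the pairs $(b_i,\text{coordinate}_i)$. The remaining steps are routine.
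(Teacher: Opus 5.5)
Your proposal is correct and takes the same route as the paper. The paper also observes that $Q=\left(\frac{D_0+D_1}{2}\right)^n$, uses the hybrids $G_k=\left(\frac{D_0+D_1}{2}\right)^k\times D_0^{\,n-k}$ with the triangle inequality, and bounds each step by $\Delta\!\left(D_0,\frac{D_0+D_1}{2}\right)=\frac12\Delta(D_0,D_1)\le\epsilon/2$.
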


\begin{proof}
We define hybrid distributions $H_0,H_1,\dots,H_n$ by $H_k := D_1^k \times D_0^{\,n-k}$, for $k=0,1,\dots,n$. In particular, $H_0 = D_0^n = P$ and $H_n = D_1^n$. However, the distribution $Q$ is not equal to $H_n$. Instead, it is the product distribution
\[
Q = \left(\frac{D_0+D_1}{2}\right)^n,
\]
since each coordinate is sampled independently from $D_0$ or $D_1$ with equal probability. We now prove the desired bound by changing one coordinate at a time. Define
\[
G_k := \left(\frac{D_0+D_1}{2}\right)^k \times D_0^{\,n-k},
\]
for $k=0,1,\dots,n$. Then $G_0 = P$ and $G_n = \left(\frac{D_0+D_1}{2}\right)^n = Q$. By the triangle inequality,
\[
\Delta(P,Q)
=
\Delta(G_0,G_n)
\leq
\sum_{k=0}^{n-1} \Delta(G_k,G_{k+1}).
\]
Now $G_k$ and $G_{k+1}$ differ only in the $(k+1)$-st coordinate, so
\[
\Delta(G_k,G_{k+1})
=
\Delta\!\left(D_0,\frac{D_0+D_1}{2}\right).
\]
Using the definition of statistical distance,
\begin{align*}
    \Delta\!\left(D_0,\frac{D_0+D_1}{2}\right)
    &=
    \frac{1}{2}\sum_x \left|D_0(x)-\frac{D_0(x)+D_1(x)}{2}\right|,\\
    &=
    \frac{1}{4}\sum_x |D_0(x)-D_1(x)|,\\
    &= \frac{1}{2}\Delta(D_0,D_1), \\
    &\leq \frac{\epsilon}{2}.
\end{align*}
Therefore,
\[
\Delta(P,Q)
\leq
\sum_{k=0}^{n-1} \frac{\epsilon}{2}
=
\frac{n\epsilon}{2}.
\]
\end{proof}

Notice that the hypothesis of Vafa and Vaikuntanathan~\cite{VV25} considered SIVP, whereas~\cref{hyp:weak-sivp} concerns nSMP.
We show that these hypotheses are tightly related.  Suppose there exist constants $\eta, c_1, c_2 > 0$ such that nSMP with approximation factor $2^{(c_1 n^{1/2-\eta})}$ can be solved in time $2^{(c_2 n^{1/2-\eta})}$. Then, by~\cref{thm:SIVP-SVP-variant}, this implies a polynomial-time reduction to SIVP with an approximation factor
\[
    \sqrt{n}\cdot 2^{(c_1 n^{1/2-\eta})}
    \;=\;
    2^{(c_1 n^{1/2-\eta} + \log n/2)},
\]
and running time
\[
    \poly(n)\cdot 2^{(c_2 n^{1/2-\eta})}
    \;\leq\;
    2^{(c_2 n^{1/2-\eta} + c_2' \log n)}
\]
for some constant $c_2' > 0$. In particular, for all sufficiently large $n$, this yields an algorithm for $2^{2c_1 n^{(1/2)-\eta}}$ approximate SIVP running in time $2^{(2 c_2 n^{1/2-\eta})}$.

\section{Additional Proofs for \cref{sec: average-case-modular}}

\subsection{Proof of Claim~\ref{claim: Events_a_r_second_moment}} \label{sec: Events_a_r_second_moment}

\begin{proof}
Both $\vec{w}, \vec{w}'$ lie in $\{0,\pm1\}^n$ and are non-zero, since
$\vec{u}_i \ne \vec{u}_j$ and $\vec{u}_l \ne \vec{u}_k$.

\emph{A counting identity.} For $\vec{c} \in \Z_r^n$, the inner product
$\vec{c}\cdot\vec{a}^{(r)}$ is uniform on $\Z_r$ if some $c_t$ is a unit mod $r$.
Hence, in general, the map $\vec{a}^{(r)} \mapsto (\vec{w}\cdot\vec{a}^{(r)}, \vec{w}'\cdot\vec{a}^{(r)})$
is a group homomorphism $\Z_r^n \to \Z_r^2$. Under the uniform distribution on $\Z_r^n$, the
probability that it maps to $(0,0)$ equals $1/|\mathrm{Im}|$, and
$|\mathrm{Im}| = r^2/N$, where
\[
  N := \#\{(x,y) \in \Z_r^2 : x\vec{w} + y\vec{w}' \equiv_r \vec{0}\}
\]
(the image and this annihilator are dual to each other). Thus $P = N/r^2$.

\emph{Computing $N$.} Fix a coordinate $s$ with $w_s = \sigma \in \{\pm1\}$ and let
$b := w'_s \in \{0,\pm1\}$. The $s$-th coordinate of $x\vec{w} + y\vec{w}' \equiv \vec{0}$
forces $x \equiv -\sigma b y$. Substituting this into coordinate $t$ gives $m_t\, y \equiv_r 0$, where
\[
  m_t := w'_t - \sigma b\, w_t \in \{0, \pm1, \pm2\}
\]
(up to sign, $m_t$ is the $2\times 2$ minor of rows $\vec{w},\vec{w}'$ on columns $s,t$).
Therefore $N = \#\{y \in \Z_r : m_t y \equiv_r 0 \ \forall t\} = \gcd(g, r)$, where
$g := \gcd_t(m_t, r)$.

\emph{Case 1.} If $\vec{w}' \equiv \varepsilon \vec{w} \pmod r$ for some $\varepsilon \in \{\pm1\}$,
then $b \equiv \varepsilon\sigma$ and every $m_t \equiv 0$. Hence $g = r$, so $N = r$ and $P = 1/r$.

\emph{Case 2.} Suppose all $m_t \equiv 0 \pmod r$. If $r \ge 3$, then $m_t \in \{0,\pm1,\pm2\}$
forces $m_t = 0$ for all $t$, i.e.\ $\vec{w}' = \sigma b\, \vec{w}$ over $\Z$; since
$\vec{w}' \ne \vec{0}$; this gives $\vec{w}' = \pm\vec{w}$. If $r = 2$, it gives
$\vec{w}' \equiv \vec{w} \pmod 2$. Either way we are in Case 1. So in Case 2 some $m_t$ is
non-zero mod $r$, and since $|m_t| \le 2$, we get $g \in \{1,2\}$. Hence $N \in \{1,2\}$,
and $N = 2$ requires $2 \mid r$. Thus $P = N/r^2 \in \{1/r^2, 2/r^2\}$, with $P = 1/r^2$ for odd~$r$.
\end{proof}

\subsection{Proof of Claim~\ref{claim: rank_argument}}\label{sec: rank_argument}
    Let $\vec{w}$ and $\vec{x}$ are two non-zero binary vectors such that $\vec{w} \not\equiv_s \pm \vec{x}$.
    Consider the span of two vectors $\vec{w}$ and $\vec{x}$, say $S$. Any vector in $S$ is of the form
    $\alpha \vec{w} + \beta \vec{x}$. For this vector to be a $0/1$ vector, $\alpha, \beta \in \{0, 1\}$ and 
    $\alpha + \beta \in \{0, 1, 2\}$.
    Considering all the possibilities of $\alpha$ and $\beta$, there can be at most $3$ non-zero binary vectors in $S$.
    This completes the proof. 

\subsection{Proof of \cref{prop: counting-rank-t-matrices}} \label{sec: counting-rank-t-matrices}
Fix a matrix $\mathsf{M} \in \mathsf{M}_t$ of rank $t$.
Permute the columns so that the first $t$ columns
$M_1,\dots,M_t \in \{0,1\}^n$ are linearly independent over $\mathbb{Z}_s$.
Since these columns are independent, there exists a set of rows
$R \subseteq [n]$ with $|R| = t$ such that the $t\times t$ submatrix $A := \mathsf{M}[R,\{1,\dots,t\}]$ is invertible over $\mathbb{Z}_s$.

For each column index $i>t$, since $\rank(\mathsf{M})=t$, there exists a unique
coefficient vector $c_i \in \mathbb{Z}_s^t$ such that $
\begin{bmatrix}
M_1 & \cdots & M_t
\end{bmatrix} c_i
\;\equiv_s\;
M_i$. Restricting this identity to the rows in $R$ yields $
A\, c_i \;\equiv_s\; a_i$,
where $a_i := \mathsf{M}[R,i] \in \{0,1\}^t$.

Since $A \in \{0,1\}^{t\times t}$ is invertible, Hadamard’s inequality implies $|\det(A)| \le t^{t/2}$. Define the set
\[
S := \{ x \bmod s : x \in \mathbb{Z},\ |x| \le t^{t/2} \},
\]
so that $|S| = \BigO(t^{t/2})$ and $\det(A) \in S \setminus \{0\}$.
Hence $\det(A)^{-1} \in S^{-1}$, where $|S^{-1}| = |S|$. By Cramer’s rule, for each coordinate $j \in [t]$,
\[
(c_i)_j \;\equiv_s\; \det(A_{i,j}) \cdot \det(A)^{-1},
\]
where $A_{i,j}$ is the matrix obtained from $A$ by replacing its $j$-th column
with $a_i$. Since $A_{i,j} \in \{0,1\}^{t\times t}$, we also have
$\det(A_{i,j}) \in S$. Define the set $S' := S \cdot S^{-1} \subseteq \mathbb{Z}_s$.
Then $|S'| \le |S|^2 = \BigO(t^t)$, and each coordinate $(c_i)_j$ lies in $S'$.
Consequently, we get $c_i \in (S')^t$ and  $|(S')^t| = \BigO(t^{t^2})$.

To construct a rank-$t$ matrix, we have the following procedure:
\begin{itemize}
\item choose the $t$ independent columns in $\mathcal{O}(2^{tn})$ ways,
\item choose each of the remaining $m-t$ columns via a coefficient vector
      $c_i \in (S')^t$, contributing $\mathcal{O}((t^{t^2})^{m-t})$ choices,
\item and finally permute the columns, contributing a factor of $m!$.
\end{itemize}
Multiplying these bounds gives the claimed estimate.

\subsection{Proof of \cref{lemma: prob-counting-bad-vector-set}}
\label{sec: prob-counting-bad-vector-set}
Recall that $\mathcal{D}_{ij}$ (and similarly $\mathcal{D}_{lk}$) is a binary random variable equal to $1$ if and only if events
$\mathsf{E}_{\mathsf{LS},p}^{(i)}$,
$\mathsf{E}_{\mathsf{RS},p}^{(i)}$,
$\mathsf{E}_{\mathsf{LS},p}^{(j)}$,
$\mathsf{E}_{\mathsf{RS},p}^{(j)}$,
$\mathsf{E}_{\mathsf{TS},q}^{(i)}$, and
$\mathsf{E}_{\mathsf{TS},q}^{(j)}$
occur simultaneously (and analogously for indices $l,k$). Each $\mathsf{E}$ event corresponds to a linear equation.
In particular, the events
\[
\mathsf{E}_{\mathsf{LS},p}^{(i)},\,
\mathsf{E}_{\mathsf{RS},p}^{(i)},\,
\mathsf{E}_{\mathsf{LS},p}^{(j)},\,
\mathsf{E}_{\mathsf{RS},p}^{(j)},\,
\mathsf{E}_{\mathsf{LS},p}^{(l)},\,
\mathsf{E}_{\mathsf{RS},p}^{(l)},\,
\mathsf{E}_{\mathsf{LS},p}^{(k)},\,
\mathsf{E}_{\mathsf{RS},p}^{(k)}
\]
occur simultaneously if $\vec{a}^{(p)}$ satisfies
\begin{align*}
    &\vec{a}^{(p)}
    \begin{bmatrix}
        \mat{U}_{ijlk}^{(L)} & \mathbf{0} \\
        \mathbf{0} & \mat{U}_{ijlk}^{(R)}
    \end{bmatrix} \\
    &\equiv_p
    \begin{bmatrix}
        s_0 & s_0 & s_0 & s_0 & s_1 & s_1 & s_1 & s_1
    \end{bmatrix}.
\end{align*}

Writing $\vec{a}^{(p)} = \big(\vec{a}^{(p)}{}^{(L)}, \vec{a}^{(p)}{}^{(R)}\big)$,
this is equivalent to
\[
\vec{a}^{(p)}{}^{(L)} \, \mat{U}_{ijlk}^{(L)}
\equiv_p
\begin{bmatrix}
    s_0 & s_0 & s_0 & s_0
\end{bmatrix},
\]
and
\[
\vec{a}^{(p)}{}^{(R)} \, \mat{U}_{ijlk}^{(R)}
\equiv_p
\begin{bmatrix}
    s_1 & s_1 & s_1 & s_1
\end{bmatrix}.
\]

Similarly, the events
$\mathsf{E}_{\mathsf{TS},q}^{(i)}$,
$\mathsf{E}_{\mathsf{TS},q}^{(j)}$,
$\mathsf{E}_{\mathsf{TS},q}^{(l)}$,
$\mathsf{E}_{\mathsf{TS},q}^{(k)}$
occur simultaneously if $\vec{a}^{(q)}$ satisfies
\[
\vec{a}^{(q)} \, \mat{U}_{ijlk}
\equiv_q
\begin{bmatrix}
    s_2 & s_2 & s_2 & s_2
\end{bmatrix}.
\]

Since the above systems of linear equations are over
independent variables, we obtain
\[
\Pr_{\vec{a}^{(pq)}}\!\left(
\mathcal{D}_{ij} = 1 \wedge \mathcal{D}_{lk} = 1
\right)
\;\leq\;
\frac{1}{p^{t_1}\cdot p^{t_2}\cdot q^{t_3}}.
\]

\section{Additional Proofs for \cref{sec: average-case-integer}}

\subsection{Proof of \cref{lemma: pq_events_analysis}}
\label{sec: pq_events_analysis}
 For Property~(1), consider the following relation
    \begin{align*}
        S_{\mathsf{TS}, pq, i} = S_{\mathsf{C}, pq, i} \cdot pq + 
        S_{\mathsf{TS}, i}^{(pq)}.
    \end{align*}
    Using the law of total probability, we have the following
    \begin{align*}
        \Pr_{\vec{a}^{(pq)}}\left(S_{\mathsf{C}, pq, i} = c\right) = 
        \sum_{j = 0}^{pq - 1} 
        \Pr_{\vec{a}^{(pq)}}\left(S_{\mathsf{TS}, pq, i} = c\cdot pq + j\right) 
    \end{align*}
    Since $c = w_i/2 + \mathcal{O}(\sqrt{w_i})$, we have the following
    \begin{align*}
        s_j = c\cdot pq + j = (w_ipq)/2 + \mathcal{O}(pq\sqrt{w_i})
    \end{align*}
    for all $0 \leq j < pq$.
    
    Since $s_j$ is in the central window, using \cref{cor: LCLT_TS_pq}
    we get the desired result as follows
    \begin{equation*}
        \begin{aligned}
            &\Pr_{\vec{a}^{(pq)}}\left(S_{\mathsf{C}, pq, i} = c\right) \\
            &= \sum_{j = 0}^{pq - 1} 
            \Pr_{\vec{a}^{(pq)}}\left(S_{\mathsf{TS}, pq, i} = c\cdot pq + j = s_j\right),\\
            &= \sum_{j = 0}^{pq - 1} 
            \Theta\left(\frac{1}{\sigma_{pq} \sqrt{w_i}}\right),
            \\
            &= pq \cdot \Theta\left(\frac{1}{\sigma_{pq} \sqrt{w_i}}\right),\\
            &= \Theta\left(\frac{1}{\sqrt{w_i}}\right).
        \end{aligned}
    \end{equation*}
    This completes the proof of Property~(1).
    For Property~(2), we can observe that the following two events are the same
    \begin{align*}
        \left(S_{\mathsf{TS}, i}^{(pq)} = l, S_{\mathsf{C}, pq, i} = c\right),
    \end{align*}
    and 
    \[
    \left(S_{\mathsf{TS}, pq, i} = s = c \cdot pq + l\right).
    \]
    Hence, we have
    \begin{equation*}
        \begin{aligned}
            &\Pr_{\vec{a}^{(pq)}}\left(S_{\mathsf{TS}, pq, i} = c\cdot pq + l\right) \\
            &= 
            \Pr_{\vec{a}^{(pq)}}\left(S_{\mathsf{TS}, i}^{(pq)} = l, S_{\mathsf{C}, pq, i} = c\right),\\
            &= \Pr_{\vec{a}^{(pq)}}\left(S_{\mathsf{TS}, i}^{(pq)} = l 
            \; \middle| \; S_{\mathsf{C}, pq, i} = c\right) \cdot \Pr_{\vec{a}^{(pq)}}\left(S_{\mathsf{C}, pq, i} = c\right).
        \end{aligned}
    \end{equation*}
    Using 
    \cref{cor: LCLT_TS_pq} 
    and part $[1]$ of this lemma in the above equation, 
    we get the desired result
    \begin{equation*}
        \begin{aligned}
            &\Pr_{\vec{a}^{(pq)}}\left(S_{\mathsf{TS}, i}^{(pq)} = l 
            \; \middle| \; S_{\mathsf{C}, pq, i} = c\right) \\ 
            &= 
            \frac{\Pr_{\vec{a}^{(pq)}}\left(S_{\mathsf{TS}, pq, i} = c\cdot pq + l\right)}
            {\Pr_{\vec{a}^{(pq)}}\left(S_{\mathsf{C}, pq, i} = c\right)},\\
            &= \frac{\Theta\left(\frac{1}{pq\sqrt{w_i}}\right)}
            {\Theta\left(\frac{1}{\sqrt{w_i}}\right)},\\
            &= \Theta\left(\frac{1}{pq}\right). 
        \end{aligned}
    \end{equation*}

    It remains to prove Property (3). First, we will split the total sum into the left and right parts
    \begin{align*}
        S_{\mathsf{TS}, pq, i} &= S_{\mathsf{LS}, pq, i} + S_{\mathsf{RS}, pq, i}, \\
        &= \vec{a}^{(pq)}{}^{(L)}\cdot \vec{u}_i^{(L)} + 
        \vec{a}^{(pq)}{}^{(R)}\cdot \vec{u}_i^{(R)},
    \end{align*}
    and let $w_{l, i}$ and $w_{r, i}$ be the Hamming weights of the vectors $\vec{u}_i^{(L)}$
    and $\vec{u}_i^{(R)}$ respectively. Then, we have $w_i = w_{l, i} + w_{r, i}$.
    Now, we divide the left sum $S_{\mathsf{LS}, pq, i}$ by $p$
    \begin{align*}
        S_{\mathsf{LS}, pq, i} = Q_{\mathsf{LS}, i}^{(p)} \cdot p + S_{\mathsf{LS}. i}^{(p)}.
    \end{align*}
    Then, we have the following
    \begin{equation*}
        \begin{aligned}
            \Pr_{\vec{a}^{(pq)}{}^{(L)}}\left(S_{\mathsf{LS}, i}^{(p)} = k\right)
            = \sum_{y \equiv_p k}  \Pr_{\vec{a}^{(pq)}{}^{(L)}}
            \left(S_{\mathsf{LS}, pq, i} = y\right).
        \end{aligned}
    \end{equation*}
    We aim to analyze the following event
    \begin{align*}
        \left(S_{\mathsf{LS}, i}^{(p)} = k, 
            S_{\mathsf{TS}, i}^{(pq)} = l, 
            S_{\mathsf{C}, pq, i} = c\right)
    \end{align*}
    To do so, we split this event into the left and right sums as follows
    \begin{equation*}
        \begin{aligned}
            &\Pr_{\vec{a}^{(pq)}}\left(
            S_{\mathsf{LS}, i}^{(p)} = k, 
            S_{\mathsf{TS}, i}^{(pq)} = l, 
            S_{\mathsf{C}, pq, i} = c\right) \\
            &= 
            \Pr_{\vec{a}^{(pq)}}\left(
            S_{\mathsf{LS}, pq, i} \equiv k \mod p, 
            S_{\mathsf{TS}, pq, i} = s = c \cdot pq + l\right),\\
            &= \sum_{y \equiv_p k}  
            \Pr_{\vec{a}^{(pq)}}\left(
            S_{\mathsf{LS}, pq, i}  = y, 
            S_{\mathsf{RS}, pq, i} = s - y\right).
        \end{aligned}
    \end{equation*}
    Now, let us define the following function
    \begin{align*}
        f_{pq, i}(y, s) := \Pr_{\vec{a}^{(pq)}}\left(
            S_{\mathsf{LS}, pq, i}  = y, 
            S_{\mathsf{RS}, pq, i} = s - y\right).
    \end{align*}
    Since $s$ is fixed in our analysis, we will omit it and denote the above function by 
    $f_{pq, i}(y)$.
    Firstly, we can observe the following
    \begin{equation*}
        \begin{aligned}
            \sum_{y \equiv_p k}  f_{pq, i}(y) &= \sum_{y \equiv_p k}
            \Pr_{\vec{a}^{(pq)}}\left(S_{\mathsf{LS}, pq, i}  = y, 
            S_{\mathsf{RS}, pq, i} = s - y\right)\\
            &= \Pr_{\vec{a}^{(pq)}}\left(
                S_{\mathsf{LS}, i}^{(p)} = k, 
                S_{\mathsf{TS}, i}^{(pq)} = l, 
                S_{\mathsf{C}, pq, i} = c\right),
        \end{aligned}
    \end{equation*}
    and 
    \begin{equation*}
        \begin{aligned}
            \sum_{y}  f_{pq, i}(y) &= \sum_{y}
            \Pr_{\vec{a}^{(pq)}}\left(S_{\mathsf{LS}, pq, i}  = y, 
            S_{\mathsf{RS}, pq, i} = s - y\right)\\
            &= \Pr_{\vec{a}^{(pq)}}\left(
                S_{\mathsf{TS}, pq, i} = s = c \cdot pq + l\right).
        \end{aligned}
    \end{equation*}
    Now, we aim to compare the above two sums. To do so, we will prove a Gaussian smoothness result for the function $f_{pq, i}(y)$. 
    
    Let $W_{pq}^{(L)} := (w_{l, i}pq)/2 + 
    \mathcal{O}(pq\sqrt{w_{l, i}})$ is the central Gaussian window of the 
    random variable $S_{\mathsf{LS}, pq, i}$. Since $s = c\cdot pq + l$, where 
    $c = w_i/2 + \mathcal{O}(\sqrt{w_i})$ and $0 \leq l < pq$, thecrefore 
    $s = (w_ipq)/2 + \mathcal{O}(pq\sqrt{w_i})$ and we have the following
    $$s - W_{pq}^{(L)} = (w_{r, i}pq)/2 + \mathcal{O}(pq\sqrt{w_{r, i}}).$$
    Then, using \cref{cor: Theta_LCLT} for the left and right sums, for any $y = W_{pq}^{(L)}$ and 
    $y' = s - W_{pq}^{(L)}$ we have
    \begin{align*}
        \Pr_{\vec{a}^{(pq)}} \left(S_{\mathsf{LS}, pq, i} = y\right) &= 
        \Theta\left(\frac{1}{\sigma_{pq}\sqrt{w_{l, i}}}\right),\\
        \Pr_{\vec{a}^{(pq)}} \left(S_{\mathsf{RS}, pq, i} = y' = s - y\right) &= 
        \Theta\left(\frac{1}{\sigma_{pq} \sqrt{w_{r, i}}}\right).
    \end{align*}

    \begin{claim} \label{claim: f_y_smoothness}
        For any $y = W_{pq}^{(L)}$ and $|\Delta| \leq pq$, we have
        \begin{align}
            f_{pq, i}(y + \Delta) = \Theta(f_{pq, i}(y)).
        \end{align}
    \end{claim}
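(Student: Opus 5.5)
Because the left and right halves of $\vec{u}_i$ have disjoint supports, $S_{\mathsf{LS}, pq, i}$ and $S_{\mathsf{RS}, pq, i}$ are independent. This gives the factorization
\[
f_{pq, i}(y) = \Pr_{\vec{a}^{(pq)}}\left(S_{\mathsf{LS}, pq, i} = y\right)\cdot \Pr_{\vec{a}^{(pq)}}\left(S_{\mathsf{RS}, pq, i} = s-y\right).
\]
The plan is to show that each factor is, up to constants, unchanged under the shift $y \mapsto y+\Delta$ with $|\Delta| \le pq$. Multiplying the two comparisons then gives $f_{pq, i}(y+\Delta) = \Theta(f_{pq, i}(y))$.

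The first step is to control the left factor. By assumption $y$ lies in the central window $W_{pq}^{(L)} = (w_{l,i}pq)/2 + \mathcal{O}(pq\sqrt{w_{l,i}})$, and $|\Delta| \le pq \le pq\sqrt{w_{l,i}}$. So $y+\Delta$ is again of the form $w_{l,i}\mu_{pq} + \mathcal{O}(\sigma_{pq}\sqrt{w_{l,i}})$, with only the implicit constant enlarged by an additive $\mathcal{O}(1)$, since $\sigma_{pq} = \Theta(pq)$ and $\mu_{pq} = (pq-1)/2$. Applying \cref{cor: Theta_LCLT} to the $w_{l,i} = \Theta(n)$ i.i.d.\ summands, with window constant $C$ chosen large enough to contain both points, both $\Pr(S_{\mathsf{LS}, pq, i} = y)$ and $\Pr(S_{\mathsf{LS}, pq, i} = y+\Delta)$ equal $\Theta\left(1/(\sigma_{pq}\sqrt{w_{l,i}})\right)$. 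Their ratio is therefore $\Theta(1)$.

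The second step handles the right factor in the same way. We already have $s - y = (w_{r,i}pq)/2 + \mathcal{O}(pq\sqrt{w_{r,i}})$. Shifting by $-\Delta$ keeps the point inside an enlarged central window for the $w_{r,i} = \Theta(n)$ summands, so \cref{cor: Theta_LCLT} gives $\Theta\left(1/(\sigma_{pq}\sqrt{w_{r,i}})\right)$ at both $s-y$ and $s-y-\Delta$.

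The main obstacle is uniformity of constants. \cref{cor: Theta_LCLT} produces $c_1, c_2$ that depend on the window constant $C$, and the ``$\mathcal{O}(\cdot)$'' windows used earlier must be read as windows with a fixed constant. I would therefore fix a constant $C_0$ bounding the window offset of $y$ and of $s-y$, and invoke the theorem with window constant $C_0 + 1$. Because the summands are bounded and the ratio $\sigma_{pq}/pq$ tends to a constant, the theorem's constants can be taken uniform in $pq$ (this is the content of the remark following \cref{cor: Theta_LCLT}). Since $w_{l,i}, w_{r,i} \ge \alpha n/2 \to \infty$, we also have $n \ge n_0$, so the hypotheses of the theorem are met. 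With these choices the four point probabilities are all pinned between the same constants, and the claim follows.
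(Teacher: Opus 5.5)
Your proposal is correct and follows the same route as the paper's proof. Both arguments rest on the same observation: a shift of size at most $pq$ is negligible against the $\Theta(pq\sqrt{n})$ standard deviations, so the relevant points stay inside (slightly enlarged) central windows, where \cref{cor: Theta_LCLT} gives matching $\Theta(\cdot)$ bounds on the point probabilities. You make explicit what the paper leaves implicit: the factorization $f_{pq,i}(y)=\Pr(S_{\mathsf{LS},pq,i}=y)\cdot\Pr(S_{\mathsf{RS},pq,i}=s-y)$, which holds because the two halves are independent, and the need to apply the LCLT with a fixed, enlarged window constant. That application must hold uniformly in $pq$, which you take from the paper's remark, just as the paper itself does.
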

    \begin{proof}
        Since the standard deviations of both random variables
        $S_{\mathsf{LS}, pq, i}$ and $S_{\mathsf{RS}, pq, i}$ are
        $\Theta(pq\sqrt{n})$ and 
       $\Delta \leq pq$, a shift of magnitude $\Delta$ keeps us within the central region of the distribution. Consequently, the density function
       $f_{pq, i}(y)$ varies smoothly under a displacement of size
       $\Delta$ and its value remains stable up to constant factors when evaluated at $f_{pq, i}(y + \Delta)$.
    \end{proof}

    Now, for any $k \in \mathbb{Z}_{p}$, let us define a set $A_k$ that contains elements 
    congruent to $k$ modulo $p$ in the window $W_{pq}^{(L)}$ as follows
    \begin{align*}
        A_k := \{y \; | \; y = W_{pq}^{(L)},  y \equiv k \mod p\}.
    \end{align*}
    Since almost the entire mass of a Gaussian is in the central window,
    we have the following
    \begin{align}\label{eq: f_y_Theta_bound_central_window}
        \sum_{y} f_{pq, i}(y) = 
        \Theta\left(\sum_{y = W_{pq}^{(L)}} f_{pq, i}(y)\right).
    \end{align}

    Now, using the above equation for any $k \in [0, p)$, we have the
    following
    \begin{align}\label{eq: f_y_Theta_bound_A_k}
        \sum_{y \equiv k \mod p} f_{pq, i}(y) = \Theta\left(\sum_{y \in A_k} f_{pq, i}(y)\right).
    \end{align}

    Now, we combine the above two observations and get the following
    averaging argument
    \begin{claim}\label{claim: f_y_average}
        For any $k \in [0, pq)$, we have
        \begin{align*}
            \sum_{y = W_{pq}^{(L)}}  f_{pq, i}(y) = p \cdot \Theta\left(\sum_{y \in A_k}  f_{pq, i}(y)\right).
        \end{align*}
    \end{claim}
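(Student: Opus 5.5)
The claim to prove is that, for every residue $k \bmod p$, summing $f_{pq,i}$ over the central window $W_{pq}^{(L)}$ gives $p$ times (up to constants) the sum over $A_k$. The idea is a local averaging argument built on the smoothness Claim~\ref{claim: f_y_smoothness}.

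First, partition the central window $W_{pq}^{(L)}$ into consecutive blocks of $p$ integers each. The window has length $\Theta(pq\sqrt{w_{l,i}})$, which is much larger than $p$, so there are $\Theta(q\sqrt{w_{l,i}})$ full blocks. At most two boundary fragments remain, and they are handled separately below. Each full block $I$ contains exactly one element of every residue class modulo $p$, so it contains exactly one element $y_k(I) \in A_k$.

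Second, I apply Claim~\ref{claim: f_y_smoothness} inside each block. Any two points $y, y' \in I$ satisfy $|y - y'| < p \le pq$, so $f_{pq,i}(y') = \Theta(f_{pq,i}(y_k(I)))$, with constants uniform over the window. Summing over the $p$ points of the block gives
\[
\sum_{y\in I} f_{pq,i}(y) = \Theta\bigl(p\, f_{pq,i}(y_k(I))\bigr).
\]
Summing over all full blocks then yields
\[
\sum_{y\in \bigcup I} f_{pq,i}(y) = p\cdot\Theta\Bigl(\sum_{y\in A_k\cap\bigcup I} f_{pq,i}(y)\Bigr).
\]

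Third, I absorb the boundary fragments. Each fragment has fewer than $p$ points, and by smoothness each of its values is within a constant factor of the values in the adjacent full block. So each fragment's total is $O(p\cdot f_{pq,i}(y_k(I)))$ for the neighboring block $I$, and the same holds for the at most one point of $A_k$ it contains. Adding the fragments therefore changes both sides only by constant factors, which gives the claim. Since only $k \bmod p$ matters, the same argument covers $k\in[0,pq)$.

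The main obstacle is that Claim~\ref{claim: f_y_smoothness} must hold with constants uniform over the whole window, including near its edges, where $y+\Delta$ may leave $W_{pq}^{(L)}$. My plan is to run the argument on a slightly shrunken window, with the margin of width $pq$ removed. Because $pq \ll pq\sqrt{w_{l,i}}$, the shrunken window still carries a constant fraction of the mass, by the two-sided LCLT bounds of \cref{cor: Theta_LCLT} applied to the left and right sums, so restricting to it is harmless within the $\Theta(\cdot)$. This is the same reasoning that justifies \eqref{eq: f_y_Theta_bound_central_window}.
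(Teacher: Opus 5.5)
Your argument is correct and is essentially the paper's: the paper pairs each $y'=k'+mp\in A_{k'}$ with $y=k+mp\in A_k$ via Claim~\ref{claim: f_y_smoothness} and then sums over the $p$ residue classes. That compares exactly the same pairs as your block decomposition, only summed in the other order.

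Your explicit handling of the boundary fragments is slightly more careful than the paper, whose pairing can push a point just outside $W_{pq}^{(L)}$. The shrunken-window detour is unnecessary, though. Every comparison you make is between two points of $W_{pq}^{(L)}$ at distance less than $2p\le pq$, which is precisely the case Claim~\ref{claim: f_y_smoothness} covers.
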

    \begin{proof}
        Consider $k' \in [0, pq)$ and the corresponding set $A_{k'}$. Then, for any element 
        in $A_{k'}$, say $y' = k' + mp$, there exists a unique element in $A_k$, $y = k + mp$, such that 
        $|y' - y| = |k' - k| < p$. Let $\Delta_{k'} := k' - k$. Using this observation, we get the following
        \begin{equation*}
            \begin{aligned}
                \sum_{y' \in A_{k'}}  f_{pq, i}(y') &= \sum_{y' \in A_{k'}} f_{pq, i}(k' + kp) \\
                &= \sum_{y \in A_k}  f_{pq, i}(y + \Delta_{k'})\\
                &= \Theta\left(\sum_{y \in A_k} f_{pq, i}(y)\right).
            \end{aligned}
        \end{equation*}
        This implies the $\sum_{y' \in A_{k'}} f_{pq, i}(y')$ corresponding to any residue class $k'$ is in the neighborhood of $\sum_{y \in A_{k}} f_{pq, i}(y)$. So, taking the average will give us the desired result.
        \begin{equation*}
            \begin{aligned}
                \sum_{y} f_{pq, i}(y) &= \sum_{k' = 0}^{p-1} \sum_{y \in A_{k'}} f_{pq, i}(y)
                = p \cdot \Theta\left(\sum_{y \in A_k} f_{pq, i}(y)\right).
            \end{aligned}
        \end{equation*}
    \end{proof}

    Now, we have all the tools to compute the desired result. Consider
    the following
    \begin{equation*}
        \begin{aligned}
            &\Pr_{\vec{a}^{(pq)}}\left(
            S_{\mathsf{LS}, i}^{(p)} = k, 
            S_{\mathsf{TS}, i}^{(pq)} = l, 
            S_{\mathsf{C}, pq, i} = c\right) \\
            &= \sum_{y \equiv_p k}  f_{pq, i}(y),\\
            &= \Theta\left(\sum_{y \in A_k}  f_{pq, i}(y)\right)  \qquad 
            \qquad (\text{Using \ref{eq: f_y_Theta_bound_A_k}}),\\
            &= \frac{1}{p} \cdot \Theta\left(\sum_{y = W_{pq}^{(L)}} f_{pq, i}(y)\right)
            \qquad (\text{Using Claim~\ref{claim: f_y_average}}),\\
            &= \frac{1}{p} \cdot \Theta\left(
            \sum_{y} f_{pq, i}(y)\right) 
            \qquad \quad (\text{Using \ref{eq: f_y_Theta_bound_central_window}}),\\
            &= \frac{1}{p} \cdot \Theta\left(\Pr_{\vec{a}^{(pq)}}\left(
            S_{\mathsf{TS}, pq, i}  = s = c \cdot pq + l\right) \right)\\
            &= \frac{1}{p} \cdot \Theta\left(\frac{1}{pq\sqrt{w_i}}\right)
            \qquad \qquad (\text{Using \cref{cor: LCLT_TS_pq}}).
        \end{aligned}
    \end{equation*}
This completes the proof of \cref{lemma: pq_events_analysis}.
\subsection{Discussion on $M = pqr + s$}\label{sec: discussion_M}
Firstly, we divide $M$ by the product of integers $p$ and $q$. Let $r$ and $s$ be unique integers such that
\begin{align*}
    M = r\cdot pq + s,
\end{align*}
and assume that $s \neq 0$.
Let $\vec{a}$ be a vector uniformly distributed over the set $[M]^n = \{1, 2, \ldots, M\}^n$.
Let us split this random vector, $\vec{a}$, into the following random variables
\begin{align*}
    a_i = a_i^{(pq)} + a_i^{(r)}\cdot pq,
\end{align*}
where $a_i^{(pq)} \in [1, pq)$ and $a_i^{(r)} = [0, r]$ $\forall$ $1 \leq i \leq n$.

Now, we look at the distribution of the random variable vectors 
$\vec{a}^{(pq)}$ and $\vec{a}^{(r)}$ over the random choice of $\vec{a}$.

\paragraph{Distribution of $a_i^{(pq)}$.} 
\begin{align}\label{eq: a_i_pq_dist}
    \Pr_{a_i}\left(a_i^{(pq)} = x\right) = 
    \begin{cases}
        \frac{r+1}{M} & 1 \leq x \leq s\\
        \frac{r}{M} & x > s.\\
    \end{cases}
\end{align}

\paragraph{Distribution of $a_i^{(r)}$.} 
\begin{align}\label{eq: a_i_r_dist}
    \Pr_{a_i}\left(a_i^{(r)} = y\right) = 
    \begin{cases}
        \frac{pq}{M} & 0 \leq y < r\\
        \frac{s}{M} & y = r\\
    \end{cases}
\end{align}

\paragraph{Joint Distribution of $a_i^{(pq)}$ and $a_i^{(r)}$.} 
\begin{align*}
    &\Pr_{a_i}\left(a_i^{(pq)} = x, a_i^{(r)} = y\right) \\ &= 
    \begin{cases}
        \frac{1}{M} & 1 \leq x \leq pq, 0 \leq y < r\\
        \frac{1}{M} & 1 \leq x \leq  s, y = r\\
        0 & y = r, x > s.
    \end{cases}
\end{align*}

\paragraph{Dependence between $a_i^{(pq)}$ and $a_i^{(r)}$.} 
Let us first make the following observations after looking at the above distributions
\begin{enumerate}
    \item If $s = 0$, it is easy to observe that the two random variables $a_i^{(pq)}$ and $a_i^{(r)}$ are independent
    and uniformly distributed over the sets $[0, pq)$ and $[0, r)$ respectively.
    \item We also observe that the main dependence occurs when $y = r$, since in this case, residues can only take $s$ values rather than $pq$ values, where $0 \leq s < pq$.
\end{enumerate}
Keeping these observations in mind, we split the probability mass of the distribution of the 
random variable $a_i$ into two cases. First, the inner rectangle, i.e. 
$1 \leq a_i \leq pqr$, and second the truncated slice, i.e. $pqr < a_i \leq M$.
Let $\mathsf{E_i^{(U)}}$ and $\mathsf{E_i^{(T)}}$ be two disjoint exhaustive events that denote these
two cases, respectively. Then, we have
    \begin{align*}
        \Pr_{a_i}\left(\mathsf{E_i^{(T)}}\right) &= \frac{s}{M},\\
        \Pr_{a_i}\left(\mathsf{E_i^{(U)}}\right) &= 1 - \frac{s}{M}.
    \end{align*}
Since $(s)/M < pq/M = \mathcal{O}\left(1/2^{n/2}\right)$, we can ignore this event and focus only on the event $\mathsf{E_i^{(U)}}$.
Let $\mathsf{E^{(T)}} := \lor_{i = 1}^{n} \mathsf{E_i^{(T)}}$, then using union bound, we have
    \begin{align*}
        \Pr_{a_i}\left(\mathsf{E^{(T)}}\right) \leq \frac{ns}{M} = \mathcal{O}\left(\frac{n}{2^n}\right).
    \end{align*}
Hence, we have
\begin{align*}
    \Pr_{a_i}\left(\mathsf{E^{(U)}}\right) \geq 1 - \mathcal{O}\left(\frac{n}{2^n}\right).
\end{align*}

\begin{proposition}\label{prop: a_pq_a_r_vector_dependency}
    Let $a_i \xleftarrow{\$} [M]$ $\forall$ $1 \leq i \leq n$, where $M = r.pq + s$. Then, we have the following distributions
    for all $1 \leq i \leq n$
    \begin{enumerate}
        \item For all $0 \leq x < pq$, we have
        \begin{align*}
            \Pr_{a_i}\left(a_i^{(pq)} = x \; \middle | \; \mathsf{E_i^{(U)}}\right) = 
            \begin{cases}
                \frac{r-1}{M} & x = 0\\
                \frac{r}{M} & 0 < x < pq
            \end{cases}.
        \end{align*}
        \item For all $0 \leq y < r$, we have
        \begin{align*} 
            \Pr_{a_i}\left(a_i^{(r)} = y \; \middle | \; \mathsf{E_i^{(U)}}\right) =
            \begin{cases}
                \frac{pq-1}{M} & y = 0\\
                \frac{pq}{M} & 0 < y < r
            \end{cases}.
        \end{align*}
        \item 
        \small{
        \begin{align*}
            &\Pr_{\vec{a}}\left(\vec{a}^{(pq)} = \vec{x}, \vec{a}^{(r)} = \vec{y}\; \middle | \; \mathsf{E^{(U)}}\right) \\
            &= 
            \Pr_{\vec{a}}\left(\vec{a}^{(pq)} = \vec{x}\; \middle| \; \mathsf{E^{(U)}}\right) 
            \Pr_{\vec{a}}\left(\vec{a}^{(r)} = \vec{y}\; \middle| \; \mathsf{E^{(U)}}\right)\left(1 + \mathcal{O}\left(\frac{npq}{M}\right)\right),
        \end{align*}}
        where $\mathsf{E^{(U)}} := \wedge_{j = 1}^{n} \mathsf{E_j^{(U)}}$.
    \end{enumerate}
\end{proposition}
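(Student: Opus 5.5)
The plan is to work coordinate by coordinate and then pass to the product over $i\in[n]$, using that the $a_i$ are independent and that $\mathsf{E^{(U)}}=\wedge_j \mathsf{E_j^{(U)}}$ is a product event. Conditioned on $\mathsf{E_i^{(U)}}$, $a_i$ is uniform on $\{1,\dots,pqr\}$, each value having probability $1/(pqr)$. Since $M=pqr+s$ with $0\le s<pq$, we have $1/(pqr)=\frac1M\bigl(1+\Oh(pq/M)\bigr)$. Every conditional probability can therefore be computed exactly by counting with denominator $pqr$ and then rewritten with denominator $M$ at a multiplicative cost $1+\Oh(pq/M)$. This is the form in which all three parts are stated.

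\textbf{Parts 1 and 2.} For each $x\in[0,pq)$ and $y$, I will count the integers $a\in[1,pqr]$ with $a=x+y\cdot pq$. For $x\neq 0$ the admissible $y$ are exactly $0,\dots,r-1$, which gives $r$ preimages. For $x=0$ the value $a=0$ is excluded. If, as in the statement, $y$ is restricted to $[0,r)$, this leaves $r-1$ preimages, because the boundary point $a=pqr$ has $y=r$ and is accounted for separately. Symmetrically, for a fixed $y\in[0,r)$ there are $pq$ admissible $x$, except $y=0$, where only $pq-1$ remain. Dividing by $pqr\approx M$ yields the displayed formulas.

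\textbf{Part 3.} I will first establish a single-coordinate version and then multiply over the $n$ coordinates. For a single coordinate, the conditional joint law of $(a_i^{(pq)},a_i^{(r)})$ is uniform mass $1/(pqr)$ on the grid $[0,pq)\times[0,r)$, except at the single corner $(0,0)$ (the excluded value $0$) and the single boundary point $a=pqr$. The cleanest route is to identify $a=pqr$ with $a\equiv 0 \pmod{pqr}$. Under this identification $a_i$ becomes exactly uniform on $\Z_{pqr}$, so $(a_i^{(pq)},a_i^{(r)})$ is exactly uniform on the product grid and hence exactly independent. The true conditional law differs from this idealized one only in how one atom of mass $1/(pqr)$ is placed. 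I will compare the two laws pointwise on the common support and show that the joint probability equals the product of the marginals up to a factor $1+\Oh(pq/M)$. Here the marginals carry the $r-1$ and $pq-1$ corrections from Parts 1 and 2, and these are exactly what absorb the defect at the corner. Finally, independence across coordinates gives the vector statement as a product of $n$ such factors, and $(1+\Oh(pq/M))^n=1+\Oh(npq/M)$ because $npq/M=\Oh(n2^{-n/2})=o(1)$.

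\textbf{Main obstacle.} I expect the bookkeeping at the boundary atoms, $a=pqr$ versus the missing $a=0$, to be the only real difficulty. Pointwise at $(0,0)$ the exact joint probability is $0$ while the product of marginals is positive. So the multiplicative statement has to be read on the support of the conditional law, or equivalently after the identification $pqr\equiv 0$. I would adopt that identification explicitly before proving Part 3, and I would check that it changes none of the events used later, since all the filtering events depend only on residues modulo $p$, $q$ and $r$.
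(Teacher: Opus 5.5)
Your argument is essentially the paper's: for each coordinate, compare the conditional joint probability with the product of the conditional marginals (the paper's ratio $\mathsf{R_i}(x,y)$, shown to be $1+\Oh(pq/M)$ in the cases $x=0<y$ and $x>0$), then multiply over the $n$ independent coordinates using $npq/M=o(1)$. You also correctly notice two things the paper does not address: Parts~1--2 can only hold up to a $1+\Oh(pq/M)$ factor, since the stated values sum to $(pqr-1)/M<1$, and Part~3 can only hold at $(\vec{x},\vec{y})$ with no coordinate $(x_i,y_i)=(0,0)$.

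The step to change is the identification $pqr\equiv 0$. It is not equivalent to reading Part~3 off the corner. Moreover, your justification that later events depend only on residues fails for the integer event $\vec{a}^{(r)}\cdot\vec{u}_i=\vec{a}^{(r)}\cdot\vec{u}_j$, because the identification moves $a_i^{(r)}$ from $r$ to $0$. It is cleaner to discard the atom $a_i=pqr$ by conditioning on $a_i\le pqr-1$, at a cost of at most $n/(pqr)$. That conditioning is exactly what produces the $r-1$ and $pq-1$ in Parts~1--2.

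Also note that the ratio at $x_i\neq 0=y_i$ is $1+\Theta(1/(pq))$. This is $1+\Oh(pq/M)$ only because $r=\Oh(pq)$ in this regime.
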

\begin{proof}
The proof of parts $[1]$ and $[2]$ comes directly from equations $(\ref{eq: a_i_pq_dist})$ and $(\ref{eq: a_i_r_dist})$. Now, 
we look at the proof of part $[3]$. Let us define
    \begin{align*}
        \mathsf{P_i}(x, y) &:= \Pr_{a_i}\left(a_i^{(pq)} = x, a_i^{(r)} = y \; \middle| \; \mathsf{E_i^{(U)}}\right)\\
        \mathsf{Q_i}(x, y) &:= \Pr_{a_i}\left(a_i^{(pq)} = x\; \middle| \; \mathsf{E_i^{(U)}}\right) 
        \Pr_{a_i}\left(a_i^{(r)} = y\; \middle| \; \mathsf{E_i^{(U)}}\right)
    \end{align*}

We aim to analyze the ratio of $\mathsf{P_i}(x, y)$ and $\mathsf{Q_i}(x, y)$, denoted by $\mathsf{R_i}(x, y)$. 
This ratio will help us analyze the dependence between the two variables.
\begin{claim} For all $0 \leq x < pq$ and $0 \leq y < r$ such that $(x, y) \neq 0$, we have the following
    \begin{align*}
        \Omega\left(\frac{1}{M}\right) = \mathsf{R_i}(x, y) - 1 = \mathcal{O}\left(\frac{pq}{M}\right).
    \end{align*}
\end{claim}
\begin{claimproof}
    \begin{enumerate}
        \item  $x = 0$, $0 < y < r$. In this case, we have 
        \begin{align*}
            \mathsf{P_i}(x, y) = \frac{1}{M},  \qquad \mathsf{Q_i}(x, y) = \frac{r-1}{M} \cdot \frac{pq}{M}
        \end{align*}
        Hence, 
            \begin{align*}
                \mathsf{R_i}(x, y) &= \frac{\frac{1}{M}}{\frac{r-1}{M} \cdot \frac{pq}{M}}\\
                &= 1 + \frac{pq - s}{pq(r-1)}.
            \end{align*}
        Since $\Omega(1/M) = \frac{pq - s}{pq(r-1)} =  \mathcal{O}\left(pq/M\right)$, we get the desired result. 
        \item $x > 0$, $0 \leq y < r$. In this case, we have 
        \begin{align*}
            \mathsf{P_i}(x, y) = \frac{1}{M},  \qquad \mathsf{Q_i}(x, y) = \frac{r}{M} \cdot \frac{pq}{M}
        \end{align*}
        Hence,
        \begin{align*}
            \mathsf{R_i}(x, y) = 1 + \frac{s}{M}.
        \end{align*}
        Since $\frac{1}{M} \leq 1 + \frac{s}{M} < 1 + \frac{pq}{M}$, we get the desired result. 
    \end{enumerate}
    This completes the proof of the claim.
\end{claimproof}

Now, consider the following 
    {\small
    \begin{align*}
        &\Pr_{\vec{a}}\left(\vec{a}^{(pq)} = \vec{x}, \vec{a}^{(r)} = \vec{y}\; \middle | \; \mathsf{E^{(U)}}\right) \\
        &= \prod_{i = 1}^{n} \Pr_{a_i}\left(a_i^{(pq)} = x_i, a_i^{(r)} = y_i\; \middle | \; \mathsf{E_i^{(U)}}\right)\\
        &= \Pr_{\vec{a}}\left(\vec{a}^{(pq)} = \vec{x}\; \middle| \; \mathsf{E^{(U)}}\right) 
        \Pr_{\vec{a}}\left(\vec{a}^{(r)} = \vec{y}\; \middle| \; \mathsf{E^{(U)}}\right)\left(1 + \mathcal{O}\left(\frac{npq}{M}\right)\right).
    \end{align*}}
This completes the proof.
\end{proof}

\subsection{Proof of \cref{prop: big_weight_vec_count}}\label{sec: big_weight_vec_count}
Consider a set $R_{\alpha, c} := \{(a, b) \in [m]^2\; | \; a, b \geq \alpha m, a+b \leq m, |a - b| \leq c\sqrt{m}\}$. Let $S$ denote 
the above-mentioned sum in the problem. We can rewrite it as follows
\begin{equation*}
    \begin{aligned} 
        S &= \sum_{t_1 = \alpha m}^{m} \sum_{t_2 = \alpha m}^{m - t_1} \binom{m}{t_1} \binom{m - t_1}{t_2} 2^{m - (t_1 + t_2)}\\
        &= \sum_{(t_1, t_2) \in R_{\alpha, c}}
        \frac{m!}{t_1 ! t_2 ! t_3 !}1^{t_1} \cdot 1^{t_2} \cdot 2^{t_3}\\
        &= 4^m \cdot \sum_{(t_1, t_2) \in R_{\alpha, c}}
        \frac{m!}{t_1 ! t_2 ! t_3 !}
        \left(\frac{1}{4}\right)^{t_1} \cdot 
        \left(\frac{1}{4}\right)^{t_2} \cdot 
        \left(\frac{1}{2}\right)^{t_3}.
    \end{aligned}
\end{equation*}
Let $X_1, X_2$, and $X_3$ be three random variables that follows multinomial distribution
with $m$ repetitions and probabilities $1/4$, $1/4$, and $1/2$ respectively. 
Then, using the p.m.f. of the multinomial distribution in the above equation,
we have the following
\begin{align*}
    S = 4^m \cdot \Pr\left((X_1, X_2) \in R_{\alpha, c}\right).
\end{align*}
Since, $\mathbb{E}\left(X_1\right) = \mathbb{E}\left(X_2\right)
= m/4$, $\mathsf{Var}(X_1) = \mathsf{Var}(X_2) = 3m/16$ and $\alpha < 1/4$, 
using Chebyshev inequality, we get
\begin{align*}
    \Pr\left(X_1 < \alpha m, X_2 < \alpha m\right) = \mathcal{O}\left(\frac{1}{m}\right).
\end{align*}
Now, let $D = X_1 - X_2$, then we have $\mathbb{E}\left(D\right) = 0$ and $\mathsf{Var}(D) = 3m/16$.
Then, again by using Chebyshev inequality, we have 
\begin{align*}
    \Pr(|D| > c\sqrt{m}) \leq \frac{1}{2c^2}.
\end{align*}
Then, using the union bound, we get the desired result. 

\bibliographystyle{alpha}
\bibliography{reference}
\end{document}